\newcommand*\Cdot{\mathpalette\Cdot@{.5}}
\newcommand*\Cdot@[2]{\mathbin{\vcenter{\hbox{\scalebox{#2}{$\m@th#1\circled{$1$}$}}}}}
\DeclareMathAlphabet{\mathpzc}{OT1}{pzc}{m}{it}
\colorlet{linkequation}{blue}
\newcommand*{\refeqq}[1]{%
  \begingroup
    \hypersetup{
      linkcolor=linkequation,
      linkbordercolor=linkequation,
    }%
    \ref{#1}%
  \endgroup
}
\newcommand{\colim@}[2]{%
  \vtop{\m@th\ialign{##\cr
    \hfil$#1\operator@font colim$\hfil\cr
    \noalign{\nointerlineskip\kern1.5\ex@}#2\cr
    \noalign{\nointerlineskip\kern-\ex@}\cr}}%
}
\newcommand{\colim}{%
  \mathop{\mathpalette\colim@{\rightarrowfill@\scriptscriptstyle}}\nmlimits@
}
\renewcommand{\varprojlim}{%
  \mathop{\mathpalette\varlim@{\leftarrowfill@\scriptscriptstyle}}\nmlimits@
}
\renewcommand{\varinjlim}{%
  \mathop{\mathpalette\varlim@{\rightarrowfill@\scriptscriptstyle}}\nmlimits@
}
\providecommand*{\twoheadrightarrowfill@}{%
  \arrowfill@\relbar\relbar\twoheadrightarrow
}
\providecommand*{\twoheadleftarrowfill@}{%
  \arrowfill@\twoheadleftarrow\relbar\relbar
}
\providecommand*{\xtwoheadrightarrow}[2][]{%
  \ext@arrow 0579\twoheadrightarrowfill@{#1}{#2}%
}
\providecommand*{\xtwoheadleftarrow}[2][]{%
  \ext@arrow 5097\twoheadleftarrowfill@{#1}{#2}%
}
\newcommand*{\relrelbarsep}{.386ex}
\newcommand*{\relrelbar}{%
  \mathrel{%
    \mathpalette\@relrelbar\relrelbarsep
  }%
}
\newcommand*{\@relrelbar}[2]{%
  \raise#2\hbox to 0pt{$\m@th#1\relbar$\hss}%
  \lower#2\hbox{$\m@th#1\relbar$}%
}
\providecommand*{\rightrightarrowsfill@}{%
  \arrowfill@\relrelbar\relrelbar\rightrightarrows
}
\providecommand*{\leftleftarrowsfill@}{%
  \arrowfill@\leftleftarrows\relrelbar\relrelbar
}
\providecommand*{\xrightrightarrows}[2][]{%
  \ext@arrow 0359\rightrightarrowsfill@{#1}{#2}%
}
\providecommand*{\xleftleftarrows}[2][]{%
  \ext@arrow 3095\leftleftarrowsfill@{#1}{#2}%
}
\DeclareSymbolFont{cyrletters}{OT2}{wncyr}{m}{n}
\DeclareMathSymbol{\Sh}{\mathalpha}{cyrletters}{"58}
\tikzset{near start abs/.style={xshift=1cm}}
\DeclareSymbolFont{symbolsC}{U}{txsyc}{m}{n}
\DeclareMathSymbol{\Searrow}{\mathrel}{symbolsC}{117}
\DeclareSymbolFont{extraup}{U}{zavm}{m}{n}
\DeclareMathSymbol{\varheart}{\mathalpha}{extraup}{86}
\DeclareMathSymbol{\vardiamond}{\mathalpha}{extraup}{87}
 \DeclareMathSymbol{\varclub}{\mathalpha}{extraup}{84} 
\DeclareMathSymbol{\varspade}{\mathalpha}{extraup}{85}
\newcommand{\bigslant}[2]{{\raisebox{.2em}{$#1$}\left/\raisebox{-.2em}{$#2$}\right.}}
\theoremstyle{definition}
\newtheorem{thm}{Theorem}[section]
\newtheorem{cor}{Corollary}[thm]
\newtheorem{prop}[thm]{Proposition}
\theoremstyle{definition}
\newtheorem{ex}{Example}[section]
\newtheorem{remark}{Remark}[section]
\newcommand{\ga}{\alpha}
\newcommand{\sfVec}{\textsf{Vec}}
\newcommand{\sfS}{\textsf{S}}
\newcommand{\tH}{\mathtt{H}}
\newcommand{\tQ}{\mathtt{Q}}
\newcommand{\bR}{\mathbb{R}}
\newcommand{\bN}{\mathbb{N}}
\newcommand{\bC}{\mathbb{C}}
\newcommand{\bZ}{\mathbb{Z}}
\newcommand{\cF}{\CMcal{F}}
\newcommand{\cH}{\CMcal{H}}
\newcommand{\cS}{\CMcal{S}}
\newcommand{\cA}{\CMcal{A}}
\newcommand{\cK}{\CMcal{K}}
\newcommand{\cX}{\CMcal{X}}
\newcommand{\op}{\operatorname{op}}
\newcommand{\Hom}{\operatorname{Hom}}
\newcommand{\bE}{\textbf{E}}
\newcommand{\sS}{\textsf{S}}
\newcommand{\Lie}{ \textbf{Lie} }
\newcommand{\Zie}{\textbf{Zie}}
\newcommand{\Com}{\textbf{Com}}
\newcommand{\Ass}{\textbf{Ass}}
\newcommand{\Gam}{\boldsymbol{\Gamma}}
\newcommand{\Sig}{\boldsymbol{\Sigma}}  
\DeclareFontFamily{U}{mathx}{\hyphenchar\font45}
\DeclareFontShape{U}{mathx}{m}{n}{
      <5> <6> <7> <8> <9> <10>
      <10.95> <12> <14.4> <17.28> <20.74> <24.88>
      mathx10
      }{}
\DeclareSymbolFont{mathx}{U}{mathx}{m}{n}
\DeclareMathAccent{\widecheck}{0}{mathx}{"71}
\DeclareMathAccent{\wideparen}{0}{mathx}{"75}
\newcommand{\res}{\parallel}
\newcommand{\formj}{\emph{\texttt{j}}\, }
\newcommand{\formg}{\emph{\texttt{g}}}
\newcommand{\bT}{\mathbb{T}}
\newcommand{\ssS}{\emph{\textsf{S}}}
\newcommand{\ssA}{\emph{\textsf{A}}}
\newcommand*\bigcdot{\mathpalette\bigcdot@{.5}}
\newcommand*\bigcdot@[2]{\mathbin{\vcenter{\hbox{\scalebox{#2}{$\m@th#1\bullet$}}}}}
\newcommand{\adjunction}{\@ifstar\named@adjunction\normal@adjunction}
\newcommand{\normal@adjunction}[4]{%
  #1\colon #2%
  \mathrel{\vcenter{%
    \offinterlineskip\m@th
    \ialign{%
      \hfil$##$\hfil\cr
      \longrightharpoonup\cr
      \noalign{\kern-.3ex}
      \smallbot\cr
      \longleftharpoondown\cr
    }%
  }}%
  #3 \noloc #4%
}
\newcommand{\named@adjunction}[4]{%
  #2%
  \mathrel{\vcenter{%
    \offinterlineskip\m@th
    \ialign{%
      \hfil$##$\hfil\cr
      \scriptstyle#1\cr
      \noalign{\kern.1ex}
      \longrightharpoonup\cr
      \noalign{\kern-.3ex}
      \smallbot\cr
      \longleftharpoondown\cr
      \scriptstyle#4\cr
    }%
  }}%
  #3%
}
\newcommand{\longrightharpoonup}{\relbar\joinrel\rightharpoonup}
\newcommand{\longleftharpoondown}{\leftharpoondown\joinrel\relbar}
\newcommand\noloc{%
  \nobreak
  \mspace{6mu plus 1mu}
  {:}
  \nonscript\mkern-\thinmuskip
  \mathpunct{}
  \mspace{2mu}
}
\newcommand{\smallbot}{%
  \begingroup\setlength\unitlength{.15em}%
  \begin{picture}(1,1)
  \roundcap
  \polyline(0,0)(1,0)
  \polyline(0.5,0)(0.5,1)
  \end{picture}%
  \endgroup
}
\newcommand{\leftrarrows}{\mathrel{\raise.75ex\hbox{\oalign{%
  $\scriptstyle\leftarrow$\cr
  \vrule width0pt height.5ex$\hfil\scriptstyle\relbar$\cr}}}}
\newcommand{\lrightarrows}{\mathrel{\raise.75ex\hbox{\oalign{%
  $\scriptstyle\relbar$\hfil\cr
  $\scriptstyle\vrule width0pt height.5ex\smash\rightarrow$\cr}}}}
\newcommand{\Rrelbar}{\mathrel{\raise.75ex\hbox{\oalign{%
  $\scriptstyle\relbar$\cr
  \vrule width0pt height.5ex$\scriptstyle\relbar$}}}}
\def\leftrightarrowsfill@{\arrowfill@\leftrarrows\Rrelbar\lrightarrows}
\newcommand{\xleftrightarrows}[2][]{\ext@arrow 3399\leftrightarrowsfill@{#1}{#2}}
\newcommand{\la}{\langle}
\newcommand{\ra}{\rangle}
\newcommand{\wt}{\widetilde}
\newcommand{\wh}{\widehat}
\definecolor{Red}{rgb}{0.8,0,0.2}
\newcommand{\GG}[1]{}
\def\@footnotecolor{red}
\def\@footnotemark{%
    \leavevmode
    \ifhmode\edef\@x@sf{\the\spacefactor}\nobreak\fi
    \stepcounter{Hfootnote}%
    \global\let\Hy@saved@currentHref\@currentHref
    \hyper@makecurrent{Hfootnote}%
    \global\let\Hy@footnote@currentHref\@currentHref
    \global\let\@currentHref\Hy@saved@currentHref
    \hyper@linkstart{footnote}{\Hy@footnote@currentHref}%
    \@makefnmark
    \hyper@linkend
    \ifhmode\spacefactor\@x@sf\fi
    \relax
  }%
\title[Species-Theoretic Foundations of Perturbative Quantum Field Theory]{Species-Theoretic Foundations of Perturbative\\ Quantum Field Theory}
\author{William Norledge}
\address{Pennsylvania State University}
\email{wxn39@psu.edu}
\begin{document}

\usetagform{red}

\renewcommand{\chapterautorefname}{Chapter}
\renewcommand{\sectionautorefname}{Section}
\renewcommand{\subsectionautorefname}{Section}

\renewcommand{\chapterautorefname}{Chapter}
\renewcommand{\sectionautorefname}{Section}
\renewcommand{\subsectionautorefname}{Section}

\begin{abstract}
We develop an algebraic formalism for perturbative quantum field theory (pQFT) which is based on Joyal's combinatorial species. We show that certain basic structures of pQFT are correctly viewed as algebraic structures internal to species, constructed with respect to the Cauchy monoidal product. Aspects of this formalism have appeared in the physics literature, particularly in the work of Bogoliubov-Shirkov, Steinmann, Ruelle, and \hbox{Epstein-Glaser-Stora}. In this paper, we give a fully explicit account in terms of modern theory developed by \hbox{Aguiar-Mahajan}. We describe the central construction of causal perturbation theory as a homomorphism from the Hopf monoid of set compositions, decorated with local observables, into the Wick algebra of microcausal polynomial observables. The operator-valued distributions called (generalized) time-ordered products and (generalized) retarded products are obtained as images of fundamental elements of this Hopf monoid under the curried homomorphism. The perturbative \hbox{S-matrix} scheme corresponds to the so-called universal series, and the property of causal factorization is naturally expressed in terms of the action of the Hopf monoid on itself by Hopf powers, called the Tits product. Given a system of fully renormalized time-ordered products, the perturbative construction of the corresponding interacting products is via an up biderivation of the Hopf monoid, which recovers Bogoliubov's formula. 
\end{abstract}

\maketitle

\vspace{-6.5ex}

\setcounter{tocdepth}{1} 
\hypertarget{foo}{ }
\tableofcontents

\section*{Introduction}\label{intro}

The theory of species valued in sets (or vector spaces) is a richer, categorified (and linearized) version of analyzing combinatorial structures in terms of generating functions, going back to André Joyal \cite{joyal1981theorie}, \cite{joyal1986foncteurs}, \cite{bergeron1998combinatorial}. In this approach, one sees additional structure by encoding processes of relabeling combinatorial objects, that is by modeling combinatorial objects as presheaves on the category $\sfS$ of finite sets $I$ (the labels) and bijections $\sigma$ (relabellings). See \cite[Foreword]{bergeron1998combinatorial}. The mathematician's gauge principle says that it is not redundant to account for all these relabelings, despite them not changing anything up to isomorphism. 




In this paper, we are concerned with vector species $\textbf{p}$, which are presheaves of vector spaces over a field $\Bbbk$ on $\sfS$,
\[
\textbf{p}:\sfS^{\op}\to \sfVec, \qquad I\mapsto \textbf{p}[I] 
\quad \text{and} \quad 
\sigma \mapsto \textbf{p}[\sigma]
.\] 
A highly structured theory of gebras\footnote{\ meaning (co/bi/Hopf)algebras, Lie (co)algebras, etc.} internal to the functor category of vector species has been developed by Aguiar and Mahajan \cite{aguiar2010monoidal}, \cite{aguiar2013hopf}, building on the work of Barratt \cite{barratt1978twisted}, Joyal \cite{joyal1986foncteurs}, Schmitt \cite{Bill93}, Stover \cite{stover1993equivalence}, and others. For the internalization, one uses the Day convolution monoidal product $\textbf{p}\bigcdot\textbf{q}$ with respect to disjoint union and tensor product (\autoref{sec:Monoidal Products}), 
\[
\textbf{p}\bigcdot\textbf{q}[I] =  \textbf{p} \otimes_{\text{Day}} \textbf{q}[I]= \bigoplus_{S\sqcup T=I} \textbf{p}[S]\otimes \textbf{q}[T]
.\]
This is a categorification of the Cauchy product of formal power series. From the perspective of $\textsf{S}$-colored (co)operads, there is an equivalent description of these gebras as (co)algebras over the left (co)action (co)monads of the (co)operads $\Com^{ (\ast) }$, $\Ass^{ (\ast) }$, $\Lie^{ (\ast) }$ \cite[Appendix B.5]{aguiar2010monoidal}. This relates the gebras of this paper to structures such as cyclic operads, which appear in mathematical physics. Note that various decategorifications of \hbox{Aguiar-Mahajan's} theory recovers the plethora of graded combinatorial Hopf algebras which have been studied \cite[Chapter 5]{aguiar2010monoidal}. 

On the other hand, quantum field theory (QFT) may be viewed as a kind of modern infinite dimensional calculus. Perturbative quantum field theory (pQFT) is the part of QFT which considers Taylor series approximations of smooth functions. By an argument of Dyson \cite{Dyson52}, Taylor series of realistic pQFTs are expected to have vanishing radius of convergence.
Nevertheless, if an actual smooth function of a non-perturbative quantum field theory is being approximated, then they are asymptotic series, and so one might expect their truncations to agree to reasonable precision with experiment. This is indeed the case. 


There are two main synthetic approaches to (non-perturbative) QFT, which grew out of the failure to make sense of the path integral analytically. There is functorial quantum field theory (FQFT), with axioms due to Atiyah and Segal, which formalizes the Schr\"odinger picture by assigning time evolution operators to cobordisms between spacetimes. There is also algebraic quantum field theory (AQFT), with axioms due to Haag and Kastler \cite{haagkas64}, which formalizes the Heisenberg picture by assigning $\text{C}^\ast$-algebras of observables to regions of spacetime. Low dimension examples of AQFTs/Wightman field theories were rigorously constructed in seminal work of \hbox{Glimm-Jaffe} and others \cite{MR247845}, \cite{MR272301}, \cite{MR363256}. 

Perturbative algebraic quantum field theory (pAQFT) \cite{rejzner2016pQFT}, \cite{dutsch2019perturbative}, \cite[\href{https://ncatlab.org/nlab/show/geometry+of+physics+--+perturbative+quantum+field+theory}{nLab}]{perturbative_quantum_field_theory}, due to Brunetti, D\"utsch, Fredenhagen, Hollands, Rejzner, Wald, and others, is (mathematically precise, realistic) pQFT based on causal perturbation theory \cite{steinbook71}, \cite{ep73roleofloc}, \cite{MR1359058}, due to St\"uckelberg, Bogoliubov, Steinmann, Epstein, Glaser, Stora, and others. See \cite[Foreword]{dutsch2019perturbative} for an account of the history. Following \cite{Slavnov78}, \cite{klaus2000micro}, \cite{dutfred00}, in which one takes the algebraic adiabatic limit to handle \hbox{IR-divergences}, pAQFT satisfies the Haag-Kastler axioms, but with \hbox{$\text{C}^\ast$-algebras} replaced by formal power series $\ast$-algebras, reflecting the fact that pQFT deals with Taylor series approximations. It is the construction and structure of these formal power series algebras which is naturally described in terms of gebra theory internal to species. Note that the popular Wilsonian effective field theory interpretation, sometimes called heuristic quantum field theory, is rigorously formulated within pAQFT \cite{dutfred09}, \cite{dut12}, \cite[Section 3.8]{dutsch2019perturbative}, \cite[Section 16]{perturbative_quantum_field_theory}.

For simplicity, we restrict ourselves to the Klein-Gordan real scalar field on Minkowski spacetime $\cX\cong \bR^{p,1}$, $p\in \bN$. In general, pAQFT deals with gauge theories on curved spacetimes \cite{MR2455327}. Therefore for us, an off-shell field configuration $\Phi$ is just any smooth function
\[
\Phi:\cX\to \bR
,\qquad
x\mapsto \Phi(x)
.\] 
In particular, we do not impose conditions on the asymptotic behavior of $\Phi$ at infinite times. Let $\mathcal{F}_{\text{loc}}$ denote the space of local observables $\ssA$; these are functionals of field configurations which are obtained by integrating polynomials in $\Phi$ and its derivatives against bump functions on $\cX$. Let $\mathcal{F}$ denote the commutative $\ast$-algebra of microcausal polynomial observables $\emph{\textsf{O}}$; these are polynomial functionals of field configurations satisfying a \hbox{microlocal-theoretic} condition known as microcausality, with multiplication the pointwise multiplication of functionals, sometimes called the normal-ordered product. Then $\mathcal{F}[[\hbar]]$ is a formal power series $\ast$-algebra in formal Planck's constant $\hbar$, called the (abstract, off-shell) Wick algebra, with multiplication the Moyal star product for the Wightman propagator $\Delta_{\text{H}}$ of the Klein-Gordan field
\[
\mathcal{F}[[\hbar]] \otimes \mathcal{F}[[\hbar]] \to \mathcal{F}[[\hbar]]
,\qquad
\emph{\textsf{O}}_1\otimes \emph{\textsf{O}}_2 \mapsto \emph{\textsf{O}}_1 \star_{\text{H}}\! \emph{\textsf{O}}_2
,\] 
sometimes called the operator product. 


Perhaps the most fundamental Hopf monoid of Aguiar-Mahajan's theory is the cocommutative Hopf algebra\footnote{\ we allow ourselves to say `algebra' since vector species form a linear category} of compositions $\Sig$ (\autoref{hopfofsetcomp}), which is a Hopf monoid internal to vector species defined with respect to the Day convolution. (More familiar is perhaps its decategorification, which is the graded Hopf algebra of noncommutative symmetric functions $\textbf{NSym}$.) A composition $F$ of $I$ is a surjective function of the form
\[
F:I\to   \{1,\dots,k\}   
,\qquad 
\text{for some} \quad  k\in \bN
.\] 
The ordering $1>\dots>k$ is understood, so that $F$ models the $k^{\text{th}}$ ordinal with $I$-marked points. We let $S_j=F^{-1}(j)$, called the lumps of $F$, and write $F=(S_1,\dots, S_k)$. Each component $\Sig[I]$ is the space of formal linear combinations $\mathtt{a}=\sum_{F} c_F \tH_F$, $c_F\in \Bbbk$, of compositions $F$ of $I$. The multiplication 
\[
\mu_{S,T}: \Sig[S]\otimes \Sig[T]\to \Sig[I]
,\qquad
\tH_F\otimes \tH_G\mapsto \tH_{FG}
\] 
is the linearization of concatenating compositions (`gluing' via ordinal sum), and the comultiplication   
\[\Delta_{S,T}:\Sig[I] \to\Sig[S]\otimes \Sig[T]
,\qquad
\tH_F \mapsto \tH_{F|_S} \otimes \tH_{F|_T}
\] 
is the linearization of restricting compositions to subsets (`forgetting marked points'), where $S\sqcup T=I$. 

Aspects of $\Sig$ appear in the physics literature as follows. Firstly, \hbox{Epstein-Glaser-Stora's} algebra of proper sequences \cite[Section 4.1]{epstein1976general} is the action of $\Sig$ on itself by Hopf powers (\autoref{sec:Tits}), called the Tits product \cite[Section 13]{aguiar2013hopf}, going back to Tits \cite{Tits74}. See \cite[Section 1.4.6]{brown08} for the context of other Dynkin types. The Tits product may be interpreted as the projection of the permutohedron onto its faces \cite[Introduction]{norledge2019hopf}. Secondly, the primitive part $\Zie=\mathcal{P}(\Sig)$\footnote{\ the name `Zie' comes from \cite{aguiar2017topics}} (\autoref{sec:Steain}), which is a Lie algebra internal to species, is essentially the Steinmann algebra from e.g. \cite[Section 6]{Ruelle}, \cite[Section III.1]{bros}. More precisely, the Steinmann algebra is a graded Lie algebra based on the structure map of the adjoint realization of $\Zie$ (\autoref{sec:Ruelle's Identity and the GLZ Relation}). Thirdly and fourthly, and outside the scope of this paper, see below regarding work of Losev-Manin and Feynman integrals.

We formalize the construction of a system of interacting generalized time-ordered products in causal perturbation theory as the construction of a homomorphism $\widetilde{\text{T}}$ of algebras internal to species of the form
\[       
\widetilde{\text{T}}:\Sig  \otimes \textbf{E}_{\mathcal{F}_{\text{loc}}[[\hbar]]}     
\to 
\textbf{U}_{\mathcal{F}[[\hbar, \formg]]} 
.\]
See \autoref{sec:Time-Ordered Products}. Here, $\otimes$ is the Hadamard monoidal product (componentwise tensoring), $\textbf{E}_{\mathcal{F}_{\text{loc}}[[\hbar]]}$ is the species $I\mapsto (\mathcal{F}_{\text{loc}}[[\hbar]])^{\otimes I}$, and $\textbf{U}_{\mathcal{F}[[\hbar, \formg]]}$ is the algebra which has the Wick algebra, with formal coupling constant $\formg$ adjoined, in each $I$-component. The data of $\widetilde{\text{T}}$ is equivalent to a homomorphism of $\bC$-algebras
\[
\hat{\Sig}(\mathcal{F}_{\text{loc}}[[\hbar]])\to \mathcal{F}[[\hbar, \formg]]
\]
where $\hat{\Sig}(-): \textsf{Vec}\to \textsf{Vec}$ is the analytic endofunctor on vector spaces associated to $\Sig$ \cite[Section 19.1.2]{aguiar2010monoidal}, called the Schur functor. Decategorified versions of this formalization appear in the graded Hopf algebra approaches to pQFT \cite{Brouder10}, \cite[p. 635]{Borcherds10}. In particular, there is an interpretation of the Moyal deformation quantization in terms of Laplace pairings (=coquasitriangular structures) \cite{Fauser01}, \cite[Section 2.4]{Brouder10}. 

Also related is the notion of a Losev-Manin cohomological field theory \cite[Theorem 3.3.1]{losevmanin}, \cite[Definition 1.3]{shadrin2011group}, where finite ordinals are replaced by strings of Riemann spheres glued at the poles, giving a Hopf monoid structure on the toric variety of the permutohedron, and $\Sig$ is replaced by the ordinary homology of this toric variety. The Hopf monoid structure of this toric variety is also central to modern approaches to Feynman integrals \cite{MR3713351}, \cite{schultka2018toric}. We shall study this Hopf monoid in future work.

Explicitly, the homomorphism $\widetilde{\text{T}}$ consists of component linear maps
\[       
\widetilde{\text{T}}_I:\Sig[I]  \otimes (\mathcal{F}_{\text{loc}}[[\hbar]])^{\otimes I}
\to 
\mathcal{F}[[\hbar, \formg]]
,\qquad
\tH_F\otimes \ssA_{i_1}\otimes \dots \otimes  \ssA_{i_n} \mapsto  \widetilde{\text{T}}_I(\tH_F\otimes \ssA_{i_1}\otimes \dots \otimes  \ssA_{i_n})
\]
for each finite set $I=\{i_1,\dots, i_n\}$. This homomorphism should also satisfy causal factorization, which says that
\[ 
\widetilde{\text{T}}_I( \mathtt{a} \otimes  \ssA_{i_1}\otimes \dots \ssA_{i_n} )
= 
\widetilde{\text{T}}_I(  \! \! \! \!   \underbrace{\mathtt{a} \triangleright \tH_{G}}_{\text{Tits product}}    \! \! \! \! \otimes  \ssA_{i_1}\otimes \dots \otimes  \ssA_{i_n}  )
\qquad  \text{for all} \quad 
\mathtt{a}\in \Sig[I]     
\]
whenever the local observables $\ssA_{i_1}, \dots ,  \ssA_{i_n}$ respect the ordering of $I$ induced by the composition $G$ (\autoref{prob:causalfac}). Additional properties are often included, such as translation equivariance. 


We can curry $\widetilde{\text{T}}$ with respect to the internal hom $\cH(-,-)$ for the Hadamard product (\autoref{sec:homforhad}), giving a homomorphism of algebras
\[
\Sig  \to \cH(  \textbf{E}_{\mathcal{F}_{\text{loc}}[[\hbar]]}  ,\textbf{U}_{ \mathcal{F}[[\hbar, \formg]]} )
,\qquad 
\tH_{F}=\tH_{(S_1,\dots, S_k)} \mapsto \widetilde{\text{T}}(S_1)\dots  \widetilde{\text{T}}(S_k)
.\]
The resulting linear maps
\[
\widetilde{\text{T}}(S_1)\dots  \widetilde{\text{T}}(S_k): (\mathcal{F}_{\text{loc}}[[\hbar]])^{\otimes I} \to \mathcal{F}[[\hbar, \formg]]
\]
are called interacting generalized time-ordered products. For each choice of a field polynomial, the curried homomorphism is a `representation' of $\Sig$ as $\mathcal{F}[[\hbar, \formg]]$-valued generalized functions on $\cX^I$, called operator-valued distributions since the Wick algebra is often represented on a Hilbert space. The composition of the time-ordered products $\widetilde{\text{T}}(I)$ with the Hadamard vacuum state
\[
\la - \ra_{0} :\mathcal{F}[[\hbar, \formg]] \to \bC[[\hbar, \formg]]
,\qquad
\emph{\textsf{O}} \mapsto  \emph{\textsf{O}}(\Phi=0)
\]
are then translation invariant $\bC[[\hbar, \formg]]$-valued generalized functions 
\[
\text{G}_I: \cX^I \to     \bC[[\hbar, \formg]]
,\qquad    
(x_{i_1}, \dots, x_{i_n}) \mapsto \text{G}_I(x_{i_1}, \dots, x_{i_n})   \footnote{\ we have used generalized function notation; $\text{G}_I$ is not a single function, but can be represented by a sequence of such functions}
\] 
called time-ordered $n$-point correlation functions, or Green's functions. After taking the adiabatic limit, and in the presence of vacuum stability, these functions may be interpreted as the probabilistic predictions made by the pQFT of the outcomes of scattering experiments, called scattering amplitudes (\autoref{sec:scatterung}). However, their values are formal power series in $\hbar$ and $\formg$, and so have to be truncated. 

Central to Aguiar-Mahajan's work is the interpretation of $\Sig$ (and other Hopf monoids) in terms of the geometry of the type $A$ reflection hyperplane arrangement, called the (essentialized) braid arrangement
\[
\text{Br}[I]=\big\{  \{  x_{i_1}-x_{i_2}=0  \}  \subseteq 
 \! \! \!   \! \! \!  \! 
 \underbrace{\bR^I/\bR \twoheadleftarrow \bR^I}_{\text{quotient by translations}}
\! \! \!  \! \! \!  \! 
 :  (i_1,i_2)\in I^2, \ i_1 \neq i_2  \big   \}
.\]
In causal perturbation theory, the braid arrangement appears as the space of time components of configurations \hbox{$\cX^I$} modulo translational symmetry \cite[Section 2]{Ruelle}, and the reflection hyperplanes are the coinciding interaction points. Every real hyperplane arrangement $\text{A}$ has a corresponding adjoint hyperplane arrangement $\text{A}^\vee$ \cite[Section 1.9.2]{aguiar2017topics}. The free vector space $\bR I$ on $I$ is naturally $\Hom(\bR^I,\bR)$, and so the adjoint of the braid arrangement is given by
\[
\text{Br}^\vee[I]=\bigg\{  \Big\{ \sum_{i\in S} x_i=\sum_{i\in T} x_i=0\Big  \}  \subseteq \underbrace{\Hom(\bR^I/\bR,\bR)\hookrightarrow \bR I}_{\text{sum-zero subspace}}  :  (S,T)\in 2^I,\   S,T\neq \emptyset \bigg  \}
.\]
In causal perturbation theory, the adjoint braid arrangement appears as the space of energy components \cite[Section 2]{Ruelle}, and the hyperplanes correspond to subsets going `on-shell'. The spherical representation of the adjoint braid arrangement is called the Steinmann sphere, or Steinmann planet, e.g. \cite[Figure A.4]{epstein2016}. The chambers of the adjoint braid arrangement are indexed by combinatorial gadgets called cells $\cS$ \cite[Definition 6]{epstein1976general}, also known as maximal unbalanced families \cite{billera2012maximal} and positive sum systems \cite{MR3467341}.

The primitive part Lie algebra $\Zie=\mathcal{P}(\Sig)$ (together with its dual Lie coalgebra $\Zie^\ast$) has a natural geometric realization over the adjoint braid arrangement \cite[Section 6]{Ruelle}, \cite[\href{https://www.youtube.com/watch?v=fUnr0f6mV4c}{Lecture 33}]{oc17}, \cite{lno2019}, \cite{norledge2019hopf}, which results in cells $\cS$ corresponding to certain special primitive elements $\mathtt{D}_\cS\in \Zie[I]$ (\autoref{adjoint}), named Dynkin elements by Aguiar-Mahajan \cite[Section 14.1 and 14.9.8]{aguiar2017topics}. It is shown in \cite{norledge2019hopf} that the Dynkin elements span $\Zie$, but they are not linearly independent. The relations which are satisfied by the Dynkin elements are known as the Steinmann relations \cite[Equation 44]{steinmann1960} (\autoref{stein}), first studied by Steinmann in settings where $\Sig$ is represented as operator-valued distributions. More recently, they have been studied in the context scattering amplitudes, where they appear to be related to cluster algebras \cite{drummond2018cluster}, \cite{caron2019cosmic}, \cite{Caron-Huot:2020bkp}. 

If we restrict a curried system of interacting generalized time-ordered products to the primitive part $\Zie$, then we obtain a Lie algebra homomorphism
\[       
\Zie\to\cH(\textbf{E}_{\mathcal{F}_{\text{loc}}[[\hbar]]},\textbf{U}_{\mathcal{F}[[\hbar, \formg]]}), \qquad \mathtt{D}_\cS \mapsto \widetilde{\text{R}}_\cS
.\]
The operator-valued distributions $\widetilde{\text{R}}_\cS$ which are the images of the Dynkin elements $\mathtt{D}_\cS$ are the interacting generalized retarded products of the system, see e.g. \cite{steinmann1960}, \cite{Huz1}, \cite[Equation 79]{ep73roleofloc}. 

Let $\textbf{L}\hookrightarrow \Sig$ be the Hopf subalgebra of linear orders (=compositions with singleton lumps), and let $\textbf{E}^\ast\hookrightarrow \Sig$ be the subcoalgebra of compositions with one lump. Then we have the following dictionary between products/vacuum expectation values and elements of $\Sig$. 
\bgroup
\def\arraystretch{1} 
\begin{table}[H] 
\begin{tabular}{|c|c|c|c|}
\hline
&
\begin{tabular}{@{}c@{}}spanning set\end{tabular}&
\begin{tabular}{@{}c@{}}operator-valued distributions\end{tabular}&
\begin{tabular}{@{}c@{}}vacuum expectation values\end{tabular}\\ \hline

\begin{tabular}{@{}c@{}}$\textbf{E}^\ast$\end{tabular}&   
\begin{tabular}{@{}c@{}}universal series\\$\mathtt{G}_I$\end{tabular}&
\begin{tabular}{@{}c@{}}time-ordered product\\ $\text{T}(I)$\end{tabular}&
\begin{tabular}{@{}c@{}}time-ordered $n$-point\\ function\end{tabular}\\ \hline

\begin{tabular}{@{}c@{}}$\textbf{L}$\end{tabular}&
\begin{tabular}{@{}c@{}}$\tH$-basis linear orders\\$\tH_\ell$\end{tabular}&
\begin{tabular}{@{}c@{}}$\text{T}(i_1)\dots \text{T}(i_n)$\end{tabular}& 
\begin{tabular}{@{}c@{}}Wightman $n$-point\\ functions \end{tabular}\\ \hline

\begin{tabular}{@{}c@{}}$\Sig$\end{tabular}&
\begin{tabular}{@{}c@{}}$\tH$-basis set compositions\\$\tH_F$\end{tabular}&
\begin{tabular}{@{}c@{}}generalized time-ordered products\\$\text{T}(S_1)\dots \text{T}(S_k)$\end{tabular}&
\begin{tabular}{@{}c@{}}generalized time-ordered\\ functions\end{tabular}\\ \hline

\begin{tabular}{@{}c@{}}$\Zie$\end{tabular}&
\begin{tabular}{@{}c@{}}Dynkin elements\\$\mathtt{D}_\cS$\end{tabular}&
\begin{tabular}{@{}c@{}}generalized retarded products\\$\text{R}_\cS$\end{tabular}&
\begin{tabular}{@{}c@{}}generalized retarded\\ functions\end{tabular}\\ \hline
\end{tabular}
\end{table}
\egroup 

We also formalize the construction of interacting products in casual perturbation theory as follows. Our starting point is a fully normalized system of generalized \hbox{time-ordered} products, that is a homomorphism of algebras
\[
\text{T}:\Sig\otimes \textbf{E}_{\mathcal{F}_\text{loc} [[\hbar]]}\to\textbf{U}_{\mathcal{F}((\hbar))}  
\]
satisfying causal factorization, and such that the singleton components $\text{T}_{\{i\}}$ are the natural inclusion 
\[
\mathcal{F}_\text{loc} [[\hbar]]\hookrightarrow  \mathcal{F}((\hbar))
,\qquad
\ssA \mapsto \, \,  :\! \ssA :  .
\]
The corresponding operator-valued distributions are determined everywhere on $\cX^I$ by causal factorization, apart from on the fat diagonal (coinciding interaction points). In particular, off the fat diagonal, the time-ordered products $\text{T}(I)$ are given by the Moyal star product $\star_{\text{F}}$ with respect to the Feynman propagator $\Delta_{\text{F}}$ for the Klein-Gordon field. The terms of the product $\star_{\text{F}}$ may be encoded in finite multigraphs, i.e. Feynman graphs. The remaining inherent ambiguity means one has to make choices when extending the $\text{T}(I)$ to the fat diagonal, and these choices form a torsor of the \hbox{St\"uckelberg-Petermann} renormalization group. This is Stora's elaboration \cite{stora16}, \cite{stora1993differential}, \cite{klaus2000micro} on \hbox{St\"uckelberg-Bogoliubov-Epstein-Glaser} normalization \cite{ep73roleofloc}, which constructs the $\text{T}(I)$ inductively in $n=|I|$. We leave species-theoretic aspects of renormalization, and possible connections to \hbox{Connes-Kreimer} theory \cite{MR1845168}, \cite{Bondia00}, \cite{Kreimer05}, \cite{FredHopf14}, to future work.


In the original formulation by Tomonaga, Schwinger, Feynman and Dyson, would-be \hbox{time-ordered} products are obtained by informally multiplying Wick algebra products by step functions, which is in general ill-defined by H\"ormander's criterion. This leads to individual terms of the formal power series diverging, called UV-divergences. Then informal methods are used to obtain finite values from these infinite terms \cite[Preface and Section 4.3]{MR1359058}.

The so-called exponential species $\textbf{E}$, given by $\textbf{E}[I]=\Bbbk$ with basis element $1_\Bbbk\in \textbf{E}[I]$ denoted $\tH_I$, has the structure of an algebra in species by linearizing taking unions of sets,
\[
\mu_{S,T}: \textbf{E}[S]\otimes \textbf{E}[T]\to \textbf{E}[I]
,\qquad
\tH_S\otimes \tH_T\mapsto \tH_{I}
.\] 
An $\textbf{E}$-module $\textbf{m}=(\textbf{m},\rho)$ is an associative and unital morphism 
\[\rho:\textbf{E}\bigcdot\textbf{m}\to \textbf{m}\] 
for $\textbf{m}$ a species (\autoref{sec:Modules}). Moreover, taking the inverse of $\mu_{S,T}$ as the comultiplication turns $\textbf{E}$ into a connected (co)commutative bialgebra, and so the category of $\textbf{E}$-modules $\textsf{Rep}(\textbf{E})$ is a symmetric monoidal category with monoidal product the Cauchy product of $\textbf{E}$-modules. In particular, we may consider Hopf/Lie algebras internal to $\textsf{Rep}(\textbf{E})$, which we call Hopf/Lie \hbox{$\textbf{E}$-algebras}.

The retarded $Y\downarrow(-)$ and advanced $Y\uparrow(-)$ Steinmann arrows are (we formalize as) raising operators on $\Sig$, whose precise definition is due to \hbox{Epstein-Glaser-Stora} \cite[p.82-83]{epstein1976general}. They define two $\textbf{E}$-module structures on $\Sig$ (\autoref{sec:The Steinmann Arrows}), 
\[
\textbf{E}\bigcdot \Sig \to \Sig 
,\quad
\tH_Y \otimes \tH_F \mapsto Y  \downarrow\tH_F
\qquad \text{and} \qquad 
\textbf{E}\bigcdot \Sig \to \Sig
,\quad
\tH_Y \otimes \tH_F \mapsto Y  \uparrow \tH_F
.\]
In particular, the retarded arrow is generated by putting $\{\ast\} \downarrow \tH_{(I)}= -\tH_{(\ast, I)} +\tH_{(\ast I)}$.\footnote{\ $(\ast I)$ denotes the composition of $\{ \ast \}\sqcup I$ which has a single lump} Then
\[
Y\! \downarrow \tH_{(I)}= \underbrace{\sum_{Y_1\sqcup Y_2=Y} \mu_{Y_1, Y_2\sqcup I}\big ( \text{s}(\tH_{(Y_1)}) \otimes \tH_{(Y_2\sqcup I)} \big )}_{\text{denoted $\mathtt{R}_{(Y;I)}$}}
\]
where $\text{s}:\Sig \to \Sig$ is the antipode of $\Sig$. The Steinmann arrows were first studied by Steinmann \cite[Section 3]{steinmann1960} in settings where $\Sig$ is represented as operator-valued distributions. Here, the \hbox{operator-valued} distribution which is the image of $\mathtt{R}_{(Y;I)}\in \Sig[Y\sqcup I]$ is called the retarded product $\text{R}(Y;I)$.\footnote{\ note that some authors, e.g. \cite{dutsch2019perturbative}, call $\text{R}(Y;i)$ the retarded product, and then call $\text{R}(Y;I)$ the generalized retarded product}

Since $\{\ast\} \downarrow (-)$ is a commutative biderivation of $\Sig$ (\autoref{steinmannarrowaredercoder}), the retarded Steinmann arrow gives $\Sig$ the structure of a Hopf $\textbf{E}$-algebra, and $\Zie$ the structure of a Lie $\textbf{E}$-algebra (similarly for the advanced arrow). There is an interesting description of these Lie $\textbf{E}$-algebras in terms of the adjoint braid arrangement (\autoref{sec:The Steinmann Arrows and Dynkin Elements}). The Steinmann arrows are ``two halves'' of the restricted adjoint representation $\textbf{L}\bigcdot \Sig \to \Sig$ of $\Sig$, which is reflected in \cite[Equation 13]{steinmann1960}. This directly corresponds to how the retarded $\Delta_-$ and advanced $\Delta_+$ propagators are two halves of the causal propagator $\Delta_{\text{S}}=\Delta_+ - \Delta_-$. 

Let $\cH^{\bigcdot}(-,-)$ be the internal hom for the Cauchy product of species (\autoref{sec:inthomcauchy}), and let 
\[
(-)^{\textbf{E}}=\cH^{\bigcdot} ( \textbf{E} , -)
.\] 
Then $(-)^{\textbf{E}}$ is an endofunctor on species, which is lax monoidal with respect to the Cauchy product (\autoref{sec:theendoE}). Therefore $\Sig^{\textbf{E}}$ is naturally an algebra, with multiplication inherited from $\Sig$. Then, by currying the retarded Steinmann action 
\[
(-)\downarrow(-): \textbf{E}\bigcdot \Sig \to \Sig
\] 
we obtain a homomorphism $\Sig \to \Sig^{\textbf{E}}$. Similarly for the setting with decorations, given a choice of adiabatically switched interaction action functional $\ssS_{\text{int}}\in \mathcal{F}_{\text{loc}}[[\hbar]]$, we obtain the homomorphism (\autoref{prop:isEalg})
\begin{align*}
\Sig \otimes \textbf{E}_{\mathcal{F}_{\text{loc}}[[\hbar]]} &\to    (\Sig \otimes \textbf{E}_{\mathcal{F}_{\text{loc}}[[\hbar]]})^{\textbf{E}}\\[6pt]
\tH_F\otimes \ssA_{i_1}\otimes \dots \otimes \ssA_{i_n} &\mapsto \,    \sum_{r=0}^\infty   \underbrace{\downarrow\dots  \downarrow}_{\text{$r$ times}} \tH_F  \otimes  \underbrace{\ssS_{\text{int}}\otimes \dots \otimes \ssS_{\text{int}}}_{\text{$r$ times}} \,  \otimes\,    \ssA_{i_1}\otimes \dots \otimes \ssA_{i_n}.
\end{align*}
Compare this with the formalism for creation-annihilation operators in \cite[Chapter 19]{aguiar2010monoidal}. Then, finally, the corresponding system of perturbed interacting time-ordered products $\widetilde{\text{T}}$ is given by composing this homomorphism with the image of $\text{T}$ under the endofunctor $(-)^{\textbf{E}}$ (\autoref{sec:Perturbation of T-Products by Steinmann Arrows}),
\[
\widetilde{\text{T}} : \Sig\otimes \textbf{E}_{\mathcal{F}_\text{loc} [[\hbar]]} \to   (\Sig\otimes \textbf{E}_{\mathcal{F}_\text{loc} [[\hbar]]})^{\textbf{E}}   \xrightarrow{\text{T}^{\textbf{E}}}        (\textbf{U}_{ \mathcal{F}((\hbar))})^{\textbf{E}} \cong   \textbf{U}_{\mathcal{F}((\hbar))[[\formg]]}
.\]
It is a theorem of pAQFT that this does indeed land in $\textbf{U}_{\mathcal{F}[[\hbar,\formg]]}$.

We also formalize S-matrices, or time-ordered exponentials. Let $\Hom(-,-)$ denote the external hom for species, which lands in vector spaces $\sfVec$ (\autoref{sec:externalhom}). We let 
\[
\mathscr{S}(-)=\Hom(\textbf{E},-)
.\] 
This is lax monoidal with respect to the Cauchy product (\autoref{sec:Series of Species}). In the presence of a generic system of products on an algebra $\textbf{a}$,
\[
\varphi:\textbf{a}\otimes \textbf{E}_V\to \textbf{U}_\cA,
\] 
series $\mathtt{s}\in \mathscr{S}(\textbf{a})$ of $\textbf{a}$
\[\mathtt{s}:\textbf{E}\to\textbf{a}
,\qquad 
\tH_I\mapsto \mathtt{s}_I\] 
induce $\mathscr{S}(\textbf{U}_\cA)\cong \cA[[\formj]]$-valued functions on $V$ as follows (\autoref{sec:decseries}),
\[
\mathcal{S}_{\mathtt{s}} :    V \to \cA[[\formj]]
,\qquad
\ssA \mapsto   \mathcal{S}_{\mathtt{s}}(\formj\! \ssA) =  \sum_{n=0}^{\infty}  \dfrac{\formj^n}{n!}  \varphi_{n} ( \mathtt{s}_{n} \otimes \underbrace{\ssA\otimes \dots \otimes \ssA}_{\text{$n$ times}})
.\]
If $\varphi$ is a homomorphism of algebras, then 
\[
\mathcal{S}_{(-)}: \mathscr{S}(\textbf{a})\to \text{Func}(V, \cA[[\formj]])
\] 
is a homomorphism of $\Bbbk$-algebras (\autoref{homo}). As a basic example, if we put $\textbf{a}=\textbf{E}$, $\cA=C^\infty(V^\ast)$, and set $\formj=1$ at the end, then one can recover the classical exponential function in this way (\autoref{ex:polyfun}). 

For $c\in \Bbbk$, the so-called (scaled) universal series $\mathtt{G}(c)$ of $\Sig$ is given by sending each finite set to the (scaled) composition with one lump,
\[   
\mathtt{G}(c):  \textbf{E} \to  \Sig
,\qquad  
\tH_{I}\mapsto \mathtt{G}(c)_{I}:= c^n\,  \tH_{(I)}    
.\]
If we set $c=1/\text{i}\hbar$, then the function $\mathcal{S}=\mathcal{S}_{\mathtt{G}(1/\text{i}\hbar)}$ above for a fully normalized system of generalized time-ordered products $\text{T}:\Sig\otimes \textbf{E}_{\mathcal{F}_\text{loc} [[\hbar]]}\to\textbf{U}_{\mathcal{F}((\hbar))}$ recovers the usual perturbative S-matrix scheme of pAQFT,
\[
\mathcal{S}:\mathcal{F}_{\text{loc}}[[\hbar]]\to\mathcal{F}((\hbar))[[\formj]]
,\qquad
\ssA \mapsto   \mathcal{S}(\formj\!  \ssA) =  \sum_{n=0}^{\infty}  \bigg( \dfrac{1}{\text{i} \hbar} \bigg)^n \dfrac{\formj^n}{n!}  \text{T}_{n} ( \tH_{(n)} \otimes \underbrace{\ssA\otimes \dots \otimes \ssA}_{\text{$n$ times}})
.\] 
The image of $\mathcal{S}(\formj\! \ssA)$ after applying perturbation by the retarded Steinmann arrow $(-)\downarrow (-)$ and a choice of interaction $\ssS_{\text{int}}\in \mathcal{F}_{\text{loc}}[[\hbar]]$ is
\[
\mathcal{Z}_{\formg \ssS_{\text{int}}}(\formj\! \ssA)
=
\sum_{n=0}^\infty \sum_{r=0}^\infty 
\bigg(\dfrac{1}{\text{i}\hbar}\bigg)^{\! r+n}
\dfrac{\formg^{r} \formj^n}{r!\, n!}\,  \text{R}_{r;n} (\underbrace{\ssS_{\text{int}}\otimes \dots \otimes \ssS_{\text{int}}}_{\text{$r$ times}}\,   ;\,  \underbrace{\ssA\otimes \dots \otimes \ssA}_{\text{$n$ times}} )    
\]
where, by our previous expression for $\mathtt{R}_{(Y;I)}=Y\downarrow \tH_{(I)}$ (and letting $\overline{\text{T}}$ denote the precomposition of $\text{T}$ with the antipode of $\Sig \otimes \textbf{E}_{\mathcal{F}_{\text{loc}}[[\hbar]]}$), we have
\[
\text{R}_{Y;I}(\ssS_{\text{int}}^{\, Y};\ssA^I)
= 
\text{T}_{Y\sqcup I}(Y\downarrow \tH_{(I)} \otimes   \ssS_{\text{int}}^{\, Y} \otimes \ssA^I)
=
\sum_{Y_1 \sqcup Y_2=Y}  \overline{\text{T}}_{Y_1}(\ssS_{\text{int}}^{\, Y_1}) \star_{\text{H}} \text{T}_{Y_2\sqcup I}( \ssS_{\text{int}}^{\, Y_2}\otimes  \ssA^I)
.\]
Then, since
\[
\mathcal{S}_{(-)}:\mathscr{S}(\Sig)\to \text{Func}\big (\mathcal{F}_{\text{loc}}[[\hbar]] , \mathcal{F}((\hbar))[[\formg]]\big )
\] 
is a homomorphism of $\Bbbk$-algebras, it follows that $\mathcal{Z}_{\formg \ssS_{\text{int}}}$ is given by
\[
\mathcal{Z}_{\formg \ssS_{\text{int}}}(\formj\! \ssA)
=
\mathcal{S}^{-1}( \formg \ssS_{\text{int}})\star_{\text{H}} \mathcal{S}(\formg \ssS_{\text{int}} +\formj\! \ssA )
.\]
This is the generating function, or partition function, for time-ordered products of interacting field observables, see e.g. \cite[Section 8.1]{ep73roleofloc}, \cite[Section 6.2]{dutfred00}, going back to Bogoliubov \cite[Chapter 4]{Bogoliubov59}. In this paper, we arrive at the generating function $\mathcal{Z}_{\formg \ssS_{\text{int}}}$ through purely Hopf-theoretic considerations. However, it was originally motivated by attempts to make sense of the path integral synthetically. For some recent developments, see \cite{collini2016fedosov}, \cite{MR4109798}.

\subsection*{Structure.} 

This paper is divided into three parts. In part one we develop general theory of species, in part two we focus on the theory for the Hopf algebra of compositions $\Sig$ and its primitive part $\Zie$, and in part three we specialize to pAQFT for the case of a real scalar field on Minkowski spacetime.


\subsection*{Acknowledgments.} 

We thank Adrian Ocneanu for his support and useful discussions. This paper would not have been written without Nick Early's discovery that certain relations appearing in Ocneanu's work were known in quantum field theory as the Steinmann relations. We thank Yiannis Loizides and Maria Teresa Chiri for helpful discussions during an early stage of this project. We thank Arthur Jaffe for his support, useful suggestions, and encouragement to pursue this topic. We thank Penn State maths department for their continued support. 

\part{General Theory}

We begin by recalling some basic aspects of species, and of Aguiar-Mahajan's theory of gebras internal to species. We recall the various homs on species and define associated structures, in particular we define the endomorphism algebra of raising operators on a species (i.e. higher up operators). We develop theory for modules internal to species, constructed with respect to the Cauchy product. We also develop theory for decorated series of species, and introduce the notion of a system of products for a species. We show that systems of products for modules over the exponential species may be `perturbed' in a certain sense.

\section{Preliminaries} 

We recall species, and gebras internal to species, following \cite{aguiar2010monoidal}, \cite{aguiar2013hopf}.

\subsection{Species}

Let $\textsf{Vec}$ denote the category of vector spaces over a field $\Bbbk$ of characteristic zero. 
Let $\sS$ denote the monoidal category with objects finite sets $I,J\dots$, morphisms bijective functions $\sigma:J\to I$, and monoidal product the restriction of the disjoint union of sets to finite sets. We let $n:=|I|$ throughout. We pick a section of the decategorification functor 
\[
\sfS\to \bN
,\qquad 
I\mapsto n
\] 
by letting $[n]$ denote the finite set of integers $[n]:=\{1,\dots, n\}$, for $n\in \bN$. A (vector) \emph{species} $\textbf{p}$ is a presheaf of vector spaces on $\sS$, denoted
\[  
\textbf{p}:\sS^{\op}\to\textsf{Vec},\qquad  I\mapsto\textbf{p}[I] 
\quad\text{and}\quad  
\sigma \mapsto \textbf{p}[\sigma]
.\]
We abbreviate $\textbf{p}[n]:=\textbf{p}\big[\{1,\dots,n\}\big]$. Explicitly, to every finite set $I$ we assign a vector space $\textbf{p}[I]$, and to every bijection of finite sets $\sigma:J\to I$ we assign a (bijective) linear map $\textbf{p}[\sigma]:\textbf{p}[I]\to \textbf{p}[J]$ such that identities and the composition of bijections are preserved. Often, $\textbf{p}[I]$ is the collection of formal linear combinations of labelings/`probes' of a set of combinatorial objects by $I$, and $\textbf{p}[\sigma]:\textbf{p}[I]\to \textbf{p}[J]$ sends an \hbox{$I$-labeling} to its precomposition with $\sigma$. We shall tend to define species by giving their values on finite sets only, with their values on bijections then being induced in an obvious way. 

A species $\textbf{p}$ is called \emph{positive} if $\textbf{p}[\emptyset]=0$, and \emph{connected} if $\textbf{p}[\emptyset]\cong \Bbbk$. Every species $\textbf{p}$ determines a positive species $\textbf{p}_+$ by putting $\textbf{p}_+[I]:=\textbf{p}[I]$ for $I$ nonempty. We denote the functor category of species by $[\sfS^{\op},\sfVec]$. Explicitly, a \emph{morphism} of species $\eta:\textbf{p}\to \textbf{q}$ is a family of linear maps 
\[
\eta_I:\textbf{p}[I]\to \textbf{q}[I],\qquad  I\in \sfS
\] 
such that $\eta_J\circ \textbf{p}[\sigma]=\textbf{q}[\sigma]\circ \eta_I$ for all bijections $\sigma:J\to I$ in $\sfS$. We abbreviate $\eta_n:=\eta_{[n]}$.

\subsection{Monoidal Products}\label{sec:Monoidal Products}

We now equip the category of species $[\sfS^{\op},\textsf{Vec}]$ with three monoidal products.

The category $[\sfS^{\op},\textsf{Vec}]$ may be equipped with the symmetric monoidal product known as the Day convolution, defined with respect to the disjoint union of finite sets and the tensor product of vector spaces. This is often called the Cauchy product of species. Let us write $S\sqcup T=I$ to indicate an ordered pair $(S,T)$ such that
\[
S,T\subseteq I, \qquad S\cap T=\emptyset
,\qquad 
S\cup T=I
.\] 
Given species $\textbf{p}$ and $\textbf{q}$, their \emph{Cauchy product} $\textbf{p}\bigcdot \textbf{q}$ is the species given by
\[    
\textbf{p}\bigcdot  \textbf{q}[I]
:=     
\bigoplus_{S\sqcup T=I} \textbf{p}[S] \otimes \textbf{q}[T]    
.\]
This extends in an obvious way to a symmetric monoidal product for $[\sfS^{\op},\textsf{Vec}]$. The unit for the Cauchy product is the species $\textbf{1}$, given by
\[
\textbf{1}[I]:=
\begin{cases}
\Bbbk &\quad \text{if}\ I=\emptyset \\
0 &\quad \text{otherwise}. 
\end{cases}
\]
Given a morphism of species $\eta:\textbf{p}\bigcdot \textbf{q}\to \textbf{r}$, we denote the restriction of $\eta_I$ to $\textbf{p}[S] \otimes \textbf{q}[T]$ by $\eta_{S,T}$. Given a morphism of species $\eta:\textbf{r}\to \textbf{p} \bigcdot \textbf{q}$, we denote the composition of $\eta_I$ with the orthogonal projection onto $\textbf{p}[S] \otimes \textbf{q}[T]$ by $\eta_{S,T}$. 

Given species $\textbf{p}$ and $\textbf{q}$, their \emph{Hadamard product} $\textbf{p}\otimes \textbf{q}$ is the species given by
\[     
\textbf{p}\otimes  \textbf{q}[I]:= \textbf{p}[I]\otimes  \textbf{q}[I]
.\] 
This extends to a second symmetric monoidal product for $[\sfS^{\op},\textsf{Vec}]$. The unit for the Hadamard product is the \emph{exponential species} $\textbf{E}$, given by
\[
\textbf{E}[I]:= \Bbbk
\qquad \text{for all} \quad I\in \sfS
\qquad \quad  \text{and} \qquad \quad 
\textbf{E}[\sigma]:=\text{id}_\Bbbk 
\qquad \text{for all} \quad \sigma\in \sfS
.\]
We let $\tH_I$ denote the unit $1_\Bbbk\in \textbf{E}[I]$. 

Given species $\textbf{p}$ and $\textbf{q}$, with $\textbf{q}[\emptyset]=0$, their \emph{plethystic product} $\textbf{p}\boldsymbol{\circ} \textbf{q}$, also known as composition or substitution, is the species given by
\[    
\textbf{p}\boldsymbol{\circ} \textbf{q}[I]:= \bigoplus_{ P } \textbf{p}[P] \otimes \bigotimes_{S_j\in P} \textbf{q}[S_j] 
.\]
The direct sum is over all set partitions $P=\{S_1, \dots , S_k\}$ of $I$ (the $S_j\subseteq I$ are disjoint, nonempty, and their union is $I$). This extends to a third monoidal product for $[\sfS^{\op},\textsf{Vec}]$, and we direct the reader to \cite[Appendix B.5]{aguiar2010monoidal} for important subtleties here. The unit for the plethystic product is the species $\textbf{X}$, given by
\[ 
\textbf{X}[I]=\begin{cases}
\Bbbk  &\quad  \text{if }I\text{ is a singleton} \\
0 &\quad \text{otherwise}.
\end{cases}   
\]  
We let $\tH_i$ denote the unit $1_\Bbbk\in \textbf{X}\big [\{ i \}\big ]$. Finally, we denote the category-theoretic coproduct (and product) of species by $\oplus$, which is given by
\[
\textbf{p}\oplus \textbf{q}[I]:= \textbf{p}[I]\oplus  \textbf{q}[I]
.\]
The unit for the coproduct is the species $\textbf{0}$, given by
\[
\textbf{0}[I]:=0 \qquad \quad \text{for all} \quad  I\in \sfS
.\]

\begin{remark}
Species are equivalent to analytic endofunctors on $\textsf{Vec}$ via a categorified version of the $\bZ$-transform \cite[Section 19.1.2]{aguiar2010monoidal}. The Cauchy convolution product is induced by pointwise multiplying endofunctors, and the plethystic product is induced by composing endofunctors.
\end{remark}

\subsection{Gebras in Species}

Let a \emph{(co/bi/Hopf/Lie)algebra in species} be a (co/bi/Hopf)monoid/Lie algebra internal to species $[\sfS^{\op},\textsf{Vec}]$, constructed with respect to the Cauchy product.\footnote{\ regarding the other monoidal products, gebras constructed with respect to the Hadamard product are equivalently presheaves of $\Bbbk$-gebras on $\sfS$, and (co)monoids constructed with respect to the plethystic product are linear (co)operads by another name \cite[Appendix B.5]{aguiar2010monoidal}} We now make these definitions explicit.

An algebra in species $\textbf{a}=(\textbf{a},\mu,\iota)$ consists of two morphisms, the \emph{multiplication} $\mu: \textbf{a}\bigcdot \textbf{a} \to \textbf{a}$ and the \emph{unit} $\iota:\textbf{1}\to \textbf{a}$, which satisfy associativity and unitality. Explicitly, we have linear maps
\[   
\mu_{S,T}:  \textbf{a}[S]\otimes \textbf{a}[T]\to \textbf{a}[I] 
\qquad \text{and} \qquad
\iota_{\emptyset}  :\Bbbk\to  \textbf{a}[\emptyset]
\]
where $S\sqcup T=I$, $I\in \sfS$, which commute with the action of $\sfS$,\footnote{\ that is $\textbf{a}[\sigma]\big(\mu_{S,T}(\mathtt{a}\otimes \mathtt{b})\big)=\mu_{\sigma(S),\sigma(T)}\big( \textbf{a}[\sigma|_S](\mathtt{a}),\textbf{a}[\sigma|_T](\mathtt{b})\big)$} 
and satisfy associativity
\[\mu_{S,T\sqcup U}\big( \mathtt{a}  \otimes  \mu_{T,U}(\mathtt{b}\otimes \mathtt{c})\big)
=
\mu_{S\sqcup T,U}\big( \mu_{S,T}( \mathtt{a}  \otimes  \mathtt{b} )\otimes \mathtt{c}\big)
\]
and unitality
\[
\mu_{I,\emptyset}( \mathtt{a} \otimes \mathtt{1}_\textbf{a} )
=
\mu_{\emptyset,I}(\mathtt{1}_\textbf{a}\otimes \mathtt{a} )
=
\mathtt{a}
\]
where $\mathtt{1}_\textbf{a}:=\iota_{\emptyset}(1_\Bbbk)$.
  
A coalgebra in species $\textbf{c}=(\textbf{c},\Delta,\epsilon)$ consists of two morphisms, the \emph{comultiplication} $\Delta:\textbf{c}\to \textbf{c}\bigcdot \textbf{c}$ and the \emph{counit} $\epsilon:\textbf{c}\to \textbf{1}$, which satisfy coassociativity and counitality. Explicitly, we have linear maps
\[   \Delta_{S,T}: \textbf{c}[I]\to \textbf{c}[S]\otimes \textbf{c}[T]
\qquad \text{and} \qquad
\epsilon_{\emptyset}:  \textbf{c}[\emptyset]\to \Bbbk 
\]
where $S\sqcup T=I$, $I\in \sfS$, which commute with the action of $\sfS$, and satisfy coassociativity
\[
(\Delta_{S,T}\otimes \text{id})   \circ  \Delta_{S\sqcup T,U}(\mathtt{a})
=
(\text{id}\otimes \Delta_{T,U})   \circ  \Delta_{S,T\sqcup U}(\mathtt{a})
\]
and counitality 
\[
(\text{id}\otimes  \epsilon) \circ \Delta_{I,\emptyset}(\mathtt{a})
=
(\epsilon\otimes  \text{id}) \circ \Delta_{\emptyset,I}(\mathtt{a})
=
\mathtt{a} 
\]
where we have identified $\textbf{a}[I]\otimes \Bbbk= \Bbbk\otimes\textbf{a}[I]=\textbf{a}[I]$.
 
A bialgebra in species $\textbf{h}=(\textbf{h}, \mu, \Delta, \iota, \epsilon)$ simultaneously has the structure of an algebra and a coalgebra, such that the usual four compatibility conditions are satisfied, see \cite[Section 8.3.1]{aguiar2010monoidal}. A Hopf algebra in species is a bialgebra in species such that there exists an additional morphism $\text{s}:\textbf{h}\to \textbf{h}$ called the \emph{antipode}. Following \cite[Section 2.8]{aguiar2013hopf}, the antipode is a family of linear maps
\[ 
\text{s}_I:  \textbf{h}[I]\to  \textbf{h}[I], \qquad I\in \sfS
\]
such that $\text{s}_{\emptyset}:\textbf{h}[\emptyset]\to \textbf{h}[\emptyset]$ is an antipode for the induced $\Bbbk$-bialgebra $\textbf{h}[\emptyset]$, and for each nonempty $I$, we have  
\[
\sum_{S\sqcup T=I} \mu_{S,T}\circ ( \text{id} \otimes \text{s}_T )\circ  \Delta_{S,T} =0
\qquad \text{and} \qquad
\sum_{S\sqcup T=I}\mu_{S,T}\circ (\text{s}_S \otimes \text{id})\circ \Delta_{S,T} =0
.\]
After \textcolor{blue}{(\refeqq{eq:convolprod})} below, we notice that these conditions equivalently say that $\text{s}:\textbf{h}\to \textbf{h}$ is the inverse of the identity map $\text{id}_\textbf{h}:\textbf{h}\to \textbf{h}$ in the convolution $\Bbbk$-algebra $\Hom(\textbf{h},\textbf{h})$. This inverse exists if and only if the antipode $\text{s}_{\emptyset}$ exists \cite[Proposition 1]{aguiar2013hopf}.  

To define a bialgebra structure on a connected species, it suffices to specify the linear maps $\mu_{S,T}$ and $\Delta_{S,T}$ for $S$ and $T$ nonempty, with everything else then being determined. This connected bialgebra is then necessarily a Hopf algebra. 

A Lie algebra in species $\textbf{g}=(\textbf{g}, \partial^\ast)$ consists of a single morphism $\partial^\ast: \textbf{g}\bigcdot \textbf{g}\to \textbf{g}$ called the \emph{Lie bracket}, which satisfies antisymmetry and the Jacobi identity. Explicitly, we have linear maps
\[   
\partial^\ast_{S,T}:  \textbf{g}[S]\otimes \textbf{g}[T]\to \textbf{g}[I] 
,\qquad
[\mathtt{a}, \mathtt{b}]_{S,T}:= \partial^\ast_{S,T}( \mathtt{a}\otimes \mathtt{b} )
\]
where $S\sqcup T=I$, $I\in \sfS$, which commute with the action of $\sfS$, and satisfy antisymmetry 
\[
[\mathtt{a}, \mathtt{b}]_{S,T} + [\mathtt{b}, \mathtt{a}]_{T,S}=0
\]
and the Jacobi identity
\[
[[\mathtt{a}, \mathtt{b}]_{S,T} , \mathtt{c}]_{S\sqcup T, U}
+
[[\mathtt{c}, \mathtt{a}]_{U,S} , \mathtt{b}]_{U\sqcup S, T}
+
[[\mathtt{b}, \mathtt{c}]_{T,U} , \mathtt{a}]_{T\sqcup U, S}
=0
.\]

\begin{ex}\label{ex:E}
The exponential species $\textbf{E}$ is an algebra with
\[
\mu_{S,T} (\tH_S\otimes \tH_T ):= \tH_I 
\qquad \text{and} \qquad
\mathtt{1}_\textbf{E}:= \tH_\emptyset
.\]
Let $\textbf{E}^\ast$ also denote the exponential species, with $1_\Bbbk \in \textbf{E}^\ast[I]$ now denoted by $\tH_{(I)}$. Then $\textbf{E}^\ast$ is a coalgebra with
\[
\Delta_{S,T}(\tH_{(I)}):= \tH_{(S)}\otimes \tH_{(T)} 
\qquad \text{and} \qquad
\epsilon_{\emptyset}(\tH_\emptyset):= 1_\Bbbk
.\]
Equipped with all four morphisms, $\textbf{E}$ is a connected bialgebra. The antipode is given by 
\[
\text{s}_I(\tH_I)=(-1)^n\tH_I
\]
where $n=|I|$ as usual.
\end{ex}

\begin{ex}
Let $\text{L}[I]$ denote the set of linear orderings $\ell$ of $I$. We let $\emptyset\in \text{L}[\emptyset]$ denote the unique linear ordering of the empty set. For $S\subseteq I$, let $\ell|_S$ denote the restriction of $\ell$ to $S$. For $S\sqcup T=I$, $\ell_1\in \text{L}[S]$ and $\ell_2\in \text{L}[T]$, let $\ell_1 \ell_2\in \text{L}[I]$ denote the concatenation of $\ell_1$ with $\ell_2$. We have the species $\textbf{L}$ given by
\[
\textbf{L}[I]:= \big\{\text{formal $\Bbbk$-linear combinations of linear orderings of $I$}\big\}
.\]
We let $\tH_\ell$ denote the basis element of $\textbf{L}[I]$ corresponding to $\ell\in \text{L}[I]$. Then $\textbf{L}$ is a connected bialgebra, with multiplication and comultiplication given by
\[
\mu_{S,T} (\tH_{\ell_1} \otimes \tH_{\ell_2} ):= \tH_{\ell_1 \ell_2} 
\qquad \text{and} \qquad
\Delta_{S,T} (\tH_{\ell}):=   \tH_{\ell|_S} \otimes \tH_{\ell|_T} 
\]
and unit and counit given by
\[
\mathtt{1}_{\textbf{L}}:= \tH_{\emptyset} 
\qquad \text{and} \qquad
\epsilon_{\emptyset}(\tH_{\emptyset} ) :=1_\Bbbk
.\]
The antipode is given by
\[
\text{s}_I(\tH_\ell)=(-1)^n\tH_{\bar{\ell}}
\]
where $\bar{\ell}$ is the linear ordering of $I$ obtained by reversing the linear ordering $\ell$.
\end{ex}


\subsection{Poincar\'e-Birkhoff-Witt and Cartier-Milnor-Moore} \label{sec:Poincar\'e-Birkhoff-Witt and Cartier-Milnor-Moore}

We recall the Poincar\'e-Birkhoff-Witt (PBW) and Cartier-Milnor-Moore (CMM) theorems for Hopf algebras in species. Every (perhaps nonunital) associative algebra in species $\textbf{a}$ gives rise to a Lie algebra in species via the \emph{commutator bracket} 
\[
[-,-] : \textbf{a}\bigcdot \textbf{a} \to \textbf{a}, \qquad 
[\mathtt{a}, \mathtt{b}]_{S,T}
:= 
\mu_{S,T}(\mathtt{a}\otimes \mathtt{b})-\mu_{T,S}(\mathtt{b}\otimes \mathtt{a}) 
.\] 
We sometimes abbreviate $[\mathtt{a}, \mathtt{b}]:=[\mathtt{a}, \mathtt{b}]_{S,T}$. For $\textbf{h}$ a Hopf algebra in species, the associated species of \emph{primitive elements} $\mathcal{P}(\textbf{h})$ is given by
\begin{equation}\label{eq:prim}
\mathcal{P}(\textbf{h})[I]:
=
\big\{ 
\mathtt{a}\in \textbf{h}[I] : 
\Delta_I(\mathtt{a})= \mathtt{a}\otimes \mathtt{1}_\textbf{h} +\mathtt{1}_\textbf{h}\otimes \mathtt{a}
\big\}
.
\end{equation}
The restriction of the commutator bracket of $\textbf{h}$ gives $\mathcal{P}(\textbf{h})$ the structure of a Lie algebra. Notice that if $\textbf{h}$ is connected, then $\mathcal{P}(\textbf{h})$ is a positive Lie algebra $\mathcal{P}(\textbf{h})[\emptyset]=0$, and 
\[
\mathcal{P}(\textbf{h})[I]
=\bigcap_{\substack{S\sqcup T=I\\[2pt] S,T\neq \emptyset}}  \text{ker} \big  (    \Delta_{S,T} :\textbf{h}[I]\to \textbf{h}[S]\otimes \textbf{h}[T]    \big)      
\]
for $I$ nonempty. 

Let $\textbf{g}$ be a positive Lie algebra in species (so $\textbf{g}[\emptyset]=0$). The \emph{universal enveloping algebra} $\mathcal{U}(\textbf{g})$ of $\textbf{g}$ is the connected cocommutative Hopf algebra which is the quotient of the plethystic product $\textbf{L}\boldsymbol{\circ} \textbf{g}$ by the ideal generated by elements of the form
\[
\mathtt{a}\otimes\mathtt{b}-\mathtt{b}\otimes\mathtt{a}
-[\mathtt{a},\mathtt{b}]
\qquad \quad
\text{for} \quad  \mathtt{a}\in \textbf{g}[S]\quad \text{and} \quad  \mathtt{b}\in \textbf{g}[T] 
.\]   
See \cite[Section 8.2]{aguiar2013hopf} for more details. The plethystic product of the coalgebra homomorphism
\[\textbf{E}^\ast \to \textbf{L}, 
\qquad \tH_{(I)}\mapsto \dfrac{1}{n!} \sum_{\ell\in \text{L}[I]} \tH_\ell\] 
with $\text{id}_\textbf{g}$ gives a coalgebra homomorphism $\textbf{E}^\ast\boldsymbol{\circ} \textbf{g}\to \textbf{L}\boldsymbol{\circ} \textbf{g}$. The following is Aguiar-Mahajan's \cite[Theorem 119]{aguiar2013hopf} elaboration on Joyal's PBW theorem for species \cite[Section 4.2, Theorem 2]{joyal1986foncteurs}. 

\begin{thm}[PBW]
Let $\textbf{g}$ be a positive Lie algebra in species. The composition of the coalgebra homomorphism $\textbf{E}^\ast \boldsymbol{\circ} \textbf{g}\to \textbf{L}\boldsymbol{\circ}\textbf{g}$ with the quotient map $\textbf{L}\boldsymbol{\circ} \textbf{g} \twoheadrightarrow \mathcal{U}(\textbf{g})$ is an isomorphism of coalgebras
\[
\textbf{E}^\ast \boldsymbol{\circ} \textbf{g} \xrightarrow{\sim}\mathcal{U}(\textbf{g})
.\]
\end{thm}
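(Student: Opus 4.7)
The plan is to establish the statement in two independent stages: first check that the asserted map is a morphism of coalgebras, and then verify bijectivity on each component $\textbf{p}[I]$, inducting on $n=|I|$.

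For the coalgebra homomorphism property, I would first promote $\textbf{L}\circ \textbf{g}$ to a connected cocommutative Hopf algebra. The species $\textbf{L}$ is itself a connected bialgebra (with unshuffle comultiplication), and because $\textbf{g}$ is positive, the plethystic construction $\textbf{L}\circ \textbf{g}$ inherits a well-defined multiplication (by concatenation of words) and a cocommutative comultiplication (by deconcatenation/unshuffling of decorated words). Under this structure, each element of $\textbf{g}\subseteq \textbf{L}\circ \textbf{g}$ sits in the primitive part. Consequently the generating relations $\mathtt{a}\otimes \mathtt{b}-\mathtt{b}\otimes \mathtt{a}-[\mathtt{a},\mathtt{b}]$ used to define $\mathcal{U}(\textbf{g})$ are sums of primitive elements and hence the two-sided ideal they generate is automatically a coideal. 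This makes the quotient $\textbf{L}\circ \textbf{g}\twoheadrightarrow \mathcal{U}(\textbf{g})$ a coalgebra map, and the symmetrization map $\textbf{E}^\ast \circ \textbf{g}\to \textbf{L}\circ \textbf{g}$ preceding it is a coalgebra map by hypothesis, so the composite is too.

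For bijectivity, I would argue by induction on $n$. The case $n=0$ is trivial since $\textbf{g}$ is positive, so both sides reduce to $\textbf{1}$. For the inductive step, the key structural observation is that both $\textbf{E}^\ast \circ \textbf{g}$ and $\mathcal{U}(\textbf{g})$ are connected cocommutative coalgebras containing $\textbf{g}$ as primitives, and in characteristic zero such coalgebras are determined (up to canonical isomorphism) by their primitive part, via the species-theoretic analog of the Milnor-Moore/Takeuchi rigidity theorem. I would accordingly split the verification into two pieces: surjectivity, which is straightforward from the fact that $\mathcal{U}(\textbf{g})$ is generated by $\textbf{g}$ under multiplication and each product of $k$ elements of $\textbf{g}$ over disjoint subsets lies in the image of the partition-indexed component of $\textbf{E}^\ast \circ \textbf{g}$; and injectivity, which amounts to classical PBW.

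The main obstacle is this injectivity. I would follow Joyal's original strategy, adapted to the species setting: equip $\mathcal{U}(\textbf{g})$ with the \emph{length filtration} whose $k$-th piece is spanned by images of $\bigoplus_{j\leq k}\textbf{L}^{\bigcdot j}\circ \textbf{g}$, observe that in the associated graded the commutator relations collapse to symmetry $\mathtt{a}\otimes \mathtt{b}=\mathtt{b}\otimes \mathtt{a}$, and thereby obtain a surjection $\textbf{E}^\ast \circ \textbf{g}\twoheadrightarrow \mathrm{gr}\,\mathcal{U}(\textbf{g})$. To upgrade this to an isomorphism, I would construct a faithful action of $\mathcal{U}(\textbf{g})$ on $\textbf{E}^\ast \circ \textbf{g}$ extending the canonical Lie action of $\textbf{g}$, giving linear independence of the monomials indexed by set partitions and thus injectivity of the map in the statement. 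Packaging the associated graded isomorphism together with the filtration-preserving nature of the map then finishes the induction.
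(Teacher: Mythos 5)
First, a point of order: the paper does not prove this theorem. It is quoted from Aguiar--Mahajan (Theorem 119 of \emph{Hopf monoids in species}) and Joyal, so there is no in-paper proof to compare your argument against; your proposal has to be judged on its own terms. The first half is sound: $\textbf{L}\circ \textbf{g}$ carries the concatenation/unshuffle Hopf structure, the generators $\mathtt{a}\otimes\mathtt{b}-\mathtt{b}\otimes\mathtt{a}-[\mathtt{a},\mathtt{b}]$ are indeed primitive for it (the two split coproduct terms of $\mathtt{a}\mathtt{b}$ and of $\mathtt{b}\mathtt{a}$ cancel in the difference, and $[\mathtt{a},\mathtt{b}]\in\textbf{g}$ is primitive), a two-sided ideal generated by primitives is a coideal, and the symmetrization map $\textbf{E}^\ast\circ\textbf{g}\to\textbf{L}\circ\textbf{g}$ is a coalgebra map by construction. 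The composite is therefore a coalgebra morphism, and your length-filtration argument does give surjectivity, since reordering a product of elements of $\textbf{g}$ changes it only by terms of strictly smaller plethystic length.

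The genuine gap is injectivity. Two issues. First, the invoked ``rigidity'' of connected cocommutative coalgebras by their primitive part is false as stated: a connected cocommutative coalgebra is not determined by its primitives; that determination is the content of CMM and requires the full Hopf structure. Fortunately this claim is not load-bearing in your argument. Second, and decisively, the entire content of PBW is concentrated in the one sentence ``construct a faithful action of $\mathcal{U}(\textbf{g})$ on $\textbf{E}^\ast\circ\textbf{g}$ extending the canonical Lie action,'' for which you give no construction. In the classical Birkhoff--Witt argument this module structure is built by induction over an \emph{ordered basis} of the Lie algebra, and its well-definedness is exactly where the Jacobi identity is consumed; in the species setting there is no $\sfS$-equivariant choice of ordered basis of $\textbf{g}[I]$, so the classical construction does not transport verbatim. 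One must either fix $I$, run the classical argument componentwise, and then separately re-establish naturality, or else replace the representation-theoretic step by the characteristic-zero structure theory (Eulerian/Garsia idempotents, coradical filtration) that Joyal and Aguiar--Mahajan actually use. As written, your proposal establishes only that the map is a surjective morphism of coalgebras.
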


This isomorphism commutes with the canonical images of $\textbf{g}$. In particular, $\textbf{g}$ is a Lie subalgebra of its universal enveloping algebra $\mathcal{U}(\textbf{g})$. The following is due to Stover \cite[Proposition 7.10 and Theorem 8.4]{stover1993equivalence}.

\begin{thm}[CMM] \label{CMM}
The constructions universal enveloping algebra $\mathcal{U}(-)$ and primitive elements $\mathcal{P}(-)$ form an adjoint equivalence between positive Lie algebras in species and connected cocommutative Hopf algebras in species,
\begin{center}
\begin{tikzcd}[column sep=huge,row sep=large] 
^{\text{co}}\textsf{Hopf}([\sfS^{\op},\textsf{Vec}]^\circ)   
\arrow[r, shift right=1.142ex,  "\bot" , "\mathcal{P}"']   
&   
\textsf{Lie}([\sfS^{\op},\textsf{Vec}]_+)  
\arrow[l, shift right=1.142ex, "\mathcal{U}"'] .
\end{tikzcd}
\end{center}
\end{thm}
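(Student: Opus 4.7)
The plan is to establish an adjunction $\mathcal{U} \dashv \mathcal{P}$ and then verify that both its unit $\eta_{\textbf{g}}: \textbf{g} \to \mathcal{P}(\mathcal{U}(\textbf{g}))$ and its counit $\epsilon_{\textbf{h}}: \mathcal{U}(\mathcal{P}(\textbf{h})) \to \textbf{h}$ are isomorphisms. Throughout I use the PBW theorem proved just above.

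For the adjunction, I would use the universal property built into the construction of $\mathcal{U}$. The canonical composite $\textbf{g} \hookrightarrow \textbf{L} \circ \textbf{g} \twoheadrightarrow \mathcal{U}(\textbf{g})$ lands in the primitive part, because length-one tensors are primitive in $\textbf{L} \circ \textbf{g}$ and this coalgebra structure descends to the quotient; this defines $\eta_{\textbf{g}}$, and a direct check shows it is a Lie algebra morphism. Conversely, a Lie algebra homomorphism $f: \textbf{g} \to \mathcal{P}(\textbf{h})$ extends uniquely to an algebra homomorphism $\textbf{L} \circ \textbf{g} \to \textbf{h}$ using the multiplication of $\textbf{h}$; because primitives in any bialgebra satisfy $[\mathtt{a}, \mathtt{b}]_{S,T} = \mu_{S,T}(\mathtt{a} \otimes \mathtt{b}) - \mu_{T,S}(\mathtt{b} \otimes \mathtt{a})$, this extension annihilates the defining ideal of $\mathcal{U}(\textbf{g})$ and descends to a map $\tilde{f}: \mathcal{U}(\textbf{g}) \to \textbf{h}$. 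The induced $\tilde{f}$ is a coalgebra (hence bialgebra and thus Hopf) morphism because it sends the generators in $\textbf{g}$ into primitives, and multiplicativity propagates this to all of $\mathcal{U}(\textbf{g})$.

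That $\eta_{\textbf{g}}$ is an isomorphism follows immediately from PBW. The PBW isomorphism $\textbf{E}^\ast \circ \textbf{g} \xrightarrow{\sim} \mathcal{U}(\textbf{g})$ is one of coalgebras and intertwines the two canonical copies of $\textbf{g}$. A direct computation of the comultiplication on $\textbf{E}^\ast \circ \textbf{g}$ (which decomposes an element along two-block refinements of its underlying set partition) shows that an element is primitive if and only if it lies in the summand indexed by the trivial one-block partition $\{I\}$; but this summand is precisely $\textbf{X} \circ \textbf{g} \cong \textbf{g}$. Hence $\mathcal{P}(\mathcal{U}(\textbf{g})) = \textbf{g}$ and $\eta_{\textbf{g}}$ is the identity under this identification.

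Showing that the counit is an isomorphism is the main obstacle. From the triangle identities together with the fact that $\eta$ is a natural isomorphism, $\mathcal{P}(\epsilon_{\textbf{h}})$ is automatically an isomorphism for every $\textbf{h}$; the task is to lift this to $\epsilon_{\textbf{h}}$ itself. My strategy is a Milnor--Moore rigidity argument internal to species: any morphism of connected cocommutative Hopf algebras in species that restricts to an isomorphism on primitive parts is itself an isomorphism. The essential ingredient is a species-theoretic structure theorem stating that every connected cocommutative Hopf algebra $\textbf{h}$ is, as a coalgebra, canonically isomorphic to $\textbf{E}^\ast \circ \mathcal{P}(\textbf{h})$ via iterating the reduced comultiplication and projecting each tensor factor onto primitives. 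I would prove this by induction on $|I|$, using the coradical filtration of $\textbf{h}$ (defined by iterated kernels of the reduced comultiplication); in characteristic zero, cocommutativity canonically splits each filtration step as a symmetric power of primitives. Combining this structure theorem with PBW, both $\mathcal{U}(\mathcal{P}(\textbf{h}))$ and $\textbf{h}$ become coalgebraically identified with $\textbf{E}^\ast \circ \mathcal{P}(\textbf{h})$, and any coalgebra morphism between them restricting to the identity on primitives must be the identity; applied to $\epsilon_{\textbf{h}}$ this finishes the proof. The delicate technical step is the inductive splitting of the coradical filtration, which is where the characteristic-zero hypothesis enters.
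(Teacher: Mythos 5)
The paper does not actually prove this theorem: it is imported wholesale from Stover (the text cites his Proposition 7.10 and Theorem 8.4), so there is no in-paper argument to measure yours against. What you have written is the classical Milnor--Moore strategy transported to species, and most of it is sound: the adjunction via the universal property of $\mathcal{U}$ is correct; the identification $\mathcal{P}(\mathcal{U}(\textbf{g}))\cong\textbf{g}$ by reading off primitives of $\textbf{E}^\ast\circ\textbf{g}$ through PBW works, since the deshuffle comultiplication vanishes identically on the one-block summand and every multi-block element has a nonzero $(S,T)$-component for a suitable splitting of its blocks; and reducing the counit to a rigidity statement via the triangle identity is the right move. One local overstatement: a coalgebra morphism restricting to the identity on primitives need not \emph{be} the identity --- it is only unitriangular with respect to the filtration by number of blocks, hence an isomorphism. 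That weaker conclusion is all you need for $\epsilon_{\textbf{h}}$, so state it that way.

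The genuine gap is that essentially all of the content of CMM beyond PBW is concentrated in your asserted structure theorem that every connected cocommutative Hopf algebra in species is coalgebra-isomorphic to $\textbf{E}^\ast\circ\mathcal{P}(\textbf{h})$, and you defer its proof to an ``inductive splitting of the coradical filtration'' without executing it. The tool you gesture at does work in characteristic zero --- one can take the first Eulerian idempotent $\log_\ast(\operatorname{id})$ in the convolution algebra $\Hom(\textbf{h},\textbf{h})$, which is well defined componentwise because connectedness forces the coradical filtration of $\textbf{h}[I]$ to terminate after $|I|$ steps, and then show that symmetrized products of its image assemble into a coalgebra isomorphism $\textbf{E}^\ast\circ\mathcal{P}(\textbf{h})\xrightarrow{\sim}\textbf{h}$ --- but as written your proof has a hole exactly the size of that splitting, and filling it is precisely what the cited results of Stover do. Until that step is carried out, the argument establishes the adjunction and the invertibility of the unit, but not of the counit.
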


\begin{ex}
We have
\[
\mathcal{P}(\textbf{E})= \textbf{X}
\qquad \text{and} \qquad
\mathcal{U}(\textbf{X})= \textbf{E}
\]
and also
\[
\mathcal{P}(\textbf{L})= \Lie
\qquad \text{and} \qquad
\mathcal{U}(\Lie)= \textbf{L}
\]
where $\Lie$ is the underlying species of the (positive) Lie operad \cite[Example B.5]{aguiar2010monoidal}. 
\end{ex}

\section{Internal and External Homs}

We now discus the $\sfVec$-enriched external hom for species $\Hom(-,-)$, and the internal homs for the Hadamard product $\cH(-,-)$ and the Cauchy product $\cH^{\bigcdot}(-,-)$.
  
\subsection{External Hom}\label{sec:externalhom}

For species $\textbf{p}$ and $\textbf{q}$, we have the vector space $\Hom(\textbf{p}, \textbf{q})$ given by
\[     
\Hom(\textbf{p}, \textbf{q}):
=
\big\{\text{morphisms of species $\xi:\textbf{p}\to \textbf{q}$}\big\}   
.\]
The vector space structure is given by
\[             
(\xi + \nu)_I :=  {\xi}_I + {\nu}_I 
\qquad 
\text{and} 
\qquad
(c\, \xi)_I:=   c\,  \xi_I \quad \text{for} \quad  c\in \Bbbk
.\] 
For morphisms of species $\zeta:\textbf{r}\to \textbf{p}$ and $\eta:\textbf{q}\to \textbf{s}$, we have the linear map $\Hom(\zeta,\eta)$ given by 
\[          
\Hom(\zeta,\eta)  : \Hom( \textbf{p}, \textbf{q}) \to  \Hom(\textbf{r}, \textbf{s}) , 
\qquad  
\xi\mapsto  \eta \circ \xi \circ \zeta
.\]
This defines the external hom for species,
\[       
\Hom(-,-):  [\sfS^{\op},\textsf{Vec}]^{\op}\times [\sfS^{\op},\textsf{Vec}]  
\to  
\textsf{Vec}  
.\]
Let us recall the construction of \cite[Section 2.7]{aguiar2013hopf}. If $\textbf{c}$ is a coalgebra and $\textbf{a}$ is an algebra, then $\Hom(\textbf{c},\textbf{a})$ is a $\Bbbk$-algebra (with multiplication denoted $\ast$) as follows. For $\xi,\nu\in \Hom(\textbf{c},\textbf{a})$, put 
\begin{equation}\label{eq:convolprod}
\xi\ast \nu:
=
\mu  \circ  (  \xi\bigcdot \nu )  \circ    \Delta 
\footnote{
in terms of components we have  $(\xi\ast \nu)_I:
=\sum_{S\sqcup T=I}
\mu_{S,T}  \circ  (  \xi_S\otimes \nu_T )  \circ    \Delta_{S,T}$   
} 
\qquad \text{and} \qquad
1_{\Hom(\textbf{c},\textbf{a})}:= \iota \circ \epsilon
. 
\end{equation}  
Then associativity and unitality follow from (co)associativity and (co)unitality of $\textbf{a}$ (and $\textbf{c}$). If $\zeta:\textbf{c}_2\to \textbf{c}_1$ is a homomorphism of coalgebras and $\eta:\textbf{a}_1\to \textbf{a}_2$ is a homomorphism of algebras, then it is straightforward to check that $\Hom(\zeta,\eta)$ is a homomorphism of $\Bbbk$-algebras.

Given any species $\textbf{p}$, the vector space $\text{End}(\textbf{p}):=\Hom(\textbf{p},\textbf{p})$ is a $\Bbbk$-algebra under composition of morphisms. 

\subsection{Internal Hom for the Hadamard Product}\label{sec:homforhad}

For species $\textbf{p}$ and $\textbf{q}$, we have the species $\cH(\textbf{p}, \textbf{q})$ given by
\[
\cH(\textbf{p},\textbf{q})[I]
:=
\big\{ \text{linear maps } f:\textbf{p}[I]\to \textbf{q}[I]  \big  \} 
\qquad \text{and} \qquad   
\cH(\textbf{p},\textbf{q})[\sigma](f)
:= 
\textbf{q}[\sigma] \circ  f \circ \textbf{p}[\sigma]^{-1}    
 .   \]
For morphisms of species $\zeta:\textbf{r}\to \textbf{p}$ and $\eta:\textbf{q}\to \textbf{s}$, we have the morphism of species $\cH(\zeta,\eta)$ given by
\[          
\cH(\zeta,\eta)_I  : \cH( \textbf{p}, \textbf{q} )[I] \to  \cH(\textbf{r}, \textbf{s})[I] , 
\qquad  
f\mapsto  \eta_I \circ f \circ \zeta_I
.\] 
Then $\cH(-,-)$ is the internal hom for the Hadamard product of species. We have the isomorphism of vector spaces called \emph{currying} 
\[    
\Hom(  \textbf{p}\otimes \textbf{q} ,  \textbf{r}  )\xrightarrow{\sim}  \Hom\! \big(   \textbf{p} , \cH(  \textbf{q},  \textbf{r} )  \big )
,\qquad    
\eta\mapsto \wh{\eta}
\]
where
\[    
\wh{\eta}_I( \mathtt{v} ): \textbf{q}[I]\to \textbf{r}[I]
,\qquad  
\mathtt{w}\mapsto \eta_I( \mathtt{v}\otimes \mathtt{w})
.\]
Since the exponential species $\textbf{E}$ is the unit for the Hadamard product, by putting $\textbf{p}=\textbf{E}$, we recover the external hom from the internal hom as follows,
\[       
\Hom( - ,  -  )   =   \Hom( \textbf{E} ,  -  ) \circ      \cH( - ,  -  )      
.\]
Let us recall a construction of \cite[Section 3.2]{aguiar2013hopf}. If $\textbf{c}$ is a coalgebra and $\textbf{a}$ is an algebra, then $\cH(\textbf{c},\textbf{a})$ is an algebra in species as follows,
\begin{equation} \label{eq:curlyHalg}
\mu_{S,T}(f\otimes g):
=\mu_{S,T}  \circ (f\otimes g)  \circ    \Delta_{S,T}  
\qquad
\text{and}
\qquad 
\mathtt{1}_{\cH(\textbf{c},\textbf{a})}  : 
=\iota_{\emptyset}  \circ \epsilon_{\emptyset}
\end{equation}  
where associativity and unitality follow from (co)associativity and (co)unitality of $\textbf{a}$ (and $\textbf{c}$). If $\zeta:\textbf{c}_2\to \textbf{c}_1$ is a homomorphism of coalgebras and $\eta:\textbf{a}_1\to \textbf{a}_2$ is a homomorphism of algebras, then it is straightforward to check that $\cH(\zeta,\eta)$ is a homomorphism of algebras. We see in \textcolor{blue}{(\refeqq{eq:Smulttran})} that $\Hom( \textbf{E} ,  -  )$ is a lax monoidal functor, which recovers \textcolor{blue}{(\refeqq{eq:convolprod})} as the image of \textcolor{blue}{(\refeqq{eq:curlyHalg})}. This is discussed in \cite[Section 12.11]{aguiar2013hopf}.

Given any species $\textbf{p}$, the species $\CMcal{E}(\textbf{p}):=\cH(\textbf{p},\textbf{p})$ is a presheaf of $\Bbbk$-algebras with multiplication the composition of linear maps, equivalently a monoid internal to species constructed with respect to the Hadamard product. 

\subsection{Internal Hom for the Cauchy Product}\label{sec:inthomcauchy}

We now distinguish between two copies of $\sfS$. The first $\sfS$ has `color' the formal symbol $\formg$ (physically, the coupling constant). We call the corresponding species $\formg$\emph{-colored}. The second $\sfS$ has color the formal symbol $\formj$ (physically, the source field). We call the corresponding species $\formj$\emph{-colored}. We identify the $\sfS$ which has appeared hitherto with the $\formj$-colored $\sfS$. 

We denote sets of the $\formg$-colored $\sfS$ by $Y$. We let $r:=|Y|$ throughout, and we pick a section of the decategorification functor  
\[
\sfS\to \bN
,\qquad Y\mapsto r
\] 
by letting $[r]$ denote the finite set of symbols 
\[
[r]:=\{\ast_1,\dots, \ast_r\},\qquad   r\in \bN 
.\]
We sometimes abbreviate $\ast:= \ast_1$ and $\ast I:=\{\ast\}\sqcup I$. If $\eta$ is a morphism of $\formg$-colored species, we abbreviate $\eta_r:= \eta_{[r]}$.

Let a $\{\! \formg,\! \formj\! \}$\emph{-colored} \emph{species} be a presheaf of vector spaces on the product $\sfS\times \sfS$ of our two copies of $\sfS$. We let the $\formj$-colored $\sfS$ be a module category over the $\formg$-colored $\sfS$ where the action is disjoint union,
\[
\sfS\times\sfS\to\sfS
,\qquad
(Y,I)\mapsto Y\sqcup I
,\quad
(\sigma_1,\sigma_2)\mapsto \sigma_1\sqcup \sigma_2
.\footnote{\ similarly, the category of $\formg$-colored species acts on the category of $\formj$-colored species via the Cauchy product, and as an instance of the microcosm principle, our theory of modules internal to species will be with respect to this module category}\] 
By precomposing $\formj$-colored species with the action $\sfS \times \sfS \to \sfS$, we obtain a functor $(-)_{(2)}$ on $\formj$-colored species into $\{\! \formg,\! \formj\! \}$-colored species (see also \cite[Section 14.1.4]{aguiar2010monoidal}),
\[
(-)_{(2)}:[\sfS^{\op},\sfVec] \to \big [(\sfS\times \sfS)^{\op},\sfVec\big ]
,\qquad \textbf{p} \mapsto  \textbf{p} \circ  \big ((-)\sqcup (-)\big) 
.\]
Explicitly, for a species $\textbf{p}$ we have
\[
\textbf{p}_{(2)}:(\sfS\times \sfS)^{\op}\to \sfVec
,\qquad
(Y,I)\mapsto  \textbf{p}^{[Y]}[I]:=\textbf{p}[Y\sqcup I]
\quad \text{and} \quad
(\sigma_1,\sigma_2)\mapsto  \textbf{p}^{[\sigma_1]}[\sigma_2]:=\textbf{p}[\sigma_1\sqcup \sigma_2]
.\] 
Fixing $Y\in \sfS$, by precomposing $\{\! \formg,\! \formj\! \}$-colored species with the functor 
\[
\sfS\xrightarrow{\sim}  \text{id}_Y\times  \sfS \hookrightarrow   \sfS \times \sfS
\] 
we obtain a functor on $\{\! \formg,\! \formj\! \}$-colored species back into $\formj$-colored species. We denote the precomposition of this functor with $(-)_{(2)}$ by $(-)^{[Y]}$. Then $(-)^{[Y]}$ is an endofunctor on $\formj$-colored species. Explicitly, for a species $\textbf{p}$ we have
\[
\textbf{p}^{[Y]}:\sfS^{\op}\to \sfVec
, \qquad 
I\mapsto \textbf{p}^{[Y]}[I] \quad \text{and} \quad \sigma \mapsto \textbf{p}^{[\text{id}_Y]}[\sigma]
\]
and for a morphism of species $\eta:\textbf{p}\to \textbf{q}$ we have
\[
\eta^{[Y]}: \textbf{p}^{[Y]}\to \textbf{q}^{[Y]}
,\qquad
\eta^{[Y]}_I = \eta_{Y\sqcup I}
.\]
For $\formj$-colored species $\textbf{p}$ and $\textbf{q}$, we have the $\formg$-colored species $\cH^{\bigcdot}(\textbf{p},\textbf{q})$ given by
\[
\cH^{\bigcdot}(\textbf{p},\textbf{q})[Y]:=
\Hom(\textbf{p},\textbf{q}^{[Y]}) 
\qquad \text{and} \qquad   
\big(\cH^{\bigcdot}(\textbf{p},\textbf{q})[\sigma](u)\big)_I:= \textbf{q}^{[\sigma]}[\text{id}_I]  \circ  u_I 
.\]
We may view $\cH^{\bigcdot}(\textbf{p},\textbf{q})$ as the species of `raising morphisms' from $\textbf{p}$ to $\textbf{q}$. For morphisms of species $\zeta:\textbf{p}_2\to \textbf{p}_1$ and $\eta:\textbf{q}_1\to \textbf{q}_2$, we have the morphism of species $\cH^{\bigcdot}(\zeta,\eta)$ given by
\[          
\cH^{\bigcdot}(\zeta,\eta)_Y: 
\cH^{\bigcdot}( \textbf{p}_1, \textbf{q}_1)[Y] 
\to  
\cH^{\bigcdot}(\textbf{p}_2, \textbf{q}_2)[Y] 
,\qquad  
u\mapsto  \eta^{[Y]} \circ u \circ \zeta
.\] 
Then $\cH^{\bigcdot}(-,-)$ is the internal hom for the Cauchy product of species \cite[Proposition 8.51]{aguiar2010monoidal}. Thus, given a $\formg$-colored species $\textbf{r}$ and $\formj$-colored species $\textbf{p}$ and $\textbf{q}$, we have the isomorphism of vector spaces called \emph{currying}
\[
\Hom( \textbf{r}  \bigcdot \textbf{p}, \textbf{q})
\xrightarrow{\sim}
\Hom\! \big (\textbf{r},   \cH^{\bigcdot}(\textbf{p},\textbf{q}) \big  )
,\qquad
\rho\mapsto \widehat{\rho}
\]
where
\[
 \widehat{\rho}_Y(\mathtt{w}): \textbf{p}\to \textbf{q}^{[Y]}
,\qquad    
\mathtt{v}\mapsto \rho_{Y,I}( \mathtt{w}\otimes \mathtt{v} )
.\] 

\subsection{The Endomorphism Algebra of Raising Operators} \label{sec:inthomcauchy2}

Given a species $\textbf{p}$, we have the species $\CMcal{E}^{\bigcdot}(\textbf{p}):=\cH^{\bigcdot}(\textbf{p},\textbf{p})$ of \emph{raising operators} $u\in \CMcal{E}^{\bigcdot}(\textbf{p})[Y]$,
\[
u:\textbf{p}\to \textbf{p}^{[Y]}
,\qquad
\mathtt{v} \mapsto u(\mathtt{v}).
\footnote{\ for raising operators, we always abbreviate $u(\mathtt{v}):=u_I(\mathtt{v})$}
\]  
Following \cite[Section 8.12.1]{aguiar2010monoidal}, in the case of the singleton $Y=\{\ast\}$, we let $(-)':=(-)^{[\{\ast\}]}$ and we call an operator of the form $u:\textbf{p}\to \textbf{p}'$ an \emph{up operator}. Notice that $\CMcal{E}^{\bigcdot}(\textbf{p})$ is naturally an algebra in species, with multiplication of raising operators given by composition,   
\begin{equation}\label{eq:multcomp}
\mu_{Y_1,Y_2}(u\otimes v) 
:= 
u^{[Y_2]}  \circ   v.
\end{equation}
Associativity follows from associativity of disjoint union. The unit $\mathtt{1}_{\CMcal{E}^{\bigcdot}(\textbf{p})}\in \CMcal{E}^{\bigcdot}(\textbf{p})[\emptyset]$ is the canonical isomorphism of species
\[
\mathtt{1}_{\CMcal{E}^{\bigcdot}(\textbf{p})}: \textbf{p}\xrightarrow{\sim} \textbf{p}^{[\emptyset]}
.\]

Let $\textbf{a}$ be an algebra in species. Let a $Y$-\emph{derivation} $D$ of $\textbf{a}$ be a raising operator $D:\textbf{a} \to \textbf{a}^{[Y]}$ such that
\[
D\big ( \mu_{S,T}( \mathtt{a} \otimes \mathtt{b} ) \big) = \mu_{Y\sqcup S,T}\big ( D(\mathtt{a}) \otimes \mathtt{b} \big )  +     \mu_{S,Y\sqcup T}\big ( \mathtt{a} \otimes D(\mathtt{b}) \big )
.\]
Following \cite[Section 8.12.4]{aguiar2010monoidal}, in the case of the singleton $Y=\{\ast\}$, we just say \emph{up derivation}. Given $\mathtt{a}\in \textbf{a}[Y]$, it is straightforward to check that
\[
\textbf{a}\to \textbf{a}^{[Y]}
,\qquad
\mathtt{b}\mapsto [\mathtt{a}, \mathtt{b} ]= \mu_{Y,I} ( \mathtt{a} \otimes \mathtt{b} )-\mu_{I,Y} ( \mathtt{b} \otimes \mathtt{a} )
\]
is a $Y$-derivation of $\textbf{a}$. Define the species $\text{Der}(\textbf{a})$ by letting $\text{Der}(\textbf{a})[Y]\subseteq \CMcal{E}^{\bigcdot}(\textbf{a})[Y]$ be the subspace of $Y$-derivations of $\textbf{a}$. 

\begin{prop}
Let $\textbf{a}$ be an algebra in species.  The species of derivations $\text{Der}(\textbf{a})$ is a Lie subalgebra of $\CMcal{E}^{\bigcdot}(\textbf{a})$. 
\end{prop}
\begin{proof}
We need to show that $\text{Der}(\textbf{a})$ is closed under the commutator bracket of $\CMcal{E}^{\bigcdot}(\textbf{a})$. Let $Y=Y_1\sqcup Y_2$, and let $D_1$ be a $Y_1$-derivation and $D_2$ be a $Y_2$-derivation of $\textbf{a}$. We have
\begin{align*}
[D_1 ,D_2 ] \big ( \mu_{S,T}( \mathtt{a} \otimes \mathtt{b} ) \big) \ \ 
&=
\ \ \ {D_1}^{[Y_2]}  \circ   {D_2}  \big (   \mu_{S,T}( \mathtt{a} \otimes \mathtt{b}) \big )    -{D_2}^{[Y_1]}  \circ   {D_1}(   \mu_{S,T}\big ( \mathtt{a} \otimes \mathtt{b}   ) \big)  \\[6pt]
&=
\ \ \ {D_1}^{[Y_2]} \Big (       \mu_{Y_2\sqcup S,T}\big (  {D_2}(\mathtt{a}) \otimes \mathtt{b} \big )  +    \mu_{S,Y_2\sqcup T}\big ( \mathtt{a} \otimes  {D_2}(\mathtt{b})  \big )    \Big)          \\[6pt]
&
\ \ \ -{D_2}^{[Y_1]} \Big (       \mu_{Y_1 \sqcup S,T}\big (  {D_1}(\mathtt{a}) \otimes \mathtt{b} \big )  +    \mu_{S,Y_1 \sqcup T}\big ( \mathtt{a} \otimes  {D_1}(\mathtt{b})  \big )    \Big)          \\[6pt]
&=
 \ \ \ \mu_{Y\sqcup S,T}\big (  [D_1,D_2] (\mathtt{a}) \otimes \mathtt{b} \big )+  \mu_{S,Y\sqcup T}\big (   \mathtt{a} \otimes [D_1,D_2]( \mathtt{b}) \big ) .
\end{align*}
Therefore the bracket $[D_1 ,D_2 ]$ is a $Y$-derivation of $\textbf{a}$.
\end{proof}

Notice that if $\textbf{a}$ is connected, then $\text{Der}(\textbf{a})$ is a positive Lie algebra.

\subsection{$(-)^{\textbf{r}}$-Coalgebras} \label{sec: coalg}


Now fixing $I\in \sfS$ instead, by precomposing $\{\! \formg,\! \formj\! \}$-colored species with the functor 
\[
\sfS\xrightarrow{\sim}\sfS\times \text{id}_I \hookrightarrow  \sfS \times \sfS
\] 
we obtain a functor on $\{\! \formg,\! \formj\! \}$-colored species into $\formg$-colored species. We adopt the following notation for the precomposition of this functor with $(-)_{(2)}$; we denote the image of $\textbf{p}$   by $\textbf{p}^{[-]}[I]$, explicitly $\textbf{p}^{[-]}[I]$ is the $\formg$-colored species
\[
\textbf{p}^{[-]}[I]:\sfS^{\op}\to \sfVec
, \qquad 
Y\mapsto \textbf{p}^{[Y]}[I] \quad \text{and} \quad \sigma \mapsto \textbf{p}^{[\sigma]}[\text{id}_I]
,\]
and we denote the image of a morphism $\eta:\textbf{p}\to \textbf{q}$ by $\eta_I^{[-]}$, explicitly
\[
\eta_I^{[-]}: \textbf{p}^{[-]}[I]\to \textbf{q}^{[-]}[I]
,\qquad
(\eta_I^{[-]})_Y: =\eta^{[Y]}_{I}
.\]
For $Y=[r]$, we abbreviate $\eta^r_{I}:=\eta_I^{[Y]}$.

For a $\formg$-colored species $\textbf{r}$ and a $\formj$-colored species $\textbf{p}$, we have the $\formj$-colored species $\textbf{p}^\textbf{r}$ given by 
\[
\textbf{p}^\textbf{r}[I]:=
\Hom(\textbf{r},\textbf{p}^{[-]} [I] ) 
\qquad \text{and} \qquad   
\textbf{p}^\textbf{r}[\sigma](\chi):=
\textbf{p}^{[\text{id}_Y]}[\sigma]  \circ  \chi_Y 
.\] 
We shall tend to fix $\textbf{r}$, and view $\textbf{p}^\textbf{r}$ as the species of `$\textbf{r}$-elements' of $\textbf{p}$. For morphisms of species $\zeta:\textbf{r}_2\to \textbf{r}_1$ and $\eta:\textbf{p}\to \textbf{q}$, we have the morphism of species $\eta^\zeta$ given by
\[          
\eta^\zeta_I: 
\textbf{p}^{\textbf{r}_1}[I] 
\to  
\textbf{q}^{\textbf{r}_2}[I] 
,\qquad  
\chi\mapsto  \eta_I^{[-]} \circ \chi \circ \zeta
.\] 
We have the isomorphism of vector spaces
\[
\Hom( \textbf{r}  \bigcdot \textbf{p}, \textbf{q})
\xrightarrow{\sim}
\Hom\! \big (\textbf{p},  \textbf{q}^{\textbf{r}} \big  )
,\qquad
\rho\mapsto \widecheck{\rho} 
\]
where $\widecheck{\rho}:\textbf{p}\to \textbf{q}^\textbf{r}$ is given by
\[
\widecheck{\rho}_I(\mathtt{v}): \textbf{r}\to \textbf{p}^{[-]}[I]
,\qquad  
\mathtt{w}\mapsto \rho_{Y,I}( \mathtt{w}\otimes \mathtt{v} ) 
.\]

If $\textbf{c}$ is a coalgebra and $\textbf{a}$ is an algebra, then $\textbf{a}^\textbf{c}$ is an algebra as follows. Given $\textbf{c}$-elements
\[\chi: \textbf{c}\to \textbf{a}^{[-]}[S] \qquad \text{and} \qquad \omega: \textbf{c}\to \textbf{a}^{[-]}[T]\] 
define the $\textbf{c}$-element $\mu_{S,T}(\chi\otimes  \omega):\textbf{c}\to \textbf{a}^{[-]}[I]$ by
\begin{equation}\label{eq:multofa^c}
\big (\mu_{S,T}(\chi\otimes  \omega)\big )_Y
:=    
\sum_{Y_1 \sqcup Y_2 =Y} \mu_{ Y_1\sqcup S, Y_2\sqcup T } \circ(\chi_{Y_1}\otimes \omega_{Y_2})\circ\Delta_{Y_1,Y_2}
.
\end{equation}
The unit $\mathtt{1}_{\textbf{a}^\textbf{c}}\in \textbf{a}^\textbf{c}[\emptyset]=\Hom(\textbf{c},\textbf{a}^{[-]}[\emptyset])$ is given by
\begin{equation}\label{eq:unitofa^c}
\mathtt{1}_{\textbf{a}^\textbf{c}}:=\iota^{[-]}_\emptyset  \circ \epsilon
.
\end{equation}
Associativity and unitality follow from (co)associativity and (co)unitality of $\textbf{a}$ (and $\textbf{c}$). If $\zeta:\textbf{c}_2\to \textbf{c}_1$ is a homomorphism of coalgebras and $\eta:\textbf{a}_1\to \textbf{a}_2$ is a homomorphism of algebras, then it is straightforward to check that $\eta^{\zeta}$ is a homomorphism of algebras.   

Given a $\formg$-colored species $\textbf{r}$, we have the endofunctor $(-)^{\textbf{r}}$ on $\formj$-colored species, given by
\[ 
(-)^{\textbf{r}}:   
[\sfS^{\op},\textsf{Vec}]\to[\sfS^{\op},\textsf{Vec}]
,\qquad
\textbf{p}\mapsto \textbf{p}^{\textbf{r}} \quad \text{and} \quad \eta\mapsto \eta^{\textbf{r}}:=\eta^{\text{id}_\textbf{r}}
.\]
We have
\[
(\textbf{p}\bigcdot \textbf{q})^{\textbf{r}}[I] 
=
\Hom\! \big(\textbf{r},  (\textbf{p}\bigcdot \textbf{q})^{[-]} [I] \big )  
=
\bigoplus_{S\sqcup T=I}\Hom\! \Big (
\textbf{r},\bigoplus_{Y_1\sqcup Y_2=(-)}\textbf{p}^{[Y_1]}[S]\otimes \textbf{q}^{[Y_2]}[T]  
\Big )  
.\]
For $\textbf{c}$ a coalgebra, the functor $(-)^{\textbf{c}}$ is lax monoidal with respect to the Cauchy product, with multiplication transformation
\[
\textbf{p}^{\textbf{c}} \bigcdot \textbf{q}^{\textbf{c}} \to  (\textbf{p}  \bigcdot  \textbf{q})^{\textbf{c}}
,\qquad  
\chi \otimes \omega
\mapsto 
\sum_{Y_1 \sqcup Y_2 =(-)}  (\chi_{Y_1} \otimes \omega_{Y_2})
\circ \Delta_{Y_1, Y_2}
\]
and unit transformation
\[
\textbf{1}\to  \textbf{1}^{\textbf{c}}
,\qquad
1_\Bbbk \mapsto\iota_{\emptyset}  \circ \epsilon_\emptyset
.\]
If $\textbf{a}$ is an algebra, then the induced algebraic structure for $\textbf{a}^{\textbf{c}}$ recovers \textcolor{blue}{(\refeqq{eq:multofa^c})} and \textcolor{blue}{(\refeqq{eq:unitofa^c})}.

An $(-)^{\textbf{r}}$-\emph{coalgebra} is a morphism of species of the form $\textbf{p}\to \textbf{p}^\textbf{r}$. If $\textbf{a}$ is an algebra and $\textbf{c}$ is a coalgebra, then one can additionally require that a $(-)^\textbf{c}$-coalgebra $\textbf{a}\to \textbf{a}^\textbf{c}$ is a homomorphism of algebras. Particularly important for us is the endofunctor $(-)^\textbf{E}$. 

\begin{remark}
The endofunctor $(-)^{\textbf{E}}$ may alternatively be obtained by composing $(-)_{(2)}$ with currying to obtain species-valued species, and then composing with $\Hom(\textbf{E},-)$ to get back ordinary species. 
\end{remark}

\section{Two Lax Monoidal Functors}
We now describe in more detail the special cases the functor $\Hom(\textbf{E},-)$ and the endofunctor $(-)^{\textbf{E}}$.
\subsection{Series of Species}\label{sec:Series of Species}
We recall series of species, following \cite[Section 12]{aguiar2013hopf}. Let us denote the functor $\Hom( \textbf{E},-)$ by $\mathscr{S}(-)$, thus
\[   
\mathscr{S}(-):   [\sfS^{\op},\textsf{Vec}]  \to \textsf{Vec}, \qquad
 \mathscr{S}(\textbf{p}):=  \Hom( \textbf{E} ,  \textbf{p}  ) \quad \text{and} \quad   \mathscr{S}(\eta):=  \Hom( \text{id}_\textbf{E} ,  \eta  ) 
.\]
Elements of the vector space $\mathscr{S}(\textbf{p})$ are called \emph{series} of $\textbf{p}$, which we denote by
\[  
\mathtt{s}=\mathtt{s}_{(-)}:\textbf{E}\to \textbf{p}, \qquad  \tH_I \mapsto    \mathtt{s}_{I}
.\]
Explicitly, given a morphism of species $\eta:\textbf{p}\to \textbf{q}$, then $\mathscr{S}(\eta)$ is the linear map given by  
\[
\mathscr{S}(\eta): \mathscr{S}(\textbf{p}) \to \mathscr{S}(\textbf{q})
,\qquad
\mathtt{s}\mapsto \eta \circ \mathtt{s}
.\] 
A series of $\textbf{p}$ is equivalent to picking an $\sfS_n$-invariant vector of $\textbf{p}[n]$, for each $n\in \bN$. Therefore, we have the isomorphism of vector spaces
\[
\mathscr{S}(\textbf{p})\xrightarrow{\sim}\prod_{n=0}^\infty\big (\textbf{p}[n]\big)^{\sfS_n}, 
\qquad  
\mathtt{s}\mapsto   \sum^\infty_{n=0} \mathtt{s}_n 
.\footnote{\ if $V$ is a $\Bbbk[\sfS_n]$-module, then we let $V^{\sfS_n}\subseteq V$ denote the subspace of $\sfS_n$-invariant vectors, and recall that $\mathtt{s}_n$ denotes $\mathtt{s}_{[n]}$, where $[n]=\{1,\dots, n\}$}
\]
Indeed, as a special case of a general fact about presheaves, $\mathscr{S}(\textbf{p})$ is equivalently the \hbox{category-theoretic} limit of $\textbf{p}$ as a diagram of shape $\sfS^{\op}$ in $\textsf{Vec}$.

The functor $\mathscr{S}$ is lax monoidal with respect to the Cauchy product and tensor product, with multiplication transformation 
\begin{equation}\label{eq:Smulttran}
\mathscr{S}(\textbf{p})\otimes  \mathscr{S}(\textbf{q})\to\mathscr{S}(\textbf{p}\bigcdot \textbf{q} )
,\qquad     
\mathtt{s}_{(-)}  \otimes \mathtt{t}_{(-)}\mapsto   \sum_{S\sqcup T=(-)}  \mathtt{s}_{S}\otimes  \mathtt{t}_{T} 
\end{equation}
and unit transformation the canonical isomorphism
\[
\Bbbk \xrightarrow{\sim} \mathscr{S}(\textbf{1}) 
.\]
Thus, if $\textbf{a}$ is an algebra in species, then $\mathscr{S}(\textbf{a})$ is a $\Bbbk$-algebra with multiplication given by
\begin{equation}\label{eq:multinS(a)}
\mathscr{S}(\textbf{a})\otimes \mathscr{S}(\textbf{a}) \to \mathscr{S}(\textbf{a}), \qquad   {\mathtt{s}}_{(-)}\otimes {\mathtt{t}}_{(-)} \mapsto {\mathtt{s}}\ast{\mathtt{t}}_{(-)} 
:=   
\sum_{S\sqcup T=(-)} \mu_{S,T}({\mathtt{s}}_{S} \otimes {\mathtt{t}}_{T})
\end{equation}
and unit the series given by
\begin{equation}\label{eq:unitseries}
\mathtt{u}:\textbf{E}\to \textbf{a}, 
\qquad                
\mathtt{u}_I :=  
\begin{cases}
\mathtt{1}_\textbf{a} &\quad \text{if $I=\emptyset$}\\
0            &\quad \text{otherwise}.
\end{cases}
\end{equation}  

Let $\textbf{a}$ be an algebra in species and let $\textbf{c}$ be a coalgebra in species. Recall from \autoref{ex:E} that $\textbf{E}$ is a Hopf algebra. We have the subset
\[        
\mathscr{E}(\textbf{a}):= \{  \mathtt{s}\in \mathscr{S}(\textbf{a}): \mathtt{s} \text{ is a homomorphism of algebras}\}\subseteq \mathscr{S}(\textbf{a})
.\]
Elements $\mathtt{e}\in \mathscr{E}(\textbf{a})$ are called \emph{exponential series}. Non-trivial examples exist as long as $\textbf{a}$ has some elements which commute. It is shown in \cite[Section 12.2]{aguiar2013hopf} that if $\textbf{a}$ is commutative, then $\mathscr{E}(\textbf{a})$ is a commutative group of units in $\mathscr{S}(\textbf{a})$, which is naturally isomorphic to $\textbf{a}[1]$ (under addition of vectors). In \autoref{prop:exp}, we give an analog of this result in a setting with decorations.

We also have the subset
\[        
\mathscr{G}(\textbf{c}):= \{  \mathtt{s}\in \mathscr{S}(\textbf{c}): \mathtt{s} \text{ is a homomorphism of coalgebras}\}\subseteq \mathscr{S}(\textbf{c})
.\]
Elements $\mathtt{g}\in \mathscr{G}(\textbf{c})$ are called \emph{group-like} series. If $\textbf{h}$ is a Hopf algebra in species, then the set $\mathscr{G}(\textbf{h})$ is a group of units in the $\Bbbk$-algebra $\mathscr{S}(\textbf{h})$.\footnote{\ moreover, $\mathscr{S}(\textbf{h})$ is naturally a complete Hopf algebra with group-like elements $\mathscr{G}(\textbf{h})$ \cite[Section 12.5]{aguiar2013hopf}} If $\text{s}:\textbf{h}\to \textbf{h}$ is the antipode of $\textbf{h}$, then the inverse of $\mathscr{G}(\textbf{h})$ is 
\[
\mathscr{S}(\text{s}):\mathscr{G}(\textbf{h})\to \mathscr{G}(\textbf{h})
.\] 
Thus
\begin{equation} \label{interse}
\mathtt{g}\ast (\text{s}\circ\mathtt{g})=(\text{s}\circ\mathtt{g})\ast \mathtt{g}= 1_{\mathscr{S}( \textbf{h} )} 
.
\end{equation}
The fact that $\mathscr{G}(\textbf{h})$ is closed under multiplication follows from the compatibility property between the multiplication and comultiplication of $\textbf{h}$. 

\subsection{The Endofunctor $(-)^{\textbf{E}}$} \label{sec:theendoE}
Recall the endofunctor $(-)^{\textbf{E}}$ from \autoref{sec: coalg}, 
\[   
(-)^{\textbf{E}}:   
[\sfS^{\op},\textsf{Vec}]\to[\sfS^{\op},\textsf{Vec}]
,\qquad
\textbf{p}\mapsto \textbf{p}^{\textbf{E}} \quad \text{and} \quad \eta\mapsto \eta^{\textbf{E}}:=\eta^{\text{id}_\textbf{E}}
.\]
Explicitly, elements of $\textbf{p}^{\textbf{E}}[I]$ are series of the $\formg$-colored species $\textbf{p}^{[-]}[I]$, denoted
\[
\mathtt{x}=\mathtt{x}_{(-)}: \textbf{E}\to \textbf{p}^{[-]}[I]
,\qquad
\tH_{Y}\mapsto \mathtt{x}_Y
.\]
We have the following isomorphism of vector spaces,
\begin{equation}\label{eq:isoseriesr}
\textbf{p}^{\textbf{E}}[I]=\mathscr{S}\big (\textbf{p}^{[-]}[I]\big )
\xrightarrow{\sim}\prod_{r=0}^\infty\big (\textbf{p}^{[r]}[I]\big )^{ \sfS_r }, 
\qquad  
\mathtt{x}_{(-)}\mapsto   \sum_{r=0}^\infty \mathtt{x}_r
.
\end{equation}
Since $\textbf{E}$ is a coalgebra, the functor $(-)^{\textbf{E}}$ is lax monoidal. Recalling \textcolor{blue}{(\refeqq{eq:multofa^c})} and \textcolor{blue}{(\refeqq{eq:unitofa^c})}, if $\textbf{a}$ is an algebra in species, then so is $\textbf{a}^{\textbf{E}}$, with multiplication given by
\begin{equation}\label{eq:multofNderiv}
\big (\mu_{S,T}(\mathtt{x}\otimes \mathtt{y})\big)_{(-)}
= \! \! 
\underbrace{
\sum_{Y_1 \sqcup Y_2 =(-)}
\mu_{Y_1 \sqcup S, Y_2 \sqcup T}(\mathtt{x}_{Y_1} \otimes \mathtt{y}_{Y_2})
=
\sum_{r=0}^\infty \sum_{r_1 + r_2 =r}
\dfrac{r!}{r_1 !\,  r_2 !}
\mu_{[r_1] \sqcup S, [r_2] \sqcup T}(\mathtt{x}_{r_1} \otimes \mathtt{y}_{r_2})}_{\text{via the isomorphism \textcolor{blue}{(\refeqq{eq:isoseriesr})}}}
\end{equation}
and unit $\mathtt{1}_{\textbf{a}^{\textbf{E}}}\in \textbf{a}^{\textbf{E}}[\emptyset]$ the series given by
\begin{equation}\label{eq:unitofNderiv}
\mathtt{1}_{\textbf{a}^\textbf{E}} :\textbf{E}\to \textbf{a}^{[-]}[\emptyset], 
\qquad                
{(\mathtt{1}_{\textbf{a}^\textbf{E}})}_Y :=  
\begin{cases}
\mathtt{1}_\textbf{a} &\quad \text{if $Y=\emptyset$}\\
0            &\quad \text{otherwise}.
\end{cases}
\end{equation}
\section{From $\Bbbk$-Algebras to Algebras in Species and Back}
Given a $\Bbbk$-algebra $\cA$, we now show how the formal power series algebras $\cA[[\formj]]$ and  $\cA[[\formg,\! \formj]]$ arise as algebras of series of species. 
\subsection{The Algebra $\textbf{U}_\cA$} 
Given a vector space $V$, we have the species $\textbf{U}_V$ given by 
\[
\textbf{U}_V[I]:=V 
\qquad 
\text{and} 
\qquad       
\textbf{U}_V[\sigma]:= \text{id}_{V}      
.\]
Given a linear map $f:V \to W$, we have the morphism of species $\textbf{U}_f$ given by
\[
\textbf{U}_f: \textbf{U}_{V}\to \textbf{U}_{W}
,\qquad
(\textbf{U}_f)_I:= f
.\]
This defines a functor 
\[
\textbf{U}_{(-)}: \textsf{Vec} \to [ \sfS^{\op}, \textsf{Vec}],
\qquad
V\mapsto \textbf{U}_V \quad \text{and} \quad f\mapsto \textbf{U}_f
.\]
In fact, $\textbf{U}_{(-)}$ is the left adjoint of $\mathscr{S}(-)$. Let $\cA$ be a $\Bbbk$-algebra, with multiplication and unit denoted respectively by
\[    
 \cA \otimes \cA \to \cA, 
\qquad   
a\otimes b \mapsto a\star  b     
\qquad \quad 
\text{and} 
\qquad       \quad   
\Bbbk\to \cA, 
\qquad 
1_\Bbbk\mapsto 1_\cA   
.\]
Then $\textbf{U}_\cA$ is an algebra in species, with multiplication and unit given respectively by
\begin{equation}\label{eq:algU}
\mu_{S,T}(a\otimes b):= a\star b 
\qquad\text{and}\qquad 
\mathtt{1}_{\textbf{U}_\cA}:=1_\cA
.
\end{equation}
This construction was given in \cite[Section 2.10]{aguiar2013hopf}. 


\subsection{Series of a $\Bbbk$-Algebra}
Let $\cA$ be a $\Bbbk$-algebra, and let $\cA[[\formj]]$ denote the $\Bbbk$-algebra of formal power series in the formal symbol $\formj$ with coefficients in $\cA$. The multiplication of $\cA[[\formj]]$, which we also denote by $\star$, is given by
\begin{equation}\label{eq:multofformalaj}    
\bigg(
\sum^\infty_{n=0} \formj^n  a_n \bigg) \star \bigg( \sum^\infty_{n=0}   \formj^n b_n 
\bigg) 
:=
\sum^\infty_{n=0} \formj^n \sum_{s+t=n} a_s \star b_t    
.
\end{equation}
The unit $1_{\cA[[\formj]]}$ is the formal power series $\sum^\infty_{n=0} \formj^n  a_n$ with $a_0=1_\Bbbk$ and $a_n=0$ otherwise. Also, $\cA[[\formj]]$ may be equipped with the derivation called \emph{formal differentiation}, given by
\begin{equation}\label{eq:formalder}
\dfrac{d}{d\formj}: 
\cA[[\formj]] \to \cA[[\formj]]
,\qquad
\dfrac{d}{d\formj} \sum^\infty_{n=0} \formj^n  a_n 
:=
\sum^\infty_{n=0} \formj^{n} (n+1)a_{n+1} 
.
\end{equation}

\begin{prop} \label{prop:iso}
The isomorphism of vector spaces
\[
\mathscr{S}(\textbf{U}_\cA ) \xrightarrow{\sim} \cA[[\formj]],
\qquad
\mathtt{s}\mapsto   \sum^\infty_{n=0} 
\dfrac{\formj^n}{n!} \mathtt{s}_n 
\]
is an isomorphism of $\Bbbk$-algebras.
\end{prop}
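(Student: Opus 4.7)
The plan is to reduce the claim to three elementary checks: that the stated map is a linear bijection, that it sends the unit of $\mathscr{S}(\textbf{U}_\cA)$ to $1_{\cA[[\formj]]}$, and that it intertwines the two multiplications.

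First I would establish linearity and bijectivity. The general isomorphism $\mathscr{S}(\textbf{p})\xrightarrow{\sim}\prod_{n\geq 0}(\textbf{p}[n])^{\sfS_n}$ recalled above, applied to $\textbf{p}=\textbf{U}_\cA$, yields $\mathscr{S}(\textbf{U}_\cA)\xrightarrow{\sim}\prod_{n\geq 0}\cA$, since by definition $\textbf{U}_\cA[n]=\cA$ and $\textbf{U}_\cA[\sigma]=\mathrm{id}_\cA$, so the $\sfS_n$-action is trivial and the invariant subspace is all of $\cA$. Rescaling the $n$th factor by $1/n!$ and reading the resulting sequence as the coefficients of a formal power series in $\formj$ is then visibly a linear bijection onto $\cA[[\formj]]$. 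The unit check is immediate from \textcolor{blue}{(\refeqq{eq:unitseries})} and \textcolor{blue}{(\refeqq{eq:algU})}: the unit series $\mathtt{u}$ of $\mathscr{S}(\textbf{U}_\cA)$ has $\mathtt{u}_\emptyset = 1_\cA$ and $\mathtt{u}_n = 0$ for $n\geq 1$, and so maps to $1_\cA = 1_{\cA[[\formj]]}$.

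The substantive step is the compatibility with multiplication. I would compute $(\mathtt{s}\ast\mathtt{t})_n$ directly from \textcolor{blue}{(\refeqq{eq:multinS(a)})}. Since $\mu_{S,T}$ is just $\star$ by \textcolor{blue}{(\refeqq{eq:algU})} and all transport maps $\textbf{U}_\cA[\sigma]$ are identities, each of the $\binom{n}{s}$ ordered decompositions $S\sqcup T=[n]$ with $|S|=s$, $|T|=t=n-s$ contributes $\mathtt{s}_s\star\mathtt{t}_t$, yielding
\[
(\mathtt{s}\ast\mathtt{t})_n=\sum_{s+t=n}\binom{n}{s}\,\mathtt{s}_s\star\mathtt{t}_t.
\]
Dividing by $n!$ and using $\binom{n}{s}/n!=1/(s!\,t!)$ gives exactly the $\formj^n$-coefficient of the Cauchy product \textcolor{blue}{(\refeqq{eq:multofformalaj})} of the images of $\mathtt{s}$ and $\mathtt{t}$.

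There is no genuine obstacle here; the proposition essentially encodes the fact that the $1/n!$ rescaling converts the \emph{exponential generating function}-style convolution of species into the ordinary Cauchy product of formal power series, through the single combinatorial identity $\binom{n}{s}/n!=1/(s!\,t!)$. I would expect the entire argument to take only a few lines beyond unfolding the definitions.
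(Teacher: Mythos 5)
Your proposal is correct and follows essentially the same route as the paper: the substantive step in both is the count of the $\binom{n}{s}=n!/(s!\,t!)$ decompositions $S\sqcup T=[n]$, each contributing $\mathtt{s}_s\star\mathtt{t}_t$ because the transport maps of $\textbf{U}_\cA$ are identities, which after the $1/n!$ rescaling matches the Cauchy product coefficient $\sum_{s+t=n}\frac{1}{s!\,t!}\mathtt{s}_s\star\mathtt{t}_t$. Your extra remarks on bijectivity and the unit are fine but add nothing beyond what the paper treats as clear.
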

\begin{proof}
For series $\mathtt{s},\mathtt{t}\in \mathscr{S}(\textbf{U}_\cA )$, the multiplication of their images in $\cA[[\formj]]$ is given by \textcolor{blue}{(\refeqq{eq:multofformalaj})}, 
\[
\bigg(\sum^\infty_{n=0} 
\dfrac{\formj^n}{n!} \mathtt{s}_n  \bigg)
\star 
\bigg(\sum^\infty_{n=0} 
\dfrac{\formj^n}{n!} \mathtt{t}_n \bigg)
=
\sum^\infty_{n=0} \formj^n \sum_{ s +t=n  } \dfrac{1}{s!\, t!}  \mathtt{s}_s \star \mathtt{t}_s   
.\]
On the other hand, their multiplication ${\mathtt{s}}\ast{\mathtt{t}}$ in $\mathscr{S}(\textbf{U}_\cA)$ is given by \textcolor{blue}{(\refeqq{eq:multinS(a)})},
\[
{\mathtt{s}}\ast{\mathtt{t}}_{(-)}=
\sum_{S\sqcup T=(-)} {\mathtt{s}}_{S} \star {\mathtt{t}}_{T}
.\]
Then the image of ${\mathtt{s}}\ast{\mathtt{t}}$ in $\cA[[\formj]]$ is 
\[
\sum^\infty_{n=0} 
\dfrac{\formj^n}{n!} \sum_{S\sqcup T=[n]} {\mathtt{s}}_{S} \star {\mathtt{t}}_{T}
=
\sum^\infty_{n=0} 
\dfrac{\formj^n}{n!}   
\sum_{ s +t=n  } 
\dfrac{n!}{s!\,  t!} {\mathtt{s}}_{s} \star {\mathtt{t}}_{t}
=
\sum^\infty_{n=0} \formj^n \sum_{ s +t=n  } 
\dfrac{1}{s!\,  t!} {\mathtt{s}}_{s} \star {\mathtt{t}}_{t}
.\]
Thus, the isomorphism $\mathscr{S}(\textbf{U}_\cA ) \xrightarrow{\sim} \cA[[\formj]]$ preserves the multiplication. The preservation of units is clear. 
\end{proof}

\begin{remark}\label{rem:der}
Given a series $\mathtt{s}\in \mathscr{S}(\textbf{U}_\cA)$, we can take the image of $\mathtt{s}$ under the endofunctor $(-)'$ from \autoref{sec:inthomcauchy2},
\[
\mathtt{s}': \textbf{E}' \to \textbf{U}'_\cA
.\]
We have natural isomorphisms $\textbf{E}'\cong \textbf{E}$ and $\textbf{U}'_\cA\cong \textbf{U}_\cA$, so that we may take $\mathtt{s}'\in \mathscr{S}(\textbf{U}_\cA)$. Then, via the isomorphism of \autoref{prop:iso}, we have
\[
\mathtt{s}'= \dfrac{d \mathtt{s}}{d\formj}
.\]
\end{remark}

\begin{ex}
Putting $\cA=\Bbbk$, then $\textbf{U}_\Bbbk=\textbf{E}$, and
\[
\mathscr{S}(\textbf{E})\cong \Bbbk[[\formj]] 
.\]
This is \cite[Equation 157]{aguiar2013hopf}.
\end{ex}



We have that $\textbf{U}^{\textbf{E}}_\cA:=(\textbf{U}_\cA)^{\textbf{E}}$ is an algebra in species via the lax monoidal structure of the endofunctor $(-)^{\textbf{E}}$. Let $\cA[[\formg]]$ denote the $\Bbbk$-algebra of formal power series in the formal symbol $\formg$  with coefficients in $\cA$. 

\begin{prop}\label{prop:g}
The isomorphism of species
\[  
\textbf{U}^{\textbf{E}}_\cA \xrightarrow{\sim} \textbf{U}_{ \cA[[\formg]] }   
,\qquad  
\mathtt{x} 
\mapsto 
\sum^\infty_{r=0} \dfrac{\formg^{r}}{r!} \mathtt{x}_r  
\]
is an isomorphism of algebras in species.
\end{prop}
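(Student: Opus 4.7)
The plan is to reduce the statement to the previous proposition component-wise, and then compute the induced algebra structure explicitly using the formulas \textcolor{blue}{(\refeqq{eq:multofNderiv})} and \textcolor{blue}{(\refeqq{eq:unitofNderiv})}.

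First I would unpack the species $\textbf{U}_\cA^{\textbf{E}}$. For each finite set $I$, the $\formg$-colored species $\textbf{U}_\cA^{[-]}[I]$ satisfies $\textbf{U}_\cA^{[Y]}[I] = \textbf{U}_\cA[Y \sqcup I] = \cA$ with $\textbf{U}_\cA^{[\sigma_1]}[\text{id}_I] = \text{id}_\cA$, so it is canonically identified with the $\formg$-colored version of $\textbf{U}_\cA$. By the immediately preceding proposition (applied in the $\formg$-color), this yields an isomorphism of vector spaces
\[
\textbf{U}_\cA^{\textbf{E}}[I] = \mathscr{S}\bigl(\textbf{U}_\cA^{[-]}[I]\bigr) \xrightarrow{\sim} \cA[[\formg]], \qquad \mathtt{x} \mapsto \sum_{r=0}^\infty \dfrac{\formg^r}{r!}\mathtt{x}_r.
\]
Moreover, since $\textbf{U}_\cA^{[\text{id}_Y]}[\sigma] = \text{id}_\cA$ for every bijection $\sigma$, the species $\textbf{U}_\cA^{\textbf{E}}$ is constant on morphisms, so the collection of these vector-space isomorphisms assembles into an isomorphism of species (both source and target are constant species with the same fibre).

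Next I would verify the preservation of the algebraic structure, which reduces to a direct calculation. Using \textcolor{blue}{(\refeqq{eq:multofNderiv})} together with the fact that $\mu_{[r_1]\sqcup S, [r_2] \sqcup T}$ in $\textbf{U}_\cA$ is just the multiplication $\star$ of $\cA$, the image of $\mu_{S,T}(\mathtt{x} \otimes \mathtt{y})$ in $\cA[[\formg]]$ is
\[
\sum_{r=0}^\infty \dfrac{\formg^r}{r!} \sum_{r_1 + r_2 = r} \dfrac{r!}{r_1!\, r_2!} (\mathtt{x}_{r_1} \star \mathtt{y}_{r_2}) = \sum_{r=0}^\infty \formg^r \! \! \sum_{r_1 + r_2 = r} \dfrac{\mathtt{x}_{r_1} \star \mathtt{y}_{r_2}}{r_1!\, r_2!},
\]
which is precisely the Cauchy product \textcolor{blue}{(\refeqq{eq:multofformalaj})} (for the symbol $\formg$) of the images of $\mathtt{x}$ and $\mathtt{y}$. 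For the unit, \textcolor{blue}{(\refeqq{eq:unitofNderiv})} gives $(\mathtt{1}_{\textbf{U}_\cA^\textbf{E}})_\emptyset = 1_\cA$ and $0$ otherwise, so it maps to $1_\cA = 1_{\cA[[\formg]]}$.

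There is really no obstacle here beyond bookkeeping; the content is exactly the earlier proposition $\mathscr{S}(\textbf{U}_\cA) \cong \cA[[\formj]]$ applied in the $\formg$-color, packaged together with the fact that the multiplication formula \textcolor{blue}{(\refeqq{eq:multofNderiv})} degenerates in the $\textbf{U}_\cA$ case to the usual convolution because $\textbf{U}_\cA$ is constant. The only minor subtlety to flag is to check naturality in $I$, i.e.\ that the bijection action on $\textbf{U}_\cA^\textbf{E}[I]$ is trivial, so that a single formula gives a morphism of species; this follows immediately from the definition of $\textbf{p}^\textbf{r}[\sigma]$ once one observes $\textbf{U}_\cA^{[\text{id}_Y]}[\sigma] = \text{id}_\cA$.
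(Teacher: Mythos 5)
Your proposal is correct and follows essentially the same route as the paper: the core step in both is to specialize \textcolor{blue}{(\refeqq{eq:multofNderiv})} to the constant species $\textbf{U}_\cA$, rewrite the sum over $Y_1\sqcup Y_2=[r]$ as $\sum_{r_1+r_2=r}\frac{r!}{r_1!\,r_2!}\,\mathtt{x}_{r_1}\star\mathtt{y}_{r_2}$, and match it against the Cauchy product in $\cA[[\formg]]$. The extra scaffolding you include (invoking the preceding proposition for the underlying vector-space isomorphism and checking triviality of the bijection action) is harmless and correct, just more explicit than the paper's version.
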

\begin{proof}
From \textcolor{blue}{(\refeqq{eq:multofNderiv})}, we see that the multiplication of $\textbf{U}^{\textbf{E}}_\cA$ is given by
\[
\mu_{S,T}(\mathtt{x}_{(-)} \otimes  \mathtt{y}_{(-)})
=
\sum_{Y_1 \sqcup Y_2 =(-)} 
\mathtt{x}_{Y_1} \star \mathtt{y}_{Y_2} 
.\] 
Then
\[
\sum_{r=0}^\infty \dfrac{\formg^{r}}{r!}
\sum_{Y_1 \sqcup Y_2 =[r]} 
\mathtt{x}_{Y_1} \star \mathtt{y}_{Y_2} 
= 
\sum_{r=0}^\infty \dfrac{\formg^{r}}{r!}
\sum_{r_1+r_2 =r}  \dfrac{r!}{r_1 ! r_2 !}\, 
\mathtt{x}_{r_1} \star \mathtt{y}_{r_2} 
=
\mu_{S,T}\bigg( \sum^\infty_{r=0} \dfrac{\formg^{r}}{r!} \mathtt{x}_r , \sum^\infty_{r=0} \dfrac{\formg^{r}}{r!} \mathtt{y}_r \bigg)
.\]
Thus, the isomorphism $\textbf{U}^{\textbf{E}}_\cA \xrightarrow{\sim}\textbf{U}_{\cA[[\formg]]}$ preserves the multiplication. The preservation of units is clear. 
\end{proof}

\begin{cor}
The isomorphism of vector spaces
\[  
\mathscr{S}(\textbf{U}^{\textbf{E}}_\cA) \xrightarrow{\sim} \cA[[\formg,\! \formj]]    
,\qquad  
\mathtt{s}
\mapsto 
\sum^\infty_{n=0} \sum^\infty_{r=0} \dfrac{\formg^{\, r}\formj^{n}}{r!\,  n!} (\mathtt{s}_n)_r  
\]
is an isomorphism of $\Bbbk$-algebras.
\end{cor}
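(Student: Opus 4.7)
The plan is to deduce the corollary by composing the two isomorphisms just established, using the fact that $\mathscr{S}(-)$ is a lax monoidal functor.

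First, I would apply $\mathscr{S}(-)$ to the isomorphism of algebras in species $\textbf{U}^{\textbf{E}}_\cA \xrightarrow{\sim}\textbf{U}_{\cA[[\formg]]}$ from the preceding proposition. Since $\mathscr{S}$ is lax monoidal with respect to the Cauchy product and tensor product, it sends homomorphisms of algebras in species to homomorphisms of $\Bbbk$-algebras (this is the same reasoning that gave the $\Bbbk$-algebra structure on $\mathscr{S}(\textbf{a})$ in the first place via \textcolor{blue}{(\refeqq{eq:multinS(a)})}). In particular it sends isomorphisms to isomorphisms, yielding
\[
\mathscr{S}(\textbf{U}^{\textbf{E}}_\cA) \xrightarrow{\sim} \mathscr{S}(\textbf{U}_{\cA[[\formg]]})
\]
as $\Bbbk$-algebras.

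Next, I would apply the preceding proposition (the first one, identifying $\mathscr{S}(\textbf{U}_{\mathcal{B}})$ with $\mathcal{B}[[\formj]]$) to the $\Bbbk$-algebra $\mathcal{B}=\cA[[\formg]]$. This gives an isomorphism of $\Bbbk$-algebras
\[
\mathscr{S}(\textbf{U}_{\cA[[\formg]]}) \xrightarrow{\sim} \cA[[\formg]][[\formj]] = \cA[[\formg,\!\formj]],
\qquad
\mathtt{t}\mapsto \sum_{n=0}^\infty \frac{\formj^n}{n!}\,\mathtt{t}_n.
\]
Composing the two isomorphisms produces the desired isomorphism of $\Bbbk$-algebras $\mathscr{S}(\textbf{U}^{\textbf{E}}_\cA) \xrightarrow{\sim} \cA[[\formg,\!\formj]]$.

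It remains to verify the explicit formula. Given $\mathtt{s}\in \mathscr{S}(\textbf{U}^{\textbf{E}}_\cA)$, the first isomorphism sends it to $\sum_{n=0}^\infty \frac{\formj^n}{n!}\,\mathtt{s}_n$, where now each $\mathtt{s}_n\in \textbf{U}^{\textbf{E}}_\cA[n]$ is itself identified under the preceding proposition with $\sum_{r=0}^\infty \frac{\formg^{r}}{r!}(\mathtt{s}_n)_r \in \cA[[\formg]]$. Substituting gives
\[
\mathtt{s}\mapsto \sum_{n=0}^\infty \sum_{r=0}^\infty \frac{\formg^{\,r}\formj^{n}}{r!\, n!}\,(\mathtt{s}_n)_r,
\]
which matches the claimed formula. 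No serious obstacle is expected: the content of the corollary is entirely in the two preceding propositions, and this proof is essentially a bookkeeping composition, with the only point worth noting being functoriality of $\mathscr{S}$ on algebra homomorphisms.
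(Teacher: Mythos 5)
Your proof is correct and is precisely the argument the paper intends (the corollary is stated without proof, as the composition of the two preceding propositions). Applying the lax monoidal functor $\mathscr{S}$ to the species-level isomorphism $\textbf{U}^{\textbf{E}}_\cA \xrightarrow{\sim} \textbf{U}_{\cA[[\formg]]}$ and then invoking $\mathscr{S}(\textbf{U}_{\cA[[\formg]]})\cong \cA[[\formg]][[\formj]]=\cA[[\formg,\!\formj]]$, together with your bookkeeping of the explicit formula, is exactly right.
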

\section{Modules in Species}\label{sec:Modules in Species}
We recall aspects of the general theory of modules over Hopf monoids internal to symmetric monoidal categories, specialized to Hopf monoids in species. We then focus on gebras internal to $\textbf{E}$-modules, and view them (via currying) as examples of $(-)^{\textbf{E}}$-coalgebras. 

\subsection{$\textbf{h}$-Modules and $\textbf{h}$-Gebras} \label{sec:Modules}

Given a Hopf algebra in $\formg$-colored species $\textbf{h}$, let an $\textbf{h}$\emph{-module} $\textbf{m}=(\textbf{m},\rho)$ be a $\formj$-colored species $\textbf{m}$ equipped with a morphism of species of the form
\[
\rho:\textbf{h}\bigcdot \textbf{m}\to \textbf{m}
,\qquad
\mathtt{h} \otimes \mathtt{v} \mapsto \rho_{Y,I}( \mathtt{h} \otimes \mathtt{v} )
\]
called the (uncurried) \emph{action}, which satisfies associativity
\[
\rho_{Y_1\sqcup Y_2,I}\big( \mu_{Y_1,Y_2}(\mathtt{h}_1 \otimes \mathtt{h}_2)   \otimes \mathtt{v} \big)
=
\rho_{Y_1,Y_2\sqcup I}\big(  (\mathtt{h}_1 \otimes  \rho_{Y_2,I}(\mathtt{h}_2   \otimes \mathtt{v})\big)
\]
and unitality
\[
\rho_{\emptyset,I}( \mathtt{1}_\textbf{h}\otimes \mathtt{v})=\mathtt{v} 
.\]
Recall the algebra of raising operators $\CMcal{E}^{\bigcdot}(\textbf{m})$ from \textcolor{blue}{(\refeqq{eq:multcomp})}. The \emph{curried action} is the homomorphism of algebras given by
\[
\wh{\rho}:\textbf{h}\to \cH^{\bigcdot}(\textbf{m}, \textbf{m})
,\qquad
\mathtt{h} \mapsto \wh{\rho}_Y(\mathtt{h})
\]
where $\wh{\rho}_Y(\mathtt{h})$ is the raising operator given by
\[
\wh{\rho}_Y(\mathtt{h}): \textbf{m}\to \textbf{m}^{[Y]}
,\qquad
\mathtt{v}\mapsto \rho_{Y,I}(\mathtt{h}\otimes\mathtt{v})
.\]
In this way, an $\textbf{h}$-module $\textbf{m}$ is equivalent to a homomorphism (`representation') $\textbf{h}\to \CMcal{E}^{\bigcdot}(\textbf{m})$. Notice that $\textbf{h}$-modules also induce $(-)^{\textbf{h}}$-coalgebras, given by   
\[
\widecheck{\rho}:\textbf{m}\to \textbf{m}^{\textbf{h}}
,\qquad
\mathtt{v}\mapsto \widecheck{\rho}_I(\mathtt{v})
\] 
where $\widecheck{\rho}_I(\mathtt{v})$ is the $\textbf{h}$-element given by
\[
\widecheck{\rho}_I(\mathtt{v}):
\textbf{h} \to \textbf{m}^{[-]}[I]
,\qquad
\mathtt{h}\mapsto \rho_{Y,I}(\mathtt{h}\otimes\mathtt{v})
.
\]
For $\textbf{h}$-modules $\textbf{m}=(\textbf{m},\rho)$ and $\textbf{n}=(\textbf{n},\tau)$, a \emph{morphism} $\eta:\textbf{m}\to \textbf{n}$ of $\textbf{h}$-modules is a morphism of the underlying species $\eta: \textbf{m}\to \textbf{n}$ such that 
\[
\eta_{Y\sqcup I}\big( \rho_{Y,I}(\mathtt{h}\otimes \mathtt{v})\big)  =\tau_{Y,I}\big ( \mathtt{h}\otimes \eta_I(\mathtt{v})\big ) 
.\]
For the uncurried action, this is an intertwiner of actions; that is, a morphism $\eta: \textbf{m}\to \textbf{n}$ such that the following diagram commutes,
\begin{center}
\begin{tikzcd}[column sep=large,row sep=large] 
\textbf{h}\bigcdot\textbf{m} \arrow[d, "\text{id}\, \bigcdot\, \eta"'] \arrow[r,   "\rho"]   
& 
\textbf{m} \arrow[d,   "\eta"]     
\\
\textbf{h}\bigcdot \textbf{n} \arrow[r,   "\tau"']     
& 
\textbf{n}.
\end{tikzcd}
\end{center}
It also faithfully determines a morphism of $(-)^{\textbf{h}}$-coalgebras. A morphism of $(-)^{\textbf{h}}$-coalgebras is a commutative diagram in $[\sfS^{\op}, \sfVec]$ of the form
\begin{center}
\begin{tikzcd}[column sep=large,row sep=large] 
\textbf{m} \arrow[d, "\eta"'] \arrow[r,   "\widecheck{\rho}"]   
& 
\textbf{m}^{\textbf{h}} \arrow[d,   "\eta^{\textbf{h}}"]     
\\
\textbf{n} \arrow[r,   "\widecheck{\tau}"']     
& 
\textbf{n}^{\textbf{h}}.
\end{tikzcd}
\end{center}
Morphism composition is the obvious pasting of diagrams. (One can identify the morphism with $\eta$, however we shall prefer to think of the morphism as being the whole diagram.) 

For $\textbf{h}$-modules $\textbf{m}=(\textbf{m},\rho)$ and $\textbf{n}=(\textbf{n},\tau)$, their \emph{Cauchy product} $\textbf{m} \bigcdot \textbf{n}$ is the $\textbf{h}$-module whose action is given by the composition
\begin{equation}\label{eq:tensormodules}
\textbf{h}\bigcdot( \textbf{m} \bigcdot \textbf{n})
\xrightarrow{\Delta\,  \bigcdot\,  \text{id}} 
(\textbf{h}\bigcdot \textbf{h}) \bigcdot( \textbf{m} \bigcdot \textbf{n})
\xrightarrow{\sim} 
(\textbf{h}\bigcdot \textbf{m}) \bigcdot( \textbf{h} \bigcdot \textbf{n})
\xrightarrow{\rho\, \bigcdot\, \tau} 
\textbf{m} \bigcdot \textbf{n}
.
\end{equation}
This defines the monoidal category \hbox{$\textsf{Rep}(\textbf{h})$} of \hbox{$\textbf{h}$-modules}. The unit object $\textbf{1}$ is the $\textbf{h}$-module given by 
\[
\textbf{h}\bigcdot\textbf{1}\to \textbf{1}
,\qquad 
\mathtt{h}\otimes 1_\Bbbk \mapsto \epsilon_Y(\mathtt{h})
\] 
where $\epsilon$ is the counit of $\textbf{h}$.


Let $\textbf{h}$ be a cocommutative Hopf algebra, so that $\textsf{Rep}(\textbf{h})$ is a symmetric monoidal category. Let a $\textbf{h}$-\emph{(bi/co)algebra} be a (bi/co)monoid internal to $\textsf{Rep}(\textbf{h})$. Let a \emph{Hopf/Lie} $\textbf{h}$-\emph{algebra} be a Hopf monoid/Lie algebra internal to $\textsf{Rep}(\textbf{h})$. We make these definitions fully explicit for the cases which concern us below, namely $\textbf{h}\in \{ \textbf{L}, \textbf{E} \}$.

Let $\textbf{a}=(\textbf{a},\rho)$ be an $\textbf{h}$-algebra. Then, for primitive elements $\mathtt{h}\in \mathcal{P}(\textbf{h})[Y]$, the operator \hbox{$\wh{\rho}_Y(\mathtt{h}):\textbf{a}\to \textbf{a}^{[Y]}$} is a $Y$-derivation of $\textbf{a}$. Therefore the curried action $\wh{\rho}:\textbf{h}\to \cH^{\bigcdot}(\textbf{a}, \textbf{a})$ restricts to a homomorphism of Lie algebras
\[
\mathcal{P}(\textbf{h})\to \text{Der}(\textbf{a})
,\qquad
\mathtt{h} \mapsto \wh{\rho}_Y(\mathtt{h})
.\] 



\begin{thm} \label{prop:is a homo}
Let $\textbf{a}$ be an algebra in species, and suppose that the underlying species of $\textbf{a}$ has the structure of an $\textbf{h}$-module $(\textbf{a},\rho)$. Then $\textbf{a}$ is an $\textbf{h}$-algebra if and only if the associated $(-)^{\textbf{h}}$-coalgebra \[
\widecheck{\rho}:\textbf{a}\to\textbf{a}^\textbf{h}
\] 
is a homomorphism of algebras.
\end{thm}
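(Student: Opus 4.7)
The plan is to show that both conditions, when unpacked component-wise, are literally the same equations, so that the equivalence becomes a tautology modulo bookkeeping. The cleanest route is to separately verify compatibility with multiplication and compatibility with units, on each side of the ``iff''.

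First I would unpack the $\textbf{h}$-algebra condition. By definition this requires $\mu:\textbf{a}\bigcdot \textbf{a}\to \textbf{a}$ and $\iota:\textbf{1}\to \textbf{a}$ to be morphisms in $\textsf{Rep}(\textbf{h})$. Using the action on $\textbf{a}\bigcdot \textbf{a}$ defined in (\refeqq{eq:tensormodules}), which involves the comultiplication $\Delta$ of $\textbf{h}$, the first condition becomes, for $\mathtt{h}\in \textbf{h}[Y]$, $\mathtt{a}\in \textbf{a}[S]$, $\mathtt{b}\in \textbf{a}[T]$:
\[
\rho_{Y, S\sqcup T}\bigl(\mathtt{h} \otimes \mu_{S,T}(\mathtt{a}\otimes \mathtt{b})\bigr)
=
\sum_{Y_1\sqcup Y_2 = Y}\mu_{Y_1\sqcup S, Y_2\sqcup T}\bigl(\rho_{Y_1,S}(\mathtt{h}_1\otimes \mathtt{a})\otimes \rho_{Y_2,T}(\mathtt{h}_2\otimes \mathtt{b})\bigr),
\]
where $\Delta_{Y_1,Y_2}(\mathtt{h})=\sum \mathtt{h}_1\otimes \mathtt{h}_2$. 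The unit condition, using the action on $\textbf{1}$ defined by the counit $\epsilon$, collapses (since $\textbf{1}[S]=0$ unless $S=\emptyset$) to the single identity $\rho_{Y,\emptyset}(\mathtt{h}\otimes \mathtt{1}_\textbf{a})=\epsilon_Y(\mathtt{h})\,\mathtt{1}_\textbf{a}$ for all $Y\in \sfS$.

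Next I would unpack the statement that $\widecheck{\rho}:\textbf{a}\to \textbf{a}^{\textbf{h}}$ is a homomorphism of algebras. Using (\refeqq{eq:multofa^c}), for the multiplication preservation, the element $\mu_{S,T}(\widecheck{\rho}_S(\mathtt{a})\otimes \widecheck{\rho}_T(\mathtt{b}))$ evaluated at $\mathtt{h}\in \textbf{h}[Y]$ reads exactly the right-hand side of the displayed equation above, while $\widecheck{\rho}_I(\mu_{S,T}(\mathtt{a}\otimes \mathtt{b}))$ evaluated at $\mathtt{h}$ reads $\rho_{Y,I}(\mathtt{h}\otimes\mu_{S,T}(\mathtt{a}\otimes \mathtt{b}))$, the left-hand side. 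Similarly, by (\refeqq{eq:unitofa^c}) the unit of $\textbf{a}^{\textbf{h}}$ is $\iota^{[-]}_\emptyset \circ \epsilon$, so requiring $\widecheck{\rho}_\emptyset(\mathtt{1}_\textbf{a})=\mathtt{1}_{\textbf{a}^\textbf{h}}$ amounts, after evaluating at $\mathtt{h}\in \textbf{h}[Y]$, to $\rho_{Y,\emptyset}(\mathtt{h}\otimes \mathtt{1}_\textbf{a})=\epsilon_Y(\mathtt{h})\,\mathtt{1}_\textbf{a}$.

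Comparing both unpackings gives the equivalence directly: the multiplicativity of $\widecheck{\rho}$ is the compatibility of $\rho$ with $\mu$ (relative to $\Delta$ on $\textbf{h}$), and the unitality of $\widecheck{\rho}$ is the compatibility of $\rho$ with $\iota$ (relative to $\epsilon$ on $\textbf{h}$). The only real care needed is the bookkeeping for the unit, where one must check that the degenerate components (arising because $\textbf{1}$ is supported on $\emptyset$ and $\textbf{a}^{\textbf{h}}[\emptyset]=\mathrm{Hom}(\textbf{h},\textbf{a}^{[-]}[\emptyset])$) do produce the single identity $\rho_{Y,\emptyset}(\mathtt{h}\otimes \mathtt{1}_\textbf{a})=\epsilon_Y(\mathtt{h})\,\mathtt{1}_\textbf{a}$ on both sides, which is the main (minor) obstacle. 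The naturality/$\sfS$-equivariance of every map involved is automatic since all constructions are functorial in the finite-set arguments.
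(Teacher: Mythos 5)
Your proposal is correct and follows essentially the same route as the paper's own proof: both unpack the $\textbf{h}$-algebra condition (via the Cauchy-product module structure and the comultiplication of $\textbf{h}$) and the homomorphism condition for $\widecheck{\rho}$ (via the multiplication and unit of $\textbf{a}^{\textbf{h}}$), and observe that the resulting component-wise equations coincide verbatim. Nothing is missing; the unit bookkeeping you flag is handled the same way in the paper.
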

\begin{proof}
Given $\mathtt{h}\in \textbf{h}[Y]$, let
\[
\Delta_{Y_1,Y_2}(\mathtt{h})=: 
\sum_{x\in X} {\mathtt{h}_{Y_1}}_x \otimes {\mathtt{h}_{Y_2}}_x
\]
where $X$ is some set of indices. Then $\textbf{a}$ is an $\textbf{h}$-algebra if its multiplication and unit are morphisms of $\textbf{h}$-modules, that is if
\begin{equation}\label{eq:actionohh}
\rho_{Y,I}\big (\mathtt{h}\otimes\mu_{S,T}(\mathtt{a}\otimes \mathtt{b})\big) 
= 
\sum_{Y_1\sqcup Y_2 =Y} 
\sum_{x\in X}
\mu_{Y_1\sqcup S,Y_2\sqcup T}\big 
( \rho_{Y_1,S}({\mathtt{h}_{Y_1}}_x \otimes \mathtt{a})
\otimes 
\rho_{Y_2,T}({\mathtt{h}_{Y_2}}_x\otimes  \mathtt{b})
\big )
\end{equation}
and
\[
\rho_{Y,\emptyset}(\mathtt{h}\otimes \mathtt{1}_\textbf{a})= \epsilon_Y(\mathtt{h}) \mathtt{1}_\textbf{a}
.\]
On the other hand, the $(-)^{\textbf{h}}$-coalgebra $\widecheck{\rho}:\textbf{a}\to\textbf{a}^\textbf{h}$ preserves the multiplication if
\begin{equation}\label{eq:actionohh2}
\textbf{a}\bigcdot \textbf{a} \xrightarrow{\mu} \textbf{a} 
\xrightarrow{\widecheck{\rho}}
\textbf{a}^\textbf{h} 
\ = \ 
\textbf{a}\bigcdot \textbf{a} 
\xrightarrow{\widecheck{\rho}\, \bigcdot\, \widecheck{\rho}}  
\textbf{a}^{\textbf{h}}\bigcdot \textbf{a}^{\textbf{h}}
\xrightarrow{\mu}  
\textbf{a}^{\textbf{h}}
.
\end{equation}
(Recall the multiplication $\textbf{a}^{\textbf{h}} \bigcdot \textbf{a}^{\textbf{h}}
\xrightarrow{\mu}  
\textbf{a}^{\textbf{h}}$ from \textcolor{blue}{(\refeqq{eq:multofa^c})}.) We then observe that the left-hand sides of \textcolor{blue}{(\refeqq{eq:actionohh})} and \textcolor{blue}{(\refeqq{eq:actionohh2})} are equal, and similarly for the right-hand sides. The $(-)^{\textbf{h}}$-coalgebra $\widecheck{\rho}:\textbf{a}\to\textbf{a}^\textbf{h}$ preserves the unit if
\[
\widecheck{\rho}(\mathtt{1}_\textbf{a})= \mathtt{1}_{\textbf{a}^{\textbf{h}}}
.\]
That is, if
\[
\rho_{Y,\emptyset}( \mathtt{h} \otimes \mathtt{1}_\textbf{a} )   = ( \iota^{[-]}_\emptyset \circ \epsilon)_Y  (\mathtt{h})=\epsilon_Y(\mathtt{h}) \mathtt{1}_\textbf{a}
.\]
(Recall the unit of $\textbf{a}^{\textbf{h}}$ from \textcolor{blue}{(\refeqq{eq:unitofa^c})}.)
\end{proof}
\subsection{$\textbf{g}$-Modules}

Given a Lie algebra in species $\textbf{g}$, let a $\textbf{g}$\emph{-module} $\textbf{m}=(\textbf{m},\rho)$ be a species $\textbf{m}$ equipped with a morphism of species of the form
\[
\rho:\textbf{g}\bigcdot \textbf{m}\to \textbf{m}
,\qquad
\mathtt{g} \otimes \mathtt{v} \mapsto \rho_{Y,I}(\mathtt{g} \otimes \mathtt{v})
\]
such that 
\[
\rho_{Y,I}\big ( [\mathtt{g}_1, \mathtt{g}_2]_{Y_1, Y_2} \otimes \mathtt{v} \big) 
=  
\rho_{Y_1, Y_2\sqcup I}\big( \mathtt{g}_1 \otimes \rho_{Y_2,I} (\mathtt{g}_2 \otimes \mathtt{v})\big) 
-
\rho_{Y_1\sqcup I,Y_2}\big(\rho_{Y_1,I} (\mathtt{g}_1 \otimes \mathtt{v})\otimes \mathtt{g}_2 \big) 
.\]
Notice that a morphism of species $\rho:\textbf{g}\bigcdot \textbf{m}\to \textbf{m}$ is a $\textbf{g}$-module if and only if its currying
\[
\wh{\rho}:\textbf{g}\to \CMcal{E}^{\bigcdot}(\textbf{m})
\]
is a homomorphism of Lie algebras (where $\CMcal{E}^{\bigcdot}(\textbf{m})$ is equipped with the commutator bracket). The restriction of an $\textbf{h}$-module to the primitive part $\mathcal{P}(\textbf{h})$ is a $\mathcal{P}(\textbf{h})$-module. Conversely, a $\textbf{g}$-module determines a $\mathcal{U}(\textbf{g})$-module since a homomorphism of Lie algebras $\textbf{g}\to \CMcal{E}^{\bigcdot}(\textbf{m})$ will uniquely extend to a homomorphism of algebras $\mathcal{U}(\textbf{g}) \to \CMcal{E}^{\bigcdot}(\textbf{m})$. This defines an equivalence of categories
\[
\textsf{Rep}(\textbf{g})\xrightarrow{\sim} \textsf{Rep}\big(\mathcal{U}(\textbf{g})\big )
.\]
To make this equivalence a monoidal functor, for $\textbf{g}$-modules $(\textbf{m},\rho)$ and $(\textbf{n},\tau)$, we define their \emph{Cauchy product} to be the $\textbf{g}$-module given by
\[
\textbf{g}\bigcdot( \textbf{m} \bigcdot \textbf{n})\to  \textbf{m} \bigcdot \textbf{n}
,\qquad 
\mathtt{g} \otimes (\mathtt{v} \otimes \mathtt{w})
\mapsto 
\rho(\mathtt{g}\otimes \mathtt{v}) \otimes \mathtt{w}   
+ 
\mathtt{v} \otimes \tau( \mathtt{g} \otimes \mathtt{w}) 
.\]
Let a $\textbf{g}$-\emph{(bi/co)algebra} be a (bi/co)monoid internal to $\textsf{Rep}(\textbf{g})$. Let a \emph{Hopf/Lie} $\textbf{g}$-\emph{algebra} be a Hopf monoid/Lie algebra internal to $\textsf{Rep}(\textbf{g})$. 

\subsection{$\textbf{L}$-Modules and $\textbf{L}$-(Co)algebras}
Recall the endofunctor $(-)'$ from \autoref{sec:inthomcauchy2}. The species $\textbf{m}'$ is called the \emph{derivative} of $\textbf{m}$, given by
\[
\textbf{m}': \sfS^{\op}\to \sfVec
,\qquad
I\mapsto    
\textbf{m}\big [\{\ast\} \sqcup  I\big ] 
\quad\text{and}\quad 
\sigma \mapsto
\textbf{m}[\text{id}_{\{\ast\}} \sqcup \sigma]  
.\] 
We have
\begin{equation} \label{prod}
(\textbf{m} \bigcdot \textbf{n})'=\textbf{n}'\bigcdot \textbf{n}\oplus  \textbf{m}\bigcdot \textbf{n}'
\qquad \text{and} \qquad
\textbf{1}'=\textbf{0}
.
\end{equation}
This follows from the fact that a decomposition of $\ast I$ into two blocks is of the form $\ast S \sqcup T$ or $S\sqcup \ast T$ for $S\sqcup T=I$. Thus, $(-)'$ behaves like a derivation on the category $[\textsf{S}^{\op},\sfVec]$ with respect to the Cauchy product. The classical product rule for power series is recovered from \textcolor{blue}{(\refeqq{prod})} via \autoref{rem:der}. The classical chain rule is obtained similarly from the plethystic product. We have the isomorphism of species
\begin{equation}\label{eq:ast}
\textbf{m}^\textbf{X} \xrightarrow{\sim} \textbf{m}'
,\qquad 
\chi \mapsto \chi_\ast( \tH_\ast )
.
\end{equation}
Since $\textbf{L}$ is the free algebra on $\textbf{X}$, an $\textbf{L}$-module $(\textbf{m}, \rho)$ is completely determined by the up operator 
\[
\wh{\rho}(\tH_{(\ast)}): \textbf{m}\to \textbf{m}'
.\] 
In this way, an $\textbf{L}$-module is equivalently a species $\textbf{m}$ equipped with a choice of up operator $\textbf{m}\to \textbf{m}'$. The Cauchy product of species equipped with up operators as defined in \cite[Section 8.12.2]{aguiar2010monoidal} coincides with the Cauchy product of $\textbf{L}$-modules, as defined in \textcolor{blue}{(\refeqq{eq:tensormodules})}. Therefore the (co)monoids with up (co)derivations of \cite[Section 8.12.5]{aguiar2010monoidal} are equivalently $\textbf{L}$-(co)algebras. 

We call an up operator $u:\textbf{m}\to \textbf{m}'$ \emph{commutative} if its corresponding $\textbf{L}$-module 
\[
\rho:\textbf{L} \bigcdot \textbf{m} \to \textbf{m}
,\qquad
\rho_{\ast,I}( \tH_{(\ast)}  \otimes \mathtt{v})  :=   u(\mathtt{v})
\] 
factors through $\textbf{L} \bigcdot \textbf{m} \twoheadrightarrow \textbf{E} \bigcdot \textbf{m}$, $\tH_\ell\otimes \mathtt{v}\mapsto \tH_Y\otimes \mathtt{v}$, thus inducing an $\textbf{E}$-module.
\subsection{$\textbf{E}$-Modules}
We now make $\textbf{X}$-modules and $\textbf{E}$-modules explicit, where $\textbf{X}$ is taken to be an abelian Lie algebra. Thus, an $\textbf{X}$-module $(\textbf{m}, \rho)$ is a morphism of the form
\[
\rho:\textbf{X}\bigcdot \textbf{m}\to \textbf{m}
,\qquad
\tH_{y}\otimes \mathtt{v} \mapsto \rho_{y,I}( \tH_{y}\otimes \mathtt{v} )
\]
such that, letting $u_{y}:=\wh{\rho}_y(\tH_{y}):\textbf{m}\to \textbf{m}^{[\{ y \}]}$, we have
\begin{equation}\label{com}
u_{y_2}(u_{y_1}(\mathtt{v})) = u_{y_1} (u_{y_2}(\mathtt{v}))
.
\end{equation}
We also let $u:=u_\ast=\wh{\rho}_{\ast }(\tH_{\ast}):\textbf{m}\to \textbf{m}'$, and this up operator determines the $\textbf{X}$-module.
A generic up operator $\textbf{m}\to \textbf{m}'$ is commutative if and only if \textcolor{blue}{(\refeqq{com})} holds. Thus, $\textbf{X}$-modules are equivalent to commutative up operators. 

Note that $\textbf{X}$-modules are also equivalent to $\textbf{E}$-modules, since $\textbf{X}$ is the primitive part of $\textbf{E}$. Explicitly, an $\textbf{E}$-module
\[
\rho:\textbf{E}\bigcdot \textbf{m}\to \textbf{m}
\] 
induces an $\textbf{X}$-module by restricting the action to the primitive part of $\textbf{E}$. Conversely, an $\textbf{X}$-module
\[
\rho:\textbf{X}\bigcdot \textbf{m}\to \textbf{m}
\]
induces an $\textbf{E}$-module, also denoted $\rho$, given by
\begin{equation}\label{eq:inducedE}
\rho:\textbf{E}\bigcdot \textbf{m} \to \textbf{m}
,\qquad
\tH_Y\otimes \mathtt{v} \mapsto 
\rho_{Y,I}(\mathtt{v}):=
u_{y_r}( \dots (u_{y_2}(u_{y_1}(\mathtt{v})))\dots )
\footnote{\ the order of the $y_j$ does not matter}
\end{equation}
where $Y=\{ y_1,\dots, y_r \}$. We let 
\[
u_{Y}:=\wh{\rho}_Y(\tH_Y):\textbf{m}\to \textbf{m}^{[Y]} 
\qquad \text{and} \qquad 
u_{r}:=\wh{\rho}_r(\tH_{[r]}):\textbf{m}\to \textbf{m}^{[[r]]}
. 
\]
Then the corresponding $(-)^{\textbf{E}}$-coalgebra is given by
\[
\widecheck{\rho}:\textbf{m} \to \textbf{m}^{\textbf{E}}
,\qquad 
\mathtt{v} \mapsto \underbrace{ u_{(-)}(\mathtt{v})= \sum_{r=0}^\infty u_r(\mathtt{v})}_{\text{via the isomorphism \textcolor{blue}{(\refeqq{eq:isoseriesr})}}}
.\] 
If $u:\textbf{m} \to \textbf{m}'$ is a commutative up operator, then we let $(\textbf{m},u)$ denote the $\textbf{E}$-module generated by $u$.  
\subsection{$\textbf{E}$-Algebras}
We now make $\textbf{E}$-algebras explicit. We present everything from the \hbox{$(-)^\textbf{E}$-coalgebra} point of view, since this shall be our perspective in the application to QFT. If $\textbf{m}=(\textbf{m},u)$ and $\textbf{n}=(\textbf{n},v)$ are $\textbf{E}$-modules, then a morphism $\textbf{m}\to \textbf{n}$ of $\textbf{E}$-modules is a commutative diagram in $[\sfS^{\op}, \sfVec]$ of the form
\begin{center}
\begin{tikzcd}[column sep=huge,row sep=large] 
\textbf{m} \arrow[d, "\eta"'] \arrow[r,   "u_{(-)}(-)"]   
& 
\textbf{m}^{\textbf{E}} \arrow[d,   "\eta^{\textbf{E}}"]     
\\
\textbf{n} \arrow[r,   "v_{(-)}(-)"']     
& 
\textbf{n}^{\textbf{E}}. 
\end{tikzcd}
\end{center}
Since $u$ and $v$ generate the $\textbf{E}$-actions, this diagram commutes if (and only if)
\begin{equation}\label{eq:commute}
v\big (\eta_I(\mathtt{v})\big ) =  \eta_{\ast I} \big (u(\mathtt{v}) \big) \qquad \text{for all} \quad \mathtt{v}\in \textbf{m}[I] \quad \text{and} \quad I\in \sfS 
.
\end{equation}
Given $\textbf{E}$-modules $(\textbf{m},u)$ and $(\textbf{n},v)$, the \emph{Cauchy product} $\textbf{m}\bigcdot \textbf{n}$ of $\textbf{E}$-modules is the $\textbf{E}$-module given by 
\[
\textbf{m}\bigcdot \textbf{n} \to  (\textbf{m}\bigcdot \textbf{n})^{\textbf{E}}
,\qquad 
\mathtt{v}\otimes \mathtt{w} \mapsto \sum_{Y_1 \sqcup Y_2 =(-)}  u_{Y_1}(\mathtt{v}) \otimes  v_{Y_2}(\mathtt{w})
=
\sum_{r=0}^\infty \sum_{r_1 + r_2 =r} \dfrac{r!}{r_1 ! \,  r_2 !}   u_{r_1}(\mathtt{v}) \otimes  v_{r_2}(\mathtt{w})
.\]
This is the $\textbf{E}$-module with generating commutative up operator given by
\begin{equation} \label{eq:gen}
\begin{tikzcd}[column sep=huge,row sep=large] 
\textbf{m}\bigcdot \textbf{n}  \arrow[r,   "u\, \bigcdot\, \text{id}\, \oplus\, \text{id}\, \bigcdot\, v    "]   
& 
(\textbf{m}\bigcdot \textbf{n})'.
\end{tikzcd}
\end{equation}
This defines the symmetric monoidal category $\textsf{Rep}(\textbf{E})$ of $\textbf{E}$-modules. The natural isomorphism $\textbf{1}\xrightarrow{\sim} \textbf{1}^{\textbf{E}}$ is the unit object. 

Continuing the $(-)^{\textbf{E}}$-coalgebra point of view, an $\textbf{E}$-algebra $\textbf{a}$ is an $\textbf{E}$-module $(\textbf{a},u)$ together with a pair of morphisms of $(-)^{\textbf{E}}$-coalgebras of the form
\begin{center}
\begin{tikzcd}[column sep=110pt,row sep=huge] 
\textbf{a}\bigcdot \textbf{a} 	\arrow[d, "\mu"']     \arrow[r,   "\sum_{Y_1 \sqcup Y_2 =(-)} u_{Y_1}(-) \otimes  u_{Y_2}(-)"]   & 
(\textbf{a}\bigcdot \textbf{a})^{\textbf{E}}  	\arrow[d,   "\mu^{\textbf{E}}"]     
\\
\textbf{a}	\arrow[r,   "u_{(-)}(-)"']     
& 
\textbf{a}^{\textbf{E}}
\end{tikzcd} \qquad $\text{and}$ \qquad 
\begin{tikzcd}[column sep=huge,row sep=huge]  
\textbf{1} 	\arrow[d, "\iota"']     \arrow[r,   "\cong"]   
& 
\textbf{1}^{\textbf{E}}  	\arrow[d,   "\iota^{\textbf{E}}"]     
\\
\textbf{a}	\arrow[r,   "u_{(-)}(-)"']     
& 
\textbf{a}^{\textbf{E}}
\end{tikzcd}
\end{center}
satisfying associativity and unitality. (The observation that $\textbf{a}\bigcdot \textbf{a}\to (\textbf{a}\bigcdot \textbf{a})^{\textbf{E}}$ is the composition of $\widecheck{\rho} \bigcdot \widecheck{\rho}$ with the multiplication transformation of $(-)^{\textbf{E}}$ gives a second proof of \autoref{prop:is a homo}.)


By drawing the relevant diagrams, one sees that associativity and unitality are satisfied if and only if $(\textbf{a},\mu,\iota)$ is an algebra in species. Thus, an $\textbf{E}$-algebra $(\textbf{a},u)$ is equivalently an algebra in species $(\textbf{a}, \mu, \iota)$ equipped with a commutative up operator $u$ such that these two diagrams commute. By \textcolor{blue}{(\refeqq{eq:commute})} and \textcolor{blue}{(\refeqq{eq:gen})}, the first diagram commutes if and only if $u$ is an up derivation of $\textbf{a}$, that is
\begin{equation}\label{eq:upder}
u\big(\mu_{S,T}(\mathtt{a} \otimes \mathtt{b})\big) 
=\mu_{\ast S,T}\big (u(\mathtt{a}) \otimes    \mathtt{b}\big)+\mu_{S,\ast T}\big(\mathtt{a} \otimes u(\mathtt{b})\big)
.
\end{equation}
The second diagram, which requires that $u(\mathtt{1}_\textbf{a})=0$, is then automatically satisfied since
\[
u(\mathtt{1}_\textbf{a})=
u\big(\mu_{\emptyset,\emptyset}(\mathtt{1}_\textbf{a} \otimes \mathtt{1}_\textbf{a} )\big) 
=
\mu_{\ast, \emptyset}\big (u(\mathtt{1}_\textbf{a}) \otimes    \mathtt{1}_\textbf{a}\big)+\mu_{\emptyset,\ast  }\big(\mathtt{1}_\textbf{a} \otimes u(\mathtt{1}_\textbf{a} )\big)
=u(\mathtt{1}_\textbf{a})+u(\mathtt{1}_\textbf{a})
.\] 
Notice that indeed, such an up operator $u$ will generate an $\textbf{E}$-action which commutes with the structure maps of $\textbf{a}$, and this recovers the uncurried action point of view on $\textbf{E}$-algebras.  
\subsection{Hopf $\textbf{E}$-Algebras}
We now make Hopf $\textbf{E}$-algebras explicit. An $\textbf{E}$-bialgebra $\textbf{h}=(\textbf{h},u)$ consists of four morphisms of $(-)^{\textbf{E}}$-coalgebras of the form
\begin{center}
\begin{tikzcd}[column sep=110pt,row sep=huge] 
\textbf{h}\bigcdot \textbf{h} 	\arrow[d, "\mu"']     \arrow[r,   "\sum_{Y_1 \sqcup Y_2 =(-)} u_{Y_1}(-) \otimes  u_{Y_2}(-)"]   & (\textbf{h}\bigcdot \textbf{h})^{\textbf{E}}  	\arrow[d,   "\mu^{\textbf{E}}"]     \\
\textbf{h}	\arrow[r,   "u_{(-)}(-)"']     & \textbf{h}^{\textbf{E}}
\end{tikzcd} \qquad 
\begin{tikzcd}[column sep=huge,row sep=huge] 
\textbf{1} 	\arrow[d, "\iota"']     \arrow[r,   "\cong"]   & \textbf{1}^{\textbf{E}}  	\arrow[d,   "\iota^{\textbf{E}}"]     \\
\textbf{h}	\arrow[r,   "u_{(-)}(-)"']     & \textbf{h}^{\textbf{E}}
\end{tikzcd} \\[15pt] 
\begin{tikzcd}[column sep=110pt,row sep=huge] 
\textbf{h} 	\arrow[d, "\Delta"']     \arrow[r,   "u_{(-)}(-)"]   &  \textbf{h}^{\textbf{E}}  	\arrow[d,   "\Delta^{\textbf{E}}"]     \\
\textbf{h}\bigcdot \textbf{h}	\arrow[r,   "\sum_{Y_1 \sqcup Y_2 =(-)} u_{Y_1}(-) \otimes  u_{Y_2}(-)"']     & (\textbf{h}\bigcdot \textbf{h})^{\textbf{E}}
\end{tikzcd} \qquad 
\begin{tikzcd}[column sep=huge,row sep=huge] 
\textbf{h} 	\arrow[d, "\epsilon"']     \arrow[r,   "u_{(-)}(-)"]   & \textbf{h}^{\textbf{E}}  	\arrow[d,   "\epsilon^{\textbf{E}}"]     \\
\textbf{1}	\arrow[r,   "\cong"']     & \textbf{1}^{\textbf{E}}
\end{tikzcd}
\end{center} \medskip
satisfying (co)associativity, (co)unitality, and the usual four compatibility conditions for bimonoids. This happens if and only if $(\textbf{h},\mu,,\Delta,\iota,\epsilon)$ is a bialgebra in species. Thus, an $\textbf{E}$-bialgebra is equivalently a bialgebra in species $(\textbf{h},\mu,,\Delta,\iota,\epsilon)$ equipped with a commutative up operator $u$ such that these four diagrams commute. The top two diagrams commute if and only if $u$ is an up derivation of $\textbf{a}$, and the lower left diagram commutes if and only if $u$ is an \emph{up coderivation} of $\textbf{a}$, meaning that
\begin{equation}\label{eq:upcoder}
\big(  
u\otimes \text{id} + \text{id} \otimes u
\big)
\circ  
\Delta_{S,T}(\mathtt{a})
=
\Delta_{\ast S,T} \big(u(\mathtt{a})\big)
+
\Delta_{S,\ast T} \big(u(\mathtt{a})\big)
.
\end{equation}
The lower right diagram is tautologous. Thus, if an \emph{up biderivation} is an up operator which is both an up derivation and an up coderivation, then an $\textbf{E}$-bialgebra $(\textbf{h},u)$ is equivalently a bialgebra in species $\textbf{h}$ equipped with a commutative up biderivation $u$ (which then generates an $\textbf{E}$-action commuting with the structure maps of $\textbf{h}$). 

A Hopf $\textbf{E}$-algebra $\textbf{h}=(\textbf{h},u)$ is an $\textbf{E}$-bialgebra such that there exists an endomorphism of the form
\begin{center}
\begin{tikzcd}[column sep=huge,row sep=huge] 
\textbf{h}	\arrow[d, "\text{s}"']     \arrow[r,   "u_{(-)}(-)"]   & \textbf{h}^{\textbf{E}}  	\arrow[d,   "\text{s}^{\textbf{E}}"]     \\
\textbf{h}	\arrow[r,   "u_{(-)}(-)"']     & \textbf{h}^{\textbf{E}}
\end{tikzcd}
\end{center}
satisfying the usual properties of an antipode.

\begin{prop}\label{prop:hopfalgebra}
Let $(\textbf{h},u)$ be an $\textbf{E}$-bialgebra. Then $(\textbf{h},u)$ is a Hopf $\textbf{E}$-algebra if and only if the underlying bialgebra $\textbf{h}$ is a Hopf algebra.
\end{prop}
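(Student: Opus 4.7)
The forward implication is immediate: the antipode of a Hopf $\textbf{E}$-algebra $(\textbf{h},u)$ is, by definition, a morphism in $\textsf{Rep}(\textbf{E})$ whose underlying morphism of species is an antipode of $\textbf{h}$. For the backward implication, assume $\textbf{h}$ carries an antipode $\text{s}$. In view of \textcolor{blue}{(\refeqq{eq:commute})}, what must be shown is that
\[
f_I := u \circ \text{s}_I - \text{s}_{\ast I} \circ u = 0 \qquad \text{for every} \quad I \in \sfS,
\]
so that $\text{s}$ defines an endomorphism of the $\textbf{E}$-module $(\textbf{h},u)$, and therefore lifts to an antipode in $\textsf{Rep}(\textbf{E})$.

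The plan is to exploit the fact that, by hypothesis, $u$ is simultaneously an up derivation \textcolor{blue}{(\refeqq{eq:upder})} and an up coderivation \textcolor{blue}{(\refeqq{eq:upcoder})}. For $I \neq \emptyset$, I will apply $u$ to the species-level antipode equation
\[
\sum_{S \sqcup T = I} \mu_{S,T} \circ (\text{id}_S \otimes \text{s}_T) \circ \Delta_{S,T} = 0
\]
and distribute using the derivation property. Separately, I will evaluate the antipode equation at $\ast I$ on the element $u(\mathtt{a})$, splitting the resulting sum over $S' \sqcup T' = \ast I$ according to whether $\ast \in S'$ or $\ast \in T'$, and using the coderivation property to rewrite $\Delta_{\ast S, T}\big(u(\mathtt{a})\big)$ and $\Delta_{S, \ast T}\big(u(\mathtt{a})\big)$ in terms of $\Delta_{S,T}(\mathtt{a})$. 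The two identities share all terms in which $u$ lands on the first tensor factor; subtracting yields
\[
\sum_{S \sqcup T = I} \mu_{S, \ast T} \circ (\text{id}_S \otimes f_T) \circ \Delta_{S,T} = 0 \qquad (I \neq \emptyset).
\]
The case $I = \emptyset$ is handled by the same argument applied inside the ordinary $\Bbbk$-Hopf algebra $\textbf{h}[\emptyset]$, using the antipode equation at $\{\ast\}$ evaluated on $u(\mathtt{a})$, together with $u(\mathtt{1}_\textbf{h}) = 0$ (which follows from $u$ being an up derivation applied to $\mathtt{1}_\textbf{h}= \mu_{\emptyset,\emptyset}(\mathtt{1}_\textbf{h} \otimes \mathtt{1}_\textbf{h})$).

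Packaging the components $f_I$ into a morphism of species $f:\textbf{h} \to \textbf{h}'$, the collected identities say precisely that $\text{id}_\textbf{h} \ast f = 0$ in the convolution-type algebra on $\cH^{\bigcdot}(\textbf{h}, \textbf{h})$ built from the comultiplication and multiplication of $\textbf{h}$ (the natural Cauchy-product analogue of \textcolor{blue}{(\refeqq{eq:convolprod})}, with the Cauchy-product splittings tracking the placement of the auxiliary label $\ast$). Convolving on the left with $\text{s}$ and invoking associativity then gives
\[
f = (\iota \circ \epsilon) \ast f = (\text{s} \ast \text{id}_\textbf{h}) \ast f = \text{s} \ast (\text{id}_\textbf{h} \ast f) = 0,
\]
which is what was needed. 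The main obstacle is purely bookkeeping: keeping track of where the auxiliary label $\ast$ lands in the two expansions so that the cancellation in the subtraction step is clean, and verifying that the conventions on the Cauchy-indexed convolution make $\iota \circ \epsilon$ behave as a unit at the empty component. Once this is in place, the convolution inversion via $\text{s}$ is formal and forces $f=0$.
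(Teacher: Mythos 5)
Your proof is correct, but it takes a genuinely different route from the paper. The paper proves the converse by invoking an explicit antipode formula: in the connected case it writes $\text{s}_I=\sum_{F\in\Sigma[I]}(-1)^k\mu_F\circ\Delta_F$ (Takeuchi) and uses the biderivation property of $u$ to match this term-by-term with the corresponding formula over $\Sigma[\ast I]$, deferring the non-connected case to a second antipode formula. You instead run the standard ``uniqueness of convolution inverse'' argument: setting $f_I=u\circ\text{s}_I-\text{s}_{\ast I}\circ u$, you apply the derivation property to $u$ of the antipode identity at $I$, apply the coderivation property to the antipode identity at $\ast I$ evaluated on $u(\mathtt{a})$, and subtract; the $\mu_{\ast S,T}\circ(u\otimes\text{s}_T)\circ\Delta_{S,T}$ terms cancel and what remains is exactly $\operatorname{id}_{\textbf{h}}\ast f=0$ in the convolution action of $\Hom(\textbf{h},\textbf{h})$ on $\Hom(\textbf{h},\textbf{h}')$ given by $(g\ast f)_I=\sum_{S\sqcup T=I}\mu_{S,\ast T}\circ(g_S\otimes f_T)\circ\Delta_{S,T}$; left-convolving with $\text{s}$ and using associativity and $\text{s}\ast\operatorname{id}=\iota\circ\epsilon$ forces $f=0$. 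I checked the key cancellation and the module axioms (associativity via coassociativity of $\Delta$ and associativity of $\mu$; $\iota\circ\epsilon$ acts as a left unit since $\epsilon$ vanishes on nonempty components), and the $I=\emptyset$ case goes through as you say using $u(\mathtt{1}_{\textbf{h}})=0$. What your approach buys is uniformity: no explicit antipode formula, no split into connected and non-connected cases, and the mechanism (antipode is characterized as a convolution inverse, so anything convolution-annihilated by $\operatorname{id}$ vanishes) is the same one that proves uniqueness of antipodes. What the paper's approach buys is brevity given that Takeuchi's formula is already on the table, and it makes visible the combinatorics of how $\ast$ gets inserted into compositions of $I$. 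The only cosmetic gap in your write-up is that the convolution module structure on $\cH^{\bigcdot}(\textbf{h},\textbf{h})$ is asserted rather than constructed, but the verification is routine and you correctly flag it as bookkeeping.
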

\begin{proof}
If $(\textbf{h},u)$ is a Hopf $\textbf{E}$-algebra, then the underlying morphism $\text{s}:\textbf{h}\to \textbf{h}$ of the antipode of $(\textbf{h},u)$ will be an antipode for $\textbf{h}$ (this follows by drawing all the relevant diagrams as before). Conversely, if $\text{s}:\textbf{h}\to \textbf{h}$ is an antipode for $\textbf{h}$, then we require the diagram
\begin{center}
\begin{tikzcd}[column sep=huge,row sep=huge] 
\textbf{h}	\arrow[d, "\text{s}"']     \arrow[r,   "u_{(-)}(-)"]   & \textbf{h}^{\textbf{E}}  	\arrow[d,   "\text{s}^{\textbf{E}}"]     \\
\textbf{h}	\arrow[r,   "u_{(-)}(-)"']     & \textbf{h}^{\textbf{E}}
\end{tikzcd}
\end{center}
to commute. If $\textbf{h}$ is connected, then the antipode $\text{s}$ is given by Takeuchi's antipode formula \cite[Proposition 8.13]{aguiar2010monoidal},
\[
\text{s}_I=\sum_{F\in \Sigma[I]} (-1)^k \mu_F \circ \Delta_F
.\]
The sum is over all compositions $F$ of $I$ (defined in \autoref{comp}). Then
\[
u\big (\text{s}_I (\mathtt{v})\big )= \underbrace{u \Big( \sum_{F\in \Sigma[I]} (-1)^k \mu_F \circ \Delta_F (\mathtt{v}) \Big)= \sum_{F\in \Sigma[\ast I]} (-1)^k \mu_F \circ \Delta_F \big ( u( \mathtt{v}) \big )}_{\text{since $u$ is a biderivation}}
.\]
The generic (not necessarily connected) case follows in a similar way using the antipode formula \cite[Proposition 8.10 (ii)]{aguiar2010monoidal}. 
%
\end{proof}
\subsection{Lie $\textbf{E}$-Algebras}
A Lie $\textbf{E}$-algebra $\textbf{g}=(\textbf{g},u)$ is a morphism of $(-)^{\textbf{E}}$-coalgebras of the form
\begin{center}
\begin{tikzcd}[column sep=110pt,row sep=huge] 
\textbf{g}\bigcdot \textbf{g} 	\arrow[d, "\partial^\ast"']     \arrow[r,   "\sum_{Y_1 \sqcup Y_2 =(-)} u_{Y_1}(-) \otimes  u_{Y_2}(-)"]   & (\textbf{g}\bigcdot \textbf{g})^{\textbf{E}}  	\arrow[d,   "{\partial^\ast}^{\textbf{E}}"]     \\
\textbf{g}	\arrow[r,   "u_{(-)}(-)"']     & \textbf{g}^{\textbf{E}}
\end{tikzcd}
\end{center}
satisfying antisymmetry and the Jacobi identity. This happens if and only if $(\textbf{g},\partial^\ast)$ is a Lie algebra in species. Thus, a Lie $\textbf{E}$-algebra is equivalently a Lie algebra in species $(\textbf{g},\partial^\ast)$ equipped with a commutative up operator $u$ such that this diagram commutes. This diagram commutes if and only $u$ is a \emph{up derivation} for the Lie bracket $\partial^\ast=[-,-]$, meaning that
\[
u\big([\mathtt{a} , \mathtt{b}]\big) 
=\big[u(\mathtt{a}) ,\mathtt{b}\big]
+\big[\mathtt{a} , u(\mathtt{b})\big]
\, .\] 

\begin{prop}\label{combracketup}
Let $(\textbf{a},u)$ be an $\textbf{E}$-algebra. Then the commutator bracket gives $\textbf{a}$ the structure of a Lie $\textbf{E}$-algebra. 
\end{prop}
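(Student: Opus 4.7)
The plan is to verify the two ingredients separately: first that the commutator bracket gives a Lie algebra in species, and second that $u$ is an up derivation of this bracket in the sense of the formula displayed just before the statement.

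The first ingredient is not new: we noted earlier (just before the statement of the PBW theorem) that for any associative algebra in species $\textbf{a}$, the bracket $[\mathtt{a},\mathtt{b}]_{S,T}:=\mu_{S,T}(\mathtt{a}\otimes\mathtt{b})-\mu_{T,S}(\mathtt{b}\otimes\mathtt{a})$ satisfies antisymmetry and the Jacobi identity. So we may simply cite this fact and concentrate on the second ingredient.

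For the second ingredient, the key step is a direct computation. Given $\mathtt{a}\in\textbf{a}[S]$, $\mathtt{b}\in\textbf{a}[T]$, I would expand
\[
u\bigl([\mathtt{a},\mathtt{b}]_{S,T}\bigr)=u\bigl(\mu_{S,T}(\mathtt{a}\otimes\mathtt{b})\bigr)-u\bigl(\mu_{T,S}(\mathtt{b}\otimes\mathtt{a})\bigr)
\]
and apply the up derivation property \textcolor{blue}{(\refeqq{eq:upder})} of $u$ with respect to $\mu$ to each of the two terms, obtaining four summands. On the other hand, I would expand
\[
[u(\mathtt{a}),\mathtt{b}]_{\ast S,T}+[\mathtt{a},u(\mathtt{b})]_{S,\ast T}
\]
directly from the definition of the commutator bracket, again obtaining four summands. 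A term-by-term comparison matches the two expressions exactly, giving the required up derivation property of $u$ for the bracket $[-,-]$.

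I do not anticipate any real obstacle: everything reduces to the linearity of $u$ and the single identity \textcolor{blue}{(\refeqq{eq:upder})}, applied twice. The only small bookkeeping point is keeping track of which side of the concatenation the extra $\ast$ gets attached to, but the definition of the commutator bracket arranges for these labels to match up automatically between the two computations. No use of the coderivation property, the antipode, or commutativity of $u$ with other up operators is needed for this statement.
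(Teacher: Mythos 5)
Your proposal is correct and matches the paper's proof: the paper likewise reduces the claim to checking that $u$ is a derivation for the commutator bracket, expanding $u\big(\mu_{S,T}(\mathtt{a}\otimes\mathtt{b})-\mu_{T,S}(\mathtt{b}\otimes\mathtt{a})\big)$ via \textcolor{blue}{(\refeqq{eq:upder})} into four terms and regrouping them as $[u(\mathtt{a}),\mathtt{b}]+[\mathtt{a},u(\mathtt{b})]$. Your observation that only the derivation property is needed (commutativity of $u$ being already supplied by the hypothesis that $(\textbf{a},u)$ is an $\textbf{E}$-algebra) is also consistent with the paper.
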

\begin{proof}
We check that $u$ defines a derivation with respect to the commutator bracket. We have
\begin{align*}
u \big([\mathtt{a} , \mathtt{b}]\big) 
&=
u \big( \mu_{S,T}(\mathtt{a}\otimes \mathtt{b})
-
\mu_{T,S}(\mathtt{b}\otimes \mathtt{a})     \big)\\
&=
\mu_{\ast S, T}\big(u(\mathtt{a}) \otimes \mathtt{b}\big) + \mu_{S,\ast T}\big (\mathtt{a} \otimes    u(\mathtt{b})\big)
-
\mu_{\ast T, S}\big(u(\mathtt{b}) \otimes \mathtt{a}\big) - \mu_{T,\ast S}\big (\mathtt{b} \otimes    u(\mathtt{a})\big)\\
&=
\big [u(\mathtt{a}) , \mathtt{b}\big ]
+
\big[\mathtt{a} , u(\mathtt{b})\big ]. \qedhere
\end{align*}
\end{proof}

\begin{prop}\label{prop:primelalsoalg}
Let $(\textbf{h},u)$ be a Hopf $\textbf{E}$-algebra. Then the commutator bracket gives the primitive part Lie algebra $\mathcal{P}(\textbf{h})$ the structure of a Lie $\textbf{E}$-algebra. 
\end{prop}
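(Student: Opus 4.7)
The plan is to combine Proposition \ref{combracketup} (commutator bracket makes an $\textbf{E}$-algebra into a Lie $\textbf{E}$-algebra) with the standard fact that $\mathcal{P}(\textbf{h})$ is a Lie subalgebra of $\textbf{h}$ under the commutator. The only non-trivial point is to verify that the up operator $u$ restricts to the primitive part, i.e.\ that $\mathcal{P}(\textbf{h}) \subseteq \textbf{h}$ is closed under $u$. Once this is established, commutativity of $u|_{\mathcal{P}(\textbf{h})}$ is automatic from commutativity of $u$ on $\textbf{h}$, and the derivation identity $u([\mathtt{a},\mathtt{b}]) = [u(\mathtt{a}),\mathtt{b}] + [\mathtt{a}, u(\mathtt{b})]$ is inherited directly from Proposition \ref{combracketup}.

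The main step is therefore the stability claim: if $\mathtt{a} \in \mathcal{P}(\textbf{h})[I]$ then $u(\mathtt{a}) \in \mathcal{P}(\textbf{h})[\ast I]$. The key tool is the up coderivation identity \textcolor{blue}{(\refeqq{eq:upcoder})}. Since the two summands on each side of that identity live in different Cauchy components, one actually gets the finer identities
\[
\Delta_{\ast S,T}\big(u(\mathtt{a})\big) = (u \otimes \text{id}) \circ \Delta_{S,T}(\mathtt{a}), \qquad \Delta_{S,\ast T}\big(u(\mathtt{a})\big) = (\text{id} \otimes u) \circ \Delta_{S,T}(\mathtt{a}).
\]
Using the characterization of primitive elements as the intersection of the kernels of all $\Delta_{S',T'}$ with $S',T'$ both nonempty, I would split the analysis by cases on a decomposition $A \sqcup B = \ast I$ with $A, B$ nonempty: when both $A$ and $B$ contain more than just the element $\ast$, the right-hand side vanishes by primitivity of $\mathtt{a}$; the boundary cases $A = \{\ast\}$ or $B = \{\ast\}$ reduce via counitality to $\Delta_{\emptyset,I}(\mathtt{a}) = \mathtt{1}_\textbf{h} \otimes \mathtt{a}$ (respectively $\mathtt{a} \otimes \mathtt{1}_\textbf{h}$), and the fact that $u(\mathtt{1}_\textbf{h}) = 0$ (already noted in the paper after \textcolor{blue}{(\refeqq{eq:upder})}) then handles these.

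I expect the bookkeeping of the boundary cases to be the only subtle part, since this is where one must use that $u$ kills the unit. Having verified the stability, the proposition follows by taking the Lie $\textbf{E}$-algebra structure from Proposition \ref{combracketup} and restricting the commutator bracket, commutative up operator, and derivation identity to the Lie subspecies $\mathcal{P}(\textbf{h})$.
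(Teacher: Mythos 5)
Your proposal is correct and follows essentially the same route as the paper: both isolate the stability claim $u(\mathcal{P}(\textbf{h})[I])\subseteq \mathcal{P}(\textbf{h})[\ast I]$, prove it via the refined coderivation identities $\Delta_{\ast S,T}(u(\mathtt{a}))=(u\otimes\text{id})\circ\Delta_{S,T}(\mathtt{a})$ with the same case split (both parts nonempty killed by primitivity, boundary cases handled by $u(\mathtt{1}_\textbf{h})=0$), and then conclude by restricting the Lie $\textbf{E}$-algebra structure of \autoref{combracketup}.
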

\begin{proof}
Recall that
\[
\mathcal{P}(\textbf{h})[I]
=
\big\{ 
\mathtt{a}\in \textbf{h}[I] : 
\Delta_I(\mathtt{a})=  \mathtt{a}\otimes \mathtt{1}_\textbf{h} + \mathtt{1}_\textbf{h}\otimes \mathtt{a} 
\big\}
.\]
Since $u$ is an up coderivation, we have
\[
\Delta_{\ast S,T} \big(u(\mathtt{a}) \big)
=\big(  u\otimes \text{id} \big )  \circ  
\Delta_{S,T}(\mathtt{a})
.\]
If $\mathtt{a}$ is primitive, then this will certainly be equal to zero unless $S=\emptyset$ or $T=\emptyset$. For the case $S=\emptyset$, we still get zero since $u$ is a derivation,
\[ 
\Delta_{\ast ,I}\big(u(\mathtt{a}) \big)    
=\big( u \otimes \text{id} \big )  \circ  
\big (\Delta_{\emptyset,I}(\mathtt{a}) \big)
=   u(\mathtt{1}_\textbf{h}) \otimes \mathtt{a} 
  =0 
.\]
For the case $T=\emptyset$, we have
\[ 
\Delta_{\ast I,\emptyset}\big(u(\mathtt{a}) \big)    
=\big(  u\otimes \text{id} \big )  \circ  
\big (\Delta_{I,\emptyset}(\mathtt{a}) \big)
=     u(\mathtt{a}) \otimes   \mathtt{1}_\textbf{h}
.\]
We can argue similarly for $\Delta_{S , \ast T} \big(u(\mathtt{a}) \big)$. We obtain
\[
\Delta_{\ast I} \big (u(\mathtt{a}) \big ) 
= u(\mathtt{a})\otimes \mathtt{1}_\textbf{h} + \mathtt{1}_\textbf{h}\otimes u(\mathtt{a}) 
.\]
Therefore, if $\mathtt{a}\in \mathcal{P}(\textbf{h})[I]$, then $u(\mathtt{a})\in  \mathcal{P}(\textbf{h})[\ast I]$. The result then follows from \autoref{combracketup}.
\end{proof}

\section{Products, Decorated Series, and Perturbation} 
\hyperlink{foo}{We} introduce the notion of a system of products for a species, which generalizes the realization of the abstract symmetric algebra $\cS(V)$ on a vector space $V$ as the algebra of polynomial functions on the dual vector space $V^\ast$. In the presence of an algebraic system of products, we define a homomorphism $\mathcal{S}_{(-)}$ which realizes series of species as formal power series valued functions on $V$. As special cases, this construction produces the classical exponential function of differential calculus (\autoref{ex:exp}), and the S-matrix of pAQFT (\autoref{sec:Time-Ordered Products}). Finally, we describe the perturbation of systems of algebraic products for $\textbf{E}$-algebras.  
\subsection{Decorations}\label{sec:Dec}

Let $V$ be a vector space over $\Bbbk$. We usually denote vectors of $V$ by $\ssA$. Given a finite set $I\in \sfS$, we can form the tensor product $V^{\otimes I}$. The simple tensors of $V^{\otimes I}$ may be identified with functions $I\to V$, which we denote by
\[      
\ssA_I:I\to V, 
\qquad 
i \mapsto \ssA_i    
.\]
Letting $I=\{i_1,\dots, i_n\}$, the identification is then
\[
\ssA_I=\ssA_{i_1}\otimes \cdots \otimes \ssA_{i_n} \in V^{\otimes I}
.\] 
If $\ssA_i=\ssA$ for all $i\in I$, then we write    
\begin{equation}\label{eq:simpletensors} 
\ssA^{I}:= \underbrace{\ssA\otimes \cdots \otimes \ssA}_{\text{``$I$ times''}}\in V^{\otimes I}
\qquad \text{and} \qquad 
\ssA^{ n }:=\ssA^{[n]}\in V^{\otimes [n]}
.\footnote{\ recall $[n]:=\{1,\dots,n\}$} 
\end{equation}
Define the species $\textbf{E}_V$ by
\[         
\textbf{E}_V[I]:=  V^{\otimes I} 
\qquad \text{and} \qquad    
\textbf{E}_V[\sigma](\ssA_I)=\ssA_I\circ \sigma
.\]
In this context, $V$ is called the space of \emph{decorations}, following \cite[Chapter 19]{aguiar2010monoidal}. 

\begin{remark}
Species of the form $\textbf{E}_V$ are exactly the monoidal functors $\textsf{S}^{\text{op}}\to \textsf{Vec}$ (up to equivalence). 
\end{remark}

Given $S\sqcup T=I$, $\ssA_S:S\to V$ and $\ssA_T:T\to V$, define
\[
\ssA_S\ssA_T :I\to V, 
\qquad   
\ssA_S\ssA_T(i)
:=
\begin{cases}
\ssA_S(i) &\quad \text{if $i\in S$}\\
\ssA_T(i) &\quad \text{if $i\in T$.}
\end{cases}
\]   
Given $S\subseteq I$ and $\ssA_I:I\to V$, define
\[      
\ssA_{I|_S}:S\to V, 
\qquad   
\ssA_{I|_S}(i):= \ssA_I (i)
.\]
When there is no ambiguity, we often denote $\ssA_{I|_S}$ by $\ssA_S$. Then $\textbf{E}_V$ is a connected bialgebra in species, with multiplication and comultiplication given respectively by
\begin{equation}\label{eq:e_Valg}
\mu_{S,T} (  \ssA_S \otimes \ssA_T ):
=\ssA_S \ssA_T 
\qquad\text{and}\qquad 
\Delta_{S,T}(\ssA_I):
=\ssA_{I|_S} \otimes \ssA_{I|_T}            
.
\end{equation} 
Notice that $\mu$ and $\Delta$ are mutually inverse. The unit and counit are the canonical isomorphisms between the empty tensor product and $\Bbbk$. The antipode is given by 
\[\text{s}_I(\ssA_I)=(-1)^n \ssA_I.\] 
If we let our decorations be $V=\Bbbk$, then we recover the connected bialgebra $\textbf{E}_\Bbbk=\textbf{E}$. The bosonic Fock space $\overline{\cK}(\textbf{E}_V)$ of $\textbf{E}_V$ is the symmetric algebra $\cS(V)$ on $V$ \cite[Part III]{aguiar2010monoidal}.

We have the $V$-decorations endofunctor
\[
(-)\otimes\textbf{E}_{V}:[\sfS^{\op},\textsf{Vec}]\to [\sfS^{\op},\textsf{Vec}] 
,\qquad
\textbf{p}\mapsto \textbf{p}\otimes \textbf{E}_V  
.\]
This functor is braided bilax monoidal \cite[Section 8.13.4]{aguiar2010monoidal}. Regarding the lax monoidal structure, if $\textbf{a}$ is an algebra, then $\textbf{a} \otimes \textbf{E}_V$ is also an algebra with
\begin{equation}\label{eq:E_Valg}
\mu_{S,T}\big ( (\mathtt{a}\otimes  \ssA_S )\otimes (\mathtt{b}\otimes \ssA_T)\big )
:= 
\mu_{S,T}(\mathtt{a}\otimes \mathtt{b})\otimes \ssA_S \ssA_T
\qquad \text{and} \qquad
\mathtt{1}_{\textbf{a} \otimes \textbf{E}_V}:=\mathtt{1}_\textbf{a}\otimes 1_\Bbbk
.
\end{equation}
Given a choice of decorations vector $\ssS\in V$, we have the associated commutative up operator for $\textbf{E}_V$ given by
\begin{equation}\label{upop}
u:\textbf{E}_V \to \textbf{E}'_V
,\qquad
\ssA_I\mapsto u(\ssA_I):=\ssS\otimes \ssA_I= 
\underbrace{\ssS\otimes \ssA_{i_1}\otimes \cdots \otimes \ssA_{i_n}}_{\in \textbf{E}_V[\ast I]}
.
\end{equation}
This generates the $\textbf{E}$-module given by
\[
\textbf{E}\bigcdot \textbf{E}_V
\to 
\textbf{E}_V
,\qquad \tH_Y \otimes \ssA_I \mapsto \ssS^{\, Y} \otimes  \ssA_I
.\] 
Let us abbreviate
\begin{equation}\label{eq:Sintabb}
\ssS \ssA_I:=\ssS\otimes \ssA_I
.
\end{equation}
This up operator $u$ is a coderivation (defined in \textcolor{blue}{(\refeqq{eq:upcoder})}), since we have
\[
u(\ssA_S) \otimes \ssA_T +\ssA_S \otimes u(\ssA_T) 
=
\ssS \ssA_S \otimes \ssA_T +\ssA_S \otimes   \ssS \ssA_T
= 
\Delta_{\ast S, T}\big (u(\ssA_I)\big) + \Delta_{S,\ast T}\big (u(\ssA_I)\big)
.\]
Thus, for each choice of $\ssS\in V$, $\textbf{E}_V$ is an $\textbf{E}$-coalgebra. We check that for $\ssS\neq 0$, this is not also a derivation (defined in \textcolor{blue}{(\refeqq{eq:upder})}),
\[
u( \ssA_S \ssA_T ) = \ssS   \ssA_S \ssA_T
\neq
\ssS   \ssA_S \ssA_T + \ssS   \ssA_S \ssA_T
=
u(\ssA_S) \ssA_T + \ssA_S u(\ssA_T)
.\]
In fact, as we shall see, to expect this to be a derivation would be to expect
\[
e^{x+\delta}e^{y+\delta} \overset{\mathrm{!}}{=}  e^{x+y+\delta} 
, \qquad
x,y,\delta \in \bR, \ \delta\neq 0
.\]

\subsection{Systems of Products} \label{sec:Systems of Products}
Let $V$ be a vector space over $\Bbbk$, let $\cA$ be a $\Bbbk$-algebra with multiplication denoted by $\star$, and let $\textbf{p}$ be a species. Let a \emph{system of products} for $\textbf{p}$ be a morphism of species of the form
\[   
\eta:\textbf{p} \otimes \textbf{E}_V  \to \textbf{U}_\cA
,\qquad    
\mathtt{v}\otimes \ssA_I
\mapsto  
\eta_I(\mathtt{v}\otimes\ssA_I)   
.
\]
We view this as a $\textbf{p}$-indexed multiplication of vectors of $V$ which lands in $\cA$, see e.g. \autoref{ex.e.g}. We can curry a system of products, to give the morphism of species
\[
\wh{\eta}:\textbf{p}\to \cH(\textbf{E}_V, \textbf{U}_\cA  )
,\qquad 
\mathtt{v}\mapsto\big (\ssA_I \mapsto \eta_I(\mathtt{v}\otimes\ssA_I)\big)  
.\]
The \emph{product} corresponding to $\mathtt{v}\in \textbf{p}[I]$ is the linear map 
\[
\wh{\eta}_I(\mathtt{v}): \textbf{E}_V[I]\to \cA
,\qquad
\ssA_I\mapsto \eta_I(\mathtt{v}\otimes\ssA_I)
.\] 
If $\textbf{p}=\textbf{a}$ is an algebra, then $\textbf{a} \otimes \textbf{E}_V$ is also an algebra by \textcolor{blue}{(\refeqq{eq:E_Valg})}. Let an \emph{algebraic system of products} be a homomorphism of algebras of the form 
\[   
\varphi:\textbf{a} \otimes \textbf{E}_V  \to \textbf{U}_\cA
,\qquad
\mathtt{v}\otimes \ssA_I
\mapsto  
\varphi_I(\mathtt{v}\otimes \ssA_I) 
.\]

\begin{ex} \label{ex.e.g}
If $\cA$ is a commutative $\Bbbk$-algebra and $V\subseteq \cA$ is a vector subspace, then we have the algebraic system of products given by
\[
\textbf{E}\otimes \textbf{E}_{V} \to \textbf{U}_{\cA}
,\qquad
\tH_I \otimes a_{i_1} \otimes \dots \otimes a_{i_n} \mapsto  a_{i_1} \star \dots \star  a_{i_n} 
.\]
Similarly, if $\cA$ is a noncommutative $\Bbbk$-algebra and $V\subseteq \cA$ is a vector subspace, then we have the algebraic system of products given by
\[
\textbf{L}\otimes \textbf{E}_{V} \to \textbf{U}_{\cA}
,\qquad
\tH_{\ell} \otimes a_{i_1} \otimes \dots \otimes a_{i_n} \mapsto  a_{\sigma(i_1)} \star \dots \star  a_{\sigma(i_n)} 
\]
where $\sigma:I\to I$ is the permutation given by $(\sigma(i_1), \dots, \sigma(i_n))=\ell$. 
\end{ex}

In applications to pAQFT, we are interested in cases where $\cA$ consists of functionals on the space of sections of a vector bundle $E\to \cX$, $V$ is a space of test functions which can be integrated against sections, and elements of $\textbf{p}$ eat test functions to produce functionals on sections in various interesting ways.

\begin{ex}\label{ex:polyfun}
A primordial, and very basic, example of such a system of commutative algebraic products is the representation of the abstract symmetric algebra $\cS(V)$ on a vector space $V$ as the algebra of polynomial functions on the dual vector space $V^\ast$. Let $\Bbbk=\bR$, let $\cX\in \textsf{Set}$ be a finite set of $k$ points (so integration will just be summation), and let $\text{Fib}\in \textsf{Vec}$ be a one-dimensional real vector space. Let $\text{Fib}^\ast$ denote its dual space, with the pairing denoted by 
\[
\la - , - \ra :\text{Fib}^\ast \otimes \text{Fib}\to \bR
.\] 
Let $E:\cX\times \text{Fib} \to \cX$ be the corresponding trivial line bundle, with dual bundle $E^\ast:\cX\times \text{Fib}^\ast \to \cX$. We have the vector spaces of sections
\[
\Gamma(E)
:=\big \{  \text{functions }  \Phi:\cX\to  \text{Fib}   \big  \} 
\qquad \text{and} \qquad   
\Gamma(E^\ast)=\Gamma'(E^\ast)
:=\big \{  \text{functions }  \ga:\cX\to  \text{Fib}^\ast   \big  \} 
.\] 
Then, we have the product for $\textbf{X}$ which realizes $\Gamma(E^\ast)$ as linear functionals on $\Gamma(E)$,
\[   
\eta:  \textbf{X} \otimes \textbf{E}_{\Gamma(E^\ast)}  \to
 \textbf{U}_{C^\infty(\Gamma(E))}
,\qquad    
\eta_{i}\big ( \tH_i \otimes \ga\big )(\Phi)  
:=
\sum_{x\in \cX} \big \la  \ga(x), \Phi(x)  \big \ra  
.\]
Let $V=\Gamma(E^\ast)$ and $V^\ast=\Gamma(E)$. Then, acting with $\textbf{E}\boldsymbol{\circ}(-)$ as described below, we obtain the pointwise product of polynomial functions, which is the algebraic system of products given by
\[
\varphi:\textbf{E}\otimes\textbf{E}_{V}\to  \textbf{U}_{C^\infty(V^\ast)}
,\qquad
\varphi_I( \tH_I\otimes \ga_I )(\Phi):= \sum_{(x_{i_j})\in \cX^I } \big \la  \ga_{i_1}(x_{i_1}), \Phi(x_{i_1})  \big \ra\cdots \big \la  \ga_{i_n}(x_{i_n}), \Phi(x_{i_n})  \big \ra
.\]
There are various ways to recover the $\bR$-algebra $\cS(V)$ from this, see \cite[Example 19.4]{aguiar2010monoidal}. For the associated construction of the exponential function, see \autoref{ex:exp}.
\end{ex}

Recall that $\cH(\textbf{E}_V, \textbf{U}_\cA  )$ is naturally an algebra in species \textcolor{blue}{(\refeqq{eq:curlyHalg})}, with multiplication and unit given by
\[                                   
\mu_{S,T}(f \otimes g)\,  (\ssA_I)=f(\ssA_{S})\star g(\ssA_{T})
\qquad
\text{and}
\qquad
\mathtt{1}_{ \cH(\textbf{E}_V, \textbf{U}_\cA  ) }=1_\cA 
.\] 

\begin{prop}\label{prop:curry}
Let $\textbf{a}$ be an algebra in species, and suppose that we have a system of products
\[   
\eta:\textbf{a} \otimes \textbf{E}_V  \to \textbf{U}_\cA  
.\]
Then $\eta$ is an algebraic system of products if and only if its currying 
\[
\wh{\eta}:\textbf{a}\to \cH(\textbf{E}_V, \textbf{U}_\cA  )
.\]
is a homomorphism.
\end{prop}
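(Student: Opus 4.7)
The plan is a direct unpacking argument: both sides of the equivalence are conditions on the same collection of linear maps under the currying bijection $\Hom(\textbf{a}\otimes\textbf{E}_V,\textbf{U}_\cA)\xrightarrow{\sim}\Hom(\textbf{a},\cH(\textbf{E}_V,\textbf{U}_\cA))$, so I just need to translate each of the two homomorphism axioms (preservation of multiplication, preservation of unit) across this bijection and observe they match literally.

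First I would write out what it means for $\eta$ to preserve multiplication. Using the algebra structure on $\textbf{a}\otimes\textbf{E}_V$ from \textcolor{blue}{(\refeqq{eq:E_Valg})} and on $\textbf{U}_\cA$ from \textcolor{blue}{(\refeqq{eq:algU})}, the condition is
\[
\eta_I\big(\mu_{S,T}(\mathtt{a}\otimes\mathtt{b})\otimes \ssA_S\ssA_T\big)=\eta_S(\mathtt{a}\otimes\ssA_S)\star\eta_T(\mathtt{b}\otimes\ssA_T)
\]
for every $S\sqcup T=I$, $\mathtt{a}\in\textbf{a}[S]$, $\mathtt{b}\in\textbf{a}[T]$, and $\ssA_S:S\to V$, $\ssA_T:T\to V$. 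Next I would write out what it means for $\wh{\eta}$ to preserve multiplication using the algebra structure on $\cH(\textbf{E}_V,\textbf{U}_\cA)$ from \textcolor{blue}{(\refeqq{eq:curlyHalg})}: evaluated on an arbitrary $\ssA_I\in\textbf{E}_V[I]$, the condition becomes
\[
\wh{\eta}_I\big(\mu_{S,T}(\mathtt{a}\otimes\mathtt{b})\big)(\ssA_I)=\wh{\eta}_S(\mathtt{a})(\ssA_{I|_S})\star\wh{\eta}_T(\mathtt{b})(\ssA_{I|_T}).
\]

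The key observation is that every $\ssA_I$ factors uniquely as $\ssA_I=\ssA_{I|_S}\ssA_{I|_T}$, so setting $\ssA_S:=\ssA_{I|_S}$ and $\ssA_T:=\ssA_{I|_T}$ shows that the two displayed equations, read through the currying $\wh{\eta}_I(\mathtt{c})(\ssA_I)=\eta_I(\mathtt{c}\otimes\ssA_I)$, are literally the same identity. Conversely, an identity of the first form for fixed $\ssA_S,\ssA_T$ is a special case (and thus equivalent via concatenation) of the second with $\ssA_I:=\ssA_S\ssA_T$. For the unit, $\eta$ preserves units iff $\eta_\emptyset(\mathtt{1}_\textbf{a}\otimes 1_\Bbbk)=1_\cA$, while $\wh{\eta}$ preserves units iff $\wh{\eta}_\emptyset(\mathtt{1}_\textbf{a})=\mathtt{1}_{\cH(\textbf{E}_V,\textbf{U}_\cA)}$; the latter is the constant map $1_\Bbbk\mapsto 1_\cA$, so the two are again identified under currying.

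There is no genuine obstacle here; the only thing to be careful about is tracking the identifications $\ssA_I=\ssA_{I|_S}\ssA_{I|_T}$ and $\textbf{E}_V[\emptyset]\cong\Bbbk$ so that the symbols on both sides really do coincide. Compatibility with the $\sfS$-action is automatic because it is already built into the currying bijection at the level of morphisms of species. Hence $\eta$ is a homomorphism of algebras if and only if $\wh{\eta}$ is.
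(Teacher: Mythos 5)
Your proof is correct and follows essentially the same route as the paper's: both unpack the two homomorphism conditions explicitly and observe that under the currying identification $\wh{\eta}_I(\mathtt{c})(\ssA_I)=\eta_I(\mathtt{c}\otimes\ssA_I)$, together with $\ssA_I=\ssA_{I|_S}\ssA_{I|_T}$, the left-hand sides and right-hand sides coincide term by term. Your explicit remark that the correspondence $\ssA_I\leftrightarrow(\ssA_S,\ssA_T)$ is a bijection (so the two quantified conditions are genuinely equivalent, not just one implying the other) is a small but welcome addition to what the paper leaves implicit.
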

\begin{proof}
Directly from the definition of $\wh{\eta}$, we see that  
$\eta_\emptyset(\mathtt{1}_\textbf{a}\otimes 1_\Bbbk)=1_\cA$ 
if and only if 
$\wh{\eta}_\emptyset(\mathtt{1}_\textbf{a})(1_\Bbbk)= 1_\cA$. Therefore units are preserved. The morphism $\eta$ preserves the multiplication if
\[    
\eta_I\big(\mu_{S,T}(\mathtt{a}\otimes \mathtt{b}) \otimes \ssA_S \ssA_T \big)  
=\eta_S(\mathtt{a}\otimes \ssA_S) \star \eta_T(\mathtt{b}\otimes \ssA_T)
.\]
The currying $\wh{\eta}$ preserves the multiplication if
\[         
\wh{\eta}_I\big ( \mu_{S,T} (\mathtt{a}\otimes \mathtt{b} ) \big  )(\ssA_I)= \wh{\eta}_S(\mathtt{a})(\ssA_S)   \star   \wh{\eta}_T(\mathtt{b})(\ssA_T) 
.\]
Directly from the definition of $\wh{\eta}$, we see that the two left-hand sides agree, and the two right-hand sides agree. 
\end{proof}

Let $\textbf{p}_+$ be a positive species. The free algebra on $\textbf{p}_+$ is naturally the plethystic product $\textbf{L}\boldsymbol{\circ} \textbf{p}_+$, see \cite[Section 11.2]{aguiar2010monoidal}. Therefore, a system of products
\[
\textbf{p}_+\otimes \textbf{E}_V \to \textbf{U}_\cA
\]
uniquely extends to an algebraic system of products
\[
\textbf{L}\boldsymbol{\circ} (\textbf{p}_+\otimes \textbf{E}_V) \to \textbf{U}_\cA
.\]
We have
\begin{align*}
\textbf{L}\boldsymbol{\circ} (\textbf{p}_+ \otimes \textbf{E}_V)[I]
=&\ 
\bigoplus_{F\in \Sigma[I]} 
(\textbf{p}_+[S_1] \otimes \textbf{E}_V[S_1])
\otimes \dots \otimes
(\textbf{p}_+[S_k] \otimes \textbf{E}_V[S_k])\\
\cong&\ 
\bigoplus_{F\in \Sigma[I]} 
(\textbf{p}_+[S_1]
\otimes \dots \otimes
\textbf{p}_+[S_k]) \otimes \textbf{E}_V[I]\\
=&\ 
(\textbf{L}\boldsymbol{\circ} \textbf{p}_+) \otimes \textbf{E}_V[I]
.
\end{align*}
Therefore we can uniquely extend a system of products for a positive species $\textbf{p}_+$ to an algebraic system of products for the free algebra on $\textbf{p}_+$. If $\cA$ is commutative, then this will factor through abelianization $\textbf{L}\boldsymbol{\circ} \textbf{p}_+ \twoheadrightarrow \textbf{E}\boldsymbol{\circ} \textbf{p}_+$ to give an algebraic system of products for the free commutative algebra on $\textbf{p}_+$, which is naturally $\textbf{E}\boldsymbol{\circ} \textbf{p}_+$.

\subsection{Decorated Series}\label{sec:decseries}
Given a system of products for $\textbf{p}$, we now define a linear map $\mathcal{S}_{(-)}$ which realizes series of $\textbf{p}$ as $\cA[[\formj]]$-valued functions on $V$. We will see that if the system of products is algebraic, then $\mathcal{S}_{(-)}$ is a homomorphism. 

Given a system of products $\eta:\textbf{p} \otimes \textbf{E}_V  \to \textbf{U}_\cA$, we curry to give the morphism of species
\[
\wh{\eta}:\textbf{p}\to \cH(\textbf{E}_V, \textbf{U}_\cA  ), \qquad  \mathtt{v}\mapsto \big (\ssA_I \mapsto \eta_I(\mathtt{v}\otimes\ssA_I)\big )  
.\]
The image of this morphism under the series functor $\mathscr{S}$ is the linear map given by
\[     
\mathscr{S}(\wh{\eta}):\mathscr{S}(\textbf{p})\to \Hom(\textbf{E}_V,\textbf{U}_\cA)
,\qquad  
\mathtt{s} \mapsto 
\big(  
\ssA_I \mapsto  \eta_I(\mathtt{s}_I \otimes\ssA_I) 
\big)
.\]
Given vector spaces $V$ and $W$, let $\text{Func}(V,W)$ denote the vector space of all functions $V\to W$. Then, composing $\mathscr{S}(\wh{\eta})$ with the linear map
\[
\Hom(\textbf{E}_V, \textbf{U}_\cA) 
\to
\text{Func}\big (V,  \mathscr{S}( \textbf{U}_\cA)      \big )
\xrightarrow{\sim}
\text{Func}\big (V,   \cA[[\formj]]    \big )
,\qquad
\xi \mapsto  \bigg ( \ssA \mapsto  \sum_{n=0}^\infty \dfrac{\formj^n}{n!}\xi_{n}(\ssA^{n}) \bigg)
\footnote{\ recall our notation \textcolor{blue}{(\refeqq{eq:simpletensors})} \hypertarget{notation}{}}
\]
we obtain the linear map
\[
\mathcal{S}=\mathcal{S}_{(-)}: \mathscr{S}( \textbf{p})\to \text{Func}\big (V,\cA[[\formj]]\big )
,\qquad
\mathtt{s}\mapsto \mathcal{S}_\mathtt{s}
\]
where
\[                
\mathcal{S}_\mathtt{s}: V\to \cA[[\formj]], 
\qquad 
\ssA \mapsto \mathcal{S}_\mathtt{s}(\formj\! \ssA)
=
\sum^\infty_{n=0}  \dfrac{1}{n!} \eta_n \big(\mathtt{s}_n\otimes (\formj\! \ssA)^{n}\big)
:=
\sum^\infty_{n=0}  \dfrac{\formj^n}{n!} \eta_n (\mathtt{s}_n\otimes \ssA^n )
.\]
The vector space $\text{Func}\big (V,\cA[[\formj]]\big )$ is a $\Bbbk$-algebra under the pointwise multiplication of functions,
\[  
\text{Func}\big (V,\cA[[\formj]]\big )\otimes \text{Func}\big (V,\cA[[\formj]]\big ) \to \text{Func}\big (V,\cA[[\formj]]\big ), 
\qquad        
(\mathcal{U}\star \mathcal{V})(\formj\! \ssA):
= \mathcal{U}(\formj\! \ssA)  \star \mathcal{V}(\formj\! \ssA)           
.\]
The unit of $\text{Func}\big (V,\cA[[\formj]]\big )$ is the constant function with value $1_\cA$.

\begin{prop} \label{homo}
If $\varphi: \textbf{a}\otimes \textbf{E}_V \to \textbf{U}_{\cA}$ is an algebraic system of products, then the linear map
\[                        
\mathcal{S}: \mathscr{S}(\textbf{a})\to \text{Func}\big (V,\cA[[\formj]]\big ),
\qquad
\mathtt{s}\mapsto \mathcal{S}_{\mathtt{s}} 
\]
is a homomorphism of $\Bbbk$-algebras, that is
\[
\mathcal{S}_\mathtt{\mathtt{s}\ast \mathtt{t}}(\formj\! \ssA)     
=\mathcal{S}_\mathtt{s}(\formj\! \ssA)\star \mathcal{S}_\mathtt{t}(\formj\! \ssA )
\qquad \text{and} \qquad
\mathcal{S}_{1_{\mathscr{S}(\textbf{a})}}(\formj\! \ssA) = 1_\cA
.\] 
\end{prop}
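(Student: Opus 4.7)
The plan is to factor $\mathcal{S}_{(-)}$ as a composition of two $\Bbbk$-algebra homomorphisms. By \autoref{prop:curry}, the hypothesis that $\varphi$ is an algebraic system of products is equivalent to its currying
\[
\wh{\varphi}:\textbf{a}\to \cH(\textbf{E}_V,\textbf{U}_\cA)
\]
being a homomorphism of algebras in species. Since the functor $\mathscr{S}=\Hom(\textbf{E},-)$ is lax monoidal with respect to the Cauchy product and tensor product (see \textcolor{blue}{(\refeqq{eq:Smulttran})}), it sends algebras to $\Bbbk$-algebras and homomorphisms to homomorphisms; hence $\mathscr{S}(\wh{\varphi}):\mathscr{S}(\textbf{a})\to \mathscr{S}\big(\cH(\textbf{E}_V,\textbf{U}_\cA)\big)$ is a homomorphism of $\Bbbk$-algebras.

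Next I would introduce the ``diagonal evaluation'' map
\[
\text{ev}:\mathscr{S}\big(\cH(\textbf{E}_V,\textbf{U}_\cA)\big)\to \text{Func}\big(V,\cA[[\formj]]\big),
\qquad
\xi \mapsto \Big(\ssA \mapsto \sum_{n=0}^\infty \tfrac{\formj^n}{n!}\, \xi_n(\ssA^n)\Big)
\]
and observe, straight from the definitions, that $\mathcal{S}=\text{ev}\circ \mathscr{S}(\wh{\varphi})$. It then remains only to check that $\text{ev}$ itself is a homomorphism of $\Bbbk$-algebras. Unit preservation is immediate from \textcolor{blue}{(\refeqq{eq:unitseries})}, since the unit series contributes only in degree zero and $\varphi$ preserves units. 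For multiplicativity, one expands $(\xi\ast \omega)_n(\ssA^n)$ using the convolution formula \textcolor{blue}{(\refeqq{eq:multinS(a)})} in $\mathscr{S}\big(\cH(\textbf{E}_V,\textbf{U}_\cA)\big)$ together with the comultiplication $\Delta_{S,T}(\ssA_I)=\ssA_{I|_S}\otimes \ssA_{I|_T}$ of $\textbf{E}_V$ from \textcolor{blue}{(\refeqq{eq:e_Valg})}. The $\sfS$-equivariance of $\xi$ and $\omega$ then collapses the sum over ordered pairs $S\sqcup T=[n]$ into a sum over integer splittings $s+t=n$ weighted by $\binom{n}{s}$; after multiplying by $\formj^n/n!$ and summing over $n$, the binomial coefficients cancel the factorials to yield exactly the Cauchy product of formal power series \textcolor{blue}{(\refeqq{eq:multofformalaj})} in $\cA[[\formj]]$.

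The main (and really only) obstacle is this bookkeeping step: reconciling the species-level Cauchy product (indexed over ordered set partitions) with the scalar Cauchy product (indexed over integer splittings $s+t=n$), where the exponential generating function convention converts the combinatorial symmetry factor $\binom{n}{s}$ into the correct analytic coefficient. Once this conversion is in place, $\mathcal{S}_{(-)}=\text{ev}\circ \mathscr{S}(\wh{\varphi})$ is a composition of two $\Bbbk$-algebra homomorphisms, and the proposition follows. A parallel direct route is available: start from $\mathcal{S}_{\mathtt{s}\ast \mathtt{t}}(\formj\!\ssA)$, expand $(\mathtt{s}\ast\mathtt{t})_n$ via \textcolor{blue}{(\refeqq{eq:multinS(a)})}, apply $\varphi$'s homomorphism property on each summand, and repackage the result by the same equivariance argument.
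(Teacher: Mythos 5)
Your proposal is correct and follows essentially the same route as the paper: curry $\varphi$ to a homomorphism via \autoref{prop:curry}, apply the lax monoidal functor $\mathscr{S}$ to get a $\Bbbk$-algebra homomorphism into $\Hom(\textbf{E}_V,\textbf{U}_\cA)\cong\mathscr{S}\big(\cH(\textbf{E}_V,\textbf{U}_\cA)\big)$, and then verify directly that the evaluation map $\xi\mapsto\big(\ssA\mapsto\sum_n\tfrac{\formj^n}{n!}\xi_n(\ssA^n)\big)$ is multiplicative by collapsing the sum over $S\sqcup T=[n]$ to integer splittings with the factor $\tfrac{n!}{s!\,t!}$. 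The only cosmetic difference is that you attribute unit preservation to the evaluation map while the paper checks it directly for $\mathcal{S}$; the content is identical.
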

\begin{proof}
To see that $\mathcal{S}$ preserves units, recalling the definition \textcolor{blue}{(\refeqq{eq:unitseries})} of $1_{\mathscr{S}(\textbf{a})}$, we have
\[ 
\mathcal{S}_{1_{\mathscr{S}(\textbf{a})}}(\formj\! \ssA)
=
\sum^\infty_{n=0}  \dfrac{\formj^n}{n!} \varphi_n (\mathtt{x}_n\otimes \ssA^n )
=
\dfrac{\formj^0}{0!} \varphi_{\emptyset}(\mathtt{x}_{\emptyset}\otimes \ssA^0)
=
\! \! \! \! \! \! 
\underbrace{\varphi_{\emptyset}(\mathtt{1}_\textbf{a})
=1_\cA}_{\text{$\varphi$ is a homomorphism}}
\! \! \! \! \! \!  
.\]
The currying $\wh{\varphi}:\textbf{a}\to\cH(\textbf{E}_V,\textbf{U}_{\cA})$ is a homomorphism by \autoref{prop:curry}. Then, since $\mathscr{S}$ is lax monoidal, the linear map
\[     
\mathscr{S}(\wh{\varphi}):\mathscr{S}( \textbf{a})\to \Hom(\textbf{E}_V, \textbf{U}_\cA)
,\qquad  
\mathtt{s} \mapsto \big (  \ssA_I \mapsto  \varphi_I(\mathtt{s}_I \otimes\ssA_I) \big)
\]
preserves the multiplication (it is also straightforward to explicitly check this). 
Therefore $\mathcal{S}$ preserves the multiplication if the linear map
\[
\Hom(\textbf{E}_V, \textbf{U}_\cA) 
\to
\text{Func}\big (V,   \cA[[\formj]]    \big )
,\qquad
\xi \mapsto  \bigg ( \ssA \mapsto \sum_{n=0}^\infty \dfrac{\formj^n}{n!} \xi_{n}(\ssA^{n}) \bigg   )
\]
preserves the multiplication. Indeed, for $\xi, \nu\in \Hom(\textbf{E}_V, \textbf{U}_\cA) $, the convolution product $\xi\ast \nu$ is given by \textcolor{blue}{(\refeqq{eq:convolprod})},
\[     
(\xi\ast \nu)_I (\ssA_I)
=\bigg(\sum_{S\sqcup T=I}
\mu_{S,T}  \circ  (  \xi_S\otimes \nu_T )  \circ    \Delta_{S,T} \bigg) (\ssA_I)
=
\sum_{S\sqcup T=I}
\xi_S(\ssA_S) \star \nu_T(\ssA_T)
.\]
Then
\begin{align*}  
\sum_{n=0}^\infty \dfrac{1}{n!} (\xi\ast \nu)_n (\ssA^n)
&= 
\sum_{n=0}^\infty  \dfrac{1}{n!}  \sum_{S\sqcup T=[n]}
\xi_S(\ssA^S) \star \nu_T(\ssA^T)\\[6pt]
&=\sum_{n=0}^\infty  \dfrac{1}{n!}  \sum_{s+t=n}  \dfrac{n!}{s! t!}
\xi_s(\ssA^s) \star \nu_t(\ssA^t)\\[6pt]
&=
\sum_{n=0}^\infty \dfrac{1}{n!} \xi_{n}(\ssA^{n}) \star \sum_{n=0}^\infty \dfrac{1}{n!} \nu_{n}(\ssA^{n})
. \qedhere
\end{align*}
\end{proof}

If $\textbf{a}$ and $\cA$ are commutative (or at least partly commutative), then we can restrict $\mathcal{S}$ to the set of exponential series $\mathscr{E}(\textbf{a})$ of $\textbf{a}$,
\[                        
\mathcal{S}: \mathscr{E}(\textbf{a})\to \text{Func}\big(V,\cA[[\formj]]\big)
,\qquad
\mathtt{e}\mapsto \mathcal{S}_{\mathtt{e}} 
.\]

\begin{prop}\label{prop:exp}
Let $\varphi: \textbf{a}\otimes \textbf{E}_V \to \textbf{U}_{\cA}$ be an algebraic system of products, and let $\mathtt{e}:\textbf{E}\to \textbf{a}$ be an exponential series of $\textbf{a}$. Then 
\[       
\mathcal{S}_\mathtt{e}(\formj\! \ssA +\formj\! \textsf{\emph{B}} )
=
\mathcal{S}_\mathtt{e}(\formj\! \ssA) \star \mathcal{S}_\mathtt{e}(\formj\! \textsf{\emph{B}} )
.\]
\end{prop}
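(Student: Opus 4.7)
The plan is to prove the identity by direct computation: unpack the definition of $\mathcal{S}_\mathtt{e}$ on the left, expand $(\ssA+\textsf{\emph{B}})^{n}$ by multilinearity of the tensor product, and then rearrange the resulting double sum into a Cauchy product that factors as the right-hand side. No new theory is needed beyond the two hypotheses: $\mathtt{e}\in \mathscr{E}(\textbf{a})$ is an algebra homomorphism $\textbf{E}\to\textbf{a}$, and $\varphi$ is an algebra homomorphism.

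First, using the notation of Section~\ref{sec:Dec}, multilinearity gives
\[
(\ssA+\textsf{\emph{B}})^{n} \;=\; \sum_{S \sqcup T = [n]} \ssA^S\textsf{\emph{B}}^T \;\in\; V^{\otimes [n]},
\]
where $\ssA^S\textsf{\emph{B}}^T$ denotes the decoration sending $i\in S$ to $\ssA$ and $i\in T$ to $\textsf{\emph{B}}$. Hence
\[
\mathcal{S}_\mathtt{e}(\formj\ssA + \formj\textsf{\emph{B}}) \;=\; \sum_{n=0}^{\infty} \frac{\formj^n}{n!} \sum_{S\sqcup T=[n]} \varphi_n\big(\mathtt{e}_n\otimes \ssA^S\textsf{\emph{B}}^T\big).
\]
Next, since $\mathtt{e}$ is exponential, $\mathtt{e}_n = \mu_{S,T}(\mathtt{e}_S\otimes \mathtt{e}_T)$ for every $S\sqcup T=[n]$; combined with the multiplication on $\textbf{a}\otimes \textbf{E}_V$ from \textcolor{blue}{(\refeqq{eq:E_Valg})}, this gives
\[
\mathtt{e}_n\otimes \ssA^S\textsf{\emph{B}}^T \;=\; \mu_{S,T}\big((\mathtt{e}_S\otimes \ssA^S)\otimes (\mathtt{e}_T\otimes \textsf{\emph{B}}^T)\big).
\]
Applying the algebra homomorphism $\varphi$ then yields
\[
\varphi_n(\mathtt{e}_n \otimes \ssA^S\textsf{\emph{B}}^T) \;=\; \varphi_S(\mathtt{e}_S\otimes \ssA^S) \,\star\, \varphi_T(\mathtt{e}_T \otimes \textsf{\emph{B}}^T).
\]

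Finally I would reindex by cardinalities: naturality of the series $\mathtt{e}$ together with $\textbf{U}_\cA[\sigma]=\text{id}$ yields $\varphi_S(\mathtt{e}_S\otimes \ssA^S) = \varphi_{s}(\mathtt{e}_s\otimes \ssA^s)$ where $s=|S|$, and similarly for $T$. Grouping the $\binom{n}{s}$ splittings of $[n]$ with $|S|=s$ collapses the inner sum, and the resulting double sum factors as a Cauchy product:
\[
\mathcal{S}_\mathtt{e}(\formj\ssA + \formj\textsf{\emph{B}}) \;=\; \bigg(\sum_{s=0}^{\infty}\frac{\formj^s}{s!} \varphi_s(\mathtt{e}_s\otimes \ssA^s)\bigg) \star \bigg(\sum_{t=0}^{\infty}\frac{\formj^t}{t!} \varphi_t(\mathtt{e}_t\otimes \textsf{\emph{B}}^t)\bigg) \;=\; \mathcal{S}_\mathtt{e}(\formj\ssA)\star \mathcal{S}_\mathtt{e}(\formj\textsf{\emph{B}}).
\]

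No serious obstacle is expected; the argument is essentially combinatorial bookkeeping. The only mild subtlety is that the (possible) noncommutativity of $\cA$ causes no trouble, because the multiplication on $\textbf{U}_\cA$ satisfies $\mu_{S,T}(a\otimes b) = a\star b$ independently of how $S$ and $T$ are interleaved inside $[n]$. Conceptually, this is the species-internal analog of the classical identity $e^{x+y} = e^x e^y$; the multiplicativity of $\mathcal{S}$ established in Proposition~\ref{homo} alone does not give it, and one genuinely uses the exponential (group-like) property of $\mathtt{e}$ with respect to the Hopf structure on $\textbf{E}$.
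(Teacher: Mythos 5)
Your proof is correct and takes essentially the same route as the paper's: the paper's key identity $\varphi_I(\mathtt{e}_I\otimes \ssA_I)=\varphi_S(\mathtt{e}_S\otimes \ssA_S)\star\varphi_T(\mathtt{e}_T\otimes \ssA_T)$ (exponentiality of $\mathtt{e}$ plus the homomorphism property of $\varphi$) is exactly your middle step, and the paper then performs the same multinomial expansion and Cauchy-product regrouping, merely writing the sum over splittings $S\sqcup T=[n]$ directly in collapsed form $\sum_{s+t=n}\tfrac{n!}{s!\,t!}$ where you make the $\sfS_n$-invariance explicit.
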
 
\begin{proof}
In this case, we have
\begin{equation} \label{eq:1}
\varphi_I( \mathtt{e}_I \otimes \ssA_I)
= 
\varphi_I\Big ( 
\mu_{S,T}
\big ( (\mathtt{e}_S \otimes \ssA_S )   \otimes (\mathtt{e}_T \otimes \ssA_T)\big)    
\Big)
=
\varphi_S( \mathtt{e}_S \otimes \ssA_S) \star \varphi_T( \mathtt{e}_T \otimes \ssA_T). 
\end{equation}
The first equality follows from the fact that $\mathtt{e}$ is exponential, and the second equality follows from the fact that $\varphi$ is a homomorphism. Then
\begin{align*} 
\mathcal{S}_\mathtt{e}(\formj\! \ssA +\formj\! \textsf{\emph{B}} )
=& 
\sum_{n=0}^\infty \dfrac{1}{n!}
\varphi_n\big( \mathtt{e}_n\otimes (\formj\! \ssA +\formj\! \textsf{\emph{B}})^{n}\big) \\[6pt]
=&
\sum_{n=0}^\infty \dfrac{1}{n!}  
\varphi_n\bigg( \mathtt{e}_n\otimes \sum_{s+t=n}\dfrac{n!}{s! t!} (\formj\! \ssA)^s \formj\! \textsf{\emph{B}})^{t} \bigg) \\[6pt]
=&
\sum_{n=0}^\infty \sum_{s+t=n}
\varphi_n\bigg( \mathtt{e}_n\otimes \dfrac{1}{s!} (\formj\! \ssA)^s \dfrac{1}{t!} (\formj\! \textsf{\emph{B}})^{t}      \bigg)\\[6pt]
=&
\sum_{n=0}^\infty \sum_{s+t=n}\dfrac{1}{s!}
\varphi_s\big( \mathtt{e}_s\otimes  (\formj\! \ssA)^s \big)   
\dfrac{1}{t!}
\varphi_t\big(\mathtt{e}_t\otimes (\formj\! \textsf{\emph{B}})^{t}\big) \\[6pt]
=&
\bigg(\sum_{n=0}^\infty \dfrac{1}{n!}
\varphi_n\big( \mathtt{e}_n\otimes (\formj\! \ssA)^{n}\big) \bigg) 
\star
\bigg(\sum_{n=0}^\infty \dfrac{1}{n!}
\varphi_n\big( \mathtt{e}_n\otimes (\formj\! \textsf{\emph{B}})^{n}\big)\bigg) \\[6pt]
=&
\, \mathcal{S}_\mathtt{e}(\formj\! \ssA) 
\star 
\mathcal{S}_\mathtt{e}(\formj\! \textsf{\emph{B}}).
\end{align*}
From the third line to the fourth line, we used \textcolor{blue}{(\refeqq{eq:1})}.
\end{proof}

If $\textbf{p}=\textbf{c}$ is a coalgebra, then we can restrict $\mathcal{S}$ to the set of group-like series $\mathscr{G}(\textbf{c})$ of $\textbf{c}$, giving
\[                        
\mathcal{S}: \mathscr{G}(\textbf{c})\to \text{Func}\big (V,\cA[[\formj]]\big ),
\qquad
\mathtt{g}\mapsto \mathcal{S}_{\mathtt{g}} 
.\] 
Let us recall some results of \cite[Section 12.3]{aguiar2013hopf}, adapted for the setting with decorations. If $\textbf{h}$ is a bialgebra in species, then $\mathscr{G}(\textbf{h})$ is closed under the convolution of series (this is a consequence of the compatibility condition between multiplication and comultiplication), and so we obtain a homomorphism of monoids,
\[                        
\mathcal{S}: \mathscr{G}(\textbf{h})\to \text{Func}\big (V,\cA[[\formj]]\big ),
\qquad
\mathtt{g}\mapsto \mathcal{S}_{\mathtt{g}} 
.\]
If $\textbf{h}$ is a Hopf algebra in species, then we obtain a homomorphism of groups
\[                        
\mathcal{S}: \mathscr{G}(\textbf{h})\to \text{Func}\big (V,\cA[[\formj]]\big )^\times,
\qquad
\mathtt{g}\mapsto \mathcal{S}_{\mathtt{g}} 
.\]
In particular (combining \textcolor{blue}{(\refeqq{interse})} and \autoref{homo}), we have
\begin{equation}\label{inverse2}
\mathcal{S}_\mathtt{g}(\formj\! \ssA)\star \mathcal{S}_{\text{s}\circ\mathtt{g}}(\formj\! \ssA)
=
\mathcal{S}_{\text{s}\circ\mathtt{g}}(\formj\! \ssA) \star \mathcal{S}_\mathtt{g}(\formj\! \ssA)
=
1_\cA
\end{equation}
for all $\ssA\in V$. 

\begin{ex}\label{ex:exp}
Recall the algebraic system of products from \autoref{ex:polyfun}, which is the pointwise product of polynomial functions,
\[
\varphi:\textbf{E}\otimes\textbf{E}_{V}
\to
\textbf{U}_{C^\infty(V^\ast)}
,\qquad
\tH_{I}\otimes \ssA_I \mapsto \varphi_I(\tH_{I}\otimes \ssA_I)
.\]
Then, exponential series (=group-like series) of $\textbf{E}$ are all of the form
\[
\mathtt{e}(c): \textbf{E}\to \textbf{E}, 
\qquad
\tH_I\mapsto \mathtt{e}(c)_I:=  c^n\, \tH_I
\qquad \quad \text{for}\quad 
c\in \bR
.\]
The associated formal power series valued function $\mathcal{S}_{\mathtt{e}(c)}$ is
\[
V\to  C^\infty(V^\ast)[[\formj]]
,\qquad
\ssA \mapsto \mathcal{S}_{\mathtt{e}(c)}(\formj\! \ssA)
=
\sum_{n=0}^\infty \dfrac{\formj^n c^n}{n!} \varphi_n (\tH_{[n]} \otimes \ssA^n)
.\]
We can set $\formj=1$, to obtain the classical exponential function on a real vector space
\[
\mathcal{S}_{\mathtt{e}(c)}(\ssA):V^\ast\to \bR
,\qquad  
\Phi\mapsto  \mathcal{S}_{\mathtt{e}(c)}(\ssA)(\Phi)=  e^{c\la  \ssA, \Phi \ra}
.\]
\end{ex}
\subsection{Perturbation of Products}\label{sec:Perturbation of Products and Series}
We now consider the special case of algebraic products for \hbox{$\textbf{E}$-algebras}, which, after a choice of decorations vector $\ssS\in V$, may then be `perturbed' using the $\textbf{E}$-action.

Let $(\textbf{a},u)$ be an $\textbf{E}$-algebra, equivalently an algebra in species $\textbf{a}$ equipped with a commutative up derivation $u:\textbf{a}\to \textbf{a}'$. Recall that we denote the $\textbf{E}$-action by
\[
\rho: \textbf{E}\bigcdot \textbf{a}\to \textbf{a}
,\qquad
\tH_Y\otimes \mathtt{a} \mapsto u_Y(\mathtt{a})
.\]
Suppose we have a system of products for $\textbf{a}$,
\[   
\eta:  \textbf{a}  \otimes \textbf{E}_V  
\to 
\textbf{U}_\cA
.\]
For a choice of decorations vector $\ssS\in V$, we have the $(-)^\textbf{E}$-coalgebra $\widecheck{\rho}_{u,\ssS}$ given by
\[        
\widecheck{\rho}_{u,\ssS}:
\textbf{a}\otimes \textbf{E}_V   \to  (\textbf{a}\otimes  \textbf{E}_V)^{\textbf{E}}
,\qquad
\mathtt{v}\otimes \ssA_I \mapsto 
\underbrace{u_{(-)}(\mathtt{v}) \otimes \ssS^{\, (-)} \ssA_I
=
\sum_{r=0}^\infty  u_r(\mathtt{v}) \otimes \ssS^{\, r} \ssA_I}_{\text{via the isomorphism \textcolor{blue}{(\refeqq{eq:isoseriesr})}}}
.\]
Then we obtain a new system of products $\eta_{u, \ssS}:=\eta^\textbf{E}\circ \widecheck{\rho}_{u,\ssS}$, now with target algebra $\cA[[\formg ]]$, 
\[  
\eta_{u, \ssS}: 
\textbf{a} \otimes \textbf{E}_V  
\xrightarrow{\widecheck{\rho}_{u,\ssS}}(\textbf{a} \otimes \textbf{E}_V)^{\textbf{E}} 
\xrightarrow{\eta^{\textbf{E}}} \underbrace{\textbf{U}^{\textbf{E}}_\cA \cong  \textbf{U}_{\cA[[\formg]]}}_{\text{\autoref{prop:g}}}
.\]
Explicitly, the image of $\mathtt{v}\otimes \ssA_I$ in $\cA[[\formg]]$ under $\eta_{u, \ssS}$ is 
\[
\sum^{\infty}_{r=0}  \eta^r_{I} \big (u_r (\mathtt{v}) \otimes \ssS^{\, r}  \ssA_I\big )
=
\eta_{I}(\mathtt{v}\otimes\ssA_I)
+\underbrace{\formg\,  \eta^1_{I}\big (u(\mathtt{v})\otimes \ssS \ssA_I\big) 
+\dfrac{\formg^2}{2!} \eta^2_{I}\big (u_2(\mathtt{v})\otimes \ssS^{\, 2} \ssA_I\big)+\,  \cdots}_{\text{perturbation}} 
\ .\]
We think of $\eta_{u, \ssS}$ as being a perturbation of $\eta$ by $\ssS$. If we put $\ssS=0$ (or set $\formg=0$) then $\eta_{u, \ssS}=\eta$. 

\begin{prop} \label{prop:isEalg}
Let $(\textbf{a},u)$ be an $\textbf{E}$-algebra. Given $\ssS\in V$, then the $(-)^\textbf{E}$-coalgebra
\[        
\widecheck{\rho}_{u,\ssS}:
\textbf{a}\otimes \textbf{E}_V   \to  (\textbf{a}\otimes  \textbf{E}_V)^{\textbf{E}}
,\qquad
\mathtt{a}\otimes \ssA_I \mapsto 
u_{(-)}(\mathtt{a}) \otimes \ssS^{\, (-)} \ssA_I
=
\sum_{r=0}^\infty  u_r(\mathtt{a}) \otimes \ssS^{\, r} \ssA_I
\]
is a homomorphism of algebras. 
\end{prop}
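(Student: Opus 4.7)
The plan is to invoke \autoref{prop:is a homo}, which reduces the claim to producing an $\textbf{E}$-algebra structure on $\textbf{a}\otimes \textbf{E}_V$ whose associated $(-)^\textbf{E}$-coalgebra is $\widecheck{\rho}(u,\ssS)$. Since an $\textbf{E}$-algebra is equivalently an algebra equipped with a commutative up derivation, this amounts to exhibiting such an operator. The natural candidate is
\[
w: \textbf{a}\otimes\textbf{E}_V \to (\textbf{a}\otimes \textbf{E}_V)', \qquad \mathtt{a}\otimes \ssA_I\mapsto u(\mathtt{a})\otimes \ssS \ssA_I,
\]
whose iterates generate the desired action, $w_Y(\mathtt{a}\otimes \ssA_I)=u_Y(\mathtt{a})\otimes \ssS^{\, Y} \ssA_I$. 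Commutativity of $w$ then follows from commutativity of $u$ on $\textbf{a}$ together with commutativity of label concatenation.

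The heart of the proof is checking that $w$ is an up derivation of $\textbf{a}\otimes \textbf{E}_V$. Using the multiplication \refeqq{eq:E_Valg} and the derivation property of $u$ on $\textbf{a}$, one computes
\[
w\big(\mu_{S,T}((\mathtt{a}\otimes \ssA_S)\otimes (\mathtt{b}\otimes \ssA_T))\big)
=
\big[\mu_{\ast S,T}(u(\mathtt{a})\otimes \mathtt{b}) + \mu_{S,\ast T}(\mathtt{a}\otimes u(\mathtt{b}))\big]\otimes \ssS\ssA_S\ssA_T,
\]
and expanding the Leibniz right-hand side $\mu_{\ast S, T}(w(\mathtt{a}\otimes \ssA_S)\otimes (\mathtt{b}\otimes \ssA_T)) + \mu_{S,\ast T}((\mathtt{a}\otimes \ssA_S)\otimes w(\mathtt{b}\otimes \ssA_T))$ yields the same expression, since both summands contribute $\ssS\ssA_S\ssA_T$ as a function $\ast I\to V$. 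The pleasant point to highlight is the contrast with $\textbf{E}_V$ alone, where the operator $\ssA_I\mapsto \ssS \ssA_I$ was explicitly noted not to be a derivation after \refeqq{upop}: on the Hadamard product $\textbf{a}\otimes \textbf{E}_V$ the two factors share the single new $\ast$-label, so the vector $\ssS$ is inserted only once in each summand, and the Leibniz identity is rescued.

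The main technical hurdle is labeling bookkeeping, in particular the identification $\{\ast\}\sqcup S\sqcup T = S\sqcup \{\ast\}\sqcup T$. Once this is in hand, checking that $w$ annihilates $\mathtt{1}_{\textbf{a}\otimes \textbf{E}_V}=\mathtt{1}_\textbf{a}\otimes 1_\Bbbk$ follows automatically from $u(\mathtt{1}_\textbf{a})=0$, completing the verification. As an alternative, one can bypass \autoref{prop:is a homo} and directly check that $\widecheck{\rho}(u,\ssS)$ preserves the multiplication by expanding both sides via \refeqq{eq:multofNderiv} and the higher Leibniz rule
\[
u_r\circ \mu_{S,T}\;=\;\sum_{Y_1\sqcup Y_2=[r]}\mu_{Y_1\sqcup S,\, Y_2\sqcup T}\circ (u_{Y_1}\otimes u_{Y_2}),
\]
which holds because $u$ is a commutative up derivation; preservation of the unit is then immediate from $u_Y(\mathtt{1}_\textbf{a})=0$ for nonempty $Y$.
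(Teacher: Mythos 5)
Your proposal is correct, and your main route is organized differently from the paper's. The paper proves the statement by direct verification: it applies $\widecheck{\rho}(u,\ssS)$ to a product, invokes the higher Leibniz rule $u_{Y}\circ\mu_{S,T}=\sum_{Y_1\sqcup Y_2=Y}\mu_{Y_1\sqcup S,Y_2\sqcup T}\circ(u_{Y_1}\otimes u_{Y_2})$ (justified by $(\textbf{a},u)$ being an $\textbf{E}$-algebra, i.e.\ a case of \textcolor{blue}{(\refeqq{eq:actionohh})}), and observes that this matches the multiplication \textcolor{blue}{(\refeqq{eq:multofNderiv})} of $(\textbf{a}\otimes\textbf{E}_V)^{\textbf{E}}$ — which is precisely your ``alternative'' at the end. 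Your primary route instead builds the commutative up derivation $w(\mathtt{a}\otimes\ssA_I)=u(\mathtt{a})\otimes\ssS\ssA_I$ on $\textbf{a}\otimes\textbf{E}_V$ by a single-step Leibniz check (where your observation that both Leibniz summands receive the one shared $\ast$-label, in contrast to the failure of the derivation property for $\ssA_I\mapsto\ssS\ssA_I$ on $\textbf{E}_V$ alone, is exactly the right point), and then cites \autoref{prop:is a homo}. This effectively proves the paper's subsequent corollary first and deduces the proposition from it, reversing the paper's logical order; there is no circularity, since your verification that $w$ is a commutative up derivation is independent of the proposition. What your route buys is that only the $r=1$ Leibniz identity needs to be checked by hand, with the general machinery ($\textbf{E}$-algebra $=$ algebra with commutative up derivation, plus \autoref{prop:is a homo}) absorbing the combinatorics of arbitrary $Y$; what the paper's direct computation buys is independence from \autoref{prop:is a homo} at the cost of invoking the full higher Leibniz rule. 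Both are sound.
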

\begin{proof}
For preservation of units, we have
\[
\mathtt{1}_\textbf{a}\otimes 1_\Bbbk \mapsto  \sum_{r=0}^\infty  u_r(\mathtt{1}_\textbf{a}) \otimes \ssS^{\, r}=  u_0(\mathtt{1}_\textbf{a}) \otimes 1_\Bbbk =\mathtt{1}_\textbf{a} \otimes 1_\Bbbk 
.\] 
For preservation of multiplication, recall that the multiplication of $\textbf{a}\otimes \textbf{E}_V$ is given by
\[
\mu_{S,T}
\big(
(\mathtt{a}\otimes \ssA_S) \otimes (\mathtt{b}\otimes \ssA_T)
\big)
=
\mu_{S,T}(\mathtt{a}\otimes \mathtt{b}) \otimes \ssA_S \ssA_T
.\]
Then $\widecheck{\rho}_{u,\ssS}$ maps this to the series
\[
u_{(-)}\big(\mu_{S,T}(\mathtt{a}\otimes \mathtt{b})\big) \otimes \ssS^{\, (-)} \ssA_S \ssA_T \\[6pt]
=
\sum_{Y_1 \sqcup Y_2 =(-)} 
\mu_{Y_1\sqcup S,Y_2\sqcup T}\big(u_{Y_1}(\mathtt{a})\otimes u_{Y_2}(\mathtt{b})\big) 
\otimes
\ssS^{\, (-)} \ssA_S \ssA_T
.\]
The equality here, which is an instance of \textcolor{blue}{(\refeqq{eq:actionohh})}, follows from the fact that $u$ gives $\textbf{a}$ the structure of an $\textbf{E}$-algebra. On the other hand, the multiplication of $(\textbf{a}\otimes  \textbf{E}_V)^{\textbf{E}}$, which was defined in \textcolor{blue}{(\refeqq{eq:multofNderiv})}, is given by
\[
\mu_{S,T} \Big( \big ( 
u_{(-)}(\mathtt{a}) \otimes \ssS^{\, (-)} \ssA_S\big) \otimes \big(  
u_{(-)}(\mathtt{b}) \otimes \ssS^{\, (-)} \ssA_T\big)
\Big)
=
\sum_{Y_1 \sqcup Y_2 =(-)} 
\mu_{Y_1\sqcup S,Y_2\sqcup T}\big(u_{Y_1}(\mathtt{a})\otimes u_{Y_2}(\mathtt{b})\big) 
\otimes
\ssS^{\, (-)} \ssA_S \ssA_T 
.\]
The right-hand sides match, thus multiplication is also preserved. 
\end{proof}

\begin{cor}
Let $(\textbf{a},u)$ be an $\textbf{E}$-algebra. Given $\ssS\in V$, then $\textbf{a}\otimes \textbf{E}_V$ is an $\textbf{E}$-algebra where the $\textbf{E}$-action is given by
\[
\rho_{u,\ssS}:\textbf{E}\bigcdot (\textbf{a}\otimes \textbf{E}_V)\to \textbf{a}\otimes \textbf{E}_V
,\qquad
\tH_Y \otimes ( \mathtt{a}\otimes \ssA_I ) \mapsto u_{Y}(\mathtt{a}) \otimes \ssS^{\, Y} \ssA_I
.\]
\end{cor}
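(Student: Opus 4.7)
The corollary is a direct translation of the proposition into the language of $\textbf{E}$-algebras via Theorem~\ref{prop:is a homo} (\texttt{prop:is a homo}). My plan is to (i) produce the $\textbf{E}$-module structure on $\textbf{a}\otimes\textbf{E}_V$ that gives rise to the stated action $\rho(u,\ssS)$, (ii) recognize the $(-)^{\textbf{E}}$-coalgebra associated to this module as $\widecheck{\rho}(u,\ssS)$, and (iii) invoke Theorem~\ref{prop:is a homo} together with the proposition to upgrade the module to an $\textbf{E}$-algebra.

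First, I would exhibit a generating up operator for the proposed action. Define
\[
w:\textbf{a}\otimes \textbf{E}_V \to (\textbf{a}\otimes \textbf{E}_V)',\qquad \mathtt{a}\otimes \ssA_I \mapsto u(\mathtt{a})\otimes \ssS\ssA_I,
\]
where $\ssS\ssA_I=\ssS_\ast\otimes\ssA_I$ in the notation of \textcolor{blue}{(\refeqq{eq:Sintabb})}. Since $u$ is itself commutative, two iterations of $w$ produce $u_{\{\ast_1\}}(u_{\{\ast_2\}}(\mathtt{a}))\otimes \ssS_{\ast_1}\otimes\ssS_{\ast_2}\otimes\ssA_I$, which is symmetric in $\ast_1,\ast_2$; hence $w$ is a commutative up operator. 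By the equivalence between commutative up operators and $\textbf{E}$-modules via \textcolor{blue}{(\refeqq{eq:inducedE})}, $w$ generates an $\textbf{E}$-action on $\textbf{a}\otimes\textbf{E}_V$ whose component at $\tH_Y$ sends $\mathtt{a}\otimes\ssA_I$ to $u_Y(\mathtt{a})\otimes\ssS^{\,Y}\ssA_I$; this is exactly $\rho(u,\ssS)$.

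Next, unwinding the definition of currying shows that the $(-)^{\textbf{E}}$-coalgebra $\widecheck{\rho(u,\ssS)}$ associated to this $\textbf{E}$-module is precisely the map $\widecheck{\rho}(u,\ssS)$ appearing in \autoref{prop:isEalg}. That proposition establishes that $\widecheck{\rho}(u,\ssS)$ is a homomorphism of algebras. Applying Theorem~\ref{prop:is a homo} to the underlying algebra $\textbf{a}\otimes\textbf{E}_V$ equipped with the module structure $\rho(u,\ssS)$ then yields that $(\textbf{a}\otimes\textbf{E}_V,\rho(u,\ssS))$ is an $\textbf{E}$-algebra, completing the proof.

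There is no genuine obstacle; the only point requiring minor care is the bookkeeping that identifies the curried action of the $\textbf{E}$-module generated by $w$ with the map $\widecheck{\rho}(u,\ssS)$ of the proposition, so that Theorem~\ref{prop:is a homo} can be applied verbatim. Alternatively, one could bypass Theorem~\ref{prop:is a homo} altogether and verify directly that $w$ is an up derivation of $\textbf{a}\otimes\textbf{E}_V$: using the derivation property of $u$ one sees
\[
w\big(\mu_{S,T}((\mathtt{a}\otimes\ssA_S)\otimes(\mathtt{b}\otimes\ssA_T))\big)=\mu_{\ast S,T}\big(w(\mathtt{a}\otimes\ssA_S)\otimes(\mathtt{b}\otimes\ssA_T)\big)+\mu_{S,\ast T}\big((\mathtt{a}\otimes\ssA_S)\otimes w(\mathtt{b}\otimes\ssA_T)\big),
\]
which together with commutativity of $w$ characterizes an $\textbf{E}$-algebra structure as discussed around \textcolor{blue}{(\refeqq{eq:upder})}. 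I would prefer the first route since it re-uses the proposition cleanly and makes the conceptual content transparent.
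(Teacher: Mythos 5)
Your proposal is correct and follows the paper's own (one-line) proof exactly: the corollary is obtained by combining \autoref{prop:isEalg} with \autoref{prop:is a homo}, and your steps (i)--(iii) simply spell out the bookkeeping that the paper leaves implicit. The alternative direct verification that $w$ is an up derivation is also sound, but your preferred route is the one the paper takes.
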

\begin{proof}
This is the result of combining \autoref{prop:isEalg} with \autoref{prop:is a homo}.
\end{proof}

\begin{cor}\label{prop:perturbedishomo}
Let $(\textbf{a},u)$ be an $\textbf{E}$-algebra, and let $\varphi:\textbf{a}\otimes \textbf{E}_V\to \textbf{U}_{\cA}$ be an algebraic system of products for $\textbf{a}$. Then, for $\ssS\in V$, the perturbed system of products
\[
\varphi_{u,\ssS}:
\textbf{a}\otimes \textbf{E}_V \to \textbf{U}_{\cA[[\formg]]}, \qquad \varphi_{u,\ssS}:=\varphi^\textbf{E}\circ \widecheck{\rho}_{u,\ssS}
\]
is again an algebraic system of products. 
\end{cor}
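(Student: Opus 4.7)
The proof is essentially a composition argument, and most of the work has already been done in the preceding material. The plan is to recognize that $^{u,\ssS}\varphi$ is, by construction, the composite
\[
\textbf{a}\otimes \textbf{E}_V \xrightarrow{\widecheck{\rho}(u,\ssS)} (\textbf{a}\otimes \textbf{E}_V)^{\textbf{E}} \xrightarrow{\varphi^{\textbf{E}}} \textbf{U}_\cA^{\textbf{E}} \xrightarrow{\sim} \textbf{U}_{\cA[[\formg]]},
\]
so it suffices to show that each arrow is a homomorphism of algebras in species, since the class of algebra homomorphisms is closed under composition.

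First, by \autoref{prop:isEalg} the left-hand arrow $\widecheck{\rho}(u,\ssS)$ is already a homomorphism of algebras; no further work is needed here. Next, for the middle arrow, I invoke the general fact recorded in \autoref{sec:inthomcauchy} and \autoref{sec:theendoE}: since $\textbf{E}$ is a coalgebra, the endofunctor $(-)^{\textbf{E}}$ is lax monoidal with respect to the Cauchy product, with the multiplication and unit transformations given by \textcolor{blue}{(\refeqq{eq:multofa^c})} and \textcolor{blue}{(\refeqq{eq:unitofa^c})}. Consequently $(-)^{\textbf{E}}$ sends algebras to algebras and, crucially, sends algebra homomorphisms to algebra homomorphisms. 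Applying this functoriality to the given algebra homomorphism $\varphi$ immediately yields that $\varphi^{\textbf{E}}$ is a homomorphism of algebras. Finally, the rightmost isomorphism $\textbf{U}_\cA^{\textbf{E}} \xrightarrow{\sim} \textbf{U}_{\cA[[\formg]]}$ was shown in the earlier proposition comparing $\textbf{U}^{\textbf{E}}_\cA$ and $\textbf{U}_{\cA[[\formg]]}$ to be an isomorphism of algebras in species.

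Composing these three algebra homomorphisms gives that $^{u,\ssS}\varphi$ is a homomorphism of algebras $\textbf{a}\otimes \textbf{E}_V \to \textbf{U}_{\cA[[\formg]]}$, which is precisely the definition of an algebraic system of products. There is no real obstacle here: the potentially delicate point, namely that the components of $\widecheck{\rho}(u,\ssS)$ interact correctly with the multiplication of $(\textbf{a}\otimes \textbf{E}_V)^{\textbf{E}}$ via the combinatorial identity in \textcolor{blue}{(\refeqq{eq:actionohh})}, has already been absorbed into \autoref{prop:isEalg}, and the functoriality of $(-)^{\textbf{E}}$ on homomorphisms is a formal consequence of the lax monoidal structure. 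So the whole statement reduces to a one-line composition once those two ingredients are cited.
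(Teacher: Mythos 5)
Your proof is correct and follows essentially the same route as the paper: cite \autoref{prop:isEalg} for $\widecheck{\rho}(u,\ssS)$, use lax monoidality of $(-)^{\textbf{E}}$ to get that $\varphi^{\textbf{E}}$ is a homomorphism, and conclude by closure of algebra homomorphisms under composition. The only cosmetic difference is that you make the final isomorphism $\textbf{U}^{\textbf{E}}_\cA \cong \textbf{U}_{\cA[[\formg]]}$ explicit as a third arrow, which the paper leaves implicit in the definition of the perturbed system.
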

\begin{proof}
We have that $\varphi^{\textbf{E}}$ is a homomorphism because  $(-)^{\textbf{E}}$ is lax monoidal (it is also straightforward to explicitly check this). The result then follows from \autoref{prop:isEalg}, and the fact that a composition of homomorphisms is a homomorphism.
\end{proof}

For series $\mathtt{s}\in \mathscr{S}(\textbf{p})$, let $\mathcal{S}_{\mathtt{s},u, \formg\ssS}$ denote the corresponding formal power series valued function for the perturbed system of products $\eta_{u, \ssS}$, thus
\[  
\mathcal{S}_{\mathtt{s},u, \formg\ssS}:V\to \cA[[\formg,\! \formj]]
,\qquad
\ssA\mapsto \,
\mathcal{S}_{\mathtt{s},u, \formg\ssS}(\formj\! \ssA):
=
\sum^\infty_{n=0} \dfrac{\formj^n}{n!} {\eta_{u, \ssS}}_{n} (\mathtt{s}_n\otimes \ssA^n) 
.\]
Explicitly, this is
\[
\mathcal{S}_{\mathtt{s},u, \formg\ssS}(\formj\! \ssA)
= 
\sum^\infty_{n=0} \sum^\infty_{r=0} 
\dfrac{1}{r!\, n!} 
\eta^{r}_{n} 
\big(
u_r(\mathtt{s}_n)\otimes (\formg\ssS)^{r} (\formj\! \ssA)^n 
\big) 
:=
\sum^\infty_{n=0} \sum^\infty_{r=0} 
\dfrac{\formg^{r} \formj^n}{r!\, n!} 
\eta^{r}_{n} 
\big(
u_r(\mathtt{s}_n)\otimes \ssS^{\, r} \ssA^n 
\big)         
.\]
This defines the linear map
\[                        
\mathcal{S}_{(-),u,\formg\ssS}: 
\mathscr{S}(\textbf{p})\to \text{Func}\big (V,\cA[[\formg,\! \formj]]\big )
,\qquad 
\mathtt{s}\mapsto\,  \mathcal{S}_{\mathtt{s},u, \formg\ssS}
.\]

\begin{cor}
If $\eta=\varphi$ is an algebraic system of products for $\textbf{a}$ and $u$ is a commutative up derivation of $\textbf{a}$, then $\mathcal{S}_{(-),u,\formg\ssS}$ is a homomorphism of $\Bbbk$-algebras.
\end{cor}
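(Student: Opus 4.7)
The plan is to deduce this immediately by composing \autoref{prop:perturbedishomo} with \autoref{homo}. The preceding Corollary \autoref{prop:perturbedishomo} established that $^{u,\ssS}\varphi:\textbf{a}\otimes\textbf{E}_V\to \textbf{U}_{\cA[[\formg]]}$ is itself an algebraic system of products, now with target $\Bbbk$-algebra $\cA[[\formg]]$ in place of $\cA$. So all the substantive work of showing that the perturbation construction respects algebraicity has already been carried out upstream; what remains is a bookkeeping step.

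Next, I would apply \autoref{homo} verbatim to $^{u,\ssS}\varphi$, with the role of $\cA$ played by the $\Bbbk$-algebra $\cA[[\formg]]$. This immediately yields that the associated linear map
\[
\mathscr{S}(\textbf{a})\to \text{Func}\big(V,\cA[[\formg]][[\formj]]\big)
,\qquad
\mathtt{s}\mapsto \bigg(\ssA \mapsto \sum_{n=0}^\infty \dfrac{\formj^n}{n!}\, {^{u,\ssS}\varphi_n}(\mathtt{s}_n \otimes \ssA^n)\bigg)
\]
is a homomorphism of $\Bbbk$-algebras.

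To finish, I would post-compose with the canonical $\Bbbk$-algebra isomorphism $\cA[[\formg]][[\formj]]\cong\cA[[\formg,\formj]]$ applied pointwise in $V$; the induced map on function spaces is visibly a $\Bbbk$-algebra homomorphism under pointwise multiplication. Unpacking the explicit formula for $^{u,\ssS}\varphi_n(\mathtt{s}_n\otimes\ssA^n)$ displayed just above \autoref{prop:perturbedishomo}, one checks by a direct substitution that the composition agrees termwise with the formula for $^{u,\formg\ssS}\mathcal{S}_\mathtt{s}(\formj\!\ssA)$ stated in the paragraph preceding the corollary. The composition of two $\Bbbk$-algebra homomorphisms being again a homomorphism, the result follows.

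There is essentially no obstacle: the two nontrivial ingredients, namely that perturbation preserves algebraicity of the system of products and that algebraic systems of products induce homomorphisms on series, have already been isolated in the preceding two statements. The only routine verification is that the reindexing $\cA[[\formg]][[\formj]]\cong\cA[[\formg,\formj]]$ respects the multiplicative structure, which is immediate from the Cauchy convolution formula defining each of the formal power series algebras.
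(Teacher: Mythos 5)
Your proposal is correct and follows exactly the paper's own argument, which is the one-line combination of \autoref{prop:perturbedishomo} with \autoref{homo}. The extra care you take with the identification $\cA[[\formg]][[\formj]]\cong\cA[[\formg,\!\formj]]$ and the termwise match against the displayed formula for $^{u,\formg\ssS}\mathcal{S}_{\mathtt{s}}$ is a harmless elaboration of the same route.
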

\begin{proof}
This is the result of combining \autoref{prop:perturbedishomo} with \autoref{homo}.
\end{proof}
\part{The Cocommutative Hopf Monoid of Set Compositions} 
We now focus on the Hopf algebra of compositions $\Sig$, together with its Lie algebra of primitive elements $\Zie$. Our main task shall be the description of $\Sig$ and $\Zie$ as $\textbf{E}$-gebras. This is a \hbox{species-theoretic} formalization of structure discovered by Steinmann \cite{steinmann1960} and \hbox{Epstein-Glaser-Stora} \cite{epstein1976general}. This, combined with the construction of \autoref{sec:Perturbation of Products and Series}, will recover the perturbative construction of interacting fields in pAQFT, as in \cite[Section 8.1]{ep73roleofloc}, \cite[Section 6.2]{dutfred00}, going back to Bogoliubov \cite[Chapter 4]{Bogoliubov59}.

\section{The Algebras} 
  
\subsection{Decompositions and Compositions}  \label{comp}
A (set) \emph{decomposition} $F$ of $I$ is a function of the form
\[  
F:I \to      \{  1,\dots,k\}
,\qquad
\text{for some}\quad k\in \bN.
\] 
The ordering $1>\dots > k$ is understood. Let $\widehat{\Sigma}[I]$ denote the set of decompositions of $I$ ($\widehat{\Sigma}=\widehat{\Sigma}[-]$ naturally extends to a presheaf on $\sfS$). We encode decompositions $F\in \widehat{\Sigma}[I]$ in formal expressions
\[  
F=(S_1, \dots, S_k) 
,\qquad 
\text{where}\quad 
S_j:= F^{-1}(j)
.\]
The $S_j$ are called the \emph{lumps} of $F$. We let $l(F)=k$ denote the number of lumps of $F$. The \emph{opposite} of $F$ is defined by 
\[
\bar{F}:=(S_k,\dots, S_1), \qquad  \text{i.e.} \quad \bar{F}^{-1}(j)=F^{-1}(k+1-j)
.\] 
For $S\sqcup T=I$, $F\in \widehat{\Sigma}[S]$ and $G\in \widehat{\Sigma}[T]$, the \emph{concatenation} $FG\in \widehat{\Sigma}[I]$ of $F$ with $G$ is the decomposition obtained by concatenating formal expressions,
\[
\text{if}\quad F=(S_1, \dots, S_{l(F)})\quad \text{and} \quad G=(T_1, \dots, T_{l(G)}) \qquad
\text{then}\qquad 
FG:= (S_1, \dots, S_{l(F)} ,T_1, \dots, T_{l(G)})
.\]
For $S\subseteq I$ and $F:I\to \{ 1,\dots, k \}$, the \emph{restriction} $F|_S:S\to \{1,\dots, k \}$ of $F$ to $S$ is the decomposition obtained by precomposing $F$ with $S \hookrightarrow I$, thus
\[
\text{if}\quad F=(S_1, \dots, S_k)
\qquad \text{then}\qquad 
F|_S:= (S_1\cap S, \dots, S_k \cap S)
.\]
A (set) \emph{composition} $F$ of $I$ is a surjective decomposition of $I$, i.e. all the lumps are nonempty. Let $\Sigma[I]$ denote the set of compositions of $I$ ($\Sigma=\Sigma[-]$ also naturally extends to a presheaf on $\sfS$). 

Given a decomposition $F$, let $F_+$ denote the composition whose formal expression is obtained from the formal expression of $F$ by deleting all copies of the empty set. This defines retractions
\[     
(-)_+:  \widehat{\Sigma}[I] \twoheadrightarrow \Sigma[I]
,\qquad 
F\mapsto F_+
.\]
Note that compositions are not closed under restriction. If $F$ is a composition, we now make the convention that 
\[
F|_S:=(F|_S)_+
\, .\]
For compositions $F,G\in \Sigma[I]$, we write 
\[G\leq F\]
if $G$ is obtained from $F$ by merging contiguous lumps of $F$. Given compositions $G\leq F$ with $G=(T_1, \dots, T_k)$, we let
\[
l(F/G):=\prod^{k}_{j=1} l( F|_{T_j} )     
\qquad \text{and} \qquad  
(F/G)!:=\prod^{k}_{j=1} l( F|_{T_j} )!\,    
.\]
\subsection{The Cocommutative Hopf Monoid of Compositions} \label{hopfofsetcomp}
Let $\Sig$ be the (vector) species which is the composition of the presheaf $\Sigma$ (a `set species') with the free vector space functor $\textsf{Set}\to \sfVec$, thus 
\[   
\Sig[I] 
:=\big\{\text{formal $\Bbbk$-linear combinations of compositions of $I$}\big\}
.\]
For $F$ a composition of $I$, let $\tH_F\in \Sig[I]$ denote the basis element corresponding to $F$. The sets $\{\tH_F: F\in \Sigma[I]\}$ form the \emph{$\tH$-basis} of $\Sig$. Following \cite[Section 11]{aguiar2013hopf}, $\Sig$ is a connected bialgebra, with multiplication and comultiplication given in terms of the $\tH$-basis by
\[
\mu_{S,T}(\tH_F\otimes \tH_G):=\tH_{FG} \qquad \text{and} \qquad   \Delta_{S,T}  (\tH_F) :=  \tH_{F|_S} \otimes    \tH_{F|_T}
.\]
We sometimes abbreviate $\tH_F \tH_G:= \mu_{S,T}(\tH_F \otimes\tH_G)$. The unit and counit are given by 
\[
\mathtt{1}_{\Sig}=\tH_\emptyset\qquad  \text{and} \qquad \epsilon_\emptyset(\tH_\emptyset)=1_\Bbbk
.\] 
Let
\begin{equation}\label{antipode}
\overline{\tH}_F:= \sum_{G\geq \bar{F}}  (-1)^{l(G)}\, \tH_G 
.
\end{equation}
Then \cite[Theorem 11.38]{aguiar2010monoidal} (in the case $\textbf{q}=\textbf{E}^\ast_+$ and $q=1$) shows that
\begin{equation}\label{eq:inversion relation for reverse time-ordered products}
\sum_{S\sqcup T=I} \tH_{F|_S} \overline{\tH}_{F|_T}
=0
\qquad \text{and} \qquad 
\sum_{S\sqcup T=I}\overline{\tH}_{F|_S}  \tH_{F|_T}
=0
.
\end{equation}
Therefore the antipode of $\Sig$ is given by 
\[
\text{s}_I(\tH_F)=\overline{\tH}_F
.\] 
The Hopf algebra $\Sig$ is the free cocommutative Hopf algebra on the positive coalgebra $\textbf{E}^\ast_+$ \cite[Section 11.2.5]{aguiar2010monoidal}, and so $\Sig\cong \textbf{L}\boldsymbol{\circ} \textbf{E}^\ast_+$.

There is a second important basis of $\Sig$, called the \emph{$\tQ$-basis}. The $\tQ$-basis is also indexed by compositions, and is given by  
\[ 
\tQ_F:= \sum_{G\geq F}   (-1)^{ l(G)-l(F) }  \dfrac{1}{    l(G/F) }   \tH_G\qquad \text{or equivalently} \qquad  \mathtt{H}_F=: \sum_{G\geq F} \dfrac{1}{( G/F )!}  \tQ_G 
.\]
For $S\subseteq I$ and $F\in \Sigma[I]$, we have \emph{deshuffling}
\[F\res_S\, :=  
\begin{cases}
F|_S &\quad  \text{if $S$ is a union of lumps of $F$}\footnote{\ }\\
0\in \Sig[S] &\quad \text{otherwise.}
\end{cases}
\]
\footnotetext{ not necessarily contiguous}The multiplication and comultiplication of $\Sig$ is given in terms of the $\tQ$-basis by  
\[         
\mu_{S,T} ( \tQ_F\otimes \tQ_G ) =  \tQ_{FG} 
\qquad \text{and} \qquad  
\Delta_{S,T}  (\tQ_F) =  \tQ_{F\res_S} \otimes \,   \tQ_{F\res_T}
.\]

\subsection{Decorations}
Given a vector space $V$, the species of $V$-decorated compositions $\Sig\otimes \textbf{E}_V$ is a connected bialgebra via the bilax structure of the endofunctor $(-)\otimes \textbf{E}_V$, with multiplication given by
\[
\mu_{S,T}\big((\tH_F\otimes\ssA_S) \otimes (\tH_G \otimes \ssA_T)\big)
:=
\tH_F \tH_G \otimes \ssA_S \ssA_T 
\]
and comultiplication given by 
\[
\Delta_{S,T}(\tH_F\otimes \ssA_I)
:=
(\tH_{F|_S}\otimes\ssA_{I|_S})\otimes(\tH_{F|_T} \otimes \ssA_{I|_T}) 
.\]
The unit and counit are given by 
\[
\mathtt{1}_{\Sig\otimes \textbf{E}_V}=\tH_\emptyset\otimes 1_\Bbbk\qquad  \text{and}\qquad \epsilon_\emptyset(\tH_\emptyset\otimes 1_\Bbbk)=1_\Bbbk
.\] 
For $\tH_F\otimes \ssA_I\in \Sig\otimes \textbf{E}_V[I]$, we have
\[
\sum_{S\sqcup T=I} 
\mu_{S,T}\big ((\tH_{F|_S}\otimes \ssA_{I|_S})\otimes (\overline{\tH}_{F|_T}\otimes \ssA_{I|_T})\big) 
=
\underbrace{\sum_{S\sqcup T=I} \tH_{F|_S}  \overline{\tH}_{F|_T}}_{\text{$=0$ by \textcolor{blue}{(\refeqq{eq:inversion relation for reverse time-ordered products})}}} \otimes \ssA_{I}
=0
\] 
and
\[
\sum_{S\sqcup T=I} 
\mu_{S,T}\big ((\overline{\tH}_{F|_S}\otimes \ssA_{I|_S})\otimes (\tH_{F|_T}\otimes \ssA_{I|_T})\big) 
=
\underbrace{\sum_{S\sqcup T=I} \overline{\tH}_{F|_S} \tH_{F|_T}}_{\text{$=0$ by \textcolor{blue}{(\refeqq{eq:inversion relation for reverse time-ordered products})}}}\otimes \ssA_{I}
=0
.\]
Therefore the antipode of $\Sig\otimes \textbf{E}_V$ is given by
\begin{equation}\label{eq:antipode}
\text{s}_I(\tH_F\otimes \ssA_I)= \overline{\tH}_F \otimes \ssA_I.  
\end{equation}
\subsection{The Steinmann Algebra}\label{sec:Steain}
The Hopf algebra $\Sig$ is connected and cocommutative, and so \autoref{CMM} (CMM) applies. We now describe the positive Lie algebra of primitive elements $\mathcal{P}(\Sig)\subset \Sig$. 

For $I\in \sfS$ a finite set, let a \emph{tree} $\mathcal{T}$ over $I$ be a planar\footnote{\ i.e. a choice of left and right child is made at every node} full binary tree whose leaves are labeled bijectively with the blocks of a partition of $I$ (a \emph{partition} $P$ of $I$ is a set of disjoint nonempty subsets of $I$, called \emph{blocks}, whose union is $I$). The blocks of this partition, called the \emph{lumps} of $\mathcal{T}$, form a composition called the \emph{debracketing} $F_\mathcal{T}$ of $\mathcal{T}$, by listing them in order of appearance from left to right. We denote trees by nested products $[\, \cdot\, ,\, \cdot\, ]$ of subsets or trees, see \autoref{fig:tree}. We make the convention that no trees exist over the empty set $\emptyset$. 

\begin{figure}[H]
\centering
\includegraphics[scale=0.6]{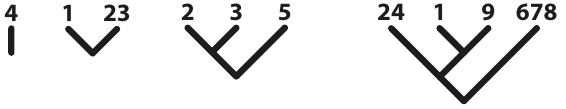}	\caption{Let $I=\{1,2,3,4,5,6,7,8,9\}$. The trees $[4]$, $[1,23]$ ($\neq[23,1]$), $[[2,3],5]$, $[[24,[1,9]],678]$. The debracketing of $[[24,[1,9]],678]$ is the composition $(24,1,9,678)$. If we put $\mathcal{T}_1=[24,[1,9]]$ and $\mathcal{T}_2=[678]$, then $[\mathcal{T}_1, \mathcal{T}_2]$ would also denote this tree.}
\label{fig:tree}
\end{figure} 

We define the positive species $\textbf{Zie}$ by letting $\textbf{Zie}[I]$ denote the vector space of formal $\Bbbk$-linear combinations of trees over $I$, modulo the relations of antisymmetry and the Jacobi identity as interpreted on trees in the usual way. Explicitly, 
\begin{enumerate}
\item(antisymmetry) for all compositions of $I$ with two lumps $(S,T)$, and all trees $\mathcal{T}_1\in \Zie[S]$ and $\mathcal{T}_2\in \Zie[T]$, we have
\[
[\mathcal{T}_1, \mathcal{T}_2 ]
+[ \mathcal{T}_2, \mathcal{T}_1 ]=
0
\]
\item(Jacobi Identity) for all compositions of $I$ with three lumps $(S,T,U)$, and all trees $\mathcal{T}_1\in \Zie[S]$, $\mathcal{T}_2\in \Zie[T]$ and $\mathcal{T}_3\in \Zie[U]$, we have
\[
[[\mathcal{T}_1,\mathcal{T}_2],\mathcal{T}_3]+
[[\mathcal{T}_3,\mathcal{T}_1],\mathcal{T}_2]+
[[\mathcal{T}_2,\mathcal{T}_3],\mathcal{T}_1]=
0
.\]
\end{enumerate}
Then $\Zie$ is a positive Lie algebra in species, with Lie bracket given by
\[     
\partial^\ast :  \Zie\bigcdot \Zie\to \Zie, 
\qquad
\partial_{S,T}^\ast(\mathcal{T}_1\otimes \mathcal{T}_2):=[\mathcal{T}_1,\mathcal{T}_2] 
.\]
Note that $\Zie$ is freely generated by the series consisting of `stick trees',
\[
\textbf{E}_+^\ast \to \Zie , \qquad   \tH_{(I)} \mapsto [I]
.\]
Indeed, $\Zie$ is the free Lie algebra on the positive exponential species $\textbf{E}^\ast_+$, and so the species $\Zie$ is also given by
\[   
\Zie[I]
=   
\Lie \boldsymbol{\circ} \bE^\ast_+[I]= \bigoplus_{P}   \Lie[P]
\]
where $\textbf{Lie}$ is the species of the Lie operad, and the direct sum is over all partitions $P$ of $I$. 

The Lie algebra in species $\Zie$ is closely related to the Steinmann algebra from the physics literature. Precisely, the Steinmann algebra is an ordinary graded Lie algebra based on the structure map for the adjoint braid arrangement realization of $\Zie$, see \cite[Section III.1]{bros}, \cite[Section 6]{Ruelle}. The adjoint braid arrangement realization of $\Zie$ is the topic of \cite{lno2019}, and the fact that the Lie algebra there is indeed $\Zie$ was shown in \cite{norledge2019hopf}. 

Via the commutator bracket, $\Sig$ is a Lie algebra in species, given by
\[
\Sig \bigcdot \Sig \to \Sig,
\qquad
[\tH_F,\tH_G] =\tH_F \tH_G -\tH_G \tH_F
.\]
Recall the Lie subalgebra of primitive elements $\mathcal{P}(\Sig)\subset \Sig$, defined in \textcolor{blue}{(\refeqq{eq:prim})}. Since $\Sig$ is connected, this is a positive Lie algebra, given on nonempty $I$ by
\[
\mathcal{P}(\Sig)[I]
=\bigcap_{(S,T)}  \text{ker} 
\big(    
\Delta_{S,T} :\Sig[I]\to \Sig[S]\otimes \Sig[T]    
\big)
.\]
In particular, $\tQ_{(I)}\in \mathcal{P}(\Sig)[I]$ for $I$ nonempty. Since $\Zie$ is freely generated by stick trees, we can define a homomorphism of Lie algebras by 
\[\Zie\to \mathcal{P}(\Sig), \qquad [I]\mapsto \tQ_{(I)}.\] 
To describe this explicitly, given a tree $\mathcal{T}$, let $\text{antisym}(\mathcal{T})$ denote the set of $2^{l(F_{\mathcal{T}})-1}$ many trees which are obtained by switching left and right branches at nodes of $\mathcal{T}$. For $\mathcal{T}' \in \text{antisym}(\mathcal{T})$, let $(\mathcal{T}, \mathcal{T}')\in \bZ/2\bZ$ denote the parity of the number of node switches required to bring $\mathcal{T}$ to $\mathcal{T}'$. Then the homomorphism is given in full by
\[      
\textbf{Zie} \to \mathcal{P}(\Sig), \qquad \mathcal{T} \mapsto   \tQ_\mathcal{T}  := \sum_{\mathcal{T}' \in \text{antisym}(\mathcal{T})}  (-1)^{ (\mathcal{T},\mathcal{T}') }  \tQ_{F_{\mathcal{T}'}}
.\]
By \cite[Corollary 11.46]{aguiar2010monoidal}, this is an isomorphism. From now on, we make the identification
\[
\Zie= \mathcal{P}(\Sig)
\]
and retire the notation $\mathcal{P}(\Sig)$.



\subsection{Dynkin Elements}\label{adjoint}

\begin{figure}[t]
	\centering
	\includegraphics[scale=0.7]{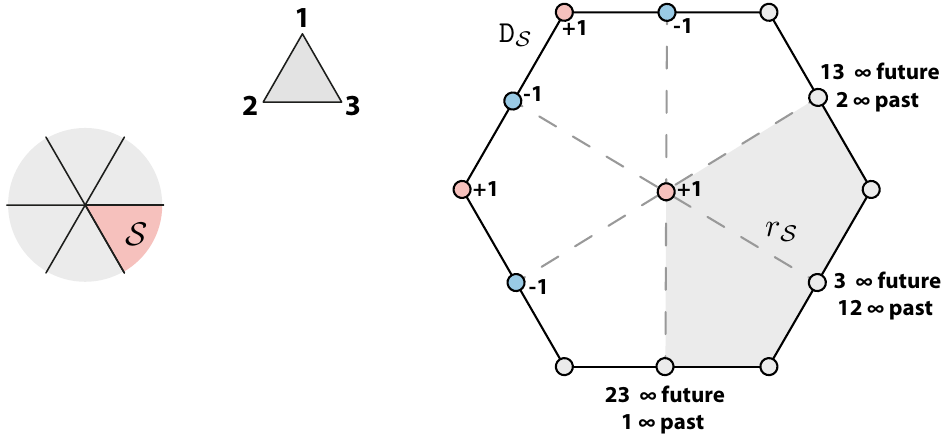}
	\caption{A cell $\cS$ over $\{1,2,3\}$ (on the adjoint braid arrangement) and its Dynkin element $\mathtt{D}_\cS$ (on the tropical geometric realization of $\boldsymbol{\Sigma}$, where the multiplication embeds facets and the comultiplication projects onto facets, see \cite[Introduction]{norledge2019hopf})). In the presence of causal factorization, the time component of the corresponding generalized retarded function $r_\cS$ is a $\bC[[\hbar, \formg]]$-valued generalized function on the braid arrangement with support the gray cone. The Dynkin element shown is $\mathtt{D}_\cS=\mathtt{D}_3=\mathtt{R}_{(12;3)}$. Its support consists of those configurations such that the event labeled by $3$ can be causally influenced by the events labeled by $1$ and $2$.}
	\label{fig:supp}
\end{figure}

Let 
\[
[I; \text{2}]:= \big\{    \text{surjective functions}\ I\to \{1,2 \}   \big\} 
\]
denote the set of compositions of $I$ with two lumps. Recall that the set of minuscule weights of (the root datum of) $\text{SL}_I(\bC)$ is in natural bijection with $[I; \text{2}]$. We denote the minuscule weight corresponding to $(S,T)$ by $\lambda_{ST}$. See \cite[Section 3.1]{norledge2019hopf} for more details.

A \emph{cell}\footnote{\ also known as maximal unbalanced families \cite{billera2012maximal} and positive sum systems \cite{MR3467341}} \cite[Definition 6]{epstein1976general} over $I$ is (equivalent to) a subset $\cS\subseteq [I; \text{2}]$ such that for all $(S,T)\in [I; \text{2}]$, exactly one of 
\[
(S,T)\in \cS \qquad \text{and} \qquad (T,S)\in \cS
\] 
is true, and whose corresponding set of minuscule weights is closed under conical combinations, that is
\[
\lambda_{UV}\in \text{coni}\big \la  \lambda_{ST} : (S,T)\in \cS \big \ra 
\quad \implies \quad 
(U,V)\in \cS
.\]
By dualizing conical spaces generated by minuscule weights, cells are in natural bijection with chambers of the adjoint of the braid arrangement, see \cite[Section 3.3]{norledge2019hopf}, \cite[Definition 2.5]{epstein2016}. Their number is sequence \href{https://oeis.org/A034997}{A034997} in the OEIS. We denote the species of formal $\Bbbk$-linear combinations of cells by $\textbf{L}^\vee$.

Associated to each composition $F$ of $I$ is the subset $\cF_F\subseteq [I; \text{2}]$ consisting of those compositions $(S,T)$ which are obtained by merging contiguous lumps of $F$, 
\[
\cF_F:=\big \{  (S,T)\in [I; \text{2}] : (S,T) \leq F\big \} 
.\]
More geometrically, $\cF_F$ is the subset corresponding to the set of minuscule weights which are contained in the closed braid arrangement face of $F$. Let us write $F\subseteq \cS$ to indicate that $\cF_F \subseteq \cS$.  

Consider the morphism of species given by
\begin{equation}     \label{eq:hbasisexp}
\textbf{L}^\vee\to \Sig
,\qquad 
\cS\mapsto  \mathtt{D}_\cS
:= 
-\sum_{\bar{F}\subseteq \cS} (-1)^{l(F)} \tH_{F}
.
\end{equation} 
The element $\mathtt{D}_\cS$ is called the \emph{Dynkin element} associated to the cell $\cS$. These special elements were defined by Epstein-Glaser-Stora in \cite[Equation 1, p.26]{epstein1976general}, and the name is due to \hbox{Aguiar-Mahajan} \cite[Equation 14.1]{aguiar2017topics} (see \autoref{Rem:dny}). In fact, $\mathtt{D}_\cS$ is a primitive element \cite[Proposition 14.1]{aguiar2017topics}, and so we actually have a morphism $\textbf{L}^\vee\to \Zie$.

For $i\in I$, let $\cS_i$ denote the cell given by
\[     \cS_i:=\big \{ (S,T)\in [I, \text{2}]: i\in S     \big \}  .    \]
This is the cell corresponding to the adjoint braid arrangement chamber which contains the projection of the basis element $e_i\in \bR I$ onto the sum-zero hyperplane. Let the \emph{total retarded} Dynkin element $\mathtt{D}_i$ associated to $i$ be given by
\[   
\mathtt{D}_i:= \mathtt{D}_{\cS_i}  =-\sum_{\substack{F\in \Sigma[I]\\ i\in S_k}} (-1)^{l(F)} \tH_F   
.\]
These Dynkin elements are considered in \cite[Section 14.5]{aguiar2013hopf}. For $i\in I$, let
\[     
\bar{\cS}_i
:=
\big \{ 
(S,T)\in [I, \text{2}]: i\in T     
\big \}  
.\]
This is the cell corresponding to the adjoint braid arrangement chamber which is opposite to the chamber of $\cS_i$. Let the \emph{total advanced} Dynkin element $\mathtt{D}_{\bar{i}}$ associated to $i$ be given by
\[   
\mathtt{D}_{\bar{i}}:= \mathtt{D}_{\bar{\cS}_i}=-\sum_{\substack{F\in \Sigma[I]\\ i\in S_1}} (-1)^{l(F)} \tH_F   
.\]

\begin{remark}\label{Rem:dny}
More generally, Dynkin elements are certain Zie elements of generic real hyperplane arrangements, which are indexed by chambers of the corresponding adjoint arrangement. They were introduced by Aguiar-Mahajan in \cite[Equation 14.1]{aguiar2017topics}. Specializing to the braid arrangement, one recovers the type $A$ Dynkin elements $\mathtt{D}_\cS$. 
\end{remark}

In \cite{norledge2019hopf},  the following perspective on the Dynkin elements is given. The Hopf algebra $\Sig^\ast$ which is dual to $\Sig$ is realized as an algebra $\hat{\Sig}^\ast$ of piecewise-constant functions on the braid arrangement. Then its dual, in the sense of polyhedral algebras \cite[Theorem 2.7]{MR1731815}, is an algebra $\check{\Sig}^\ast$ of certain functionals of piecewise-constant functions on the adjoint braid arrangement, i.e. those coming from evaluating on permutohedral cones. We have the morphism of species
\[   
\check{\Sig}^\ast\to  \cH(\textbf{L}^\vee, \textbf{E})      
\]
defined by sending functionals to their restrictions to piecewise-constant functions on the complement of the hyperplanes. Since the multiplication of $\check{\Sig}^\ast$ corresponds to embedding hyperplanes, this morphism is the indecomposable quotient of $\check{\Sig}^\ast$ \cite[Theorem 4.5]{norledge2019hopf}. Then, in \cite[Proposition 5.1]{norledge2019hopf}, we see that taking the linear dual of this morphism recovers the Dynkin elements map,
\[        
 \textbf{L}^\vee\to \Sig, \qquad \cS\mapsto  \mathtt{D}_\cS
.\] 
(Here we have identified $\Sig^\ast=  \check{\Sig}^\ast$.) Therefore we obtain the following.

\begin{thm}[$\! \! ${\cite{norledge2019hopf}}]
The morphism of species $\textbf{L}^\vee\to \Zie$ is surjective. Therefore the Dynkin elements $\{\mathtt{D}_\cS: \cS\ \text{is a cell over $I$} \}$ span $\Zie$.
\end{thm}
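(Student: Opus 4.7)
The plan is to reduce the surjectivity statement to the dual statement that the corresponding map on the polyhedral-algebra side is the indecomposable quotient, which is precisely what is established in the author's earlier paper \cite{norledge2019hopf} and summarized in the paragraph immediately preceding the theorem. So the task here is really just to assemble these ingredients and pass through linear duality.

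First, I would recall the general principle that for any connected bialgebra $\textbf{h}$ in species, the primitive part $\mathcal{P}(\textbf{h})$ is canonically isomorphic to the linear dual of the indecomposable quotient $\textbf{h}^\ast / (\textbf{h}^\ast_+ \bigcdot \textbf{h}^\ast_+)$ of the dual algebra $\textbf{h}^\ast$. Equivalently, linear duality interchanges surjections onto indecomposables with injections out of primitives. Applied to $\textbf{h} = \Sig$, this identifies the primitive part $\Zie = \mathcal{P}(\Sig) \subseteq \Sig$ with the linear dual of the indecomposable quotient of $\check{\Sig}^\ast$.

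Next, I would invoke \cite[Theorem 4.5]{norledge2019hopf}, which asserts that the restriction-to-complement morphism
\[
\check{\Sig}^\ast \longrightarrow \cH(\textbf{L}^\vee, \textbf{E})
\]
\emph{is} the indecomposable quotient of $\check{\Sig}^\ast$; in particular, it is surjective. Dualizing this surjection produces an injection $\cH(\textbf{L}^\vee, \textbf{E})^\ast \hookrightarrow (\check{\Sig}^\ast)^\ast = \Sig$ whose image, by the duality recalled above, is exactly $\Zie$. Finally, by \cite[Proposition 5.1]{norledge2019hopf} this dualized map is precisely the Dynkin elements morphism $\textbf{L}^\vee \to \Sig$, $\cS \mapsto \mathtt{D}_\cS$. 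Consequently the image of $\textbf{L}^\vee \to \Sig$ coincides with $\Zie$, proving that $\textbf{L}^\vee \to \Zie$ is surjective. The span assertion for $\{\mathtt{D}_\cS\}$ then follows immediately, since $\textbf{L}^\vee[I]$ is by definition the free vector space on the set of cells over $I$.

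The only real obstacle in this argument lies in the two cited results from \cite{norledge2019hopf}, both of which depend on the geometric realization of $\Sig^\ast$ and $\check{\Sig}^\ast$ as algebras of (functionals on) piecewise-constant functions on the braid and adjoint braid arrangements, together with the identification of algebra multiplication with hyperplane embedding; granting those, the present theorem is a purely formal consequence of Hopf-algebraic duality and requires no further computation.
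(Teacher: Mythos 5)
Your proposal follows essentially the same route as the paper, which gives no independent proof but assembles exactly the ingredients you list: the duality between the primitive part $\mathcal{P}(\Sig)=\Zie$ and the indecomposable quotient of $\check{\Sig}^\ast$, \cite[Theorem 4.5]{norledge2019hopf} identifying the restriction morphism $\check{\Sig}^\ast\to\cH(\textbf{L}^\vee,\textbf{E})$ as that indecomposable quotient, and \cite[Proposition 5.1]{norledge2019hopf} identifying its linear dual with the Dynkin elements map $\cS\mapsto\mathtt{D}_\cS$. So the architecture is correct and matches the source.

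One caveat in your bookkeeping: the parenthetical ``in particular, it is surjective,'' together with the ensuing claim that the dualized map is an \emph{injection} with image $\Zie$, cannot be literally right. If $\textbf{L}^\vee\to\Sig$ were injective with image $\Zie$, the Dynkin elements would be linearly independent, contradicting the Steinmann relations treated immediately after this theorem; already for $|I|=4$ one has $\dim\textbf{L}^\vee[I]=32>26=\dim\Zie[I]$. The statement that the restriction morphism ``is the indecomposable quotient'' should be read as saying that its \emph{kernel} is exactly the decomposable part, so that it factors as the quotient map followed by an injection into $\cH(\textbf{L}^\vee,\textbf{E})$; it is not surjective onto $\cH(\textbf{L}^\vee,\textbf{E})$. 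Dualizing, the image of $\textbf{L}^\vee\to\Sig$ is the annihilator of the decomposables, namely $\mathcal{P}(\Sig)=\Zie$, which is the surjectivity you need, while the kernel of the dual map is precisely the space of Steinmann relations. Your conclusion therefore stands, but the intermediate surjectivity/injectivity claims should be corrected to this ``kernel equals decomposables, hence dual image equals primitives'' form.
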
 
\subsection{The Steinmann Relations}\label{stein}
The Dynkin elements are not linearly independent. The relations which are satisfied by the Dynkin elements are generated by relations known in physics as the Steinmann relations, introduced in \cite{steinmann1960zusammenhang}, \cite{steinmann1960}. 

Let a pair of \emph{overlapping channels} over $I$ be a pair $(S,T),(U,V)\in [I; \text{2}]$ of two-lump compositions of $I$ such that 
\[     
S\cap U\neq \emptyset \qquad \text{and}  \qquad      T\cap U \neq \emptyset   
.\]
Let $\cS_1$, $\cS_2$, $\cS_3$, $\cS_4$ be four cells over $I$ with $(S,T),(U,V)\in \cS_1$, and such that $\cS_2$, $\cS_3$, $\cS_4$ are obtained from $\cS_1$ by replacing, respectively, 
\[        (S,T), (U,V)   \mapsto  (T,S), (U,V)         \]
\[        (S,T), (U,V)   \mapsto  (T,S), (V,U)         \]
\[        (S,T), (U,V)   \mapsto  (S,T), (V,U).        \]
Then, by inspecting the definition of the Dynkin elements \textcolor{blue}{(\refeqq{eq:hbasisexp})}, we see that\footnote{\ we go through the argument for the basic $4$-point case in \autoref{ex:stein}, which is sufficient to exhibit the general phenomenon}
\[    
\underbrace{\mathtt{D}_{\cS_1} -\mathtt{D}_{\cS_2}    +\mathtt{D}_{\cS_3}    -\mathtt{D}_{\cS_4}    =0.}_{\text{Steinmann relation}}         
\]
In general, a \emph{Steinmann relation} is any relation between Dynkin elements obtained in this way, i.e. an alternating sum of four Dynkin elements which are obtained from each other by switching overlapping channels only. This definition of the Steinmann relations can be found in \cite[Seciton 4.3]{epstein1976general} (it is given slightly more generally there for paracells). 

An alternative characterization of the Steinmann relations in terms of the Lie cobracket of the dual Lie coalgebra $\Zie^\ast$ is \cite[Definition 4.2]{lno2019}. Here, the Steinmann relations appear in the same way one can arrive at generalized permutohedra, i.e. by insisting on type $A$ `factorization' in the sense of species-theoretic coalgebra structure. See \cite[Theorem 4.2 and Remark 4.2]{norledge2019hopf}.


So, the Dynkin elements satisfy the Steinmann relations. Moreover, we have the following.

\begin{thm}
The relations which are satisfied by the Dynkin elements are generated by the Steinmann relations. That is, if 
\[
\textbf{Stein}[I]
:=
\big \la
\mathtt{D}_{\cS_1}-\mathtt{D}_{\cS_2}+\mathtt{D}_{\cS_3}-\mathtt{D}_{\cS_4}
:
\mathtt{D}_{\cS_1}-\mathtt{D}_{\cS_2}+\mathtt{D}_{\cS_3}-\mathtt{D}_{\cS_4}=0 \text{ is a Steinmann relation}
\big \ra\footnote{\ angled brackets denote $\Bbbk$-linear span} 
\] 
then
\[
\Zie\cong \bigslant{\textbf{L}^\vee}{\textbf{Stein}} 
.\] 
\end{thm}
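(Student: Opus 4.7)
The strategy is to identify $\ker\pi = \textbf{Stein}$, where $\pi:\textbf{L}^\vee \twoheadrightarrow \Zie$ is the already-established surjection $\cS \mapsto \mathtt{D}_\cS$. The containment $\textbf{Stein} \subseteq \ker\pi$ is the easy direction and can be checked by a direct expansion: given four cells $\cS_1,\dots,\cS_4$ differing only by switches of two overlapping channels $(S,T),(U,V)$, one writes each $\mathtt{D}_{\cS_j}$ using the defining formula \textcolor{blue}{(\refeqq{eq:hbasisexp})}. Any composition $F$ whose face $\cF_F$ does not force a choice between $(S,T)$ and $(T,S)$ or between $(U,V)$ and $(V,U)$ will contribute to at least two of the $\cS_j$'s with cancelling signs in the alternating sum $\mathtt{D}_{\cS_1}-\mathtt{D}_{\cS_2}+\mathtt{D}_{\cS_3}-\mathtt{D}_{\cS_4}$; the remaining compositions pair off by an orientation-reversal argument. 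The basic $4$-point case later presented in the paper exhibits this cancellation pattern in full. We therefore obtain an induced surjection $\bar\pi: \textbf{L}^\vee/\textbf{Stein} \twoheadrightarrow \Zie$.

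For the reverse inclusion $\ker\pi \subseteq \textbf{Stein}$, the plan is to pass to linear duals componentwise (each component is finite-dimensional, since cells over $I$ form a finite set). The dual species $(\textbf{L}^\vee)^\ast$ assigns to each $I$ the vector space of $\Bbbk$-valued functions on cells over $I$. Since $\pi$ is surjective, its dual $\pi^\ast:\Zie^\ast \hookrightarrow (\textbf{L}^\vee)^\ast$ is injective with image the annihilator $(\ker\pi)^\perp$. It therefore suffices to establish the equality $\text{im}(\pi^\ast) = \textbf{Stein}^\perp$; taking annihilators once more then yields $\ker\pi = \textbf{Stein}$, hence the desired isomorphism.

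The identification $\text{im}(\pi^\ast) = \textbf{Stein}^\perp$ is the geometric heart of the argument, and the plan is to import it from the adjoint braid arrangement realization of $\Zie^\ast$ developed in \cite{lno2019} and \cite{norledge2019hopf}. Under the spherical (Steinmann planet) realization of the adjoint braid arrangement, cells correspond bijectively to top-dimensional chambers, and $\Zie^\ast$ is realized as the species of piecewise-constant functions on these chambers subject to a local closure condition at every codimension-two flat, which expresses compatibility with the Lie cobracket viewed as an infinitesimal derivative across hyperplanes. By \cite[Theorem 4.2 and Remark 4.2]{norledge2019hopf}, the codimension-two flats arising from pairs of overlapping channels yield exactly the four-term alternating sums defining $\textbf{Stein}$, so the subspace of $(\textbf{L}^\vee)^\ast$ consisting of Steinmann-consistent functions is precisely the image of $\pi^\ast$.

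The main obstacle is this final identification; the easy direction and the duality bookkeeping are essentially formal, but recognizing that Steinmann-consistency on cells characterizes $\Zie^\ast$ requires the full adjoint braid arrangement geometry. Once that is in hand, dualizing converts the geometric statement into the algebraic statement $\ker\pi = \textbf{Stein}$ and completes the proof.
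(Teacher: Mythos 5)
Your proposal is correct and follows essentially the same route as the paper, whose proof is simply the citation ``combine \cite[Theorem 4.3]{lno2019} with \cite[Theorems 4.2 and 4.5]{norledge2019hopf}'': the easy containment $\textbf{Stein}\subseteq\ker\pi$ is the direct cancellation argument already carried out in the surrounding text, and the reverse containment is obtained exactly as you describe, by dualizing componentwise and identifying $\operatorname{im}(\pi^\ast)$ with the Steinmann-consistent functions on adjoint chambers via the geometric realization of $\Zie^\ast$ in those references. Your write-up merely makes explicit the finite-dimensional annihilator bookkeeping that the paper leaves implicit.
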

\begin{proof}
This follows by combining \cite[Theorem 4.3]{lno2019} with \cite[Theorems 4.2 and 4.5]{norledge2019hopf}. 
\end{proof}

\begin{ex}\label{ex:stein}
Let us give the basic $4$-point example $I=\{1,2,3,4\}$, which takes place on a square facet of the type $A$ coroot solid \cite[Figure 1]{lno2019}. Consider the following four cells over $I$ (we have marked where they differ, the names `$s$-channel' and `$u$-channel' are from physics and refer to Mandelstam variables),
\[   
\cS_1=\big\{\underbrace{(23,14)}_{u\text{-channel}}, (12,34), (1,234), (13,24), (13,24), (134,2), (3,124)  \}    
\]
\[ 
\cS_2=\big\{(23,14), \underbrace{(34,12)}_{s\text{-channel}}, (1,234), (13,24), (13,24), (134,2), (3,124)  \big \} 
\] 
\[ 
\cS_3=\big\{\underbrace{(14,23)}_{u\text{-channel}}, (34,12), (1,234), (13,24), (13,24), (134,2), (3,124)  \big \} 
\]
\[ 
\cS_4=\big\{(14,23), \underbrace{(12,34)}_{s\text{-channel}}, (1,234), (13,24), (13,24), (134,2), (3,124)  \big \}
.\] 
The $s$-channel and the $u$-channel overlap, and so we should now have
\[    
\mathtt{D}_{\cS_1}-\mathtt{D}_{\cS_2}+\mathtt{D}_{\cS_3}-\mathtt{D}_{\cS_4}=0 
.\]    
To see this, let us assume throughout that $\tH_{F}$ appears in the $\tH$-basis expansion \textcolor{blue}{(\refeqq{eq:hbasisexp})} of $\mathtt{D}_{\cS_1}$, i.e. $\bar{F} \subseteq \cS_1$. Then we have
\begin{equation*}
\bar{F} \subseteq \cS_1 \setminus \{  (12,34), (23,14)  \}   \quad \implies \quad      \bar{F}\subseteq\cS_1, \   \cS_2,\ \cS_3,\ \cS_4.  \tag{$\spadesuit$} \label{1}
\end{equation*}
If $\bar{F} \nsubseteq \cS_1 \setminus \{  (12,34), (23,14)  \}$, then either $(12,34)\in \bar{F}$ or $(23,14)\in \bar{F}$ but not both, since the channels overlap. We then have
\begin{equation*}
(12,34)\in \bar{F}   \implies    \bar{F}\subseteq\cS_1, \ \bar{F}\nsubseteq\cS_2, \ \bar{F}\nsubseteq\cS_3, \   \bar{F}\subseteq\cS_4.    \tag{$\varheart$}  \label{2}
\end{equation*}
We also have
\begin{equation*}
(23,14)\in \bar{F} \implies      \bar{F}\subseteq\cS_1, \ \bar{F}\subseteq\cS_2, \ \bar{F}\nsubseteq\cS_3, \   \bar{F}\nsubseteq\cS_4   .              \tag{$\vardiamond$}  \label{3}
\end{equation*}
Notice that in all three cases  \textcolor{blue}{(\refeqq{1})}, \textcolor{blue}{(\refeqq{2})}, \textcolor{blue}{(\refeqq{3})}, the prefactors of $\tH_{F}$ sum to zero in the four term alternating sum of the Steinmann relation. 
\end{ex}

\begin{remark}
Ocneanu \cite{oc17} and Early \cite{early2019planar} have studied an affine version of the Steinmann condition, in the context of higher structures and matroid subdivisions. Here, one observes that the (translated) hyperplanes of the adjoint braid arrangement for the Mandelstam variables give three subdivisions of the hypersimplex $\Delta(2,4)$ (octahedron). 
\begin{figure}[H]
\centering
\includegraphics[scale=0.5]{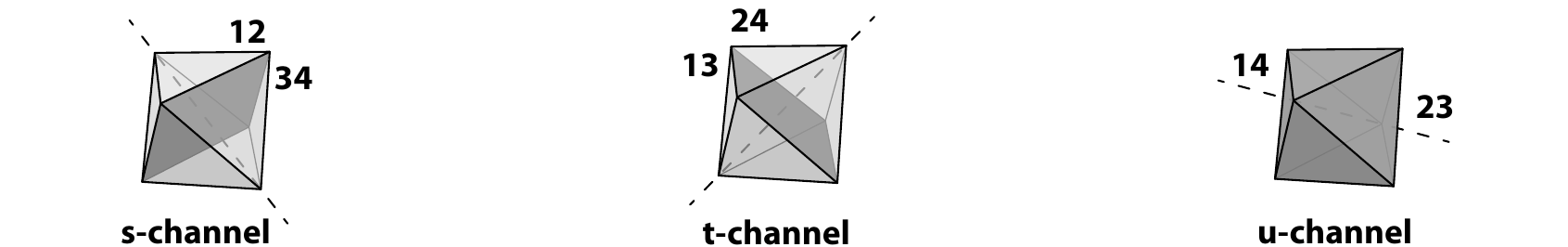}	
\label{fig:affine}
\end{figure}
\noindent See \cite{borges2019generalized}, \cite{cachazo2019planar} for the closely related study of generalized Feynman diagrams in generalized biadjoint $\Phi^3$-theory.
\end{remark}

\subsection{The Tits Algebra}\label{sec:Tits}
The Hopf algebra $\Sig$ is also a monoid for the Hadamard product, called the Tits algebra (also called the (opposite of the) algebra of proper sequences in \cite{epstein1976general}), given by
\[     
\Sig\otimes\Sig \to \Sig
,\qquad   
\tH_F\otimes \tH_G \mapsto  \tH_F\triangleright \tH_G :=   \mu_F \big (  \Delta_F(  \tH_G  ) \big )   
.\]
Explicitly, given compositions $F=(S_1, \dots , S_{k_F})$ and $G=(T_1,\dots, T_{k_G})$ of $I$, we have
\[  
\mu_F \big (  \Delta_F(  \tH_G  ) \big )=\tH_{(  T_1\cap S_1, \dots, T_{k_G}\cap S_1, \dots \dots, T_{1}\cap S_{k_F}, \dots ,   T_{k_G}\cap S_{k_F}        )_+}     
.\]
The product $\tH_F\triangleright \tH_G$ is called the \emph{Tits product}, going back to Tits \cite{Tits74}. Its unit is $\tH_{(I)}$. See \cite[Section 13]{aguiar2013hopf} for more on the structure of the Tits product. See also \cite[Section 1.4.6]{brown08} for the context of other Coxeter systems and Dynkin types. 


Let us recall \cite[Theorem 82 and Proposition 88]{aguiar2013hopf}, which describe the special role of $\Sig$. If $\textbf{h}$ is a connected cocommutative Hopf algebra, then we have the (Tits algebra) $\Sig$-module $\textbf{h}=(\textbf{h},\Psi)$ where the action is taking Hopf powers,
\[
\Psi:\Sig \otimes \textbf{h} \to \textbf{h}
,\qquad
\tH_F \otimes \mathtt{h}\mapsto   \tH_F \triangleright \mathtt{h}:=   \mu_F(  \Delta_F(\mathtt{h}))
.\] 
The curried action
\[
\wh{\Psi}:\Sig \to  \CMcal{E}(\textbf{h})
\]
is a homomorphism of monoids for the Hadamard product. It is also a homomorphism of algebras (i.e. monoids for the Cauchy product), where $\CMcal{E}(\textbf{h})$ is equipped with the convolution algebra structure \textcolor{blue}{(\refeqq{eq:curlyHalg})}. In fact, if $\textbf{h}$ is finite-dimensional, then $\CMcal{E}(\textbf{h})$ is naturally a Hopf algebra, and $\wh{\Psi}$ is a homomorphism of Hopf algebras. 

\begin{prop}[$\! \! ${\cite[Lemma 8]{epstein1976general}}]
Given a cell $\cS=\big \{  (S_1,T_1),\dots, (S_k,T_k)  \big \}$ over $I$, we have  
\[      
\mathtt{D}_{\cS}=   \underbrace{(\tH_{(  I )}-   \tH_{(  T_1,S_1 )}) \triangleright\,   \cdots\,   \triangleright     (\tH_{(  I )}-   \tH_{(  T_k,S_k )} )}_{\text{these elements commute in the Tits product}}
.\]
\end{prop}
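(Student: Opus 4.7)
The approach is to expand the right-hand side via distributivity of the Tits product and match the resulting alternating sum against the explicit $\tH$-basis formula $\mathtt{D}_\cS = \sum_{\bar F\subseteq\cS}(-1)^{l(F)+1}\tH_F$. Setting $a_i := \tH_{(T_i,S_i)}$ and expanding,
$$\prod_{i=1}^{k}(\tH_{(I)}-a_i) \;=\; \sum_{A\subseteq [k]}(-1)^{|A|}\,\tH_{G_A},$$
where $G_A$ is the iterated Tits product $a_{i_1}\triangleright\cdots\triangleright a_{i_m}$ of the factors indexed by $A = \{i_1,\dots,i_m\}$ in the chosen order. The first step I would carry out is to extract from the Tits product formula a direct combinatorial description of $G_A$: assign to each $x\in I$ a profile $c(x)\in\{T,S\}^A$ with $c_i(x)=T$ if $x\in T_i$ and $c_i(x)=S$ if $x\in S_i$; then the lumps of $G_A$ are the nonempty fibres of $c$, listed in the lexicographic order on profiles induced by the chosen order on $A$ (with convention $T\prec S$). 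In particular the underlying unordered partition of $I$ is the common refinement of $\{T_i,S_i\}_{i\in A}$, which is itself independent of the chosen order.

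The core step is matching coefficients. For a fixed composition $F$ of $I$, the subsets contributing to $\tH_F$ are those $A$ for which the profile map induces $F$'s unordered partition and the lex-order on profiles reproduces the lump ordering of $F$. Writing $\cS^{\mathrm{op}} := \{(T_i,S_i)\}$, the condition $\bar F\subseteq\cS$ is equivalent to $\cF_F\subseteq\cS^{\mathrm{op}}$, in which case every contiguous $2$-lump merging of $F$ is available inside $\cS^{\mathrm{op}}$, and an inclusion--exclusion over the poset of "separating sub-families" evaluates $\sum_A(-1)^{|A|}$ to $(-1)^{l(F)+1}$. When $\bar F\not\subseteq\cS$, some required $2$-lump merging is missing from $\cS^{\mathrm{op}}$, no valid $A$ exists, and the coefficient vanishes. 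This reproduces the $\tH$-expansion of $\mathtt{D}_\cS$ term by term.

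The "commutativity" clause of the proposition must be read as order-independence of the full alternating sum, not pairwise commutativity of the $a_i$'s: a direct $|I|=3$ computation shows that $a_i\triangleright a_j$ and $a_j\triangleright a_i$ can disagree. Order-independence can be proved by a bijective cancellation: when re-ordering two adjacent factors changes the lump ordering of $G_A$, the offending term $(-1)^{|A|}\tH_{G_A}$ pairs with the term indexed by an enlargement $A\cup\{j\}$ on which the extra factor $a_j$ acts trivially after the rearrangement (because $(T_j,S_j)\leq G_A$ forces $G_{A\cup\{j\}}=G_A$), producing an equal and opposite contribution. Conceptually cleaner is the route through \cite{norledge2019hopf}: both sides represent the same element under the polyhedral-algebra duality between $\check\Sig^\ast$ and piecewise-constant functions on the adjoint braid arrangement, where each factor $\tH_{(I)}-\tH_{(T_i,S_i)}$ corresponds to a closed half-space, the Tits product to intersection, and both sides to the indicator of the chamber $\cS$, rendering commutativity geometrically self-evident.

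The main obstacle is the rigorous verification of order-independence, i.e.\ the commutativity assertion itself; the bijective cancellation sketched above requires careful bookkeeping of which enlargements $A\cup\{j\}$ match which offending pairs of orderings. I expect the geometric route through the adjoint braid arrangement of \cite{norledge2019hopf} to be the path of least resistance, since it subsumes both order-independence and the coefficient identification into a single statement about indicators of chambers.
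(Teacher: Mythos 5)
The paper does not prove this proposition---it is imported verbatim from Epstein--Glaser--Stora---so your argument has to stand on its own. Your setup is sound: the expansion $\sum_{A}(-1)^{|A|}\tH_{G_A}$, the profile/lexicographic description of the iterated Tits product $G_A$, and in particular your observation that the factors need not commute pairwise are all correct. For the cell $\cS=\{(3,12),(13,2),(23,1)\}$ over $\{1,2,3\}$ one has $\tH_{(2,13)}\triangleright\tH_{(1,23)}=\tH_{(2,1,3)}$ but $\tH_{(1,23)}\triangleright\tH_{(2,13)}=\tH_{(1,2,3)}$, so the parenthetical in the statement really must be read as order-independence of the whole alternating product, which is a nontrivial assertion in its own right.

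The gap is in the coefficient matching, which is the entire content of the proof. Your claim that when $\bar F\not\subseteq\cS$ ``no valid $A$ exists'' is false. Take $I=\{1,2,3,4\}$ and the cell $\cS=\{(1,234),(2,134),(12,34),(13,24),(14,23),(123,4),(124,3)\}$ (realized by the point $(3,1,-2,-2)$). For $A=\{(24,13),(23,14)\}$, in that order, $G_A=\tH_{(24,13)}\triangleright\tH_{(23,14)}=\tH_{(2,4,3,1)}$, yet $\overline{(2,4,3,1)}=(1,3,4,2)$ has the two-lump coarsening $(134,2)\notin\cS$. The coefficient of $\tH_{(2,4,3,1)}$ does vanish, but by cancellation, not emptiness: the contributing subsets are exactly those containing $(24,13)$ and $(23,14)$, excluding $(134,2)$ and $(34,12)$, and arbitrary on the remaining three factors, so the signed sum is $(1-1)^{3}=0$. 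In general the contributing $A$ are cut out by order-dependent ``first consistent separator'' conditions, one for each pair of lumps of $F$, and evaluating $\sum_A(-1)^{|A|}$ to $(-1)^{l(F)+1}$ or to $0$ is a genuinely nontrivial signed enumeration that the phrase ``inclusion--exclusion over the poset of separating sub-families'' does not carry out. Your proposed geometric shortcut cannot be invoked in the form stated either: if each factor were the indicator of a closed half-space and $\triangleright$ were pointwise multiplication, the factors would commute pairwise, which you have just shown they do not; the duality of \cite{norledge2019hopf} concerns $\Zie$ and its realization over the adjoint arrangement, not an algebra map from $(\Sig,\triangleright)$ onto functions under pointwise product. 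A viable repair is an induction on the number of factors against the defining alternating sum for $\mathtt{D}_\cS$, keeping track at each stage of exactly which $\tH_F$ survive and with what sign; as it stands, both the coefficient identification and the order-independence remain unproved.
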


\begin{prop}[$\! \! ${\cite[Theorem 7]{epstein1976general}}] \label{actondynkinwithtits}
Let $\cS$ be a cell over $I$. Then for all $(S,T)\in \cS$, we have
\[      
\mathtt{D}_{\cS}\triangleright \tH_{(T,S)}      =0       
.\]
\end{prop}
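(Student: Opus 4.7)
The plan is to reduce the statement to a one-line computation by invoking the factorization of $\mathtt{D}_\cS$ given in the previous proposition. Since $(S,T) \in \cS$, we have $(S,T) = (S_j, T_j)$ for some index $j$, so among the commuting factors
\[
\mathtt{D}_\cS = (\tH_{(I)} - \tH_{(T_1,S_1)}) \triangleright \cdots \triangleright (\tH_{(I)} - \tH_{(T_k,S_k)})
\]
the $j$-th factor is precisely $\tH_{(I)} - \tH_{(T,S)}$. Using that these factors commute under the Tits product, I would bring the $j$-th factor to the right, and then, by associativity of $\triangleright$, reduce the claim to showing
\[
(\tH_{(I)} - \tH_{(T,S)}) \triangleright \tH_{(T,S)} = 0.
\]

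For this, I would apply the explicit formula for the Tits product recalled just before the proposition. On the one hand, $\tH_{(I)}$ is the unit, so $\tH_{(I)} \triangleright \tH_{(T,S)} = \tH_{(T,S)}$. On the other hand, taking $F = G = (T,S)$, the formula gives
\[
\tH_{(T,S)} \triangleright \tH_{(T,S)} = \tH_{(T \cap T, \, S \cap T, \, T \cap S, \, S \cap S)_+} = \tH_{(T,\emptyset,\emptyset,S)_+} = \tH_{(T,S)},
\]
using that $S \cap T = \emptyset$ and discarding empty lumps. Hence the difference vanishes, and the remaining product of the other factors $\triangleright 0$ is $0$.

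The argument contains no real obstacle, as it is essentially a direct consequence of the factorization in the previous proposition together with the well-known idempotency $\tH_F \triangleright \tH_F = \tH_F$ of the Tits product (a reflection of the band structure on $\Sigma[I]$). The only delicate point is making explicit that the commutativity of the $k$ factors is compatible with associativity of $\triangleright$ when appending the element $\tH_{(T,S)}$ on the right; this is automatic because $(\Sig[I], \triangleright)$ is an associative monoid, so one may unambiguously write the product in any order.
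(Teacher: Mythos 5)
Your argument is correct. The paper itself offers no proof of this proposition---it is stated with a bare citation to \cite[Theorem 7]{epstein1976general}---so there is no internal argument to compare against; but your derivation is complete and uses only material available in \autoref{sec:Tits}. The three ingredients---bilinearity and associativity of $\triangleright$ (the Tits product makes $\Sig[I]$ an associative monoid), commutativity of the $k$ factors (asserted in the statement of the factorization $\mathtt{D}_{\cS}= (\tH_{(I)}-\tH_{(T_1,S_1)}) \triangleright \cdots \triangleright (\tH_{(I)}-\tH_{(T_k,S_k)})$), and the computation $(\tH_{(I)}-\tH_{(T,S)})\triangleright \tH_{(T,S)}=\tH_{(T,S)}-\tH_{(T,S)}=0$ from unitality of $\tH_{(I)}$ and idempotency---all check out, and your evaluation $(T\cap T,\, S\cap T,\, T\cap S,\, S\cap S)_+=(T,S)$ agrees with the paper's explicit formula for $\mu_F\circ\Delta_F$. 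The one caveat worth recording is that your proof is conditional on the factorization of $\mathtt{D}_\cS$, which the paper likewise only cites (as \cite[Lemma 8]{epstein1976general}); a fully self-contained alternative would work directly from the $\tH$-basis expansion $\mathtt{D}_\cS=-\sum_{\bar F\subseteq \cS}(-1)^{l(F)}\tH_F$ and exhibit a sign-reversing involution on the compositions contributing to $\mathtt{D}_\cS\triangleright\tH_{(T,S)}$, which is presumably closer to the original Epstein--Glaser--Stora argument. Within the logical structure of this paper, however, your reduction is the natural one.
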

\subsection{Ruelle's Identity and the GLZ Relation} \label{sec:Ruelle's Identity and the GLZ Relation}
Since the Dynkin elements span $\Zie$, we can ask what is the description of the Lie bracket of $\Zie$ in terms of the Dynkin elements. The answer is known in the physics literature as Ruelle's identity.

In order to state Ruelle's identity, we need to notice the following. For $S\sqcup T=I$, if $\cS_1$ is a cell over $S$ and $\cS_2$ is a cell over $T$, then $\cS_1 \sqcup \cS_2$ describes a collection of codimension one faces of the adjoint braid arrangement which are supported by the hyperplane orthogonal to $\lambda_{ST}$ (in \cite{lno2019}, such faces were called \emph{Steinmann equivalent}). A cell $\cS^{[S,T]}$ over $I$ which satisfies
\[
\cS^{[S,T]} \supseteq \cS_1\sqcup\cS_2
\qquad \text{and} \qquad
(S,T)\in \cS^{[S,T]}
\]
corresponds to a chamber arrived at by moving (by an arbitrarily small amount) from an interior point of a face of $\cS_1 \sqcup \cS_2$ in the $\lambda_{ST}$ direction. In particular, such cells always exist, but they are not unique (the Steinmann relations exactly quotient out this ambiguity). The chamber obtained by moving in the opposite direction corresponds to the cell obtained by replacing $(S,T)$ with $(T,S)$ in $\cS^{[S,T]}$.

\begin{prop}[Ruelle's Identity {\cite[Equation 6.6]{Ruelle}}] \label{prop:ruelle}
For $S\sqcup T=I$, let $\cS_1$ be a cell over $S$ and let $\cS_2$ be a cell over $T$. Let $\cS^{[S,T]}$ be a cell over $I$ which satisfies 
\[
\cS^{[S,T]} \supseteq \cS_1\sqcup\cS_2
\qquad \text{and} \qquad
(S,T)\in \cS^{[S,T]}
.\] 
Let $\cS^{[T,S]}$ denote the cell obtained by replacing $(S,T)$ with $(T,S)$ in $\cS^{[S,T]}$. Then the Lie bracket of $\Zie$ is given by
\begin{equation}\label{eq:ruelleiden}
[\mathtt{D}_{\cS_1},\mathtt{D}_{\cS_2}]   
=   
\mathtt{D}_{\cS^{[S,T]}}-\mathtt{D}_{\cS^{[T,S]}}
.
\end{equation} 
\end{prop}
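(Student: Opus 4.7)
My approach is to verify Ruelle's identity by direct expansion of both sides in the $\tH$-basis, using the formula \textcolor{blue}{(\refeqq{eq:hbasisexp})} for Dynkin elements. Since $\mathtt{D}_{\cS_1} \in \Sig[S]$ and $\mathtt{D}_{\cS_2} \in \Sig[T]$, the multiplication of $\Sig$ concatenates compositions, giving
\[
\mathtt{D}_{\cS_1}\mathtt{D}_{\cS_2} \;=\; \sum_{\substack{\bar F_1 \subseteq \cS_1 \\ \bar F_2 \subseteq \cS_2}} (-1)^{l(F_1)+l(F_2)}\, \tH_{F_1 F_2},
\]
so that the commutator $[\mathtt{D}_{\cS_1},\mathtt{D}_{\cS_2}]$ is $\tH$-supported on compositions $F$ of $I$ with either $(S,T)\leq F$ (from $\mathtt{D}_{\cS_1}\mathtt{D}_{\cS_2}$, with overall sign $+$) or $(T,S)\leq F$ (from $\mathtt{D}_{\cS_2}\mathtt{D}_{\cS_1}$, with overall sign $-$).

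Next I would unpack the right-hand side of \textcolor{blue}{(\refeqq{eq:ruelleiden})}. Since $\cS^{[S,T]}$ and $\cS^{[T,S]}$ differ only in which of $(S,T)$ and $(T,S)$ they contain, the Dynkin expansions cancel on all compositions $F$ whose face $\cF_{\bar F}$ fails to contain the distinguished pair; what survives in $\mathtt{D}_{\cS^{[S,T]}} - \mathtt{D}_{\cS^{[T,S]}}$ are exactly those $F$ with $\cF_{\bar F}$ containing one of $(S,T)$ or $(T,S)$, which is the same index set (and with the same signs) as the commutator above. The identity then reduces to the term-by-term matching: for each composition $F$ of $I$ with $(T,S)\leq \bar F$, the condition $\bar F_1 \subseteq \cS_1$ and $\bar F_2 \subseteq \cS_2$ (from the first product) is equivalent to $\bar F \subseteq \cS^{[T,S]}$, and analogously for the other half.

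The key combinatorial ingredient is the assumption $\cS^{[S,T]} \supseteq \cS_1 \sqcup \cS_2$, understood as saying that the restrictions of $\cS^{[S,T]}$ to 2-compositions of $I$ whose cut lies entirely within $S$ (respectively $T$) are governed by $\cS_1$ (respectively $\cS_2$). Geometrically this is clear: the chambers $\cS^{[S,T]}, \cS^{[T,S]}$ are separated by the hyperplane $\{\lambda_{ST}=0\}$, and their common face factors through the adjoint braid arrangements of $S$ and $T$. Combinatorially, however, I would need to verify that any cut of $\bar F$ different from the distinguished $(T,S)$ cut sits in this factorized picture, so that $\bar F \subseteq \cS^{[T,S]}$ exactly decomposes into $\bar F_1 \subseteq \cS_1$ and $\bar F_2 \subseteq \cS_2$. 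The Steinmann relations ensure this correspondence is independent of the choice of $\cS^{[S,T]}$.

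I expect the main obstacle to be spelling out this combinatorial dictionary precisely. Once it is established, the proof reduces to straightforward sign bookkeeping, noting $l(F_1)+l(F_2) = l(F)$ for $F=F_1F_2$, and observing that the factor of $-1$ distinguishing the $(T,S)$-branch of the commutator from the $(S,T)$-branch matches the sign difference between $-\mathtt{D}_{\cS^{[T,S]}}$ and $+\mathtt{D}_{\cS^{[S,T]}}$ on the right-hand side. As an aesthetic alternative, one might try to use \autoref{actondynkinwithtits} together with the Tits-product homomorphism $\widehat\Psi$ to bypass the direct computation, but I suspect the $\tH$-basis route is the most transparent.
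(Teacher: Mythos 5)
Your proposal is correct in outline, but it takes the route the paper only gestures at rather than the one it actually uses: the paper's proof is essentially a citation, deferring to the adjoint braid arrangement realization of $\Zie$ in \cite{lno2019} (where the bracket is \emph{defined} by the right-hand side of \textcolor{blue}{(\refeqq{eq:ruelleiden})} and shown to agree with the commutator), and it merely remarks that ``alternatively, we can just explicitly check, as in \cite[Section 4.3]{epstein1976general}.'' Your $\tH$-basis computation is precisely that explicit check. Your bookkeeping is right: $\mu_{S,T}(\mathtt{D}_{\cS_1}\otimes\mathtt{D}_{\cS_2})$ carries coefficient $+(-1)^{l(F)}$ on $F=F_1F_2$ (the two leading minus signs cancel), terms of $\mathtt{D}_{\cS^{[S,T]}}-\mathtt{D}_{\cS^{[T,S]}}$ indexed by $F$ with neither $(S,T)$ nor $(T,S)$ below $\bar F$ cancel because the two cells agree off the distinguished pair, and since $(T,S)\in\cS^{[T,S]}$ forces $\bar F\not\subseteq\cS^{[S,T]}$ whenever $(T,S)\leq\bar F$, the $+\mathtt{D}_{\cS_1}\mathtt{D}_{\cS_2}$ branch pairs with $-\mathtt{D}_{\cS^{[T,S]}}$ and the $-\mathtt{D}_{\cS_2}\mathtt{D}_{\cS_1}$ branch with $+\mathtt{D}_{\cS^{[S,T]}}$, with matching signs. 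What each approach buys: the paper's citation inherits well-definedness (independence of the choice of $\cS^{[S,T]}$, i.e.\ the Steinmann relations) for free from the geometric model, whereas your computation is self-contained and makes the sign structure transparent.

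The one point you must still nail down --- and you correctly flag it --- is the equivalence, for $F_1\in\Sigma[S]$ and $F_2\in\Sigma[T]$,
\[
\bar F_1\subseteq\cS_1 \ \text{and}\ \bar F_2\subseteq\cS_2
\quad\Longleftrightarrow\quad
\overline{F_1F_2}\subseteq\cS^{[T,S]}.
\]
The cuts of $\overline{F_1F_2}=\bar F_2\bar F_1$ are $(T,S)$ itself (in $\cS^{[T,S]}$ by construction) together with the one-sided pairs $(U_2,\,V_2\sqcup S)$ for $(U_2,V_2)\leq\bar F_2$ and $(T\sqcup U_1,\,V_1)$ for $(U_1,V_1)\leq\bar F_1$. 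The forward implication follows once you take $\cS_1\sqcup\cS_2$ to contain all such one-sided pairs with $(U_2,V_2)\in\cS_2$, resp.\ $(U_1,V_1)\in\cS_1$ (this is the content of the containment hypothesis). For the converse you need one extra ingredient you did not mention: if $(U_2,V_2)\notin\cS_2$ then $(V_2,U_2)\in\cS_2$, hence $(V_2\sqcup S,\,U_2)\in\cS_1\sqcup\cS_2\subseteq\cS^{[T,S]}$, and the defining property of a cell (exactly one of each opposite pair) then excludes $(U_2,V_2\sqcup S)$ from $\cS^{[T,S]}$. With that observation added, your argument closes completely.
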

\begin{proof}
This result is clear from \cite[Section 5.2]{lno2019}; the Lie bracket which was given to the adjoint braid arrangement realization of $\textbf{Z}\textbf{ie}$ (denoted there by $\Gam$) coincides with \textcolor{blue}{(\refeqq{eq:ruelleiden})}. Alternatively, we can just explicitly check, as in \cite[Section 4.3]{epstein1976general}. 
\end{proof}

If a system of interacting generalized time-ordered products is constructed as in \cite{steinbook71}, \cite{dutfredretard04}, \cite{dutsch2019perturbative} via total retarded products, then one includes the GLZ relation \cite[Equation 11]{GLZ1957}, \cite[Proposition 1.10.1]{dutsch2019perturbative}, which is a consequence of the following relation satisfied by the Dynkin elements $\mathtt{D}_i=\mathtt{R}_{(X;i)}$,
\[     
\mathtt{D}_{i_1}- \mathtt{D}_{i_2}=   \sum_{\substack{(S,T)\in [\text{2};I]\\[1pt] i_1\in S,\ i_2\in T }}[\mathtt{D}_{i_1}  , \mathtt{D}_{i_2}  ]_{S,T} 
.\]

\section{$\Sig$ as a Hopf $\textbf{E}$-Algebra} 

We now recall the Steinmann arrows, which are (or we interpret as) actions of $\textbf{E}$ on $\Sig$. We show that they give $\Sig$ the structure of a Hopf $\textbf{E}$-algebra (=Hopf monoid internal to $\textbf{E}$-modules) in two ways, and thus the primitive part $\Zie=\mathcal{P}(\Sig)$ the structure of a Lie $\textbf{E}$-algebra in two ways. 
\subsection{Derivations and Coderivations of $\Sig$} 

Recall the definition of (not necessarily commutative) up (co)derivations from \autoref{sec:Modules in Species}, or see \cite[Section 8.12.4]{aguiar2010monoidal}. In particular, an up derivation $u$ of $\Sig$ is a morphism of species
\[
u:\Sig\to \Sig',
\qquad 
\tH_F\mapsto u(\tH_F) 
\qquad \quad \text{such that} \qquad 
u(\tH_F \tH_G )=
u(\tH_F)\tH_G+\tH_F u(\tH_G)
.\] 
An up derivation of $\Sig$ is determined by its values on the elements $\tH_{(I)}$, $I\in \sfS$, since then
\[
u(\tH_F)= u(\tH_{(S_1)})\tH_{(S_2)}\dots \tH_{(S_k)} 
+\ \,  \cdots\ \,  +
\tH_{(S_1)} \dots \tH_{(S_{k-1})}   u(\tH_{(S_k)}) 
.\] 
An up derivation must have $u(\tH_{\emptyset})=0$, since $\mathtt{1}_{\Sig}= \tH_{\emptyset}$. An up coderivation $u$ of $\Sig$ is a morphism of species
\[
u:\Sig\to \Sig', \qquad \tH_F\mapsto u(\tH_F) 
\qquad \quad \text{such that} \qquad
\Delta_{\ast S,T} \big(  u( \tH_F )\big)=
u(\tH_{F|_S}) \otimes \tH_{F|_T}
.\]
In particular, an up coderivation must have
\[
\Delta_{\ast S,T}\big(u( \tH_{(I)} )\big)=
u(\tH_{(S)}) \otimes \tH_{(T)}  
.\]
Therefore, a biderivation $u$ of $\Sig$ must have
\[
u(\tH_{(i)})=  a_1 \tH_{(\ast,i)}+a_2 \tH_{(\ast i)}+a_3 \tH_{(i, \ast)} 
\qquad \quad \text{where} \qquad
a_1+a_2+a_3=0\in \Bbbk
.\]
Motivated by this, given $a,b\in \Bbbk$, we define an up derivation $u_{a,b}$ of $\Sig$ by
\begin{equation}\label{eq:defbider}
u_{a,b}:\Sig\to \Sig'
,\qquad 
u_{a,b}(\tH_{(I)}):
= -a \tH_{(\ast,I)}+(a+b) \tH_{(\ast I)}-b \tH_{(I, \ast)}
.
\end{equation}
Towards an explicit description, consider the following example for $I=\{1,2,3\}$,
\begin{align*}
u_{a,b}(\tH_{(12,3)})
&
=\, u_{a,b}(\tH_{(12)})\tH_{(3)}+ \tH_{(12)}u_{a,b}(\tH_{(3)})\\
&
=(-a\tH_{(\ast, 12)}+(a+b)\tH_{(\ast 12)}-b\tH_{(12,\ast)})\tH_{(3)}+
\tH_{(12)}(-a\tH_{(\ast, 3)}+(a+b)\tH_{(\ast 3)}-b\tH_{(3,\ast)})\\
&
=-a\tH_{(\ast, 12,3)}+(a+b)\tH_{(\ast 12,3)}-b\tH_{(12,\ast,3)})
-a\tH_{(12,\ast, 3)}+(a+b)\tH_{(12,\ast 3)}-b\tH_{(12,3,\ast)}
.
\end{align*}
From this, we see that in general
\[
u_{a,b}(\tH_F)
= 
\sum_{1\leq m\leq k}
-a\mathtt{H}_{(S_1,\dots   ,\ast, S_m,\dots,S_k)}
+(a+b)\mathtt{H}_{(S_1,\dots, \ast S_m,\dots,S_k)}
-b\mathtt{H}_{(S_1,\dots, S_m,\ast,\dots,S_k)}
.\]

\begin{thm}\label{steinmannarrowaredercoder}
Given $a,b\in \Bbbk$, the up operator 
\[
\Sig\to \Sig', 
\qquad 
\tH_F \mapsto u_{a,b}(\tH_{F})
\]
is a biderivation of $\Sig$ (and so gives $\Sig$ the structure of a Hopf $\textbf{L}$-algebra).
\end{thm}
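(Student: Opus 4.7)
The plan is to exploit the fact that $u_{a,b}$ is \emph{defined} as the up derivation extending the prescribed values on the generators $\tH_{(I)}$, so the derivation property holds by construction; what needs verification is the coderivation property, and this reduces, via the closed form displayed just above the theorem, to a small combinatorial identity whose only nontrivial input is the vanishing $-a+(a+b)-b=0$.

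First, I would take the explicit formula
\[
u_{a,b}(\tH_F)
=
\sum_{1\leq m\leq k}
\Big(
-a\,\mathtt{H}_{(S_1,\dots,\ast, S_m,\dots,S_k)}
+(a+b)\,\mathtt{H}_{(S_1,\dots,\ast S_m,\dots,S_k)}
-b\,\mathtt{H}_{(S_1,\dots, S_m,\ast,\dots,S_k)}
\Big)
\]
(which one obtains from the definition \textcolor{blue}{(\refeqq{eq:defbider})} by a straightforward Leibniz induction) and then apply $\Delta_{\ast S,T}$ termwise for fixed $S\sqcup T=I$. Using the comultiplication formula $\Delta_{\ast S,T}(\tH_G)=\tH_{G|_{\ast S}}\otimes \tH_{G|_T}$ together with the fact that $\ast \in \ast S$, I would verify case by case, for each $m$, what the three resulting tensors look like, noting in each case that the second tensor factor is exactly $\tH_{F|_T}$.

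The case analysis splits according to how $S_m$ meets the subdivision $(S,T)$. If $S_m\subseteq S$, the three summands at position $m$ restrict to the three terms of $u_{a,b}$ applied at the position of $S_m$ inside $F|_S$, tensored with $\tH_{F|_T}$. If $S_m$ meets both $S$ and $T$ nontrivially, the same thing happens at the position of $S_m\cap S$ inside $F|_S$, while on the $T$-side the inserted $\ast$ disappears under the restriction, so again the three terms reassemble into a term of $u_{a,b}(\tH_{F|_S})$. The critical case is $S_m\subseteq T$: then all three summands at position $m$ produce the \emph{same} element of $\Sig[\ast S]\otimes \Sig[T]$, namely a basis element in which an isolated $\ast$ sits alone between $S_1\cap S,\dots,S_{m-1}\cap S$ and $S_{m+1}\cap S,\dots,S_k\cap S$, tensored with $\tH_{F|_T}$, and the coefficients $-a+(a+b)-b=0$ produce a total contribution of zero. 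Summing over $m$ and re-indexing the surviving positions by the positions of the nonempty lumps of $F|_S$ yields precisely
\[
\Delta_{\ast S,T}(u_{a,b}(\tH_F))=u_{a,b}(\tH_{F|_S})\otimes \tH_{F|_T}.
\]
The symmetric identity $\Delta_{S,\ast T}(u_{a,b}(\tH_F))=\tH_{F|_S}\otimes u_{a,b}(\tH_{F|_T})$ follows by an identical argument (or by applying the switch $S\leftrightarrow T$).

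The only real obstacle I expect is bookkeeping: keeping the indexing of positions in $F|_S$ aligned with the original positions in $F$ in the presence of the retraction $(-)_+$ that discards empty lumps. Once the three cases $S_m\subseteq S$, $S_m\subseteq T$, and $S_m$ transverse are handled separately, the identity is mechanical, and the whole statement that $u_{a,b}$ is a biderivation follows; that it then gives $\Sig$ the structure of a Hopf $\textbf{L}$-algebra is immediate from the general correspondence between (not necessarily commutative) up biderivations of a Hopf algebra in species and Hopf $\textbf{L}$-algebra structures (the noncommutative analogue of \autoref{prop:hopfalgebra}).
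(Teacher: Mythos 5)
Your proposal is correct and follows essentially the same route as the paper: the derivation property is immediate from the inductive definition, and the coderivation property is checked by applying $\Delta_{\ast S,T}$ to the closed formula for $u_{a,b}(\tH_F)$ and observing that the terms where $S_m\cap S=\emptyset$ cancel via $-a+(a+b)-b=0$ while the remaining terms reassemble into $u_{a,b}(\tH_{F|_S})\otimes\tH_{F|_T}$. Your three-way case split ($S_m\subseteq S$, transverse, $S_m\subseteq T$) is just a finer version of the paper's two-way split on whether $S_m\cap S$ is empty, so the arguments coincide.
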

\begin{proof}
In the following, for $F=(S_1,\dots,S_k)$ a composition of $I$ and $S\subseteq I$, we write
\[ 
(U_1,\dots,U_k):=(S_1\cap S, \dots ,S_k\cap S)
.\]
In general, $(U_1,\dots,U_k)$ is a decomposition of $I$. 

First, $u_{a,b}$ defines a derivation of $\Sig$ by construction. To see that $u_{a,b}$ also defines a coderivation, we have
\begin{align*}
\Delta_{\ast S,T} \big(  u_{a,b}( \tH_F )\big)\ =\ 
&
\ \ \ \   \Delta_{\ast S,T} \Bigg( \sum_{1\leq m\leq k}
-a\mathtt{H}_{(S_1,\dots,\ast, S_m,\dots,S_k)}
+(a+b)\mathtt{H}_{(S_1,\dots, \ast S_m,\dots,S_k)}
-b\mathtt{H}_{(S_1,\dots, S_m,\ast,\dots,S_k)}\Bigg )\\[7pt]
=\
&
\ \ \ \    
\Bigg(\sum_{1\leq m\leq k}-a\mathtt{H}_{(U_1,\dots,\ast, U_m,\dots,U_k)_+}+
(a+b)\mathtt{H}_{(U_1,\dots, \ast U_m,\dots,U_k)_+}
-b\tH_{(U_1,\dots, U_m,\ast,\dots,U_k)_+}\Bigg)\otimes \tH_{F|_T}\\[7pt]
=\
&
\ \ \ \     
\Bigg(\sum_{\substack{1\leq m\leq k \\[2pt] U_m \neq \emptyset}}-a\mathtt{H}_{(U_1,\dots,\ast, U_m,\dots,U_k)_+}+
(a+b)\mathtt{H}_{(U_1,\dots, \ast U_m,\dots,U_k)_+}
-b\tH_{(U_1,\dots, U_m,\ast,\dots,U_k)_+}\Bigg)\otimes \tH_{F|_T}\\
&+\ \underbrace{ 
\Bigg(\sum_{\substack{1\leq m\leq k\\[2pt]U_m=\emptyset}}\big (-a+(a+b)-b\big)\, \mathtt{H}_{(U_1,\dots,U_{m-1},\ast,U_{m+1},\dots,U_k)_+}\Bigg)}_{=0} \otimes\, \tH_{F|_T}\\[7pt]
=\  
&
\ \ \ \ u(\tH_{F|_S})\otimes \tH_{F|_T}.
\end{align*}
Therefore $u_{a,b}$ is a biderivation of $\Sig$. 
\end{proof}

\subsection{The Steinmann Arrows} \label{sec:The Steinmann Arrows}
We now recall the Steinmann arrows for $\Sig$, whose precise definition is due to Epstein-Glaser-Stora \cite[p.82-83]{epstein1976general}. The Steinmann arrows were first considered by Steinmann in settings where $\Sig$ is represented as operator-valued distributions \cite[Section 3]{steinmann1960}.

Let the \emph{retarded Steinmann arrow} be the up biderivation of $\Sig$ given by
\begin{equation}\label{steindown}
\ast\downarrow(-) :\Sig\to \Sig', 
\qquad    
\ast\downarrow \tH_F:=  u_{1,0}(\tH_F)=
\sum_{1\leq m\leq k}
-\mathtt{H}_{(S_1,\dots,\ast, S_m,\dots,S_k)}
+\mathtt{H}_{(S_1,\dots, \ast S_m,\dots, S_k)} 
.
\end{equation} 
Let the \emph{advanced Steinmann arrow} be the up biderivation of $\Sig$ given by
\begin{equation}\label{steinup}
\ast\uparrow(-) : \Sig\to \Sig' , 
\qquad    
\ast\uparrow \tH_F:=  u_{0,1}(\tH_F)=
\sum_{1\leq m\leq k}
\mathtt{H}_{(S_1,\dots, \ast S_m,\dots, S_k)}
-\mathtt{H}_{(S_1, \dots, S_m,\ast, \dots,S_k)}
.
\end{equation} 
In particular
\[
\ast \downarrow \tH_{(I)}=-\tH_{(\ast,I)} + \tH_{(\ast I)}
\qquad \text{and} \qquad
\ast \uparrow \tH_{(I)}=\tH_{(\ast I)} -\tH_{(I,\ast )}
.\]
We have
\[
\ast \uparrow \tH_{F}\, -\, \ast \downarrow \tH_{F}  =  u_{-1,1}(\tH_F) = [ \tH_{(\ast)} , \tH_{F}]
.\]
This identity appears often in the physics literature for operator-valued distributions, e.g. \cite[Equation 13]{steinmann1960}, \cite[Equation 83]{ep73roleofloc}. The biderivation $u_{-1,1}$ gives $\Sig$ the structure of a Hopf $\textbf{L}$-algebra. This $\textbf{L}$-action is the restriction of the adjoint representation of $\Sig$.

Notice that the Steinmann arrows are commutative up operators. Therefore, by \autoref{prop:primelalsoalg}, we can restrict them to obtain commutative up derivations of $\Zie$,
\[
\ast\downarrow(-): \Zie\to \Zie', 
\qquad 
\mathtt{D}_\cS \mapsto  \ast  \downarrow \mathtt{D}_\cS
\qquad \text{and} \qquad
\ast\uparrow(-):\Zie\to \Zie', 
\qquad 
\mathtt{D}_\cS \mapsto  \ast  \uparrow \mathtt{D}_\cS
.\]
By \autoref{steinmannarrowaredercoder}, the Steinmann arrows equip $\Sig$ with the structure of a Hopf $\textbf{E}$-algebra (and $\Zie$ with the structure of a Lie $\textbf{E}$-algebra) in two ways. We denote the corresponding $\textbf{E}$-modules, as defined in \textcolor{blue}{(\refeqq{eq:inducedE})}, by
\[
\textbf{E}\bigcdot \Sig \to \Sig,
\qquad
\tH_Y\otimes \mathtt{a}\mapsto
Y\! \downarrow \mathtt{a}=
\underbrace{y_r \downarrow 
\circ\cdots \circ     
y_1 \downarrow}_{\text{invariant of the order}}(\mathtt{a}) 
\]
and 
\[
\textbf{E}\bigcdot \Sig \to \Sig,
\qquad
\tH_Y\otimes \mathtt{a}\mapsto
Y\! \uparrow \mathtt{a}=
\underbrace{y_r \uparrow 
\circ\cdots \circ     
y_1 \uparrow}_{\text{invariant of the order}}(\mathtt{a}) 
\]
where $Y=\{y_1,\dots,y_r\}$. For $Y=[r]:=\{\ast_1 , \dots ,\ast_r \}$, we abbreviate 
\[
\downarrow(-) :=\ast\downarrow(-), \qquad  \downarrow\downarrow(-) :=\{  \ast_1, \ast_2 \} \downarrow(-),\qquad  \dots
\] 
and similarly for the advanced arrow. 

By inspecting the definitions, we see that
\begin{equation}\label{eq:retardadvan}
Y\! \downarrow \tH_{(I)}=
\mathtt{R}_{(Y;I)}:=\! \sum_{Y_1\sqcup Y_2=Y}\overline{\tH}_{(Y_1)}  \tH_{(Y_2\sqcup I)} 
\qquad \text{and} \qquad  
Y\! \uparrow \tH_{(I)}=
\mathtt{A}_{(Y;I)}:=\! \sum_{Y_1\sqcup Y_2=Y}   \tH_{(Y_1\sqcup I)}    \overline{\tH}_{(Y_2)}
.
\end{equation}
It follows that 
\[     
Y\! \downarrow \tH_F
=  
\sum_{Y_1 \sqcup\dots\sqcup Y_k =Y} \mathtt{R}_{(Y_1;S_1)}\dots \mathtt{R}_{(Y_{k};S_{k})} 
\qquad \text{and} \qquad 
Y\! \uparrow \tH_F
=  
\sum_{Y_1 \sqcup\dots\sqcup Y_k =Y} \mathtt{A}_{(Y_1;S_1)}\dots \mathtt{A}_{(Y_{k};S_{k})} 
.\]
The sums are over all decompositions $(Y_1,\dots, Y_k)$ of $Y$ of length $l(F)$. We call $\mathtt{R}_{(Y;I)}, \mathtt{A}_{(Y;I)}\in \Sig[Y\sqcup I]$ the \emph{retarded} and \emph{advanced} elements respectively. The \emph{total retarded} and \emph{total advanced} elements are given by
\[          
Y\! \downarrow \tH_{(i)}=
\mathtt{R}_{(Y;i)}  =\sum_{Y_1\sqcup Y_2=Y}\overline{\tH}_{(Y_1)}\,  \tH_{(Y_2 i)}
\qquad\text{and}\qquad    
Y\! \uparrow \tH_{(i)}=
\mathtt{A}_{(Y;i)}=\sum_{Y_1\sqcup Y_2=Y}   \tH_{(Y_2 i )}\,    \overline{\tH}_{(Y_1)}
\]
respectively. 
\begin{remark}\label{rem:double}
If we put $I=X\sqcup \{i\}$, then we have
\[
\mathtt{R}_{(X;i)}=
\sum_{\substack{S\sqcup T=I\\ i\in T}}\overline{\tH}_{(S)}\,  \tH_{(T)}
=-\sum_{\substack{F\in \Sigma[I]\\ i\in S_k}} (-1)^{l(F)} \tH_F
=\mathtt{D}_i 
\]
and
\[
\mathtt{A}_{(X;i)}=
\sum_{\substack{S\sqcup T=I\\ i\in T}} \tH_{(T)}\, \overline{\tH}_{(S)}=
-\sum_{\substack{F\in \Sigma[I]\\ i\in S_1}} (-1)^{l(F)} \tH_F=
\mathtt{D}_{\bar{i}}
.\]
\end{remark}

\begin{thm} \label{coalgahomo}
We have the following homomorphisms of algebras in species,
\[
\Sig\to \Sig^{\textbf{E}}, 
\qquad
\tH_F\mapsto
\sum_{Y_1 \sqcup\dots\sqcup Y_k =(-)}   
\mathtt{R}_{(Y_1;S_1)}  \dots \mathtt{R}_{(Y_{k};S_{k})}
=
\sum_{r=0}^\infty\,  \sum_{Y_1 \sqcup\dots\sqcup Y_k =[r]}   
\mathtt{R}_{(Y_1;S_1)}  \dots \mathtt{R}_{(Y_{k};S_{k})}
\]
and
\[
\Sig\to \Sig^{\textbf{E}}, 
\qquad
\tH_F\mapsto 
\sum_{Y_1 \sqcup\dots\sqcup Y_k =(-)}   
\mathtt{A}_{(Y_1;S_1)}  \dots \mathtt{A}_{(Y_{k};S_{k})}
=
\sum_{r=0}^\infty\,  \sum_{Y_1 \sqcup\dots\sqcup Y_k =[r]}   
\mathtt{A}_{(Y_1;S_1)}  \dots \mathtt{A}_{(Y_{k};S_{k})}
.\]
\end{thm}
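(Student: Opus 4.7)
The plan is to recognize each of the two displayed maps as precisely the curried action $\widecheck{\rho}\colon \Sig \to \Sig^{\textbf{E}}$ associated to the $\textbf{E}$-module structure on $\Sig$ generated by a Steinmann arrow, and then appeal to the general criterion for when such a curried action is a homomorphism of algebras.

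First, I would observe that by \autoref{steinmannarrowaredercoder} (specialized to $(a,b)=(1,0)$ and $(a,b)=(0,1)$), the retarded and advanced Steinmann arrows are up biderivations of the bialgebra $\Sig$. As noted in the paragraph following \textcolor{blue}{(\refeqq{steinup})}, they are moreover \emph{commutative} up operators, so each one equips $\Sig$ with the structure of an $\textbf{E}$-bialgebra in the sense of the earlier characterization (a bialgebra together with a commutative up biderivation). Since the underlying bialgebra $\Sig$ is already a Hopf algebra, \autoref{prop:hopfalgebra} upgrades each of these to a Hopf $\textbf{E}$-algebra structure on $\Sig$. In particular, $\Sig$ is an $\textbf{E}$-algebra with respect to each Steinmann arrow.

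Second, I would invoke \autoref{prop:is a homo}, which states that for an algebra in species whose underlying species carries an $\textbf{h}$-module structure, the algebra is an $\textbf{h}$-algebra if and only if the associated $(-)^{\textbf{h}}$-coalgebra $\widecheck{\rho}\colon\textbf{a}\to\textbf{a}^\textbf{h}$ is a homomorphism of algebras. Applied with $\textbf{h}=\textbf{E}$ and $\textbf{a}=\Sig$ equipped with either $\textbf{E}$-action, this yields that the curried actions
\[
\widecheck{\rho}\colon \Sig \to \Sig^{\textbf{E}}
,\qquad
\tH_F \mapsto \sum_{r=0}^{\infty} (Y\! \downarrow \tH_F)
\qquad\text{and}\qquad
\tH_F \mapsto \sum_{r=0}^{\infty} (Y\! \uparrow \tH_F)
\]
are homomorphisms of algebras.

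Finally, I would identify these curried actions with the explicit formulas in the theorem. By the explicit formula displayed just after \textcolor{blue}{(\refeqq{eq:retardadvan})}, namely
\[
Y\! \downarrow \tH_F
=
\sum_{Y_1 \sqcup\dots\sqcup Y_k =Y} \mathtt{R}_{(Y_1;S_1)}\dots \mathtt{R}_{(Y_{k};S_{k})}
\]
and its advanced counterpart, together with the description of the $(-)^{\textbf{E}}$-coalgebra associated to an $\textbf{E}$-module via the isomorphism \textcolor{blue}{(\refeqq{eq:isoseriesr})}, the morphisms $\widecheck{\rho}$ coincide with those displayed in the statement. This completes the proof. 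The argument is essentially a packaging of earlier results, with no single step presenting a serious obstacle; the only point deserving care is checking that the formulas for $Y\! \downarrow \tH_F$ and $Y\! \uparrow \tH_F$ obtained by iterating \textcolor{blue}{(\refeqq{steindown})} and \textcolor{blue}{(\refeqq{steinup})} really do match the multiplicative expansion in terms of $\mathtt{R}_{(Y_j;S_j)}$ and $\mathtt{A}_{(Y_j;S_j)}$, which follows from the derivation property applied inductively to $\tH_F=\tH_{(S_1)}\cdots \tH_{(S_k)}$.
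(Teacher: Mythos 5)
Your proposal is correct and follows essentially the same route as the paper's own proof: recognize the displayed maps as the curried $(-)^{\textbf{E}}$-coalgebras of the $\textbf{E}$-module structures generated by the Steinmann arrows (which are commutative up biderivations by \autoref{steinmannarrowaredercoder}), and conclude via \autoref{prop:is a homo}. The detour through \autoref{prop:hopfalgebra} is harmless but unnecessary, since only the $\textbf{E}$-algebra (not the full Hopf) structure is needed for the conclusion.
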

\begin{proof}
The Steinmann arrows are commutative up biderivations of $\Sig$, and so give $\Sig$ the structure of a Hopf $\textbf{E}$-algebra. The maps here are the associated $(-)^{\textbf{E}}$-coalgebras obtained by currying the actions of $\textbf{E}$, which are homomorphisms by \autoref{prop:is a homo}.
\end{proof}

The homomorphisms of \autoref{coalgahomo} are the unique extensions of the maps
\[
\tH_{(I)}\mapsto \sum_{r=0}^\infty \mathtt{R}_{(r;I)}
\qquad \text{and} \qquad
\tH_{(I)}\mapsto \sum_{r=0}^\infty \mathtt{A}_{(r;I)}
\]
to homomorphisms.
\subsection{The Steinmann Arrows and Dynkin Elements} \label{sec:The Steinmann Arrows and Dynkin Elements}

\begin{figure}[t]
	\centering
	\includegraphics[scale=0.6]{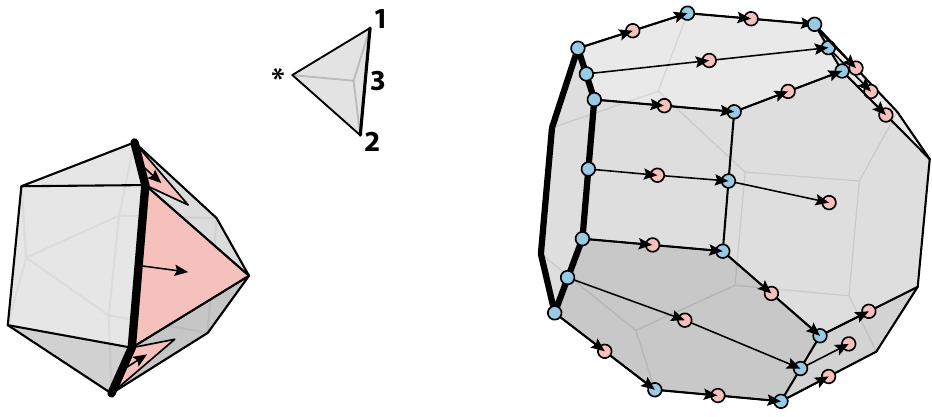}
	\caption{Schematic for the action of the retarded Steinmann arrow $\ast \downarrow$ for $I=\{1,2,3\}$ on the Steinmann sphere (left) and the tropical geometric realization of $\boldsymbol{\Sigma}$ (right, see \cite[Introduction]{norledge2019hopf}).}
	\label{fig:steinarrow}
\end{figure}

We now show that the restriction of the Steinmann arrows to $\Zie$, which are derivations for its Lie bracket, have an interesting description in terms of cells, i.e. chambers of the adjoint braid arrangement.

Following \cite[Section 2]{epstein2016}, we define the commutative up operators
\[
\ast \downarrow(-):\textbf{L}^\vee\to {\textbf{L}^\vee}',
\qquad
\ast \downarrow \cS:=\big \{(\ast S,T),(S,\ast T),(I,\ast):(S,T)\in \cS\big\}
\]
and
\[
\ast\uparrow(-):\textbf{L}^\vee\to {\textbf{L}^\vee}',
\qquad
\ast\uparrow \cS:=\big \{(\ast S,T),(S,\ast T),(\ast,I):(S,T)\in \cS\big\}
.\]
These are indeed well-defined; $\ast \downarrow \cS$ corresponds to the adjoint braid arrangement chamber on the $I$ side of the hyperplane $\lambda_{\ast,I}=0$ which has the face of $\cS$ as a facet, and $\ast\uparrow \cS$ corresponds to the chamber on the $\ast$ side of the hyperplane $\lambda_{\ast,I}=0$ which has the face of $\cS$ as a facet. See around \cite[Remark 2.2]{lno2019} for more details. Thus, it follows from \autoref{prop:ruelle} (Ruelle's identity) that
\[     
[ \tH_{(\ast)}, \mathtt{D}_\cS]= \mathtt{D}_{\ast \uparrow \cS} - \mathtt{D}_{\ast \downarrow \cS} 
.\]
We denote the corresponding $\textbf{E}$-modules by
\[
\textbf{E}\bigcdot \textbf{L}^\vee \to \textbf{L}^\vee
,\qquad 
\tH_Y\otimes \cS \mapsto Y\downarrow \cS=
\big\{ 
( Y_1\sqcup S, Y_2\sqcup T )\in [Y \sqcup I;\text{2}]  : (S,T)\in \cS\ \text{or}\ S=I   
\big\}   
\]
and
\[
\textbf{E}\bigcdot \textbf{L}^\vee \to \textbf{L}^\vee
,\qquad 
\tH_Y\otimes \cS \mapsto Y\uparrow \cS=
\big\{ 
( Y_1\sqcup S, Y_2\sqcup T )\in [Y \sqcup I;\text{2}]  : (S,T)\in \cS\ \text{or}\ T=I   
\big\}      
.\]

\begin{prop}\label{adjointinterp}
Given a cell $\cS$ over $I$, we have
\[        
Y \downarrow\mathtt{D}_\cS=  \mathtt{D}_{Y \downarrow \cS} 
\qquad \text{and} \qquad
Y \uparrow\mathtt{D}_\cS=  \mathtt{D}_{Y \uparrow \cS}
.\]
\end{prop}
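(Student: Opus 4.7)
The plan is to first establish the singleton case $Y = \{\ast\}$ by direct expansion, then deduce the general case by iterating the commutative up operator $\ast\downarrow$. Because the advanced arrow is obtained from the retarded arrow by the symmetric choice $u_{0,1}$ versus $u_{1,0}$ in \textcolor{blue}{(\refeqq{eq:defbider})}, and $(-) \uparrow \cS$ is the opposite-side analog of $(-) \downarrow \cS$, it suffices to handle the retarded case; the advanced case follows by the same argument reflected through the opposite-composition involution.

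For the singleton case, I would expand both sides of $\ast\downarrow\mathtt{D}_\cS = \mathtt{D}_{\ast\downarrow\cS}$ using the formula $\mathtt{D}_\cS = -\sum_{\bar{F}\subseteq\cS}(-1)^{l(F)}\tH_F$ from \textcolor{blue}{(\refeqq{eq:hbasisexp})}. On the left, applying the explicit biderivation \textcolor{blue}{(\refeqq{steindown})} termwise yields, for each $F = (S_1,\dots,S_k)$ with $\bar{F}\subseteq\cS$, the signed sum $\sum_{m=1}^k[-\tH_{(S_1,\dots,\ast,S_m,\dots,S_k)} + \tH_{(S_1,\dots,\ast S_m,\dots,S_k)}]$; note especially the absence of the end-insertion term $\tH_{(S_1,\dots,S_k,\ast)}$, reflecting the choice $b=0$. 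On the right, I would classify compositions $F'$ of $\ast I$ by whether $\ast$ forms its own lump at position $m\in\{1,\dots,k+1\}$ of an underlying composition $F = (S_1,\dots,S_k)$ of $I$ (case (a)), or is merged into the $m$-th lump (case (b)).

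The crux is checking, for each such $F'$, when $\bar{F'}\subseteq\ast\downarrow\cS$. Using the explicit description of splits in $\ast\downarrow\cS$ — namely $(\ast S, T), (S, \ast T)$ for $(S,T)\in\cS$, plus $(I,\ast)$ but \emph{not} $(\ast, I)$ — I would traverse each pair $(T'_1\sqcup\cdots\sqcup T'_j, T'_{j+1}\sqcup\cdots)$ of $\bar{F'}$. A direct inspection shows that in case (b), and in case (a) with $m\leq k$, each such pair reduces to membership of the corresponding $\bar{F}$-pair in $\cS$; the $j = k-m+1$ and $j = k-m+2$ pairs in case (a) produce redundant copies of the $(k-m+1)$-st $\bar{F}$-pair, and case (b) similarly gives each $\bar{F}$-pair once. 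Thus both cases collapse to the single condition $\bar{F}\subseteq\cS$. The critical exception is case (a) with $m = k+1$: here the reversed composition starts with the lump $\{\ast\}$, forcing the excluded pair $(\ast, I)$, so this case never contributes — precisely matching the missing end-insertion on the left.

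With the indexing sets on both sides identified, matching signs is a short calculation: case (a) gives $(-1)^{l(F')} = -(-1)^{l(F)}$ and case (b) gives $(-1)^{l(F')} = (-1)^{l(F)}$, so the combined contribution for a fixed $F$ with $\bar{F}\subseteq\cS$ is exactly $-(-1)^{l(F)}\, \ast{\downarrow}\tH_F$, and summing yields $\ast\downarrow\mathtt{D}_\cS$. For general $Y = \{y_1,\dots,y_r\}$, I would iterate: since $\ast\downarrow$ is commutative, $Y\downarrow$ is the composition of singleton arrows, so repeated application of the singleton identity gives $Y\downarrow\mathtt{D}_\cS = \mathtt{D}_{y_r\downarrow\cdots\downarrow y_1\downarrow\cS}$, and a short direct check from the definitions confirms the cellular identity $y_r\downarrow\cdots\downarrow y_1\downarrow\cS = Y\downarrow\cS$. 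The main obstacle is the bookkeeping in step three: carefully tracking which pairs of $\bar{F'}$ have $\ast$ on which side, and verifying that the duplicated pair at the seam around the $\ast$-lump is harmless while the $m=k+1$ configuration is obstructed — this is precisely where the asymmetry built into the retarded arrow is reflected geometrically in the adjoint braid arrangement.
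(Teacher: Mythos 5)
Your proof is correct and follows essentially the same route as the paper's: reduce to the singleton retarded case, expand both sides in the $\tH$-basis via \textcolor{blue}{(\refeqq{eq:hbasisexp})} and \textcolor{blue}{(\refeqq{steindown})}, and match indexing sets and signs, with the key point being that the excluded end-insertion $\tH_{(S_1,\dots,S_k,\ast)}$ corresponds exactly to $(\ast,I)\notin\,\downarrow\cS$. Your case analysis of the splits of $\bar{F'}$ simply makes explicit the bijection that the paper's proof asserts in one line.
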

\begin{proof}
We consider the retarded case $Y\downarrow\mathtt{D}_\cS=  \mathtt{D}_{Y \downarrow \cS}$ only, since the advanced case then follows similarly. It is sufficient to consider the case $Y=\{\ast\}$. We have
\[   
\downarrow\mathtt{D}_\cS
=  
-\sum_{ \bar{F} \subseteq \cS } (-1)^{l(F)}  \downarrow\tH_{F} 
\qquad \text{and} \qquad 
\mathtt{D}_{\downarrow \cS}= -\sum_{\bar{F} \subseteq \,   \downarrow\cS } (-1)^{l(F)}\,  \tH_{ F }
.\]
So, the result follows if we have the following equality
\[ 
\sum_{ \bar{F} \subseteq \cS } (-1)^{l(F)} \sum_{1\leq m\leq k}
- \mathtt{H}_{(S_1, \dots,\ast, S_m, \dots,S_k)} 
+\mathtt{H}_{(S_1,\dots, \ast S_m,\dots, S_k)}   
\overset{\mathrm{?}}{=} 
\sum_{\bar{G} \subseteq \,   \downarrow\cS } (-1)^{l(G)}\,  \tH_{G}    
.\]
Indeed, notice that the $\tH$-basis elements $\tH_G\in \Sig[\ast I]$ which appear on the LHS are exactly those such that 
\[\bar{G}\subseteq  \downarrow\cS.\] 
Notice also that each $\tH_G$ appears with total sign $(-1)^{l(G)}$, since when $\ast$ is inserted as a singleton lump, thus increasing $l(G)$ by one, it appears also with a negative sign. 
\end{proof} 

\begin{remark}
This interpretation of the $\textbf{E}$-module structure of $\Sig$ restricted to the primitive part $\Zie=\mathcal{P}(\Sig)$ in terms of the adjoint braid arrangement suggests obvious generalizations of the Steinmann arrows in the direction of \cite{aguiar2017topics}, \cite{aguiar2020bimonoids}, since the generalization of Hopf monoids there is via hyperplane arrangements.
\end{remark}

\begin{cor}
We have the following homomorphisms of Lie algebras in species,
\[
\Zie\to \Zie^{\textbf{E}}, 
\qquad
\mathtt{D}_\cS\mapsto \mathtt{D}_{ (-) \downarrow \cS }
=
\sum_{r=0}^\infty \mathtt{D}_{ [r] \downarrow \cS }
= \mathtt{D}_\cS+ \mathtt{D}_{\downarrow \cS}+ \mathtt{D}_{  \downarrow\downarrow \cS}+ \cdots 
\]
and
\[
\Zie\to \Zie^{\textbf{E}}, 
\qquad
\mathtt{D}_\cS\mapsto \mathtt{D}_{ (-) \uparrow \cS }
=
\sum_{r=0}^\infty \mathtt{D}_{ [r]\uparrow \cS }
= \mathtt{D}_\cS+ \mathtt{D}_{\uparrow  \cS}+ \mathtt{D}_{\uparrow \uparrow \cS}+ \cdots 
.\]
\end{cor}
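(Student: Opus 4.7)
The plan is to assemble this corollary directly from machinery already established in the excerpt, since essentially all the hard work has been done. The statement asserts that two particular morphisms of species are Lie algebra homomorphisms, and the formulas describing them on Dynkin elements follow from the geometric interpretation of the Steinmann arrows on cells.

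First I would recall that by Theorem \ref{steinmannarrowaredercoder}, the retarded Steinmann arrow $\ast\!\downarrow\!(-)$ is a commutative up biderivation of $\Sig$, and hence (together with \autoref{prop:hopfalgebra}) equips $\Sig$ with the structure of a Hopf $\textbf{E}$-algebra. Then \autoref{prop:primelalsoalg} gives $\Zie=\mathcal{P}(\Sig)$ the induced structure of a Lie $\textbf{E}$-algebra, with the Steinmann arrow restricting to a commutative up derivation of the Lie bracket. The same applies to the advanced arrow $\ast\!\uparrow\!(-)$.

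Next I would apply the Lie-algebra analog of \autoref{prop:is a homo}: for any Lie $\textbf{E}$-algebra $(\textbf{g},u)$, the curried action $\widecheck{\rho}:\textbf{g}\to\textbf{g}^{\textbf{E}}$ is a homomorphism of Lie algebras in species (the proof is a diagram-chase identical to that of \autoref{prop:is a homo} with $\mu$ replaced by $\partial^\ast$ and the verification of compatibility with $[-,-]$ replacing that with multiplication; associativity and unitality reduce to antisymmetry and the Jacobi identity). Applied to the retarded and advanced Steinmann arrows, this yields Lie algebra homomorphisms $\Zie\to\Zie^{\textbf{E}}$ with underlying map $\mathtt{a}\mapsto \sum_{r=0}^\infty \underbrace{\downarrow\cdots\downarrow}_{r}\mathtt{a}$ and analogously for the advanced arrow, via the isomorphism \textcolor{blue}{(\refeqq{eq:isoseriesr})}.

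Finally, evaluating these homomorphisms on a Dynkin element $\mathtt{D}_\cS$, I would invoke \autoref{adjointinterp}, which identifies $Y\!\downarrow\!\mathtt{D}_\cS=\mathtt{D}_{Y\downarrow\cS}$ and $Y\!\uparrow\!\mathtt{D}_\cS=\mathtt{D}_{Y\uparrow\cS}$. Specializing to $Y=[r]$ and summing over $r$ yields the stated formulas $\mathtt{D}_\cS\mapsto \sum_{r=0}^\infty \mathtt{D}_{[r]\downarrow\cS}$ and $\mathtt{D}_\cS\mapsto \sum_{r=0}^\infty \mathtt{D}_{[r]\uparrow\cS}$. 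Since the Dynkin elements span $\Zie$, this determines the homomorphism. There is no real obstacle here; the only point that might merit a line is noting explicitly the Lie-algebra version of \autoref{prop:is a homo}, but this is essentially immediate from the definition of a Lie $\textbf{E}$-algebra as a morphism of $(-)^{\textbf{E}}$-coalgebras satisfying antisymmetry and Jacobi.
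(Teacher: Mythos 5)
Your proof is correct and follows essentially the same route as the paper: the paper's own proof simply observes that the Steinmann arrows make $\Zie$ a Lie $\textbf{E}$-algebra and that the displayed maps are the curried $(-)^{\textbf{E}}$-coalgebras, which are homomorphisms by \autoref{prop:is a homo}. If anything you are slightly more careful than the paper, in spelling out that \autoref{prop:is a homo} needs its (routine) Lie-algebra analog and in explicitly invoking \autoref{adjointinterp} to identify the image of $\mathtt{D}_\cS$ with $\mathtt{D}_{Y\downarrow\cS}$ and $\mathtt{D}_{Y\uparrow\cS}$.
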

\begin{proof}
The Steinmann arrows are commutative up biderivations of $\Zie$, and so give $\Zie$ the structure of a Lie $\textbf{E}$-algebra. The maps given here are the associated $(-)^{\textbf{E}}$-coalgebras obtained by currying the actions of $\textbf{E}$, which are homomorphisms by \autoref{prop:is a homo}.
\end{proof}
\section{Products and Series}
We now recall several basic constructions of casual perturbation theory in the current, clean, abstract setting. We do this without yet imposing causal factorization/causal additivity. We say e.g. `$\text{T}$-product' and `$\text{R}$-product' for now, and then change to `time-ordered product' and `retarded product' in the presence of causal factorization.

\subsection{T-Products, Generalized T-Products, and Generalized R-Products} \label{sec:T-Products, Generalized T-Products, and Generalized R-Products}
Let $V$ be a vector space over $\Bbbk$, and let $\cA$ be a $\Bbbk$-algebra with multiplication denoted by $\star$. Let a \emph{system of} $\text{T}$\emph{-products} $\text{T}$ be a system of products (as defined in \autoref{sec:Systems of Products}) for the positive exponential species $\textbf{E}^\ast_+$,
\[   
\text{T}:  \textbf{E}^\ast_+  \otimes \textbf{E}_V  \to \textbf{U}_\cA
,\qquad 
\tH_{(I)} \otimes \ssA_I \mapsto 
\text{T}_I(\tH_{(I)} \otimes \ssA_I)
.\]
We abbreviate
\[
\text{T}_I(\ssA_I):=\text{T}_I(\tH_{(I)} \otimes \ssA_I) 
.\]
The currying of $\text{T}$ is denoted by
\[          
\textbf{E}^\ast_+\to \cH(\textbf{E}_V, \textbf{U}_\cA), \qquad \tH_{(I)}\mapsto \text{T}(I)
\]
where the linear maps
\[
\text{T}(I): \textbf{E}_V[I] \to \cA
,\qquad
\ssA_I \mapsto \text{T}_I(\ssA_I)
\] 
are called $\text{T}$\emph{-products}. 
Notice that $\text{T}$-products are commutative in the sense that 
\[
\text{T}_I\big( \textbf{E}_V[\sigma](\ssA_I)\big)=\text{T}_I(\ssA_I) \qquad \quad \text{for all bijections}\quad \sigma:I\to I
.\]
This property holds because the system $\text{T}$ is a morphism of species, and bijections act trivially for $\textbf{U}_\cA$. This commutativity exists despite the fact that the algebra $\cA$ is noncommutative in general. 

\begin{remark}
In applications to QFT, we shall also have a causal structure on $V$. Then $\text{T}$ is meant to first order the vectors of $\ssA_I$ according to the causal structure, and then multiply in $\cA$, giving rise to this commutativity. 
\end{remark}


Let the \emph{system of generalized} $\text{T}$\emph{-products} associated to a system of $\text{T}$-products be the unique extension to an algebraic system of products for $\Sig=\textbf{L}\boldsymbol{\circ}\textbf{E}_+^\ast$, thus
\[  
\text{T}:  \Sig\otimes \textbf{E}_V  \to \textbf{U}_\cA
,\qquad 
\tH_F\otimes \ssA_I\mapsto \text{T}_I(\tH_F\otimes \ssA_I):=
\text{T}_{S_1}(\ssA_{S_1}) \star \dots \star \text{T}_{S_k}(\ssA_{S_k})
.\]
The currying of $\text{T}$ is denoted by
\[          
\Sig\to \cH(\textbf{E}_V, \textbf{U}_\cA), \qquad \tH_{F}\mapsto \text{T}(S_1)\dots \text{T}(S_k)
.\]
The linear maps 
\[
\text{T}(S_1)\dots \text{T}(S_k): \textbf{E}_V[I] \to \cA
,\qquad
\ssA_I\mapsto \text{T}_I(\tH_F\otimes \ssA_I)
\] 
are called \emph{generalized} $\text{T}$-\emph{products}. Let the \emph{system of generalized} $\text{R}$\emph{-products} associated to a system of $\text{T}$-products be the restriction to the Lie algebra of primitive elements $\Zie$,
\[  
\text{R}: \Zie \otimes \textbf{E}_V  \to \textbf{U}_\cA, \qquad  
\mathtt{D}_\cS\otimes \ssA_I\mapsto
\text{R}_I(\mathtt{D}_\cS \otimes \ssA_I):=\text{T}_I(\mathtt{D}_\cS  \otimes \ssA_I)   
.\]
This is a morphism of Lie algebras, where $\textbf{U}_\cA$ is equipped with the commutator bracket. The currying of $\text{R}$ is denoted by
\[          
\Zie \to \cH(\textbf{E}_V, \textbf{U}_\cA), \qquad \mathtt{D}_\cS\mapsto \text{R}_\cS
.\]
The linear maps 
\[
\text{R}_\cS: \textbf{E}_V[I] \to \cA
,\qquad
\ssA_I\mapsto  \text{R}_I(\mathtt{D}_\cS \otimes \ssA_I) 
\] 
are called \emph{generalized} $\text{R}$-\emph{products}. From the expansion \textcolor{blue}{(\refeqq{eq:hbasisexp})} of Dynkin elements $\mathtt{D}_\cS$ in terms of the $\tH$-basis, we recover \cite[Equation 79]{ep73roleofloc},
\[ 
\text{R}_\cS=-\sum_{\cF_{F}\subseteq\bar{\cS}} (-1)^{k}\, \text{T}(S_1)\dots \text{T}(S_k)
.\]
Consider a system of products of the form
\[ 
\text{Z}:\textbf{E}^\ast_+\otimes {\textbf{E}_{V}} 
\to 
\textbf{U}_{V}
,\qquad 
\tH_{(I)}\otimes \ssA_I\mapsto \text{Z}_I(\ssA_I)
.\]
Then we obtain a new $\text{T}$-product $\text{T}'$, given by
\[  
\text{T}':  
\textbf{E}^\ast_+\otimes  \textbf{E}_V  \to \textbf{U}_\cA
,\qquad 
\text{T}'_I(\ssA_I)
:= 
\sum_{P}\text{T}_P     
\big(
\text{Z}_{S_1} (\ssA_{S_1})\dots  \text{Z}_{S_k}(\ssA_{S_k})
\big)    
.\]
The sum is over all partitions $P=\{S_1,\dots,S_k\}$ of $I$. This construction underlies renormalization in pAQFT \cite[Section 3.6.2]{dutsch2019perturbative}, which deals with the remaining ambiguity of $\text{T}$-products after imposing causal factorization, and perhaps other renormalization conditions.

\begin{remark}
We have the presheaf of compositions $\Sigma:\textsf{S}^{\text{op}}\to \textsf{Set}$, and we may pick a monoidal copresheaf $\textbf{E}_V:\textsf{S}\to \textsf{Vec}$. Then a system of generalized $\text{T}$-products is equivalently a homomorphism on the weighted colimit $\overline{\cK}_V(\Sig)=\Sigma\otimes_{\textsf{S}} \textbf{E}_V$ into some $\Bbbk$-algebra $\cA$. In fact, letting $\textbf{E}_V$ be a copresheaf is more natural in the application to QFT.
\end{remark} 


  
\subsection{Reverse T-Products}

The system of \emph{reverse generalized} $\text{T}$\emph{-products} $\overline{\text{T}}$ of a system of generalized $\text{T}$-products is given by precomposing $\text{T}$ with the antipode \textcolor{blue}{(\refeqq{eq:antipode})} of $\Sig\otimes \textbf{E}_V$, thus
\[ 
\overline{\text{T}}:\Sig \otimes {\textbf{E}}_V\to \textbf{U}_{\cA^{\op}}, 
\qquad 
\overline{\text{T}}_I(\tH_{F}\otimes \ssA_I)
:=
{\text{T}}_I\big(\overline{\tH}_{F}\otimes \ssA_I\big)
.\]
Since the antipode is a homomorphism $\Sig\otimes \textbf{E}_V\to (\Sig\otimes \textbf{E}_V)^{\op, \text{cop}}$ \cite[Proposition 1.22 (iii)]{aguiar2010monoidal}, this is a system of generalized $\text{T}$-products into the opposite algebra $\textbf{U}_{\cA^{\op}}$. The image of $\tH_{(I)}$ under the currying of $\overline{\text{T}}$ is called the \emph{reverse} $\text{T}$\emph{-product} 
\[
\overline{\text{T}}(I): \textbf{E}_V[I] \to \cA^{\text{op}}
.\] 
From \textcolor{blue}{(\refeqq{antipode})}, we obtain
\[
\overline{\text{T}}(I) 
=
\sum_{F\in \Sigma[I]} (-1)^{k}\, \text{T}(S_1)\dots \text{T}(S_k)
.\]
Note that reverse $\text{T}$-products in \cite[Equation 11]{ep73roleofloc} are defined to be $(-1)^n\, \overline{\text{T}}(I)$. Our definition agrees with \cite[Definition 15.35]{perturbative_quantum_field_theory}.

\subsection{T-Exponentials} \label{sec:T-Exponentials}
The (scaled) \emph{universal series} $\mathtt{G}(c)\in \mathscr{S}(\Sig)$ is the group-like series of $\Sig$ given by
\[   
\mathtt{G}(c):  \textbf{E} \to  \Sig
,\qquad  
\tH_{I}\mapsto \mathtt{G}(c)_{I}:= c^n\,  \tH_{(I)}    
\qquad  \quad   
\text{for} \quad  
c\in \Bbbk  
.\]
The fundamental nature of this series is described in \cite[Section 13.6]{aguiar2013hopf}. The series $\text{s} \circ \mathtt{G}(c)$ which is the composition of $\mathtt{G}(c)$ with the antipode $\text{s}$ of $\Sig$ is given by
\begin{equation}\label{inverseuni}
\text{s} \circ \mathtt{G}(c): \textbf{E} \to  \Sig
,\qquad  
\tH_I\mapsto \big (\text{s} \circ \mathtt{G}(c)\big )_I= c^n\, \overline{\tH}_{(I)}
.
\end{equation}
Given a system of generalized $\text{T}$-products $\text{T}:\Sig \otimes {\textbf{E}}_V\to \textbf{U}_\cA$, let the $\text{T}$-\emph{exponential} $\mathcal{S}:=\mathcal{S}_{\mathtt{G}(c)}$ of this system be the $\cA[[\formj]]$-valued function on the vector space $V$ associated to the series $\mathtt{G}(c)$, as constructed in \autoref{sec:decseries}. Thus, we have
\begin{equation}\label{eq:tsexp}
\mathcal{S}: V\to \cA[[ \formj ]] , 
\qquad        
\ssA\mapsto  \mathcal{S}(\formj\! \ssA)
=
\sum^\infty_{n=0} \dfrac{c^n}{n!} \text{T}_n 
\underbrace{\big ( \formj\! \ssA\otimes  \cdots \otimes \formj\! \ssA \big )}_{ \text{ $n$ times } }   
=
\sum^\infty_{n=0} \dfrac{\formj^n c^n}{n!} \text{T}_n(\ssA^n)     
.
\end{equation}
By \textcolor{blue}{(\refeqq{inverse2})} and \textcolor{blue}{(\refeqq{inverseuni})}, the $\text{T}$-exponential for the system of reverse $\text{T}$-products is the inverse of $\mathcal{S}$ as an element of the $\Bbbk$-algebra of functions $\text{Func}(V,\cA[[\formj]])$, given by
\[
\mathcal{S}^{-1}: V\to \cA[[ \formj ]] , 
\qquad        
\ssA\mapsto  \mathcal{S}^{-1}(\formj\! \ssA) :=
\sum^\infty_{n=0} \dfrac{\formj^n c^n}{n!} \overline{\text{T}}_n(\ssA^n)
=
\sum^\infty_{n=0} \dfrac{\formj^n c^n}{n!} \text{T}_n(\overline{\tH}_{(n)}\otimes \ssA^n)
.\]
Therefore
\[
\mathcal{S}(\formj\! \ssA)\star \mathcal{S}^{-1}(\formj\! \ssA)=
\mathcal{S}^{-1}(\formj\! \ssA)\star \mathcal{S}(\formj\! \ssA) = 1_\cA
\]
for all $\ssA\in V$. This appears in e.g. \cite[Equation 2]{ep73roleofloc}.

\section{Perturbation of T-Products}
We now construct three perturbations of T-products by $\textbf{E}$-actions, as considered in \autoref{sec:Perturbation of Products and Series}.

\subsection{Perturbation of T-Products by Up Coderivation}   \label{sec:Perturbation of T-Products by Up Coderivation}

Suppose we have a system of \hbox{$\text{T}$-products}
\[   
\text{T}:  \textbf{E}^\ast_+  \otimes \textbf{E}_V  \to \textbf{U}_\cA
,\qquad \tH_{(I)}\otimes \ssA_I  \mapsto \text{T}_I(\ssA_I)
.\]
Following \autoref{sec:Perturbation of Products and Series}, given a choice of decorations vector $\ssS\in V$, we can use the up coderivation 
\[
u:\textbf{E}_V\to \textbf{E}'_V,\qquad u(\ssA_I)=\ssS \ssA_I
\] 
from \autoref{sec:Dec} to perturb $\text{T}$, giving
\[
\textbf{E}_+^\ast \otimes \textbf{E}_V\to  \textbf{U}_{\cA[[\formg]]}
,\quad
\tH_{(I)}\otimes \ssA_I 
\mapsto  
\sum_{r=0}^\infty \dfrac{\formg^{\, r\, }}{r!}\text{T}^r_{I}(\ssS^{\, r}\ssA_I)
=
\text{T}_{I}(\ssA_I) 
+
\underbrace{\formg\,  \text{T}^1_{I} (\ssS \ssA_I)
+
\dfrac{\formg^2}{2!} \text{T}^2_{I}(\ssS \ssS \ssA_I)
+ \cdots}_{\text{perturbation}}\ 
.\]
Let $\mathcal{S}_{\formg\ssS}$ denote the $\text{T}$-exponential for this new perturbed system of $\text{T}$-products (as defined in \textcolor{blue}{(\refeqq{eq:tsexp})}), thus
\[  
\mathcal{S}_{\formg\ssS} : V\to \cA[[\formg,\! \formj]]
,\qquad   
\ssA \mapsto \mathcal{S}_{\formg\ssS}(\formj\! \ssA)
:=
\sum^\infty_{n=0}
\sum_{r=0}^\infty 
\dfrac{\formg^{r}\formj^n\, c^{r+n}}{r!\,  n!} 
\text{T}^r_{n}(\ssS^{\, r} \ssA^n)  
\]
for $c \in \Bbbk$.

\begin{prop}
Given vectors $\ssS, \ssA\in V$, we have
\[
\mathcal{S}_{\formg \ssS}(\formj\! \ssA)= \mathcal{S}(\formg \ssS+ \formj\! \ssA)         
.\]
\end{prop}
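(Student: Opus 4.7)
The plan is to expand both sides as formal power series in $\formg$ and $\formj$, then match them term by term using only the $\sfS_n$-invariance of $\text{T}_n$ (the defining equivariance of $\text{T}$ as a morphism of species, noting that $\textbf{U}_\cA$ carries trivial $\sfS$-action). The first step is to make precise the meaning of the right-hand side: although $\mathcal{S}$ is originally defined on $V$, the formula
\[
\mathcal{S}(\formj\!\ssA) = \sum_{n=0}^\infty \frac{\formj^n c^n}{n!}\, \text{T}_n(\ssA^n)
\]
extends $\Bbbk[\formg,\formj]$-linearly in the argument to any vector in $V\otimes_\Bbbk \Bbbk[\formg,\formj]$, via the canonical scalar extension of the multilinear maps underlying $\text{T}_n$. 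Under this extension, I would set
\[
\mathcal{S}(\formg\ssS + \formj\!\ssA) := \sum_{n=0}^\infty \frac{c^n}{n!}\, \text{T}_n\!\big((\formg\ssS + \formj\!\ssA)^{\otimes n}\big).
\]

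Next I would expand the $n$-fold tensor power symbolically: in $\textbf{E}_V[n]$,
\[
(\formg\ssS + \formj\!\ssA)^{\otimes n} = \sum_{S \sqcup T = [n]} \formg^{|S|} \formj^{|T|}\, \ssS^S \otimes \ssA^T,
\]
where $\ssS^S \otimes \ssA^T$ denotes the simple tensor with $\ssS$ in each slot indexed by $S$ and $\ssA$ in each slot indexed by $T$. Since $\text{T}_n$ is $\sfS_n$-equivariant and the target algebra carries trivial action, $\text{T}_n(\ssS^S \otimes \ssA^T)$ depends only on the cardinalities $r=|S|$ and $m=|T|$; in the abbreviated notation of \autoref{sec:T-Products, Generalized T-Products, and Generalized R-Products} it equals $\text{T}^r_m(\ssS^{\,r}\ssA^m)$. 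Counting subsets of $[n]$ of size $r$ therefore gives
\[
\text{T}_n\!\big((\formg\ssS + \formj\!\ssA)^{\otimes n}\big) = \sum_{r+m=n}\binom{n}{r} \formg^r \formj^m\, \text{T}^r_m(\ssS^{\,r}\ssA^m).
\]

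The final step is a routine reorganization of the double sum. Using $\tfrac{c^n}{n!}\binom{n}{r} = \tfrac{c^{r+m}}{r!\,m!}$ and reindexing over independent $(r,m)\in \bN\times\bN$ instead of $n$ with $r+m=n$, I obtain
\[
\mathcal{S}(\formg\ssS + \formj\!\ssA) = \sum_{m=0}^\infty \sum_{r=0}^\infty \frac{\formg^r \formj^m\, c^{r+m}}{r!\,m!}\, \text{T}^r_m(\ssS^{\,r}\ssA^m),
\]
which is precisely the defining expansion of $\mathcal{S}_{\formg\ssS}(\formj\!\ssA)$ exhibited just before the proposition. There is no real obstacle here; the only mildly subtle point is the formal interpretation of the right-hand side, since $\formg\ssS + \formj\!\ssA$ does not literally lie in $V$. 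Once that extension is fixed, the identity reduces to the symmetric multinomial expansion of a tensor power together with the $\sfS_n$-invariance of $\text{T}_n$.
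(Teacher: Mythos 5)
Your proof is correct and takes essentially the same route as the paper's: both identify the double sum defining $\mathcal{S}_{\formg \ssS}(\formj\! \ssA)$ with the multinomial expansion of $\text{T}_k\big((\formg\ssS+\formj\!\ssA)^{\otimes k}\big)$ via the reindexing $k=r+n$. The paper compresses this into one line (``putting $k=n+r$''), whereas you supply the details it leaves implicit --- the formal extension of the argument beyond $V$, the $\sfS_n$-invariance collapsing terms by cardinality, and the identity $\tfrac{c^n}{n!}\binom{n}{r}=\tfrac{c^{r+m}}{r!\,m!}$.
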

\begin{proof}
We have
\[
\mathcal{S}_{\formg \ssS}(\formj\! \ssA)
=\sum^\infty_{n=0}\sum^\infty_{r=0}\dfrac{\formg^{r} \formj^n\, c^{r+n}}{r!\,  n!}{\text{T}^r_{n}}(\ssS^{\, r} \ssA^n)
=\sum^\infty_{k=0} \dfrac{c^k}{k!} \text{T}_k
\underbrace{(\formg\ssS+\formj\! \ssA\otimes   \cdots \otimes  \formg\ssS+\formj\! \ssA)}_{k \text{ times}}
.\]
The second equality is obtained by putting $k=n+r$.
\end{proof}
\subsection{Perturbation of T-Products by Steinmann Arrows}  \label{sec:Perturbation of T-Products by Steinmann Arrows}
Suppose we have a system of generalized \hbox{$\text{T}$-products}
\[   
\text{T}:  \Sig \otimes \textbf{E}_V  \to \textbf{U}_\cA
,\qquad  
\tH_F\otimes \ssA_I \mapsto \text{T}_I(\tH_F\otimes \ssA_I)
.\]
Following \autoref{sec:Perturbation of Products and Series}, given a choice of decorations vector $\ssS\in V$, we can use the Steinmann arrows \textcolor{blue}{(\refeqq{steindown})} and \textcolor{blue}{(\refeqq{steinup})} to perturb $\text{T}$, giving
\[ 
{\text{T}_{\downarrow,\ssS}}:\Sig\otimes\textbf{E}_V \to \textbf{U}_{\cA[[\formg]]}, 
\qquad 
{\text{T}_{\downarrow,\ssS}}:
=\text{T}^{\textbf{E}} \circ {\widecheck{\rho}_{\downarrow,\ssS}}
\]
for the retarded arrow, and
\[ 
{\text{T}_{\uparrow,\ssS}}:\Sig\otimes\textbf{E}_V \to \textbf{U}_{\cA[[\formg]]}, 
\qquad 
{\text{T}_{\uparrow,\ssS}}:
=\text{T}^{\textbf{E}}\circ \widecheck{\rho}_{\uparrow,\ssS}
\]
for the advanced arrow. We sometimes denote $\widetilde{\text{T}}:=\, {\text{T}_{\downarrow,\ssS}}$. Towards an explicit description of these new perturbed systems of generalized $\text{T}$-products, given
\[
\ssS_Y\ssA_I=  \ssS_{y_1}\otimes \dots \otimes \ssS_{y_r} \otimes \ssA_{i_1} \otimes \dots \otimes \ssA_{i_n} \in \textbf{E}_V^{[Y]}[I]
,\]
let
\begin{equation}\label{eq:retardprod}
\text{R}_{Y;I}(\ssS_Y;\ssA_I)
:= 
\underbrace{\text{T}^{[Y]}_{I}( \mathtt{R}_{(Y;I)} \otimes \ssS_Y \ssA_I)
=
\sum_{Y_1 \sqcup Y_2=Y}  \overline{\text{T}}^{[Y_1]}_{\emptyset}(\ssS_{Y_1}) \star \text{T}^{[Y_2]}_{I}( \ssS_{Y_2} \ssA_I)}_{\text{by } \textcolor{blue}{(\refeqq{eq:retardadvan}})} 
\end{equation}
and    
\[
\text{A}_{Y;I}(\ssS_Y;\ssA_I)
:= 
\underbrace{\text{T}^{[Y]}_{I}(  \mathtt{A}_{(Y;I)} \otimes \ssS_Y \ssA_I)
=
\sum_{Y_1 \sqcup Y_2=Y}  \text{T}^{[Y_1]}_{I}( \ssS_{Y_1} \ssA_I) \star \overline{\text{T}}^{[Y_2]}_{\emptyset}(\ssS_{Y_2})}_{\text{by } \textcolor{blue}{(\refeqq{eq:retardadvan}})} 
.\]
Then perturbation by the retarded arrow $\widetilde{\text{T}}=\, {\text{T}_{\downarrow,\ssS}}$ is given by
\[ 
\widetilde{\text{T}}:\Sig\otimes\textbf{E}_V \to \textbf{U}_{\cA[[\formg]]}
,\quad 
\tH_F\otimes \ssA_I \mapsto \sum_{r=0}^\infty\    \sum_{r_1 +\, \cdots\,  + r_k=r } \dfrac{\formg^{r}}{r!}  \text{R}_{r_1;S_1}(\ssS^{\, r_1};\ssA_{S_1})\star \cdots\star \text{R}_{r_k;S_k}(\ssS^{\, r_k};\ssA_{S_k}) 
.\footnote{\ as usual, we abbreviate $\text{R}_{r;I}(\ssS^{\, r};\ssA_I):=\text{R}_{[r];I}(\ssS^{\, [r]};\ssA_I)=\text{R}_{[r];I}(\underbrace{\ssS\otimes \dots \otimes \ssS}_{\text{$r$ times}} \, ;  \ssA_I)$}\]
In particular, the restriction to $\textbf{E}_+^\ast\otimes\textbf{E}_V$, i.e. the new $\text{T}$-product, is given by
\begin{align*}
\widetilde{\text{T}}_I(\ssA_I)&= \sum_{r=0}^\infty \dfrac{\formg^{r}}{r!}  \text{R}_{r;I}(\ssS^{\, r};\ssA_I) \\
&=\text{T}_I(\ssA_I) + \underbrace{\formg\, \text{T}^1_I(\downarrow \tH_{(I)} \otimes \ssS  \ssA_I) + \dfrac{\formg^2}{2!} \text{T}^2_I(\downarrow \downarrow \tH_{(I)} \otimes \ssS \ssS   \ssA_I)  + \cdots}_{\text{perturbation}}\ .
\end{align*}
Similarly for the advanced Steinmann arrow. Recall that the Steinmann arrows give $\Sig$ the structure of a Hopf $\textbf{E}$-algebra in two ways (\autoref{steinmannarrowaredercoder}), and so ${\text{T}_{\downarrow,\ssS}}$ and ${\text{T}_{\uparrow,\ssS}}$ are indeed new systems of generalized $\text{T}$-products by \hbox{\autoref{prop:perturbedishomo}}, i.e. they are homomorphisms of algebras.

We let $\mathcal{V}_{\! \formg\ssS}$ and $\mathcal{W}_{\! \formg\ssS}$ denote the $\text{T}$-exponentials for these new perturbed systems of generalized $\text{T}$-products (as defined in \textcolor{blue}{(\refeqq{eq:tsexp})}), thus
\[
\mathcal{V}_{\! \formg\ssS}: V\to \cA[[\formg,\! \formj]],
\qquad
\mathcal{V}_{\! \formg\ssS}(\formj\! \ssA)
:=
\sum_{n=0}^\infty 
\dfrac{\formj^n c^n}{n!}\,  {^{\downarrow, \ssS}\text{T}}_n (\ssA^n)
=
\sum_{n=0}^\infty \sum_{r=0}^\infty 
\dfrac{\formg^{r} \formj^n c^{r+n}}{r!\, n!}\,  \text{R}_{r;n} (\ssS^{\, r} ; \ssA^n)         
\]
and
\[
\mathcal{W}_{\! \formg\ssS}: V\to \cA[[\formg,\! \formj]],
\qquad
\mathcal{W}_{\! \formg\ssS}(\formj\! \ssA):=\sum_{n=0}^\infty 
\dfrac{\formj^n c^n}{n!}\,  {^{\uparrow, \ssS}\text{T}}_n (\ssA^n)
=
\sum_{n=0}^\infty \sum_{r=0}^\infty 
\dfrac{\formg^{r} \formj^n c^{r+n}}{r! \, n!}\,  \text{A}_{r;n} (\ssS^{\, r};\ssA^n)         
.\]

\begin{thm}
We have
\[
\mathcal{V}_{\formg \ssS}(\formj\! \ssA)= 
\mathcal{S}^{-1}( \formg \ssS)\star \mathcal{S}(\formg \ssS +\formj\! \ssA )
\qquad \text{and} \qquad 
\mathcal{W}_{\formg \ssS}(\formj\! \ssA)= 
\mathcal{S}(\formg \ssS +\formj\! \ssA )\star \mathcal{S}^{-1}( \formg \ssS)
.\]
\end{thm}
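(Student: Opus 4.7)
The plan is to verify the identity by direct term-by-term comparison of the formal power series in $\formg$ and $\formj$, using the closed formulas already at hand. The key ingredients are: (i) the explicit expansion \textcolor{blue}{(\refeqq{eq:retardprod})} of the retarded product $\text{R}_{Y;I}$ as a sum of $\overline{\text{T}} \star \text{T}$ terms, coming from the expression $Y\!\downarrow \tH_{(I)} = \sum_{Y_1 \sqcup Y_2 = Y} \overline{\tH}_{(Y_1)} \tH_{(Y_2 \sqcup I)}$ in \textcolor{blue}{(\refeqq{eq:retardadvan})}; (ii) the result of \autoref{sec:Perturbation of T-Products by Up Coderivation} that $\mathcal{S}(\formg \ssS + \formj\!\ssA) = \mathcal{S}_{\formg\ssS}(\formj\!\ssA)$, equivalently the binomial expansion of $(\formg \ssS + \formj\!\ssA)^k$ inside each $\text{T}_k$; and (iii) the formula for $\mathcal{S}^{-1}$ as a series in reverse T-products.

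First, I would specialize \textcolor{blue}{(\refeqq{eq:retardprod})} to $Y = [r]$ and $I = [n]$ with repeated decorations. Since T-products and reverse T-products are $\sfS$-equivariant, the sum over ordered splittings $Y_1 \sqcup Y_2 = [r]$ collapses to a binomial sum over $r_1 + r_2 = r$ weighted by $\binom{r}{r_1}$. Substituting into the definition of $\mathcal{V}_{\formg\ssS}(\formj\!\ssA)$ and absorbing the binomial coefficient into the factorials yields a triple sum indexed by $(r_1, r_2, n)$ with coefficient $\formg^{r_1+r_2} \formj^n c^{r_1+r_2+n}/(r_1!\, r_2!\, n!)$ applied to $\overline{\text{T}}_{r_1}(\ssS^{r_1}) \star \text{T}_{r_2 + n}(\ssS^{r_2} \ssA^n)$.

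Next, I would expand the right-hand side. The factor $\mathcal{S}(\formg\ssS + \formj\!\ssA)$ expands by the binomial theorem into $\sum_{r_2, n \geq 0} \formg^{r_2} \formj^n c^{r_2+n} / (r_2!\, n!) \cdot \text{T}_{r_2+n}(\ssS^{r_2} \ssA^n)$, while $\mathcal{S}^{-1}(\formg\ssS) = \sum_{r_1 \geq 0} \formg^{r_1} c^{r_1}/r_1! \cdot \overline{\text{T}}_{r_1}(\ssS^{r_1})$. The pointwise $\star$-product of these two functions on $V$ multiplies the two series coefficient-wise in $\cA$, producing exactly the same triple sum as the left-hand side; matching them term by term completes the retarded identity. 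The advanced identity follows by the parallel calculation based on $Y\!\uparrow \tH_{(I)} = \sum_{Y_1 \sqcup Y_2 = Y} \tH_{(Y_1 \sqcup I)} \overline{\tH}_{(Y_2)}$, which inverts the order of the $\overline{\text{T}} \star \text{T}$ factors and thus produces $\mathcal{S}(\formg\ssS + \formj\!\ssA) \star \mathcal{S}^{-1}(\formg\ssS)$.

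There is no real conceptual obstacle: once the closed forms for $\text{R}_{r;n}$ and for $\mathcal{S}(\formg\ssS + \formj\!\ssA)$ are in hand, the identity is forced by rearranging the three independent summation variables. The one point to handle carefully is the combinatorial factor: verifying that the $\binom{r}{r_1}$ arising from the $\sfS$-symmetrization of \textcolor{blue}{(\refeqq{eq:retardprod})} balances exactly against the factorials coming from currying the universal series $\mathtt{G}(c)$, so that the product of two generating functions $\mathcal{S}^{-1}(\formg\ssS)$ and $\mathcal{S}(\formg\ssS+\formj\!\ssA)$ agrees with the single series $\mathcal{V}_{\formg\ssS}(\formj\!\ssA)$. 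This is the Hopf-theoretic shadow of the fact that $\widetilde{\text{T}} = \text{T}^{\textbf{E}} \circ \widecheck{\rho}(\downarrow, \ssS)$ is a homomorphism of algebras (\autoref{prop:perturbedishomo}), guaranteeing that the formal rearrangement is internally consistent.
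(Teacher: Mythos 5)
Your proposal is correct and follows essentially the same route as the paper's proof: expand $\mathcal{V}_{\formg\ssS}(\formj\!\ssA)$ via the splitting formula \textcolor{blue}{(\refeqq{eq:retardprod})} for $\text{R}_{r;n}$, collapse the sum over $Y_1\sqcup Y_2=[r]$ into a binomial sum, and recognize the resulting factored series as $\mathcal{S}^{-1}(\formg\ssS)\star\mathcal{S}(\formg\ssS+\formj\!\ssA)$ using the earlier identity $\mathcal{S}_{\formg\ssS}(\formj\!\ssA)=\mathcal{S}(\formg\ssS+\formj\!\ssA)$. The only cosmetic difference is that you expand both sides and match term by term, whereas the paper rewrites the left-hand side directly into the right-hand side.
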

\begin{proof}
We have
\[         
\text{R}_{r;I}(\ssS^{\, r}; \ssA_I)=
\sum_{Y_1\sqcup Y_2=[r]} \overline{\text{T}}^{[Y_1]}_{\emptyset}(\ssS^{Y_1})   
\star 
\text{T}^{[Y_2]}_{I}( \ssS^{Y_2}\ssA_I) 
.\]
Then
\begin{align*}
\mathcal{V}_{\formg \ssS}(\formj\! \ssA)
=&\  
\sum_{n=0}^\infty \sum_{r=0}^\infty 
\dfrac{\formg^{r} \formj^n c^{r+n}}{r!\, n!}\,  \text{R}_{r;n} (\ssS^{\, r} ; \ssA^n) \\[6pt]
=&\  
\sum_{n=0}^\infty \sum_{r=0}^\infty 
\dfrac{\formg^{r} \formj^n c^{r+n}}{r!\, n!} 
\sum_{Y_1\sqcup Y_2=[r]} \overline{\text{T}}^{[Y_1]}_{\emptyset}(\ssS^{Y_1})   
\star 
\text{T}^{[Y_2]}_{n}(\ssS^{Y_2}\ssA^n) \\[6pt]
=& \ 
\sum^\infty_{r=0} \dfrac{\formg^{r} c^r}{r!} \overline{\text{T}}^r_{\emptyset}(\ssS^{\, r}) 
\star
\sum_{n=0}^\infty \sum_{r=0}^\infty \dfrac{c^n}{n!} \text{T}^r_{n} (\ssS^{\, r} \ssA^n) 
\\[6pt]
=& \ 
\mathcal{S}^{-1}( \formg \ssS)\star \mathcal{S}(\formg \ssS +\formj\! \ssA )
\end{align*}
The case for $\mathcal{W}_{\formg \ssS}(\formj\! \ssA)$ is then similar.
\end{proof}

In the following, recall formal differentiation \textcolor{blue}{(\refeqq{eq:formalder})}.

\begin{cor}[Bogoliubov's Formula {\cite[Chapter 4]{Bogoliubov59}}] 
We have
\begin{equation} \label{eq:Bog}
\widetilde{\text{T}}_i(\ssA)
=
\dfrac{1}{c}\,  \dfrac{d}{d \formj }  \Bigr|_{\formj=0} \mathcal{V}_{\formg \ssS}(\formj\! \ssA)
.
\end{equation}
\end{cor}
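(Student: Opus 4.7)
The plan is to unwind the definition of $\mathcal{V}_{\formg \ssS}(\formj\!\ssA)$ as a double formal power series in $\formj$ and $\formg$ and observe that differentiating in $\formj$ at $\formj=0$ simply extracts the $n=1$ coefficient. From the definition
\[
\mathcal{V}_{\formg \ssS}(\formj\! \ssA) = \sum_{n=0}^\infty \dfrac{\formj^n c^n}{n!}\, {^{\downarrow, \ssS}\text{T}}_n(\ssA^n),
\]
the operator $\frac{d}{d\formj}|_{\formj=0}$ annihilates every term with $n\neq 1$ and leaves $c \cdot {^{\downarrow, \ssS}\text{T}}_1(\ssA)$ from the $n=1$ term. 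Hence
\[
\dfrac{1}{c}\, \dfrac{d}{d\formj}\Bigr|_{\formj=0} \mathcal{V}_{\formg \ssS}(\formj\!\ssA) = {^{\downarrow, \ssS}\text{T}}_1(\ssA).
\]

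The remaining step is to identify the right-hand side with $\widetilde{\text{T}}_i(\ssA)$. For a singleton $\{i\}$ the only composition is $\tH_{(\{i\})}$, so by the explicit formula obtained earlier for the perturbed system,
\[
\widetilde{\text{T}}_i(\ssA) = {^{\downarrow, \ssS}\text{T}}_{\{i\}}(\tH_{(\{i\})}\otimes \ssA) = \sum_{r=0}^\infty \dfrac{\formg^{\, r}}{r!}\,\text{R}_{r;i}(\ssS^{\, r};\ssA),
\]
which matches ${^{\downarrow, \ssS}\text{T}}_1(\ssA)$ upon identifying $[1]$ with $\{i\}$. Combining the two displays yields \textcolor{blue}{(\refeqq{eq:Bog})}.

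I do not expect any serious obstacle: the statement is essentially a bookkeeping computation, using only the definitions of $\mathcal{V}_{\formg \ssS}$, the universal series $\mathtt{G}(c)$ (which accounts for the factor $c^n$ and thus for the $1/c$ prefactor), and the explicit formula for $\widetilde{\text{T}}$ on singletons coming from $Y\!\downarrow\tH_{(I)} = \mathtt{R}_{(Y;I)}$. The only mild subtlety worth emphasizing in the write-up is that $\frac{d}{d\formj}|_{\formj=0}$ makes sense termwise on the formal power series $\cA[[\formg,\!\formj]]$, so one does not need to worry about convergence; everything reduces to reading off the coefficient of $\formj^1$.
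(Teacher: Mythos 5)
Your proof is correct and follows essentially the same route as the paper: formal differentiation in $\formj$ followed by evaluation at $\formj=0$ extracts the coefficient of $\formj^1$, namely $c\,\widetilde{\text{T}}_1(\ssA)$, and dividing by $c$ gives the claim. The extra identification of ${^{\downarrow,\ssS}\text{T}}_1(\ssA)$ with $\widetilde{\text{T}}_i(\ssA)$ that you spell out is handled in the paper purely by notation (since $\widetilde{\text{T}}:=\,^{\downarrow,\ssS}\text{T}$), but it does no harm to make it explicit.
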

\begin{proof}
This is trivial. We have
\[
\dfrac{d}{d \formj }\mathcal{V}_{\formg \ssS}(\formj\! \ssA)
=
\dfrac{d}{d \formj } \sum_{n=0}^\infty 
\dfrac{\formj^n c^n}{n!}\, \widetilde{\text{T}}_n (\ssA^n)
=
\sum_{n=1}^\infty 
\dfrac{\formj^{n-1} c^n}{(n-1)!}\,  \widetilde{\text{T}}_n (\ssA^n)
.\]
Then, putting $\formj=0$, we obtain
\[
\dfrac{d}{d \formj }  \Bigr|_{\formj=0}  \mathcal{V}_{\formg \ssS}(\formj\! \ssA) 
= 
c\,  \widetilde{\text{T}}_1 (\ssA)
.\qedhere \]
\end{proof}

This formula was originally motivated by the path integral heuristic, see e.g. \cite[Remark 15.16]{perturbative_quantum_field_theory}. 

\subsection{$\text{R}$-Products and $\text{A}$-Products} 
The linear maps $\text{R}(Y;I)$ given by
\[
\text{R}(Y;I):\textbf{E}_V^{[Y]}[I] \to \cA
,\qquad
\ssS_Y \ssA_I \mapsto \text{R}_{Y;I}(\ssS_Y \ssA_I)
\] 
are called R\emph{-products}. In the case of singletons $I=\{i\}$, the maps $\text{R}(Y;i)$ are called \emph{total} R\emph{-products}. By \textcolor{blue}{(\refeqq{eq:retardadvan})}, $\text{R}$-products are given in terms of $\text{T}$-products and reverse $\text{T}$-products by
\[
\text{R}(Y;I)=\sum_{Y_1 \sqcup Y_2 =Y} \overline{\text{T}}(Y_1) \star  \text{T}(Y_2\sqcup I)  
.\]
Then
\[
\widetilde{\text{T}}(I)= \sum_{r=0}^\infty  \dfrac{c^r}{r!} \text{R}(r;I)
.\]
In a similar way, we can define the A-\emph{products} $\text{A}(Y;I)$, so that 
\[
\text{A}(Y;I)=\sum_{Y_1 \sqcup Y_2 =Y}  \text{T}(Y_1 \sqcup  I)   \star \overline{\text{T}}(Y_2)
.\]
The total R-products are both R-products and generalized R-products, which is due to the double description appearing in \autoref{rem:double}. A related result is \cite[Proposition 109]{aguiar2013hopf}.

\begin{remark}
In the literature, the total retarded products in our sense are sometimes called retarded products, and the retarded products in our sense are then called generalized retarded products, e.g. \cite{polk58}, \cite[Exercise 3.3.16]{dutsch2019perturbative}. 
\end{remark}

\part{Perturbative Algebraic Quantum Field Theory}

We now apply the general theory we have developed to the case of a real scalar quantum field on a Minkowski spacetime, as described by pAQFT (pAQFT deals more generally with perturbative Yang-Mills gauge theory on curved spacetimes). Mathematically, the important extra property is a causal structure on the vector space of decorations $V$, which allows one to impose causal factorization. 

Our references for pAQFT are \cite{dutfred00}, \cite{rejzner2016pQFT}, \cite{dutsch2019perturbative}, \cite{perturbative_quantum_field_theory}. We mainly adopt the notation and presentation of \cite{perturbative_quantum_field_theory}. Key features of pAQFT are its local, i.e. sheaf-theoretic, approach, the (closely related) use of adiabatic switching of interaction terms to avoid IR-divergences, and the interpretation of renormalization as the extension of distributions to the fat diagonal to avoid UV-divergences. It provides a rigorous mathematical framework in which to consider the Wilsonian cutoff. 

Note that connections between QFT and species have been previously studied in \cite{MR2036353}, \cite{MR2862982}, \cite{MR3753672}.

\section{Spacetime and Field Configurations} 

Let $\cX\cong \bR^{1,p}$ denote a $(p+1)$-dimensional Minkowski spacetime, for $p\in \bN$. Thus, $\cX$ is a real vector space equipped with a metric tensor which is a symmetric nondegenerate bilinear form $\cX\times \cX\to \bR$ with signature $(1,p)$. The bilinear form gives rise to a volume form on $\cX$, which we denote by $\text{dvol}_\cX\in \Omega^{p+1}(\cX)$. For regions of spacetime $X_1,X_2\subset \cX$, we write 
\[
X_1\! \vee\! \! \wedge X_2
\] 
if one cannot travel from $X_1$ to $X_2$ on a future-directed timelike or lightlike curve. We have the set species $\cX^{(-)}$ given by
\[
I\mapsto \cX^I:= \big \{ \text{functions}\ I\to \cX \big\}   
.\]

\begin{remark} 
If $p=0$, or we just take the time components, then the space of configurations \hbox{$\lambda:I\to \cX$} (we do not require injectivity) modulo translations of $\cX$ is the (essentialized) braid arrangement over $I$. This configuration space is a tropical algebraic torus, so we may take the toric compactification with respect to the braid arrangement fan, denoted $\bT \Sigma$, giving the tropical version of the Losev-Manin moduli space \cite{losevmanin}. The boundary at infinity consists of would-be limiting configurations where points are separated by infinite times. Compositions may be identified with limiting configurations as follows, e.g. the composition $(12,3)$ is the configuration where $1$ and $2$ coincide, with an infinite time separating them from $3$. Then a natural Hopf monoidal structure\footnote{\ in the sense of \cite[Section 4.3]{aguiar2013hopf}} on $\bT\Sigma$ induces the structure of $\Sig$, see \cite[Introduction]{norledge2019hopf}. 
\end{remark}


For simplicity, we restrict ourselves to the Klein-Gordan real scalar field on $\cX$. Therefore, let $E\to \cX$ be a smooth real vector bundle over $\cX$ with one-dimensional fibers. An (off-shell) \emph{field configuration} $\Phi$ is a smooth section of the bundle $E\to \cX$,
\[
\Phi:\cX\hookrightarrow E
,\qquad
x\mapsto \Phi(x)
.\]
The space of all field configurations, denoted $\Gamma(E)$, has the structure of a Fr\'echet topological (real) vector space. We can always pick an isomorphism $(E\to \cX) \cong (\cX\times \bR\to \cX)$, which induces an isomorphism $\Gamma(E)\cong C^\infty(\cX,\bR)$, so that field configurations are modeled as smooth functions $\cX\to \bR$. 

Let $E^\ast\to \cX$ denote the dual vector bundle of $E$, and let the canonical pairing be denoted by
\[
\la -,- \ra : E^\ast \otimes E \to \bR
.\]
Let a \emph{compactly supported distributional section} $\ga$ be a distribution of field configurations \hbox{$\ga:\Gamma(E)\to \bR$}, i.e. an element of the topological dual vector space of $\Gamma(E)$, which is modeled as a sequence $(\ga_j)_{j\in \bN}$ of smooth compactly supported sections of the dual bundle $E^\ast\to \cX$,
\[
\ga_j:\cX \hookrightarrow E^\ast, \qquad j\in \bN
,\]
where the modeled distribution is recovered as the following limit of integrals,
\[
\Gamma(E)\to \bR, \qquad \Phi \mapsto \int_{x\in \cX}  \big \la \ga(x), \Phi(x)  \big \ra   \text{dvol}_\cX := \lim_{j\to \infty} \int_{x\in \cX} \la \ga_j(x) , \Phi(x) \ra    \text{dvol}_\cX 
.\]
The space of all compactly supported distributional sections is denoted $\Gamma_{\text{cp}}'(E^\ast)$. By e.g. \cite[Lemma 2.15]{bar15}, all distributions $\Gamma(E)\to \bR$ may be obtained as compactly supported distributional sections.


We can pullback the vector bundle $E^\ast$ to $\cX^I$ along each canonical projection 
\[
\cX^I\to \cX^{\{i\}}\cong \cX 
,\qquad 
i\in I
.\] 
The tensor product of these $n$ many pullback bundles is the exterior tensor product bundle $(E^\ast)^{ \boxtimes I }$. This defines a presheaf of smooth vector bundles on $\sfS$,  
\[
\sfS^{\op}\to \textsf{Diff}_{/ \cX}
, \qquad 
I\mapsto (E^\ast)^{ \boxtimes I }
.\]
By taking compexified compactly supported distributional sections ${{\Gamma'}^\bC}_{\! \! \! \! \! \! \text{cp}}(-):=\Gamma_{\text{cp}}'(-)\otimes_\bR \bC$, we obtain the complex vector species ${{\boldsymbol{\Gamma}'}^\bC}_{\! \! \! \! \! \! \text{cp}}(E^\ast)$, given by
\[
{{\boldsymbol{\Gamma}'}^\bC}_{\! \! \! \! \! \! \text{cp}}(E^\ast)[I]:= {{\Gamma'}^\bC}_{\! \! \! \! \! \! \text{cp}}\big ((E^\ast)^{ \boxtimes I }\big)
.\]


\section{Observables} 


We now set $\Bbbk=\bC$. An off-shell \emph{observable} $\emph{\textsf{O}}$ is a smooth functional of field configurations into the complex numbers,
\[     
\emph{\textsf{O}}:\Gamma(E)\to \bC
,\qquad 
\Phi\mapsto \emph{\textsf{O}}(\Phi)  
.\]
The space of all observables is denoted $\text{Obs}$. We can pointwise multiply observables, sometimes called the \emph{normal ordered product}, so that observables form a commutative $\bC$-algebra,
\[ 
\text{Obs}\otimes \text{Obs} \to \text{Obs}, \qquad \emph{\textsf{O}}_1\otimes \emph{\textsf{O}}_2 \mapsto  \emph{\textsf{O}}_1 \cdot \emph{\textsf{O}}_2         
\]
where
\[      
\emph{\textsf{O}}_1 \cdot \emph{\textsf{O}}_2(\Phi):=  \! \! \! \underbrace{\emph{\textsf{O}}_1(\Phi)\emph{\textsf{O}}_2(\Phi)}_{\text{multiplication in $\bC$}} \! \!  \!  
.\]
Thus, we may form the commutative algebra in species $\textbf{U}_{\text{Obs}}$, given by \textcolor{blue}{(\refeqq{eq:algU})}. A \emph{linear observable} $\emph{\textsf{O}}$ is an observable which is additionally a linear functional, that is
\[
\emph{\textsf{O}}( \Phi_1 + \Phi_2 ) = \emph{\textsf{O}}(\Phi_1) + \emph{\textsf{O}}(\Phi_2)
\qquad \text{and} \qquad
\emph{\textsf{O}}(c \Phi) = c \emph{\textsf{O}}(\Phi) \qquad \text{for}\quad  c\in \bC
.\] 
The space of linear observables is denoted $\text{LinObs}$. In particular, for each spacetime event $x\in \cX$, we have the \emph{field observable} $\boldsymbol{\Phi}(x) \in \text{LinObs}$, given by
\[    
\boldsymbol{\Phi}(x):    \Gamma(E)\to \bC,  \qquad  \Phi\mapsto \Phi(x)
.\] 

We have the following morphism of species,
\[
\textbf{X} \otimes {{\boldsymbol{\Gamma}'}^\bC}_{\! \! \! \! \! \! \text{cp}}(E^\ast) \to \textbf{U}_{\text{Obs}},
\qquad
\tH_i \otimes \ga \mapsto 
\bigg(  \Phi \mapsto   \int_{x\in \cX} \big \la \ga(x), \Phi(x) \big \ra     \text{dvol}_\cX  \bigg)
.\]
It follows from \cite[Lemma 2.15]{bar15} that the colimit (as in e.g. \cite[Remark 15.7]{aguiar2010monoidal}) of the species which is the image of this morphism is all the linear observables $\text{LinObs}$. The currying of this map is given by
\[
\textbf{X}\to \cH \big ( {{\boldsymbol{\Gamma}'}^\bC}_{\! \! \! \! \! \! \text{cp}}(E^\ast) ,  \textbf{U}_{\text{Obs}}\big  ),
\qquad 
\tH_i\mapsto \boldsymbol{\Phi}_i=\boldsymbol{\Phi}    \]
where
\[
\boldsymbol{\Phi}(\ga):=\bigg( \Phi \mapsto  \int_{x\in \cX}\big \la \ga(x), \Phi(x) \big \ra     \text{dvol}_\cX   \bigg)
.\]
If we restrict to bump functions $b\in \Gamma_{\text{cp}}(E^\ast)\otimes_{\bR} \bC$, also called `smearing functions', then one might call the linear map
\[
\boldsymbol{\Phi}: \Gamma_{\text{cp}}(E^\ast)\otimes_{\bR} \bC \to   \text{Obs}
,\qquad
  b\mapsto \boldsymbol{\Phi}(b)
\] 
an observable-valued distribution, and this is sometimes referred to as `the (smeared) field'. The field observable $\boldsymbol{\Phi}(x)$ is recovered by evaluating $\boldsymbol{\Phi}$ on the Dirac delta function $\delta_x$ localized at $x$. One views $b$ as the ``smearing'' of a Dirac delta function, hence smearing functions and smeared field. 
We extend the smeared field by defining the following morphism of species,
\[
\textbf{E} \otimes {{\boldsymbol{\Gamma}'}^\bC}_{\! \! \! \! \! \! \text{cp}}(E^\ast) \to \textbf{U}_{\text{Obs}}
,\qquad
\tH_I\otimes \ga_I 
\mapsto
\bigg(  \Phi \mapsto   \int_{\cX^I}
\big \la 
\ga_I(x_{i_1}, \dots, x_{i_n}),
\Phi(x_{i_1})\dots \Phi(x_{i_n}) 
\big \ra     \text{dvol}_{\cX^I}  \bigg)  
.\]
The colimit of the species which is the image of this morphism is the vector space of \emph{polynomial observables}, denoted
\[
\text{PolyObs}\subset \text{Obs}
.\]
If we restrict the limit $\mathscr{S}({{\boldsymbol{\Gamma}'}^\bC}_{\! \! \! \! \! \! \text{cp}}(E^\ast)) \to \text{Obs}[[\formj]]$ to finite series and set $\formj=1$, then we recover \cite[Definition 1.2.1]{dutsch2019perturbative}. The space of \emph{microcausal polynomial observables} $\mathcal{F}$ is the subspace 
\[
\mathcal{F} \subset \text{PolyObs}
\] 
consisting of those polynomial observables which satisfy a certain microlocal-theoretic condition called \emph{microcausality}, for which we direct the reader to \cite[Definition 1.2.1 (ii)]{dutsch2019perturbative}. Following \cite[Definition 1.3.4]{dutsch2019perturbative}, the space of \emph{local observables}
\[
\mathcal{F}_{\text{loc}}\subset \text{Obs}
\] 
consists of those observables obtained by integrating a polynomial with real coefficients in the field and its derivatives (`field polynomials') against a bump function $b\in \Gamma_{\text{cp}}(E^\ast)\otimes_\bR \bC$. Importantly, we have a natural inclusion
\[
\mathcal{F}_{\text{loc}} \hookrightarrow \mathcal{F}
,\qquad
\ssA \mapsto\  :\ssA: 
.\] 
Let $\mathcal{F}_{\text{loc}}[[\hbar]]$ and $\mathcal{F}[[\hbar]]$ denote the spaces of formal power series in $\hbar$ with coefficients in $\mathcal{F}_{\text{loc}}$ and $\mathcal{F}$ respectively, and let $\mathcal{F}((\hbar))$ denote the space of Laurent series in $\hbar$ with coefficients in $\mathcal{F}$.

Applying Moyal deformation quantization with formal Planck's constant $\hbar$, $\mathcal{F}[[\hbar]]$ is a formal power series $\ast$-algebra, called the (abstract, off-shell) \emph{Wick algebra}, with multiplication the Moyal star product \cite[Definition 2.1.1]{dutsch2019perturbative} defined with respect to the Wightman propagator $\Delta_{\text{H}}$ for the Klein-Gordan field \cite[Section 2.2]{dutsch2019perturbative}, 
\[
\mathcal{F}[[\hbar]] \otimes \mathcal{F}[[\hbar]] \to \mathcal{F}[[\hbar]]
,\qquad
\emph{\textsf{O}}_1 \otimes \emph{\textsf{O}}_2\\
\mapsto
\emph{\textsf{O}}_1\star_{\text{H}}\emph{\textsf{O}}_2
.\] 
We may form the algebra in species $\textbf{U}_{\mathcal{F}[[\hbar]]}$, or, after allowing negative powers of $\hbar$, $\textbf{U}_{\mathcal{F}((\hbar))}$. 


\section{Time-Ordered Products and S-Matrix Schemes} \label{sec:Time-Ordered Products}


For $\ssA\in \mathcal{F}_{\text{loc}}[[\hbar]]$, let $\text{supp}(\ssA)$ denote the spacetime support of $\ssA$. Given a composition $G$ of $I$, we say that $\ssA_I\in \textbf{E}_{\mathcal{F}_{\text{loc}}[[\hbar]]}[I]$ \emph{respects} $G$ if
\[
\text{supp}({\ssA_{i_1}})\vee\! \! \wedge\ \text{supp}(\ssA_{i_2}) 
\qquad \quad \text{for all}\quad 
(i_1,i_2)\quad \text{such that}\quad  G|_{\{i_1, i_2\}}= (i_1,i_2) 
.\footnote{\ $G|_{\{i_1, i_2\}}= (i_1,i_2)$ means that $i_1$ and $i_2$ are in different lumps, with the lump containing $i_1$ appearing to the left of the lump containing $i_2$}\]  
Consider a system of $\text{T}$-products (as defined in \autoref{sec:T-Products, Generalized T-Products, and Generalized R-Products}) of the form
\[
\text{T}: \textbf{E}^\ast_+ \otimes \textbf{E}_{\mathcal{F}_{\text{loc}}[[\hbar]]}\to 
\textbf{U}_{\mathcal{F}((\hbar))},
\qquad
\tH_{(I)}\otimes \ssA_I\mapsto \text{T}_I(\tH_{(I)}\otimes \ssA_I)=\text{T}_I(\ssA_I)
.\]

\begin{figure}[t]
	\centering
	\includegraphics[scale=0.45]{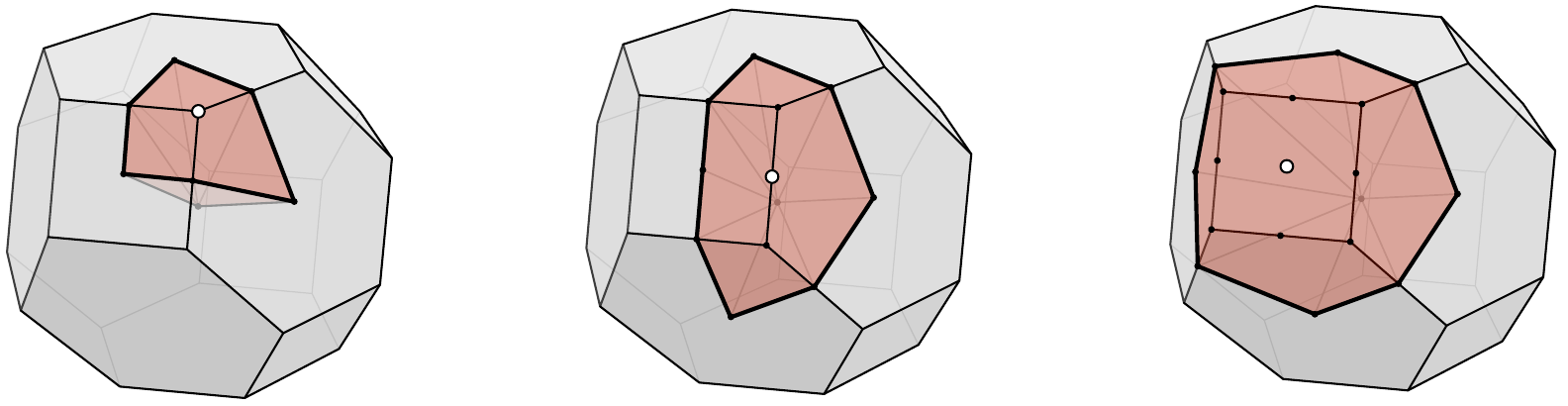}
	\caption{For $p=0$, configurations $\lambda\in \cX^I/\cX=\bR^I/(1,\dots,1)$ shown in red which respect the composition $G$ shown in white (i.e. \hbox{$\lambda(i_1)\vee\! \! \wedge\ \lambda(i_2)$} for all $(i_1,i_2)$ such that $G|_{\{i_1, i_2\}}=(i_1,i_2)$), depicted on the tropical toric compactification of the tropical torus $\bR^I/(1,\dots,1)=(\bT^\times)^I/\bT^\times$ with respect to the braid arrangement fan.}
	\label{fig:respects}
\end{figure}


\noindent Since $\Sig$ is the free algebra on $\textbf{E}^\ast_+$, we have the unique extension to a system of generalized \hbox{$\text{T}$-products}
\[
\text{T}: \Sig \otimes \textbf{E}_{\mathcal{F}_{\text{loc}}[[\hbar]]}\to 
\textbf{U}_{\mathcal{F}((\hbar))},
\qquad
\text{T}_I(\tH_{F}\otimes \ssA_I):=\text{T}_{S_1}(\ssA_{S_1})\star_{\text{H}} \dots \star_{\text{H}}  \text{T}_{S_k}(\ssA_{S_k}) 
.\]
Then:
\begin{enumerate}[label={\arabic*.}]
\item 
(perturbation) we say that $\text{T}$ satisfies \emph{perturbation} if the singleton components $\text{T}_{i}$ are isomorphic to the inclusion $\mathcal{F}_{\text{loc}}[[\hbar]]\hookrightarrow \mathcal{F}((\hbar))$, that is
\[   
{\text{T}}_{i}( \ssA  )=  \,   :\! \ssA \! :       
\]
\item
(causal factorization) we say that $\text{T}$ satisfies \emph{causal factorization} if for all compositions $(S,T)$ of $I$ with two lumps, if $\ssA_I \in \textbf{E}_{\mathcal{F}_{\text{loc}}[[\hbar]]}[I]$ respects $(S,T)$\footnote{\ explicitly, $\text{supp}({\ssA_{i_1}})\vee\! \! \wedge\ \text{supp}(\ssA_{i_2})$ for all $i_1\in S$ and $i_2\in T$} then
\begin{equation}\label{eq:causalfac}
\text{T}_I(\tH_{(I)}\otimes\ssA_I)
=\text{T}_I(\tH_{(S,T)}\otimes\ssA_I).\footnote{\ or equivalently $\text{T}_I(\ssA_I)=\text{T}_S(\ssA_S)\star_{\text{H}} \text{T}_T(\ssA_T)$} 
\end{equation}
\end{enumerate}
Let a (fully normalized) \emph{system of time-ordered products} be a system of $\text{T}$-products which satisfies perturbation and causal factorization. The corresponding unique extension of $\text{T}$ to $\Sig$ is called the associated \emph{system of generalized time-ordered products}. After currying
\[     
\Sig \to \cH(  \textbf{E}_{\mathcal{F}_{\text{loc}}[[\hbar]]} ,  \textbf{U}_{\mathcal{F}((\hbar))} ), 
\qquad     
\tH_{F}\mapsto  \text{T}(S_1)\dots \text{T}(S_k)   
\]
the linear maps
\[
\text{T}(S_1)\dots \text{T}(S_k):  \mathcal{F}_{\text{loc}}[[\hbar]]^{\otimes I}  
\to
\mathcal{F}((\hbar))
,\qquad
\ssA_I\mapsto \text{T}_I(\tH_{F}\otimes \ssA_I)
\]
are called \emph{generalized time-ordered products}. The linear maps $\text{T}(I)$ are called \emph{time-ordered products}. After fixing a field polynomial, so that each $\ssA_{i_j}$ of $\ssA_I$ is determined by a bump function $b_{i_j}$, they are usually presented in generalized function notation as follows,
\[
\text{T}_I (\ssA_{i_1}  \otimes \cdots \otimes \ssA_{i_n} )
= 
\int_{\cX^I}   \text{T}(x_{i_1}, \dots , x_{i_n})  b_{i_1}( x_{i_1} ) \dots  b_{i_n}(x_{i_1}) 
dx_{i_1} \dots dx_{i_n}
\]
where $(x_{i_1}, \dots , x_{i_n}) \mapsto \text{T}(x_{i_1}, \dots , x_{i_n})$ is an ``operator-valued'' (Wick algebra-valued) generalized function. See e.g. \cite[Section 1.2]{ep73roleofloc}. In the following, recall the Tits product `$\, \triangleright$' from \autoref{sec:Tits}.


\begin{prop}\label{prob:causalfac}
Let 
\[
\text{T}: \textbf{E}^\ast_+ \otimes \textbf{E}_{\mathcal{F}_{\text{loc}}[[\hbar]]}\to 
\textbf{U}_{\mathcal{F}((\hbar))}
\] 
be a system of $\text{T}$-products which satisfies causal factorization. Given a composition \hbox{$G=(U_1,\dots,U_k)$} of $I$, and $\ssA_I \in \textbf{E}_{\mathcal{F}_{\text{loc}}[[\hbar]]}[I]$ which respects $G$, then 
\[ 
\text{T}_I( \mathtt{a} \otimes\ssA_I)
= 
\text{T}_I( \mathtt{a} \triangleright \tH_{G} \otimes\ssA_I)
\qquad  \text{for all} \quad 
\mathtt{a}\in \Sig[I]     
.\]
\end{prop}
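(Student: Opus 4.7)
The plan is to reduce to the case $\mathtt{a} = \tH_F$ by linearity, unfold both sides using the definition of the extension $\Sig \otimes \textbf{E}_{\cF_{\text{loc}}[[\hbar]]} \to \textbf{U}_{\text{PolyObs}_{\text{mc}}((\hbar))}$, and then match them by iterating the two-lump causal factorization hypothesis.

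First, I would recall the explicit description of the Tits product given after its definition: if $F = (S_1,\dots,S_{k_F})$ and $G = (U_1,\dots,U_k)$, then
\[
\tH_F \triangleright \tH_G \; = \; \tH_{(U_1\cap S_1,\ldots,U_k\cap S_1,\,\ldots,\,U_1\cap S_{k_F},\ldots,U_k\cap S_{k_F})_+},
\]
which is simply the concatenation of the restrictions $G|_{S_1} \cdots G|_{S_{k_F}}$. Since the extended $\text{T}$ is defined by $\text{T}_I(\tH_F \otimes \ssA_I) = \text{T}_{S_1}(\ssA_{S_1}) \star_{\text{H}} \cdots \star_{\text{H}} \text{T}_{S_{k_F}}(\ssA_{S_{k_F}})$, and similarly for $\tH_F \triangleright \tH_G$ along its finer composition, the desired identity reduces lump-by-lump to the statement: for each $j$, if $\ssA_{S_j}$ respects a composition $H = (V_1,\dots,V_m)$ of $S_j$, then
\[
\text{T}_{S_j}(\ssA_{S_j}) \; = \; \text{T}_{V_1}(\ssA_{V_1}) \star_{\text{H}} \cdots \star_{\text{H}} \text{T}_{V_m}(\ssA_{V_m}).
\]

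The key step is then to establish this iterated factorization. I would proceed by induction on $m$. The base case $m=1$ is trivial. For the inductive step, note that if $\ssA_{S_j}$ respects $(V_1,\dots,V_m)$, then in particular $\ssA_{S_j}$ respects the two-lump composition $(V_1,\, V_2 \sqcup \cdots \sqcup V_m)$, so by the hypothesized causal factorization \textcolor{blue}{(\refeqq{eq:causalfac})} we get
\[
\text{T}_{S_j}(\ssA_{S_j}) = \text{T}_{V_1}(\ssA_{V_1}) \star_{\text{H}} \text{T}_{V_2 \sqcup \cdots \sqcup V_m}(\ssA_{V_2 \sqcup \cdots \sqcup V_m}).
\]
The restriction $\ssA_{V_2 \sqcup \cdots \sqcup V_m}$ respects $(V_2,\dots,V_m)$ (directly from the definition of respecting $G$), so the inductive hypothesis applies. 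Applied with $H := G|_{S_j}$ (which is a well-defined composition of $S_j$ after removing empty lumps, and which $\ssA_{S_j}$ respects because $\ssA_I$ respects $G$), this produces the lump-by-lump match needed to identify both sides.

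The main obstacle, though largely bookkeeping, is keeping the restriction conventions consistent: one must verify that ``$\ssA_I$ respects $G$'' implies ``$\ssA_{S_j}$ respects $G|_{S_j}$'' and that the empty-lump removal in $(-)_+$ does not spoil the identification of the concatenated restrictions with $\tH_F \triangleright \tH_G$. Both are immediate from the definitions: empty lumps contribute unit factors $\text{T}_\emptyset = 1_{\text{PolyObs}_{\text{mc}}((\hbar))}$ on the factorized side and are deleted on the Tits-product side, so the two normalizations agree. Finally, extending from $\mathtt{a} = \tH_F$ to arbitrary $\mathtt{a} \in \Sig[I]$ is just linearity of $\text{T}_I$ and of the Tits action $(-) \triangleright \tH_G$.
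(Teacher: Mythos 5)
Your proof is correct and follows essentially the same route as the paper's: reduce to the $\tH$-basis, identify $\tH_F\triangleright\tH_G$ with the concatenation of the restrictions $G|_{S_j}$, and iterate the two-lump causal factorization within each lump. You simply make explicit the induction that the paper compresses into ``by repeated applications of causal factorization'' and the lump-by-lump step it leaves implicit in ``it follows,'' which is a reasonable amount of added detail but not a different argument.
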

\begin{proof}
We have
\[
\text{T}_I(\tH_{G}\otimes\ssA_I)
=
\underbrace{\text{T}_{U_1}(\ssA_{U_1})\star_{\text{H}} \dots \star_{\text{H}} \text{T}_{U_k}(\ssA_{U_k})
=
\text{T}_I(\ssA_I)}_{\text{by repeated applications of causal factorization}}
.\]
Observe that the action $\tH_F \mapsto \tH_F \triangleright \tH_{G}$, for $F\in \Sigma[I]$, replaces the lumps of $F$ with their intersections with $G$. But we just saw that $\text{T}_I(\ssA_I)=\text{T}_I(\tH_{G}\otimes\ssA_I)$, and so it follows that
\[
\text{T}_I( \tH_F \otimes\ssA_I)
= 
\text{T}_I( \tH_F \triangleright \tH_{G} \otimes\ssA_I)
.\]
Since the claim is true for the $\tH$-basis, it is true for all $\mathtt{a}\in \Sig[I]$. 
\end{proof}

\begin{cor}\label{cor:cor}
If $\mathtt{a} \triangleright \tH_{G} =0$, then 
\[
\text{T}_I(\mathtt{a} \otimes\ssA_I)=0
\]
for all $\ssA_I\in \textbf{E}_{\mathcal{F}_{\text{loc}}[[\hbar]]}[I]$ which respect $G$. 
\end{cor}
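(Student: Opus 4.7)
The plan is to deduce this corollary as an immediate consequence of \autoref{prob:causalfac} combined with $\bC$-linearity of $\text{T}_I$. Specifically, since $\ssA_I$ respects $G$, the proposition applies to any element of $\Sig[I]$, and in particular to $\mathtt{a}$, yielding
\[
\text{T}_I(\mathtt{a}\otimes\ssA_I)=\text{T}_I(\mathtt{a}\triangleright \tH_G\otimes\ssA_I).
\]
Under the hypothesis $\mathtt{a}\triangleright \tH_G=0$, the right-hand side is $\text{T}_I(0\otimes \ssA_I)$, which equals $0$ by linearity of the component map $\text{T}_I:\Sig[I]\otimes\textbf{E}_{\mathcal{F}_{\text{loc}}[[\hbar]]}[I]\to \text{PolyObs}_{\text{mc}}((\hbar))$ in its first argument.

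There is no real obstacle here; the work has already been done in establishing \autoref{prob:causalfac}, and the corollary is just the contrapositive-flavored observation that Tits-product annihilators of $\tH_G$ are sent to zero by $\text{T}_I$ whenever the decoration respects $G$. I would write this out in two short sentences with no further manipulation. One could optionally remark on the physical significance: combined with \autoref{actondynkinwithtits}, this recovers the familiar statement that, under causal factorization, the generalized retarded product $\text{R}_\cS$ vanishes on configurations for which $(S,T)\in\cS$ while $\ssA_I$ respects $(T,S)$, which is the support property of generalized retarded functions depicted in \autoref{fig:supp}.
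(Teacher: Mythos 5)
Your proof is correct and is exactly the argument the paper intends: the corollary is stated without proof as an immediate consequence of \autoref{prob:causalfac}, and your two-step deduction (apply the proposition, then use linearity of $\text{T}_I$ in the first tensor factor together with $\mathtt{a}\triangleright\tH_G=0$) is the canonical way to fill it in.
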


The restriction of $\text{T}$ to the primitive part Lie algebra is called the associated \emph{system of generalized retarded products},
\[
\text{R}:\Zie\otimes\textbf{E}_{\mathcal{F}_{\text{loc}}[[\hbar]]} \to \mathcal{F}((\hbar))  
.\]
The image of the Dynkin elements $\mathtt{D}_\cS$ under the currying of $\text{R}$ are the \emph{generalized retarded products} $\text{R}_\cS$, see e.g. \cite[Equation 79]{ep73roleofloc}. Combining this with \autoref{actondynkinwithtits} and \autoref{cor:cor}, we see that the generalized retarded products have certain interesting support properties.
 
Given a system of generalized time-ordered products
\[
\text{T}: \Sig \otimes \textbf{E}_{\mathcal{F}_{\text{loc}}[[\hbar]]}\to 
\textbf{U}_{\mathcal{F}((\hbar))}
\]
the $\text{T}$-exponential $\mathcal{S}=\mathcal{S}_{\mathtt{G}(1/\text{i}\hbar)}$\footnote{\ here we let $\Sig$ be defined over $\Bbbk=\bC((\hbar))$, i.e. the field of restricted Laurent series in $\hbar$ with coefficients in $\bC$} (defined in \textcolor{blue}{(\refeqq{eq:tsexp})}) for the group-like series 
\[
\mathtt{G}(1/\text{i}\hbar): \textbf{E} \to \Sig
,\qquad
\tH_{I} \mapsto   \dfrac{1}{\text{i}\hbar}\tH_{(I)}
\]
is called the associated perturbative \emph{S-matrix scheme}. Thus, $\mathcal{S}$ is the function 
\[
\mathcal{S}: \mathcal{F}_{\text{loc}}[[\hbar]] \to \mathcal{F}((\hbar))[[\formj]] 
,\qquad        
\ssA\mapsto  \mathcal{S}(\formj\! \ssA)  
=
\sum^\infty_{n=0} \bigg(\dfrac{1}{\text{i}\hbar}\bigg)^{\! n}\dfrac{\formj^n}{n!} \text{T}_n(\ssA^n)     
.\]
Given a choice of vector $\ssS_{\text{int}}\in \mathcal{F}_{\text{loc}}[[\hbar]]$, now called the adiabatically switched \emph{interaction}, recall from \autoref{sec:Perturbation of T-Products by Up Coderivation} that we also have the $\text{T}$-exponential perturbed by the up coderivation of $\textbf{E}^\ast$,
\[
\mathcal{S}_{\formg \ssS_{\text{int}}}: \mathcal{F}_{\text{loc}}[[\hbar]] \to \mathcal{F}((\hbar))[[\formg,\! \formj]] 
,\qquad        
\ssA\mapsto  \mathcal{S}_{\formg \ssS_{\text{int}}}(\formj\! \ssA)  
=
\mathcal{S}( \formg \ssS_{\text{int}}+ \formj \! \ssA)     
.\]
As in \cite[Section 15]{perturbative_quantum_field_theory}, we may extend $\bC[[\hbar,\formg,\! \formj]]$-linearly to obtain a map
\[
\mathcal{F}_{\text{loc}}[[\hbar, \formg,\! \formj]]\la \formg,\! \formj \ra  \to \mathcal{F}((\hbar))[[\formg,\! \formj]] 
\]
where $\mathcal{F}_{\text{loc}}[[\hbar, \formg,\! \formj]]\la \formg,\! \formj \ra$ denotes formal power series which are at least linear in $\formg$ or $\! \formj$. 

\section{Interactions} 

Given a choice of interaction $\ssS_{\text{int}} \in \mathcal{F}_{\text{loc}}[[\hbar]]$, and a system of fully normalized generalized \hbox{time-ordered} products
\[
\text{T}: \Sig \otimes \textbf{E}_{\mathcal{F}_{\text{loc}}[[\hbar]]}\to 
\textbf{U}_{\mathcal{F}((\hbar))}
,\]
we have the new system of interacting generalized time-ordered products which is obtained by applying the construction of \autoref{sec:Perturbation of Products and Series} to the retarded Steinmann arrow biderivation $\downarrow(-)$ for $\Sig$, as considered in \autoref{sec:Perturbation of T-Products by Steinmann Arrows},
\[ 
\wt{\text{T}}
:
\Sig\otimes\textbf{E}_{\mathcal{F}_{\text{loc}}[[\hbar]]} 
\to 
\textbf{U}_{\mathcal{F}((\hbar))[[\formg]]}, 
\qquad 
\wt{\text{T}}:
=\text{T}^{\textbf{E}}\circ  \check{\rho}_{\downarrow,\ssS_{\text{int}}} 
.\]
The associated \emph{generating function scheme} $\mathcal{Z}_{\formg \ssS_{\text{int}}}$ for interacting field observables (more generally for time-ordered products of interacting field observables) is the new $\text{T}$-exponential for the group-like series $\mathtt{G}(1/\text{i}\hbar)$, denoted $\mathcal{V}_{\formg \ssS_{\text{int}}}$ in \autoref{sec:Perturbation of T-Products by Steinmann Arrows}. Thus, $\mathcal{Z}_{\formg \ssS_{\text{int}}}$ is the function
\[
\mathcal{Z}_{\formg \ssS_{\text{int}}}
: 
\mathcal{F}_{\text{loc}}[[\hbar]] \to \mathcal{F}((\hbar))[[\formg,\! \formj]] 
,\qquad
\ssA \mapsto \mathcal{Z}_{\formg \ssS_{\text{int}}}(\formj\! \ssA)
\]
where
\[
\mathcal{Z}_{\formg \ssS_{\text{int}}}(\formj\! \ssA)
\! :=\!
\sum_{n=0}^\infty \!  \bigg(\dfrac{1}{\text{i}\hbar}\bigg)^{\! \!  n}\! \dfrac{\formj^n}{n!} \wt{\text{T}}_n(\ssA_n)
=\! 
\sum_{n=0}^\infty \sum_{r=0}^\infty \!  
\bigg(\dfrac{1}{\text{i}\hbar}\bigg)^{\! \!  r+n}\! 
\dfrac{\formg^{r} \formj^n}{r!\, n!}\, \!   \text{R}_{r;n} (\ssS_{\text{int}}^{\, r} ; \ssA^n)    
=
\mathcal{S}^{-1}( \formg \ssS_{\text{int}})\star_{\text{H}} \mathcal{S}(\formg \ssS_{\text{int}} +\formj\! \ssA )
.\]
Then 
\[
\ssA_{\text{int}}:
=
\wt{\text{T}}_i(\ssA)
=
\sum_{r=0}^\infty 
\bigg(\dfrac{1}{\text{i}\hbar}\bigg)^{\! r} 
\dfrac{\formg^{r}}{r!} \text{R}^r_1 (\ssS^{\, r}_{\text{int}}; \ssA)  
\in \mathcal{F}((\hbar))[[\formg]]
\] 
is the \emph{local interacting field observable} of $\ssA$. Bogoliubov's formula \textcolor{blue}{(\refeqq{eq:Bog})} now reads
\[
\ssA_{\text{int}}
=
\text{i} \hbar \, \dfrac{d}{d\formj}\Bigr|_{\formj=0}  \mathcal{Z}_{\formg \ssS_{\text{int}}}(\formj\! \ssA)
.\]
One views $\ssA_{\text{int}}$ as the deformation of the local observable $\ssA$ due to the interaction $\ssS_{\text{int}}$ being turned on. One can show that $\wt{\text{T}}$ does indeed land in $\textbf{U}_{\mathcal{F}[[\hbar,\formg]]}$ \cite[Proposition 2 (ii)]{dutfred00}. The perturbative interacting quantum field theory then has a classical limit \cite{collini2016fedosov}, \cite{MR4109798}. 



\section{Scattering Amplitudes}\label{sec:scatterung}

We finish with a translation of a standard result in pAQFT (see \cite[Example 15.12]{perturbative_quantum_field_theory}) into our notation, which relates S-matrix schemes as presented in \autoref{sec:Time-Ordered Products} to S-matrices used to compute scattering amplitudes, which are predictions of pAQFT that are tested with scattering experiments at particle accelerators. 

Following \cite[Definition 2.5.2]{dutsch2019perturbative}, the \emph{Hadamard vacuum state} $\la - \ra_0$ is the linear map given by
\[ 
\la - \ra_0
: 
\mathcal{F}[[\hbar,\formg]] \to \bC[[\hbar,\formg]]
,\qquad
\emph{\textsf{O}}\mapsto  \la  \emph{\textsf{O}}\,  \ra_0:= \emph{\textsf{O}}\, (\Phi=0)
.\]
Let $\ssS_{\text{int}} \in \mathcal{F}_{\text{loc}}[[\hbar]]$. We say that the Hadamard vacuum state $\la - \ra_0$ is \emph{stable} with respect to the interaction $\ssS_{\text{int}}$ if for all $\emph{\textsf{O}}\in \mathcal{F}[[\hbar,\formg]]$, we have
\begin{equation}\label{eq:vacstab}
\big\la\emph{\textsf{O}}\star_{\text{H}}\mathcal{S}(\formg \ssS_{\text{int}}) \big \ra_0
=
\big \la\emph{\textsf{O}}\, \big\ra_0
\big\la    \mathcal{S}(\formg \ssS_{\text{int}})   \big \ra_0    
\qquad \text{and} \qquad
\big\la    
\mathcal{S}^{-1}(\formg \ssS_{\text{int}})\star_{\text{H}} \emph{\textsf{O}}  \,  
\big\ra_0
=
\dfrac{1}{\big \la \mathcal{S}(\formg \ssS_{\text{int}})\big \ra_0 } \big \la \emph{\textsf{O}}\, \big \ra_0
.
\end{equation}
In situations where 
\[
\ssS_{\text{int}} \otimes \ssA_I\in \textbf{E}'_{\mathcal{F}_{\text{loc}}[[\hbar]]}[I]
\qquad \text{respects} \qquad 
(S,\ast,T)
\footnote{\ this is a composition of $\{\ast\}\sqcup I$}
\] 
we can interpret free particles/wave packets labeled by $T$ coming in from the far past, interacting in a compact region according to the adiabatically switched interaction $\ssS_{\text{int}}$, and then emerging into the far future, labeled by $S$. For $\ssA_I\in \textbf{E}_{\mathcal{F}_{\text{loc}}[[\hbar]]}[I]$, let
\[\text{G}_I(\ssA_I):=\big\la 
\widetilde{\text{T}}(\ssA_I)  
\big\ra_0
.\]
If we fix the field polynomial of local observables to be $\text{P}(\Phi)=\Phi$, then $\ssA_I\mapsto \text{G}_I(\ssA_I)$ is the \hbox{time-ordered} $n$-point correlation function, or Green's function. They are usually presented in generalized function notation as follows,
\[
\text{G}_I (b_{i_1}  \otimes \cdots \otimes b_{i_n} )
= 
\int_{\cX^I} \Big\la   \text{T} \big ( \boldsymbol{\Phi}(x_{i_1}) \dots \boldsymbol{\Phi}(x_{i_n})\big ) \Big  \ra_0   b_{i_1}( x_{i_1} ) \dots  b_{i_n}(x_{i_n}) 
dx_{i_1} \dots dx_{i_n}
.\]
Note that to obtain the true Green's functions, we still have to take the adiabatic limit. 


\begin{prop} \label{prop:scattering}
If the Hadamard vacuum state $\la-\ra_0$ is stable with respect to $\ssS_{\text{int}} \in \mathcal{F}_{\text{loc}}[[\hbar]]$, and if $\ssS_{\text{int}} \otimes \ssA_I\in \textbf{E}'_{ \mathcal{F}_{\text{loc}[[\hbar]]}}[I]$ respects the composition $(S, \ast, T)$, then
\[  
\text{G}_I(\ssA_I)=  
\dfrac{1}
{\Big\la
\mathcal{S}(\formg\ssS_{\text{int}}) 
\Big\ra_0 }   
\Big\la \text{T}_S(\ssA_S)\star_{\text{H}}  \mathcal{S}(\formg\ssS_{\text{int}})\star_{\text{H}}  
\text{T}_T(\ssA_T)
\Big\ra_0
.\footnote{\ the element $\mathcal{S}(\formg\ssS_{\text{int}})\in \mathcal{F}((\hbar))[[\formg]]$ is called the perturbative S\emph{-matrix}}\]
\end{prop}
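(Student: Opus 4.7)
The plan is to compute $\la \widetilde{\text{T}}_I(\ssA_I)\ra_0$ directly from the perturbation formula developed in \autoref{sec:Perturbation of T-Products by Steinmann Arrows}, apply causal factorization inside every term, and finally invoke vacuum stability to convert a prefactor $\mathcal{S}^{-1}(\formg\ssS_{\text{int}})$ into the scalar $1/\la\mathcal{S}(\formg\ssS_{\text{int}})\ra_0$. Concretely, extracting $\widetilde{\text{T}}_I(\ssA_I)$ as the coefficient of $(1/\text{i}\hbar)^n \formj^n/n!$ in $\mathcal{Z}_{\formg\ssS_{\text{int}}}(\formj\ssA)$ (equivalently reading off the $\text{R}$-product expansion of \textcolor{blue}{(\refeqq{eq:retardprod})} for the perturbation vector $\ssS_{\text{int}}/\text{i}\hbar$) gives
\[
\widetilde{\text{T}}_I(\ssA_I)
=
\sum_{r=0}^\infty \bigg(\dfrac{1}{\text{i}\hbar}\bigg)^{\! r}\dfrac{\formg^{r}}{r!}
\sum_{Y_1\sqcup Y_2=[r]} \overline{\text{T}}_{Y_1}(\ssS_{\text{int}}^{Y_1}) \star_{\text{H}} \text{T}_{Y_2\sqcup I}(\ssS_{\text{int}}^{Y_2}\ssA_I).
\]

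The central step is to replace each inner factor $\text{T}_{Y_2\sqcup I}(\ssS_{\text{int}}^{Y_2}\ssA_I)$ by its causal factorization. Because $\ssS_{\text{int}}\otimes \ssA_I$ respects $(S,\ast,T)$ and all copies of $\ssS_{\text{int}}$ share a common support, the decorated tuple $\ssS_{\text{int}}^{Y_2}\ssA_I \in \textbf{E}_{\mathcal{F}_{\text{loc}}[[\hbar]]}[Y_2\sqcup I]$ respects the three-lump composition $(S,Y_2,T)$. Iterated causal factorization \textcolor{blue}{(\refeqq{eq:causalfac})} then gives
\[
\text{T}_{Y_2\sqcup I}(\ssS_{\text{int}}^{Y_2}\ssA_I)
=
\text{T}_S(\ssA_S)\star_{\text{H}} \text{T}_{Y_2}(\ssS_{\text{int}}^{Y_2})\star_{\text{H}} \text{T}_T(\ssA_T).
\]
Substituting back and exchanging the sum over $r$ for independent sums over $r_1=|Y_1|$ and $r_2=|Y_2|$ (using the standard identity $\binom{r}{r_1}/r!=1/(r_1!\,r_2!)$) the whole expression factorizes through $\star_{\text{H}}$ as
\[
\widetilde{\text{T}}_I(\ssA_I)
=
\mathcal{S}^{-1}(\formg\ssS_{\text{int}})\star_{\text{H}} \text{T}_S(\ssA_S)\star_{\text{H}} \mathcal{S}(\formg\ssS_{\text{int}})\star_{\text{H}} \text{T}_T(\ssA_T),
\]
where the outer two factors are identified with the $\text{T}$-exponentials of \autoref{sec:T-Exponentials}.

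It remains only to apply the Hadamard vacuum state and use the second half of the stability condition \textcolor{blue}{(\refeqq{eq:vacstab})} with $\emph{\textsf{O}}=\text{T}_S(\ssA_S)\star_{\text{H}} \mathcal{S}(\formg\ssS_{\text{int}})\star_{\text{H}} \text{T}_T(\ssA_T)$, which yields exactly the claimed identity. The only real subtlety lies in the second step: one must carefully verify that $\ssS_{\text{int}}^{Y_2}\ssA_I$ really does respect $(S,Y_2,T)$ (this is immediate from the respecting of $(S,\ast,T)$ together with the fact that the common support of the $\ssS_{\text{int}}$-copies is spacelike/past-separated from $\ssA_S$ and future-separated from $\ssA_T$), and to keep the combinatorial bookkeeping of the $\binom{r}{r_1}$ factors straight against the $(1/\text{i}\hbar)^r \formg^r/r!$ normalization. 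Everything else is an algebraic manipulation within $\text{PolyObs}_{\text{mc}}((\hbar))[[\formg]]$ together with the $\bC[[\hbar,\formg]]$-linearity of $\la -\ra_0$.
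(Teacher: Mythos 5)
Your proposal is correct and follows essentially the same route as the paper's proof: expand $\widetilde{\text{T}}_I(\ssA_I)$ via the retarded products \textcolor{blue}{(\refeqq{eq:retardprod})}, apply causal factorization to the inner factor $\text{T}_{Y_2\sqcup I}(\ssS_{\text{int}}^{Y_2}\ssA_I)$ using that it respects $(S,Y_2,T)$, re-sum the two halves into $\mathcal{S}^{-1}(\formg\ssS_{\text{int}})$ and $\mathcal{S}(\formg\ssS_{\text{int}})$, and finish with vacuum stability \textcolor{blue}{(\refeqq{eq:vacstab})}. The only difference is cosmetic bookkeeping of the $(1/\text{i}\hbar)^r$ prefactors, which the paper suppresses in its displayed intermediate steps.
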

\begin{proof}
We have
\begin{align*}
\text{G}_I(\ssA_I) 
&=
\big\la 
\widetilde{\text{T}}(\ssA_I)  
\big\ra_0 \\[6pt] 
&= 
\bigg\la 
\sum_{r=0}^\infty \dfrac{\formg^{r}}{r!} 
\text{R}_{r;I}( \ssS^{\, r}_{\text{int}} ; \ssA_{I} )
\bigg\ra_0  \\[6pt] 
&= 
\bigg\la 
\sum_{r=0}^\infty \sum_{r_1 + r_2 =r}   \dfrac{\formg^{r}}{r_1! \, r_2!} 
\overline{\text{T}}^{r_1}_{\emptyset}(\ssS^{\, r_1}_{\text{int}}) \star_{\text{H}}  \text{T}^{r_2}_{I}(\ssS^{\, r_2}_{\text{int}} \ssA_{I}) 
\bigg\ra_0.
\end{align*}
To obtain the final line, we expanded the retarded products according to \textcolor{blue}{(\refeqq{eq:retardprod})}. Then, by causal factorization \textcolor{blue}{(\refeqq{eq:causalfac})}, we have
\[
\text{T}^{r_2}_{I}(\ssS^{\, r_2}_{\text{int}} \ssA_{I})
=
\text{T}_{S}(\ssA_{S})
\star_{\text{H}}
\text{T}^{r_2}_{ \emptyset}(\ssS^{\, r_2}_{\text{int}})
\star_{\text{H}}
\text{T}_{T}(\ssA_{T})
.\]
Therefore
\begin{align*}
\text{G}_I(\ssA_I) 
&=
\bigg\la 
\sum_{r=0}^\infty \sum_{r_1 + r_2 =r}   \dfrac{\formg^{r}}{r_1! \, r_2!} 
\overline{\text{T}}^{r_1}_{\emptyset}(\ssS^{\, r_1}_{\text{int}}) \star_{\text{H}} 
\text{T}_{S}(\ssA_{S})
\star_{\text{H}}
\text{T}^{r_2}_{\emptyset}(\ssS^{\, r_2}_{\text{int}})
\star_{\text{H}}
\text{T}_{T}(\ssA_{T})
\bigg\ra_0. \\[6pt]
&=
\bigg\la 
\sum_{r=0}^\infty \dfrac{\formg^{r}}{r!} 
\overline{\text{T}}^{r}_{\emptyset}(\ssS^{\, r}_{\text{int}}) \star_{\text{H}} 
\text{T}_{S}(\ssA_{S})
\star_{\text{H}}
\sum_{r=0}^\infty \dfrac{\formg^{r}}{r!} 
\text{T}^{r}_{\emptyset}(\ssS^{\, r}_{\text{int}})
\star_{\text{H}}
\text{T}_{T}(\ssA_{T})
\bigg\ra_0. \\[6pt]
&= 
\Big \la  
\mathcal{S}^{-1}(\formg \ssS_{\text{int}}) 
\star_{\text{H}} 
\text{T}_S(\ssA_S)
\star_{\text{H}}
\mathcal{S}(\formg \ssS_{\text{int}}) 
\star_{\text{H}}
\text{T}_T(\ssA_T)
\Big \ra_0  \\[6pt] 
&=
\dfrac{1}
{  \Big \la \mathcal{S}(\formg\ssS_{\text{int}}) \Big  \ra_0 }   
\Big\la \text{T}_S(\ssA_S)\star_{\text{H}}  \mathcal{S}(\formg\ssS_{\text{int}})\star_{\text{H}}  
\text{T}_T(\ssA_T)
\Big\ra_0.
\end{align*}
For the final step, we used vacuum stability \textcolor{blue}{(\refeqq{eq:vacstab})}. 
\end{proof}

\bibliographystyle{alpha}
\bibliography{steinmann}

\end{document}